\newcommand{\abegin}[1]{\texttt{deq}\texttt{#1}}
\newcommand{\aend}[1]{\texttt{end}\texttt{#1}}
\newcommand{\post}[1]{\texttt{post}\texttt{#1}}
\newcommand{\tinit}[1]{\texttt{tinit}\texttt{#1}}
\newcommand{\fork}[1]{\texttt{fork}\texttt{#1}}
\newcommand{\readX}[1]{\texttt{read}\texttt{#1}}
\newcommand{\acquire}[1]{\texttt{lock}\texttt{#1}}
\newcommand{\release}[1]{\texttt{unlock}\texttt{#1}}
\newcommand{\Threads}{\ensuremath{\mathit{Th}}}
\newcommand*{\tn}[2]{\tikz[baseline,remember picture]\node[inner sep=0pt,anchor=base] (#1) {#2};}
\newcommand{\eg}{\emph{e.g.},}
\newcommand{\ie}{\emph{i.e.},}
\newcommand{\threadof}{\mathit{thread}}
\newcommand{\nextTrans}{\ensuremath{\mathit{nextTrans}}}
\newcommand{\nextop}{\ensuremath{\mathit{next}}}
\newcommand{\postChain}{\ensuremath{\mathit{postChain}}}
\newcommand{\divergingPosts}{\ensuremath{\mathit{divergingPosts}}}
\newcommand{\reorderedPosts}{\ensuremath{\mathit{reorderedPosts}}}
\newcommand{\domof}{\ensuremath{\mathit{dom}}}
\newcommand{\eventof}{\ensuremath{\mathit{event}}}
\newcommand{\taskof}{\ensuremath{\mathit{task}}}
\newcommand{\destof}{\ensuremath{\mathit{dest}}}
\newcommand{\gtree}{\texttt{$\Gamma$-tree}}
\newcommand{\gidx}[1]{\texttt{$\Gamma$-idx}\texttt{#1}}
\newcommand{\getBegin}{\ensuremath{\mathit{getBegin}}}
\newcommand{\getEnd}{\ensuremath{\mathit{getEnd}}}
\newcommand{\emdpor}{{EM-DPOR}}
\newcommand{\emexplorer}{{EM-Explorer}}
\newcommand{\droidracer}{\textsc{DroidRacer}}
\newcommand{\myhline}{\arrayrulecolor{gray!40}\hdashline}
\newcommand{\defdepcovering}{\ref{def:dep-covering}}
\newcommand{\defhappensbefore}{\ref{def:happens-before}}
\newcommand{\defdependencerelation}{\ref{def:dependence-relation}}
\newcommand{\deflazypersistentset}{\ref{def:dc-set}}
\newcommand{\defdepcoverstatespace}{\ref{def:dep-cover-statespace}}
\newcommand{\defepmtdependence}{\ref{def:epmt-dependence}}
\newcommand{\algoexplore}{\ref{algo:explore}}
\newcommand{\algofindtarget}{\ref{algo:findtarget}}
\newcommand{\algobacktrackeager}{\ref{algo:backtrackeager}}
\newcommand{\algoreschedulepending}{\ref{algo:reschedulepending}}
\newcommand{\linecond}{\ref{line-cond}}
\newcommand{\lineunexploredthread}{\ref{line-add-unexplored-thread}}
\newcommand{\linecopyhb}{\ref{backtrack-line-copy-hb}}
\newcommand{\linefindtarget}{\ref{line-findtarget}}
\newcommand{\linediverging}{\ref{line-diverging}}
\newcommand{\linecandidateone}{\ref{line-candidate-1}}
\newcommand{\linereturnsuccess}{\ref{line-return-success}}
\newcommand{\linepending}{\ref{line-pending}}
\newcommand{\linerecursivefindtarget}{\ref{line-recursive-findtarget}}
\newcommand{\linecallbacktrackeager}{\ref{line-call-backtrackeager}}
\newcommand{\linereturnonhb}{\ref{line-return-on-hb}}
\newcommand{\bactracklinecandidateone}{\ref{bactrack-line-candidate-1}}
\newcommand{\backtracklineaddchoice}{\ref{backtrack-line-add-choice}}
\newcommand{\backtracklineaddall}{\ref{backtrack-line-add-all}}
\let\oldnl\nl% Store \nl in \oldnl
\newcommand{\nonl}{\renewcommand{\nl}{\let\nl\oldnl}}% Remove line number for one line
\newcommand{\mypar}[1]{\vspace{2mm}\noindent\textbf{\emph{#1}.}}
\lstdefinestyle{code}{
  language={Java},
  basicstyle=\scriptsize\fontfamily{lmvtt},
  numbers=left,                                                                  
  xleftmargin=2.4em,
  frame=single,
  framexleftmargin=2.4em,
  columns=fullflexible,
  numberstyle=\scriptsize,
}
\theoremstyle{definition}
\newtheorem{definition}{Definition}[section]
\newtheorem{example}[definition]{\sc{Example}}
\newtheorem{theorem}[definition]{Theorem}
\newtheorem{lemma}[definition]{Lemma}
\newtheorem{assumption}{Assumption}[definition]
\newtheorem{construction}[definition]{Construction}
\newenvironment{customthm}[1]
  {\innercustomthm}
  {\endinnercustomthm}
\tikzset{
  mybox/.style = {
     minimum width=#1, 
     minimum height=0.5cm, 
     inner sep=3pt, 
     draw},
  mybox/.default=0.5cm,
}
\tikzset{
  mybbox/.style = {
     minimum width=#1, 
     minimum height=0.5cm, 
     inner sep=3pt, 
     very thick, draw},
  mybox/.default=0.5cm,
}
\begin{document}

% \special{papersize=8.5in,11in}
% \setlength{\pdfpageheight}{\paperheight}
% \setlength{\pdfpagewidth}{\paperwidth}

\title{A Partial Order Reduction Technique for Event-driven Multi-threaded Programs}

\authorinfo{Pallavi Maiya}
           {Indian Institute of Science}
           {pallavih@iisc.ac.in}%Email1}
\authorinfo{Rahul Gupta}
           {Indian Institute of Science}
           {rahulg@iisc.ac.in}
\authorinfo{Aditya Kanade}
           {Indian Institute of Science}
           {kanade@iisc.ac.in}
\authorinfo{Rupak Majumdar}
           {MPI-SWS}
           {rupak@mpi-sws.org}

\maketitle

\begin{abstract}
Event-driven multi-threaded programming is fast becoming a preferred style of
developing efficient and responsive applications. In this concurrency model,
multiple threads execute concurrently, communicating through shared objects
as well as by posting asynchronous events that are
executed in their order of arrival. In this work, we consider
partial order reduction (POR) for event-driven multi-threaded programs.
The existing POR techniques treat event queues
associated with threads as shared objects and thereby, reorder every pair
of events handled on the same thread even if reordering them does not
lead to different states. We do not treat event queues as shared objects
and propose a new POR technique based on a novel backtracking set called the 
dependence-covering set. 
Events handled by the same thread are reordered
by our POR technique only if necessary. 
We prove that exploring dependence-covering sets suffices to detect all deadlock 
cycles and assertion violations defined over local variables.
To evaluate effectiveness of our POR scheme,
we have implemented a dynamic algorithm to compute dependence-covering sets.
On execution traces obtained from a few Android applications, we demonstrate
that our technique explores many fewer transitions ---often orders of
magnitude fewer--- compared to exploration based on persistent sets,
wherein, event queues are considered as shared objects.
\end{abstract}

\section{Introduction}
\label{sec:intro}

Event-driven multi-threaded programming is fast becoming a preferred style of structuring
concurrent computations in many domains.
In this model, multiple threads execute concurrently, and each thread may
be associated with an event queue.
Threads may post events to each other's event queues, and a thread can
post an event to its own event queue. 
For each thread with an event queue, an event-loop processes the events
from its event queue in the order of their arrival. The event-loop runs the
handler of an event only after the previous handler finishes execution
but interleaved with the execution of all the other threads.
Further, threads can communicate 
through shared objects; even event handlers executing on the same thread 
may share objects.
Event-driven multi-threaded programming is a staple of developing
efficient and responsive smartphone
applications~\cite{mednieks2012programming};
a similar programming model is also used in distributed
message-passing applications, high-performance servers, and many other settings.

Stateless model checking~\cite{Godefroid:1997:MCP:263699.263717} is an
approach to explore the reachable state space of
concurrent programs by exploring different interleavings
systematically but without storing visited states.
In practice, the success of stateless model checking depends crucially
on partial order reduction 
(POR) techniques~\cite{Valmari:1991:SSR:647736.735461,Peled:1993:OOM:647762.735490,
Godefroid:1996:PMV:547238,DBLP:journals/sttt/ClarkeGMP99}.
Stateless search with POR defines an equivalence class on interleavings,
and explores only a representative interleaving from each equivalence
class (called a Mazurkiewicz trace~\cite{DBLP:conf:ac:Mazurkiewicz86}), but still provides
certain formal guarantees w.r.t.\ exploration of 
the complete but possibly much larger state space.
Motivated by the success of model 
checkers based on various POR strategies~\cite{SpinMC,Godefroid:1997:MCP:263699.263717,Flanagan:2005:DPR:1040305.1040315,Sen:2006:AST:2182061.2182094,
Palmer:2007:SDD:1273647.1273657,Coons:2013:BPR:2509136.2509556,Abdulla:2014:ODP:2535838.2535845,Abdulla:2015:SMC:2945565.2945622,
Zhang:2015:DPO:2737924.2737956},
in this work, we propose an effective POR strategy for event-driven multi-threaded programs.

\begin{figure*}[t]
\begin{tabular*}{\textwidth}{l@{\;\;\;}r}
\begin{minipage}{\dimexpr0.45\textwidth-2\tabcolsep}
\centering
% \scalebox{0.9}{
\begin{tabular}{@{}r@{\;\;\;\;\;\;\;\;\;\;}l@{\;}l@{\;}l@{}}
 & \tn{t1}{$t_1$}  & \tn{t2}{\;\;\;\;\;\;\;\;\;$t_2$} & \tn{t3}{\;\;\;\;\;\;\;\;\;$t_3$} \\
\hline
\\
$r_1$ && \multicolumn{2}{@{}l}{\tn{n1}{\;\;\;\;\;\;\;\;\;\post{($e_1$)}}} \\\myhline
$r_2$ &&& \tn{n2}{\;\;\;\;\;\;\;\;\;\post{($e_2$)}} \\\myhline
\\\myhline
$r_3$ &  \multicolumn{2}{@{}l}{\tn{n3}{\post{($e_3$)}}} & \\\myhline
\\\myhline
$r_4$ & \multicolumn{2}{@{}l}{\tn{n4}{\texttt{y = 5}}} & \\\myhline
\\\myhline
$r_5$ & \multicolumn{2}{@{}l}{\tn{n5}{\texttt{x = 1}}} & \\\myhline
\\
\end{tabular}
\tikz[remember picture,overlay] \node[fit=(n3), draw] {};
\tikz[remember picture,overlay] \node[fit=(n4), draw] {};
\tikz[remember picture,overlay] \node[fit=(n5), draw] {};
\tikz[remember picture,overlay] \node[left of=n3,xshift=0cm] (e1) {$e_1$};
\tikz[remember picture,overlay] \node (e2) at (n4 -| e1) {$e_2$};
\tikz[remember picture,overlay] \node (e3) at (n5 -| e1) {$e_3$};
% }
\end{minipage}% 
% &
% \phantom{zz}
&
\begin{minipage}{\dimexpr0.55\textwidth-2\tabcolsep}
\centering
\begin{tikzpicture}[auto,node distance=16 mm,semithick, scale=0.8, transform
shape]
\tikzstyle{stateNode} = [circle,draw=black,thick,inner sep=0pt,minimum 
size=6mm]
\tikzstyle{mystateNode} = [circle,draw=black,thick,dashed,inner sep=0pt,minimum 
size=6mm]
\tikzstyle{lazystateNode} = [circle,fill=yellow,draw=black,thick,inner sep=0pt,minimum 
size=6mm]
\tikzstyle{blankNode} = [thick,inner sep=0pt,minimum size=6mm]

\node[lazystateNode] (s0)  {$s_0$};

\node[lazystateNode] (s1) [below of=s0,yshift=.3cm,xshift=-2cm] 
{$s_1$}
edge [<-, very thick] node [left] {$r_1$} (s0);
\node[blankNode] (q1) [left of=s1, xshift=.65cm] {
\begin{tikzpicture}[node distance=-\pgflinewidth]
\node [mybox, fill=gray!20] (e1) {$e_1$};
\end{tikzpicture}};

\node[lazystateNode] (s2) [below of=s1,yshift=.3cm,xshift=-1cm] 
{$s_2$}
edge [<-, very thick] node [left] {$r_2$} (s1);
\node[blankNode] (q2) [left of=s2, xshift=.37cm] {
\begin{tikzpicture}[node distance=-\pgflinewidth]
\node [mybox, fill=gray!20] (e1) {$e_1$};
\node [mybox,fill=gray!20,right=of e1] (e2) {$e_2$};
\end{tikzpicture}};

\node[lazystateNode] (s3) [below of=s2,yshift=.3cm] 
{$s_3$} 
edge [<-, very thick] node [left] {$r_3$} (s2);
\node[blankNode] (q3) [left of=s3, xshift=.37cm] {
\begin{tikzpicture}[node distance=-\pgflinewidth]
\node [mybox, fill=gray!20] (e2) {$e_2$};
\node [mybox,fill=gray!20,right=of e2] (e3) {$e_3$};
% \node [mybox, fill=gray!20, right=of e3] (eb) {$\bot$};
\end{tikzpicture}};

\node[lazystateNode] (s4) [below of=s3,yshift=.3cm,xshift=1cm] 
{$s_4$} 
edge [<-, very thick] node [left] {$r_4$} (s3);
\node[blankNode] (q4) [left of=s4, xshift=.65cm] {
\begin{tikzpicture}[node distance=-\pgflinewidth]
\node [mybox, fill=gray!20] (e3) {$e_3$};
% \node [mybox, fill=gray!20, right=of e3] (eb) {$\bot$};
\end{tikzpicture}};

\node[lazystateNode] (s5) [below of=s4,yshift=.3cm,xshift=2cm] 
{$s_5$} 
edge [<-, very thick] node [left,xshift=-.1cm] {$r_5$} (s4);

\node[stateNode] (s6) [below of=s1,yshift=.3cm,xshift=1.5cm] 
{$s_6$}
edge [<-] node [left] {$r_3$} (s1);
\node[blankNode] (q6) [left of=s6, xshift=.65cm] {
\begin{tikzpicture}[node distance=-\pgflinewidth]
\node [mybox, fill=gray!20] (e3) {$e_3$};
\end{tikzpicture}};

\node[stateNode] (s7) [below of=s6,yshift=.3cm] 
{$s_7$} edge [<-] node [left] {$r_2$} (s6);
\node[blankNode] (q7) [left of=s7, xshift=.37cm] {
\begin{tikzpicture}[node distance=-\pgflinewidth]
\node [mybox, fill=gray!20] (e3) {$e_3$};
\node [mybox,fill=gray!20,right=of e3] (e2) {$e_2$};
\end{tikzpicture}};

\node[stateNode] (s8) [below of=s7,yshift=.3cm,xshift=1.5cm] 
{$s_8$}
edge [<-] node [left] {$r_5$} (s7)
edge [->] node [left] {$r_4$} (s5);
\node[blankNode] (q8) [left of=s8, xshift=.65cm] {
\begin{tikzpicture}[node distance=-\pgflinewidth]
\node [mybox, fill=gray!20] (e2) {$e_2$};
\end{tikzpicture}};

\node[mystateNode] (s9) [below of=s6,yshift=.3cm,xshift=1.5cm] 
{$s_9$}
edge [<-,dashed] node [left] {$r_5$} (s6)
edge [->,dashed] node [left] {$r_2$} (s8);

\node[stateNode] (s10) [below of=s0,yshift=.3cm,xshift=1.7cm] 
{$s_{10}$}
edge [<-] node [right] {$r_2$} (s0);
\node[blankNode] (q10) [right of=s10, xshift=-.65cm] {
\begin{tikzpicture}[node distance=-\pgflinewidth]
\node [mybox, fill=gray!20] (e2) {$e_2$};
% \node [mybox, fill=gray!20, right=of e2] (eb) {$\bot$};
\end{tikzpicture}};

\node[stateNode] (s11) [below of=s10,yshift=.3cm,xshift=1.5cm] 
{$s_{11}$}
edge [<-] node [right] {$r_1$} (s10);
\node[blankNode] (q11) [right of=s11, xshift=-.37cm] {
\begin{tikzpicture}[node distance=-\pgflinewidth]
\node [mybox, fill=gray!20] (e2) {$e_2$};
\node [mybox,fill=gray!20,right=of e2] (e1) {$e_1$};
% \node [mybox, fill=gray!20, right=of e1] (eb) {$\bot$};
\end{tikzpicture}};

\node[stateNode] (s12) [below of=s11,yshift=.3cm] 
{$s_{12}$}
edge [<-] node [right] {$r_4$} (s11);
\node[blankNode] (q12) [right of=s12, xshift=-.65cm] {
\begin{tikzpicture}[node distance=-\pgflinewidth]
\node [mybox, fill=gray!20] (e1) {$e_1$};
% \node [mybox, fill=gray!20, right=of e1] (eb) {$\bot$};
\end{tikzpicture}};

\node[stateNode] (s13) [below of=s12,yshift=.3cm,xshift=-1cm] 
{$s_{13}$}
edge [<-] node [right] {$r_3$} (s12)
edge [->] node [right,xshift=.2cm] {$r_5$} (s5);
\node[blankNode] (q13) [right of=s13, xshift=-.65cm] {
\begin{tikzpicture}[node distance=-\pgflinewidth]
\node [mybox, fill=gray!20] (e3) {$e_3$};
% \node [mybox, fill=gray!20, right=of e3] (eb) {$\bot$};
\end{tikzpicture}};

\node[mystateNode] (s14) [below of=s10,yshift=.3cm] 
{$s_{14}$}
edge [<-,dashed] node [right] {$r_4$} (s10)
edge [->,dashed] node [right] {$r_1$} (s12);
\end{tikzpicture}
\end{minipage}
\\
\begin{minipage}[t]{\dimexpr0.45\textwidth-2\tabcolsep}
\captionof{figure}{A partial trace of an event-driven program.}
\label{fig:trace-por-motivation}
\end{minipage}
&
\begin{minipage}[t]{\dimexpr0.55\textwidth-2\tabcolsep}
\captionof{figure}{%A transition system depicting
The state space reachable through all valid permutations of operations
in the trace given in Figure~\ref{fig:trace-por-motivation}. The leftmost event in an event queue is
the front of the queue.}
\label{fig:state-transition-motivation}
\end{minipage}
\end{tabular*}%
\end{figure*}

\mypar{Motivating example}
We first show why existing POR techniques may not be very effective in the combined 
model of threads and events.
Consider a partial execution trace of an event-driven program shown in 
Figure~\ref{fig:trace-por-motivation}. The operations are executed from top to bottom.
The operations in the trace are labeled $r_1$ to $r_5$
and those belonging to the same event handler are enclosed within a box
labeled with the corresponding event. These operations are executed by the 
threads $t_1$, $t_2$ or $t_3$. Figure~\ref{fig:trace-por-motivation} enumerates
all the operations executed by a thread on a vertical line below the thread. 
An operation \post{($e$)} under thread $t$ denotes the enqueuing of an event $e$
by thread $t$. The destination event queue can be identified by mapping the event 
posted with the corresponding event label against an event handler. For example, 
the operation $r_1$ executed by thread $t_2$ posts an event $e_1$ to thread $t_1$'s event queue. 
In this trace, threads $t_2$ and $t_3$ respectively post events 
$e_1$ and $e_2$ to thread $t_1$'s event queue.  
The event handler of $e_1$ in turn posts an event $e_3$ to $t_1$'s queue.
% Let the objects referred through variables \texttt{x} and \texttt{y} 
% be shared between multiple event handlers or threads.
The event handlers of $e_2$ and $e_3$ respectively write to shared variables
\texttt{y} and \texttt{x}.

Figure~\ref{fig:state-transition-motivation} shows the state space reachable through
all valid permutations of operations in the trace in Figure~\ref{fig:trace-por-motivation}. 
Each node indicates a state of the program. An edge is labeled with an operation and
indicates the state transition due to that operation. 
The interleaving corresponding to the trace in Figure~\ref{fig:trace-por-motivation} 
is highlighted with bold lines and shaded states. 
For illustration purposes,
we explicitly show the contents of the event queue of thread $t_1$ at 
some states. Events in a queue are ordered from left to right.
%and a box containing $\bot$ indicates the rear end of the queue. 
Pictorially, an event is removed from the queue when it is dequeued for handling.

Existing POR techniques
(e.g.~\cite{Godefroid:1996:PMV:547238,Flanagan:2005:DPR:1040305.1040315,Sen:2006:AST:2182061.2182094,
Tasharofi:2012:TND:2366649.2366663,Abdulla:2014:ODP:2535838.2535845})
recognize that $r_2$ and $r_5$ (also $r_1$ and $r_4$) are independent
(or non-interfering) and that it is sufficient to explore 
any one of them at state $s_6$ (respectively, $s_{10}$). 
The dashed edges indicate the unexplored transitions. 
However, existing POR-based model checkers will explore all other
states and transitions. 
Since no two handlers executed on the thread $t_1$ modify a common object, all 
the interleavings reach the same state $s_5$. Thus, the existing techniques 
explore two redundant interleavings. This happens because these techniques 
treat \emph{event queues as shared objects} and so, mark any two \texttt{post}
operations that enqueue events to the event queue of the same thread as dependent.
Consequently, they explore both $r_1$ and $r_2$ at state $s_0$, and
$r_2$ and $r_3$ at state $s_1$. 
These result in unnecessary reorderings of events.

More generally, if there are $n$ events posted to an event queue,
these techniques may explore $O(n!)$ permutations among them, even if exploring only 
one of them may be sufficient. 
Therefore, a POR technique that can avoid redundant event
orderings can be significantly more scalable. 
We exploit this observation. For the state space in Figure~\ref{fig:state-transition-motivation}, 
our approach explores only the initial trace (the leftmost interleaving) and thus visits
substantially fewer states and transitions compared to existing techniques.

\begin{figure*}[t]
\begin{tabular*}{\textwidth}{l@{\;\;\;\;\;\;}r}
\begin{minipage}{\dimexpr0.5\textwidth-2\tabcolsep}
\centering
% \scalebox{0.9}{
\begin{tabular}{@{}r@{\;\;\;\;\;\;\;\;\;\;}l@{\;}l@{\;}l@{\;}l@{}}
 & \tn{t1}{$t_1$}  & \tn{t2}{\;\;\;\;\;\;\;$t_2$} & \tn{t3}{\;\;\;\;\;\;\;$t_3$} & \tn{t4}{\;\;\;\;\;\;\;$t_4$} \\
\hline
\\
$r_1$ && \multicolumn{3}{@{}l}{\tn{n1}{\;\;\;\;\;\;\;\post{($e_1$)}}} \\\myhline
$r_2$ &&& \multicolumn{2}{@{}l}{\tn{n2}{\;\;\;\;\;\;\;\post{($e_2$)}}} \\\myhline
\\\myhline
$r_3$ &  \multicolumn{2}{@{}l}{\tn{n3}{\texttt{x = 1}}} & \\\myhline
\\\myhline
$r_4$ & \multicolumn{2}{@{}l}{\tn{n4}{\fork{($t_4$)}}} & \\\myhline
$r_5$ &&&& \tn{n5}{\;\;\;\;\;\;\;\tinit{($t_4$)}} \\\myhline
$r_6$ &&&& \tn{n6}{\;\;\;\;\;\;\;\texttt{x = 2}} \\\myhline
\\
\end{tabular}
\tikz[remember picture,overlay] \node[fit=(n3), draw] {};
\tikz[remember picture,overlay] \node[fit=(n4), draw] {};
\tikz[remember picture,overlay] \node[left of=n3,xshift=0cm] (e1) {$e_1$};
\tikz[remember picture,overlay] \node[left of=n4,xshift=-.3cm] (e2) {$e_2$};
% }
\end{minipage}% 
&
\begin{minipage}{\dimexpr0.5\textwidth-2\tabcolsep}
\centering
\begin{tikzpicture}[auto,node distance=16 mm,semithick, scale=.8, transform
shape]
\tikzstyle{stateNode} = [circle,draw=black,thick,inner sep=0pt,minimum 
size=6mm]
\tikzstyle{dporstateNode} = [circle,fill=yellow,draw=black,thick,inner sep=0pt,minimum 
size=6mm]
\tikzstyle{blankNode} = [thick,inner sep=0pt,minimum size=6mm]

\node[dporstateNode] (s0)  {$s_0$};
\node[dporstateNode] (s1) [below of=s0,yshift=.3cm,xshift=-1.5cm] 
{$s_1$} edge [<-, very thick] node [left] {$r_1$} (s0);
\node[dporstateNode] (s2) [below of=s1,yshift=.3cm] 
{$s_2$} edge [<-, very thick] node [left] {$r_2$} (s1);
\node[dporstateNode] (s3) [below of=s2,yshift=.3cm] 
{$s_3$} edge [<-, very thick] node [left] {\scalebox{2}{$r_3$}} (s2);
\node[dporstateNode] (s4) [below of=s3,yshift=.3cm] 
{$s_4$} edge [<-, very thick] node [left] {$r_4$} (s3);
\node[dporstateNode] (s5) [below of=s4,yshift=.3cm,xshift=1.5cm] 
{$s_5$} edge [<-, very thick] node [left] {$r_5$} (s4);
\node[dporstateNode] (s6) [below of=s5,yshift=.3cm] 
{$s_6$} edge [<-, very thick] node [left] {\scalebox{2}{$r_6$}} (s5);

\node[stateNode] (s7) [below of=s0,yshift=.3cm,xshift=1.5cm] 
{$s_7$} edge [<-] node [right] {$r_2$} (s0);
\node[stateNode] (s8) [below of=s7,yshift=.3cm] 
{$s_8$} edge [<-] node [right] {$r_1$} (s7);
\node[stateNode] (s9) [below of=s8,yshift=.3cm] 
{$s_9$} edge [<-] node [right] {$r_4$} (s8);
\node[stateNode] (s10) [below of=s9,yshift=.3cm] 
{$s_{10}$} edge [<-] node [right] {$r_5$} (s9)
edge [->] node [right] {$r_3$} (s5);

\node[stateNode] (s11) [below of=s10,yshift=.3cm,xshift=1.5cm] 
{$s_{11}$} edge [<-] node [right] {\scalebox{2}{$r_6$}} (s10);
\node[stateNode] (s12) [below of=s11,yshift=.3cm] 
{$s_{12}$} edge [<-] node [right] {\scalebox{2}{$r_3$}} (s11);

\node[blankNode] (b2) [right of=s0, xshift=1cm] 
{\texttt{PS:$\{r_1\}$} $\;$ \texttt{DCS:$\{r_1,r_2\}$}};
\node[blankNode] (b3) [right of=s10, xshift=.1cm] 
{\texttt{DCS:$\{r_3,r_6\}$}};

\node[blankNode] (q1) [left of=s1, xshift=.6cm] {
\begin{tikzpicture}[node distance=-\pgflinewidth]
\node [mybox, fill=gray!20] (e1) {$e_1$};
\end{tikzpicture} };
\node[blankNode] (q2) [left of=s2, xshift=.28cm] {
\begin{tikzpicture}[node distance=-\pgflinewidth]
\node [mybbox, fill=gray!40] (e1) {$e_1$};
\node [mybbox,fill=gray!40,right=of e1] (e2) {$e_2$};
\end{tikzpicture} };
\node[blankNode] (q3) [left of=s3, xshift=.6cm] {
\begin{tikzpicture}[node distance=-\pgflinewidth]
\node [mybox, fill=gray!20] (e2) {$e_2$};
\end{tikzpicture} };

\node[blankNode] (q7) [right of=s7, xshift=-.6cm] {
\begin{tikzpicture}[node distance=-\pgflinewidth]
\node [mybox, fill=gray!20] (e2) {$e_2$};
\end{tikzpicture} };
\node[blankNode] (q8) [right of=s8, xshift=-.28cm] {
\begin{tikzpicture}[node distance=-\pgflinewidth]
\node [mybbox, fill=gray!40] (e2) {$e_2$};
\node [mybbox,fill=gray!40,right=of e2] (e1) {$e_1$};
\end{tikzpicture} };
\node[blankNode] (q9) [right of=s9, xshift=-.6cm] {
\begin{tikzpicture}[node distance=-\pgflinewidth]
\node [mybox, fill=gray!20] (e1) {$e_1$};
\end{tikzpicture} };
\node[blankNode] (q10) [left of=s10, xshift=.6cm] {
\begin{tikzpicture}[node distance=-\pgflinewidth]
\node [mybox, fill=gray!20] (e1) {$e_1$};
\end{tikzpicture} };
\node[blankNode] (q11) [left of=s11, xshift=.6cm] {
\begin{tikzpicture}[node distance=-\pgflinewidth]
\node [mybox, fill=gray!20] (e1) {$e_1$};
\end{tikzpicture} };
\end{tikzpicture}

\end{minipage}
\\
\begin{minipage}[t]{\dimexpr0.5\textwidth-2\tabcolsep}
\captionof{figure}{A partial trace $w$ of an event-driven program involving
a multi-threaded dependence.}
\label{fig:motivation-mt}
\end{minipage}
&
\begin{minipage}[t]{\dimexpr0.5\textwidth-2\tabcolsep}
\captionof{figure}{A partial state space for some valid permutations of transitions 
in the trace given in Figure~\ref{fig:motivation-mt}.}
\label{fig:motivation-statespace-mt}
\end{minipage}
\end{tabular*}%
\end{figure*}

\mypar{Our approach}
Realizing a partial order reduction technique effective for event-driven programs
requires reviewing of various elements of POR and redesigning them to be suitable 
in the context of event-driven programs. 
To realize the reduction outlined through motivating example, we do \emph{not} consider event queues 
as shared objects. Equivalently, we treat a pair of \texttt{post}s even to
the same thread as \emph{independent}. The main question then is ``How to determine which events to reorder
and how to reorder them selectively?''. Surely, if two handlers 
executing on the same thread contain dependent transitions
then we should reorder their \texttt{post} operations, 
but this is not enough. To see this, consider a partial trace $w$
shown in Figure~\ref{fig:motivation-mt}.
The transitions $r_3$ and $r_6$ belong to two different threads and
are dependent as they write to the same variable. 
Figure~\ref{fig:motivation-statespace-mt} shows a
a partial state space obtained by different orderings of $r_3$ and $r_6$.
The contents of thread $t_1$'s event queue are shown next to each state, whenever
the queue is non-empty. 
As can be seen in the rightmost interleaving,
executing $r_6$ before $r_3$ requires posting the event $e_2$ \emph{before} $e_1$
even though their handlers do not have dependent transitions.
A state space exploration starting with sequence $w$ has to reorder 
$e_1$ and $e_2$ so as to explore a different ordering of $r_3$ and $r_6$
than that explored by $w$. 
Thus, operations posting events to the same thread may have
to be reordered even to reorder some multi-threaded dependences!
Our first contribution is to define a dependence relation that
captures both single-threaded as well as multi-threaded dependences.

We now discuss the implications of treating \texttt{post}s as
independent and only selectively reordering them.
For multi-threaded programs, or when \texttt{post}s are considered dependent, reordering
a pair of adjacent independent transitions in a transition sequence does not affect the reachable state.
Hence, the existing dependence relation~\cite{Godefroid:1996:PMV:547238} induces 
equivalence classes where transition sequences differing only in the order of 
executing independent transitions are in the same Mazurkiewicz trace~\cite{DBLP:conf:ac:Mazurkiewicz86}. 
However, our new dependence relation (where \texttt{post}s are considered independent)
may not induce Mazurkiewicz traces on an event-driven program. 
One reason is that reordering \texttt{post}s to the same thread affects the order of
execution of the corresponding handlers. 
If the handlers contain dependent transitions, it affects the reachable state. 
Second, one cannot rule out
the possibility of \emph{new} transitions (not present in the given transition sequence)
being pulled in when independent \texttt{post}s are reordered,
which is not admissible in a Mazurkiewicz trace.
We elaborate on this in Section~\ref{sec:dcset}.

\sloppy{
Our second contribution is to define a novel notion of \emph{dependence-covering sequence}
to provide the necessary theoretical foundation to reason about
reordering \texttt{post}s selectively.
Intuitively, a transition sequence $u$ is a dependence-covering sequence of a transition
sequence $v$ if the relative ordering of all the pairs of dependent transitions in $v$ is
preserved in $u$. While this sounds similar to the property of any pair of transition
sequences in the same Mazurkiewicz trace, the constraints imposed on a dependence-covering
sequence are more relaxed (as will be formalized in Definition~\ref{def:dep-covering}),
making it suitable to achieve better reductions.
For instance, $u$ is permitted to have \emph{new} transitions, that is,
transitions that are not in $v$, under certain conditions.
}

Given a notion of POR, a model checking algorithm such as
DPOR~\cite{Flanagan:2005:DPR:1040305.1040315} uses persistent
sets~\cite{Godefroid:1996:PMV:547238} to structure the state space
exploration to only explore representative transition sequences from
each Mazurkiewicz trace.
As we show now, DPOR based on persistent sets 
is unsound when used in conjunction with the dependence relation in
which \texttt{post}s are independent. 
Let us revisit the state space given in Figure~\ref{fig:motivation-statespace-mt}.
Assume DPOR to explore this state space starting with the leftmost branch of the state 
space in Figure~\ref{fig:motivation-statespace-mt}, which corresponds to sequence $w$
shown in Figure~\ref{fig:motivation-mt}. Then, DPOR identifies the set $\{r_1\}$ as persistent in state $s_0$, 
because exploring any transition other than $r_1$ from $s_0$ does not 
hit a transition dependent with $r_1$. This set is tagged as
\texttt{PS} in Figure~\ref{fig:motivation-statespace-mt}. However, a selective
exploration using this set explores only one ordering between $r_3$ and $r_6$,
even though the two orderings are not equivalent. 

Our third contribution is the notion of \emph{dependence-covering sets}
as an alternative to persistent sets.
A set of transitions $L$ at a state $s$ is said to be dependence-covering (formalized 
in Definition~\ref{def:dc-set}) if a dependence-covering sequence $u$ starting 
with some transition in $L$ can be explored for any sequence $v$ executed from $s$. 
We prove that selective state-space exploration based on 
dependence-covering sets is sufficient to detect all deadlock cycles and 
violations of assertions over local variables.
The dependence-covering sets at certain states 
are marked in Figure~\ref{fig:motivation-statespace-mt} as \texttt{DCS}. 
In contrast to \texttt{PS}, \texttt{DCS} at state $s_0$ contains \emph{both}
$r_1$ and $r_2$. The set $\{r_1,r_2\}$ at $s_0$ is a 
dependence-covering set because for any transition sequence $v$ starting 
from $s_0$, there exists a dependence-covering sequence $u$ 
starting with a transition in $\{r_1,r_2\}$. 
Let $v$ be the transition sequence %shown in Figure~\ref{fig:motivation-statespace}
along the rightmost interleaving in Figure~\ref{fig:motivation-statespace-mt}.
The sequence $w$ (the leftmost interleaving) %starts with $r_1$ and 
is not a dependence-covering sequence of $v$ since 
the dependent transitions $r_3$ and $r_6$ appear in a different order. 
We therefore require $r_2$ to be explored at $s_0$. 
Note that, $\{r_2\}$ is another dependence-covering set at $s_0$ as both the orderings of 
dependent transitions $r_3$ and $r_6$ can be explored from $s_{10}$ reached 
on exploring $r_2$. 

Our final contribution is a dynamic algorithm called \emdpor\ to compute 
dependence-covering sets. EM refers to the \underline{E}vent-driven \underline{M}ulti-threaded 
concurrency model. 
\emdpor\ follows the DFS based exploration strategy of DPOR~\cite{Flanagan:2005:DPR:1040305.1040315} 
but the key steps of our algorithm are different.
In particular,  \emdpor\ incorporates several non-trivial steps
(1)~to reason about both multi-threaded dependences as well as dependent transitions
from different event handlers on the same thread (single-threaded dependences),
and (2)~to identify events for selective reordering and infer appropriate backtracking choices
to achieve the reordering.
We have implemented and evaluated this adaptation in a proof-
of-concept model checker. Further, we have provided a sketch outlining the proof of correctness
of our algorithm in Appendix~\ref{app:emdpor-proofs}.

We now briefly explain how EM-DPOR computes the dependence-covering sets
and explores the state space shown in Figure~\ref{fig:motivation-statespace-mt}
starting with sequence $w$.
On exploring a prefix of sequence $w$ and reaching state $s_5$, 
EM-DPOR identifies $r_6$ to be dependent with $r_3$ and hence tries to reorder $r_6$ 
w.r.t. $r_3$. It does so by attempting to 
execute transitions that happen before $r_6$ prior to $r_3$, ultimately leading to
the execution of $r_6$ prior to $r_3$. When attempting to
compute backtracking choices at state $s_2$ (the state where $r_3$ is explored)
to reorder $r_3$ and $r_6$, EM-DPOR finds 
$r_4$ to happens before $r_6$. However,
$r_4$ is not enabled at $s_2$ because both $r_3$ and $r_4$ execute on
\emph{the same thread} $t_1$, and $r_4$ is a transition of 
the handler of $e_2$ while $e_1$ is at the front of the queue
(see the event queue shown at $s_2$ in Figure~\ref{fig:motivation-statespace-mt}).
% EM-DPOR also finds that \texttt{t1} is \emph{already explored} from $s_2$.
Because EM-DPOR is aware of the event-driven semantics and knows that
$r_3$ and $r_4$ come from handlers of two different events 
$e_1$ and $e_2$, it attempts to reorder the events themselves.
We call this a step to \emph{reschedule pending events} because
$e_2$ is pending in the queue of the thread $t_1$ at $s_2$.
EM-DPOR then starts \emph{another} backward search to identify the backtracking choices
that can reorder $e_1$ and $e_2$. It identifies that
the corresponding \post{} operations $r_1$ and $r_2$ can be reordered
to do so. It therefore adds $r_2$ to the backtracking set at $s_0$
(from Figure~\ref{fig:motivation-statespace-mt} $r_1$ is already in the 
backtracking set at $s_0$ since the exploration started with $w$),
exploring which leads to $s_8$ where event $e_2$ \emph{precedes}
$e_1$ in the event queue as required. 
EM-DPOR then reaches state $s_{10}$ where $r_3$ and $r_6$ are \emph{co-enabled}.
Being dependent, EM-DPOR explores both the ordering between $r_3$ and $r_6$
from $s_{10}$.
Note that even while considering only $r_3$ and $r_6$ from different threads as dependent, 
EM-DPOR is able to identify a  seemingly unrelated pair of \texttt{post}s at $r_1$ and $r_2$ for reordering.

\mypar{Experiments}
We have evaluated \emdpor\ on Android applications which are a class of
multi-threaded event-driven programs.
We have implemented a proof-of-concept model checking framework called EM-Explorer which simulates
the non-deterministic behaviour exhibited by Android applications given individual
execution traces.
We implemented \emdpor\ which performs a selective state-space exploration based on dependence-covering sets, in EM-Explorer.
For comparison, we also implemented DPOR which performs exploration based on 
persistent sets, where \texttt{post}s to the same thread are considered dependent. 
We performed experiments on traces obtained from $5$ Android applications.
Our results demonstrate that our POR technique explores many fewer transitions 
---often orders of magnitude fewer--- compared to using persistent sets.

\section{Formalization}
\label{sec:definitions}
We now formalize our notion of partial order reduction for event-driven
programs. Some of the definitions below follow the conventions 
in~\cite{Flanagan:2005:DPR:1040305.1040315}.
Any reference to persistent sets henceforth, assumes usage of the dependence 
relation defined in~\cite{Godefroid:1996:PMV:547238} as it is, 
which marks two \texttt{post} operations to the same event queue as dependent.

\subsection{Transition System}
\label{sec:transition-system-por}

We consider an event-driven multi-threaded \emph{program} $A$ which has the
usual sequential and multi-threaded operations such as assignments,
conditionals, synchronization through locks and thread creation.
In addition, the operation \post{($t_1,e,t_2$)} posts
an asynchronous event $e$ from the source thread $t_1$ 
to (the event queue of) a destination thread $t_2$. However in the execution traces 
given in the paper, we omit the source and destination threads of \post{} operation
(\eg\ Figure~\ref{fig:trace-por-motivation} and \ref{fig:motivation-mt}) when apparent from the diagram. 
Each event has a handler which 
runs to completion on the thread to whose event queue the event is posted.
However, the event handler of one thread may interleave with 
operations of other threads. Operation \abegin{($e$)} denotes the dequeuing of 
an event $e$, and \aend{($e$)} indicates the completion of execution of an event
handler. We consider \abegin{} and \aend{} as the first and the last operation
of an event handler. In the traces considered in this paper, all the operations 
belonging to the same event handler are grouped inside a box (\eg\ 
Figure~\ref{fig:trace-por-motivation} and \ref{fig:motivation-mt}). The operations 
\abegin{} and \aend{} are omitted but implicitly assumed as the first and the last
operation inside the box.
We omit the formal syntax and semantics of various operations relevant in the context
of a multi-threaded event-driven program; they can be found in \cite{Maiya:2014:RDA:2594291.2594311}.

An operation is \emph{visible} if it accesses an object shared between at least two threads or
two event handlers (possibly running on the same thread). The first operation 
(\texttt{deq}) of an event handler is also considered a visible operation.
All other operations are \emph{invisible}.

The \emph{local state of an event handler}  
is a valuation of the stack and the variables or
heap objects that are modified only within the event handler.
The local state of a thread is the local state of the currently executing event handler.
If a handler running on a thread has finished executing, but the thread has not
started executing the next handler (if any), we say that the
thread is \emph{idle}; the local state of an idle thread is undefined. 
A \emph{global state} of the program $A$ is a valuation to the variables
and heap objects that are accessed by multiple threads or multiple handlers.
Even though event queues are shared objects, we 
do not consider them in the global state (as defined above). 
Instead, we define a \emph{queue state of a thread} as 
an ordered sequence of events that have been posted to its event queue
but are yet to be handled. This separation allows us to analyze asynchronous
posts more precisely. 
Event queues are FIFO queues with unbounded capacity, that is, a \texttt{post} operation
never blocks.
For simplicity, we assume that every thread is associated with an event
queue. If a thread does not have an event queue in reality then its 
state is determined by the default procedure that runs on it
in response to some initial event, and no other events are enqueued to
its event queue subsequently.

Let $L$, $G$ and $Q$ be the set of all local states, global states and queue
states respectively. Let \Threads\ be the set of all threads in $A$. Then, a \emph{state} 
$s$ of an event-driven program $A$ is a triple $(l,g,q)$ where 
(1)~$l$ is a partial map from \Threads\ to $L$, %threads to the respective local states,
(2)~$g$ is a global state and
(3)~$q$ is a total map from \Threads\ to $Q$. %threads to the respective queue states.
A \emph{transition} by a thread $t$ updates the state of $A$ by performing one
visible operation followed by a finite sequence of invisible
operations ending just before the next visible operation; 
all of which are executed on $t$. 
We identify a transition by its visible operation,
e.g., we say %that ``a transition containing a \texttt{post} operation'' or 
``\texttt{post} operation'' to mean a transition whose first 
operation is a \texttt{post}. 
Let $R$ be the set of all transitions in $A$. 
A transition $r_{t,\ell}$ of a thread $t$ at its 
local state $\ell$ is a partial function, $r_{t,\ell}: G \times Q \mapsto L \times G \times Q$.
% mapping a global and a queue state to the next state.
A transition $r_{t,\ell} \in R$ is enabled at a state $s = (l,g,q)$ if $\ell = l(t)$
and $r_{t,\ell}(g,q)$ is defined. We may use $r_{t,\ell}(s)$ to 
denote application of a transition $r_{t,\ell}$, instead of the more precise use
$r_{t,\ell}(g,q)$.  
The first transition of the handler of an event $e$ enqueued to a thread $t$ is 
\emph{enabled} at a state $s$, if $e$ is at the front of $t$'s queue
at $s$ and $t$ is idle in $s$. 
We assume that if a transition is defined for a state then it
deterministically maps the state to a successor state. 
% For a state $s = (l,g,q)$, we may use $r_{t,\ell}(s)$, to denote application
% of the transition $r_{t,\ell}$, instead of the more precise use
% $r_{t,\ell}(g,q)$. 

We formalize the state space of $A$ as a \emph{transition system} $\mathcal{S}_G = (\mathcal{S}, s_{\mathit{init}}, \Delta)$,
where $\mathcal{S}$ is the set of all states, $s_{\mathit{init}} \in \mathcal{S}$ is the initial state, and 
$\Delta \subseteq \mathcal{S} \times \mathcal{S}$ is the transition relation such that
$(s,s') \in \Delta$ iff $\exists r \in R$ and $s' = r(s)$.
We also use $s \in \mathcal{S}_G$ instead of $s \in \mathcal{S}$.
% A transition $r \in R$ is said to be \emph{valid} in $\mathcal{S}_G$ if 
% $\exists (s,s') \in \Delta: s' = r(s)$;
% A transition $r_{t,\ell} \in R$ is \emph{enabled} at
% a state $s = (l,g,q)$ if $\ell = l(t)$ and $r_{t,\ell}(g,q)$ is defined. 
% A transition $r \in R$ is 
% and \emph{enabled} at a state $s$ if $r(s)$ is defined. 
% A transition starting with the first operation 
%% A transition containing the \texttt{begin} operation of the handler of an event
%% $e$ is \emph{enabled} for a thread $t$ at a state $s$, if $e$ is at the front of $t$'s queue
%% at $s$ and the thread is idle. 
%
Two transitions $r_1$ and $r_2$ \emph{may be co-enabled} if there
may exist some state $s \in \mathcal{S}$ where they both are enabled.
Two events $e$ and $e'$ handled on the same thread $t$ 
\emph{may be reordered} if there exist states $s,s' \in \mathcal{S}$ such that 
$s = (l,g,q)$, $s' = (l',g',q')$, $q(t) = e \cdot w \cdot e' \cdot w'$ and 
$q'(t) = e' \cdot v \cdot e \cdot v'$.
% WE DONT NEED BELOW SENTENCE FOR THE VERSION WITHOUT ALGORITHM
% Since event handlers run to completion, two transitions $r$ and $r'$ of handlers 
% of distinct events $e$ and $e'$ running on the same thread \emph{may be reordered} 
% only if $e$ and $e'$ may be reordered. 
In Figure~\ref{fig:state-transition-motivation}, events $e_1$ and $e_2$
may be reordered but not $e_1$ and $e_3$.

For simplicity, we assume that all threads and events in $A$ have unique IDs. 
We also assume that the state space is \emph{finite} 
and \emph{acyclic}.
This is a standard assumption for stateless model checking~\cite{Flanagan:2005:DPR:1040305.1040315}.
The transition system $\mathcal{S}_G$ collapses invisible
operations and is thus already reduced when compared to the transition
system in which even invisible operations are considered as
separate transitions. A transition system of this form is sufficient
for detecting deadlocks and assertion violations~\cite{Godefroid:1997:MCP:263699.263717}.
We note that the event dispatch semantics can be diverse in general.
For example, Android applications permit posting an 
event with a timeout or posting a specific event to the front of the queue. 
We over-approximate the effect of posting with timeout by forking
a new thread which does the \texttt{post} non-deterministically but
do not address other variants in this work. 
We leave a more general POR approach that allows such variants to
event dispatch, to future work.

\paragraph{Notation.}
% Let $T$ be the set of threads of $A$ and 
Let $\nextop(s,t)$ give the next transition of a 
thread $t$ in a state $s$. Let $\threadof(r)$ return
the thread executing a transition $r$. 
% A thread is \emph{running} if it has not finished its execution.
If $r$ executes in the handler of an event $e$
on thread $t$ then the \emph{task} of $r$ is
$\taskof(r) = (t,e)$.
%% else it is $task(r) = (thread(r),nil)$. <-- RM: no threads without event queues; all threads execute handlers
% Function $enabled(s)$ gives the set of threads whose next transitions are
% enabled at a state $s$. 
A transition $r$ on a thread $t$ is 
\emph{blocked} at a state $s$ if $r = \nextop(s,t)$ and %$t \not\in enabled(s)$. 
$r$ is not enabled in $s$.
We assume that only visible operations may block.
Function $\nextTrans(s)$ gives the set of next transitions of all threads 
at state $s$. 
For a transition sequence $w : r_1 . r_2 \ldots r_n$ in $\mathcal{S}_G$,
let $\domof(w) = \{1,\ldots,n\}$. 
% The function $getPost(w,e)$ gives the index of the 
% transition of $w$ which posted the event $e$. 
Functions $\getBegin(w,e)$ and $\getEnd(w,e)$ respectively return 
the indices of the first and the last transitions of an event $e$'s handler in $w$,
provided they belong to $w$.
For a transition $r$, $index(w,r)$ gives the position of $r$ in $w$. 

\paragraph{Deadlock cycles and assertion violations.} 
A pair $\langle DC,\rho \rangle$ in a state $s \in \mathcal{S}$ is said 
to form a \emph{deadlock cycle} if $DC \subseteq \nextTrans(s)$ is a set 
of $n$ transitions blocked in $s$, and $\rho$
is a one-to-one map from $[1,n]$ to $DC$
such that each $\rho(i) \in DC$, $i \in [1,n]$,
is blocked by some transition on a thread 
$t_{i+1} = \threadof(\rho(i+1))$ and may be enabled only by a transition on 
$t_{i+1}$, and the transition $\rho(n) \in DC$ is blocked and may be enabled 
by two different transitions of thread $t_1 = \threadof(\rho(1))$. A state $s$ in 
$\mathcal{S}_G$ is a \emph{deadlock} state if all the %running 
threads are blocked in $s$ due to a deadlock cycle.

%% \mypar{Assertion violations}
An \emph{assertion} $\alpha$ is a predicate over local variables of an event handler 
and is considered visible. 
A state $s$ \emph{violates} an assertion
$\alpha$ if $\alpha$ is enabled at $s$ and evaluates to \textit{false}.

\subsection{Dependence Relation}
\label{sec:relations}

The notion of dependence between transitions is well-understood for 
multi-threaded programs. It extends naturally to event-driven programs if
event queues are considered as shared objects, thereby, marking
two \texttt{post}s to the same event queue as dependent.
To enable more reductions, we define
an alternative notion in which two \texttt{post} operations to the same event queue 
are \emph{not} considered dependent. One reason to selectively
reorder events posted to a thread is if their handlers contain dependent transitions.
This requires a new notion of
dependence between transitions of event handlers
executing on the same thread, which we refer to as \emph{single-threaded dependence}.
%% Note that assuming a total order among transitions of a thread
%% with an event queue precludes certain feasible interleavings.

In order to explicate single-threaded dependences,
we first define an event-parallel transition system 
which over-approximates the transition system $\mathcal{S}_G$.
% (see Section~\ref{sec:transition-system}).
The \emph{event-parallel transition system} $\mathcal{P}_G$ of a %event-driven multi-threaded 
program $A$ is a triple $(\mathcal{S}_P,s_{init},\Delta_P)$. 
In contrast to the transition system $\mathcal{S}_G = (\mathcal{S},s_{init},\Delta)$ of Section~\ref{sec:transition-system-por}
where events are dispatched in their order of arrival and execute till completion, 
a thread with an event queue in $\mathcal{P}_G$
removes \emph{any} event in its queue and spawns a fresh thread to execute its handler. This enables
concurrent execution of handlers of events posted to the same thread. Rest of the semantics remains the same.
%% For a thread with a queue in EPTS, only its queue state is defined and not its local state.
%% The transition system $\mathcal{P}_G$ is an over-approximation of the transition system
%% $\mathcal{S}_G$.
%% We note that EPTS $\mathcal{P}_G$ of $A$ emulates the event-driven transition system
%% $\mathcal{S}_G$ of $A$, and contains a superset of reachable states and transitions as compared
%% to $\mathcal{S}_G$.
Let \Threads\ and $\Threads_P$ be the sets of all threads in
$\mathcal{S}_G$ and $\mathcal{P}_G$ respectively. 
%% For each state $(l,g,q) \in S$, there exists a state $(l',g',q') \in S_P$
%% such that (i)~$g = g'$, (ii)~for each thread $t \in T$ such that $l(t)$ is defined there exists
%% $t' \in T_P$ where $l(t) = l'(t')$, and (iii)~for each thread 
%% $t \in T$ such that $q(t)$ is defined there exists $t' \in T_P$ where $q(t) = q'(t')$.
For each state $(l,g,q) \in \mathcal{S}$, there exists a state $(l',g',q') \in \mathcal{S}_P$
such that (1) for each thread $t \in \Threads$, if $l(t)$ is defined then there exists
a thread $t' \in \Threads_P$ where $l'(t') = l(t)$,
(2) $g = g'$, and (3) for each thread $t \in \Threads$, 
% if $q(t)$ is defined then <-- RM: q(t) is always defined! (all threads come with event queues)
$q(t) = q'(t)$.
Let $R_P$ be the set of transitions in $\mathcal{P}_G$ and
$ep: R \rightarrow R_P$ be a total function which maps a transition $r_{t,\ell} \in R$ 
to an equivalent transition $r'_{t',\ell'}$ such that $\ell = \ell'$ and
%% $t' = thread(r'_{t',\ell'})$ in $\mathcal{P}_G$.  
either $t' = t$ or $t'$ is a fresh thread spawned by $t$ in $\mathcal{P}_G$ to
handle the event to whose handler $r_{t,\ell}$ belongs in $\mathcal{S}_G$.

\begin{figure*}[t]
\begin{tabular*}{\textwidth}{l@{\;\;\;}l}
\begin{minipage}{\dimexpr0.5\textwidth-2\tabcolsep}
\centering
\begin{tikzpicture}[auto,node distance=16 mm,semithick, scale=0.7, transform
shape]
\tikzstyle{sgNode} = [circle,draw=black,fill=yellow,thick,inner sep=0pt,minimum 
size=6mm]
\tikzstyle{pgNode} = [circle,draw=black,thick,inner sep=0pt,minimum 
size=6mm]
% \tikzstyle{lazystateNode} = [circle,fill=yellow,draw=black,thick,inner sep=0pt,minimum 
% size=6mm]
\tikzstyle{blankNode} = [thick,inner sep=0pt,minimum size=6mm]
\node[sgNode] (s0) {$s_0$};
\node[sgNode] (s1) [below of=s0,yshift=.4cm] 
{$s_1$} edge [<-,very thick] node [right] {$r_1$} (s0);
\node[sgNode] (s2) [below of=s1,yshift=.4cm] 
{$s_2$} edge [<-,very thick] node [right] {$r_2$} (s1);
\node[sgNode] (s3) [below of=s2,yshift=.4cm] 
{$s_3$} edge [<-,very thick] node [right] {$r_3$} (s2);
\node[sgNode] (s4) [below of=s3,yshift=.4cm,xshift=-1.25cm] 
{$s_4$} edge [<-,very thick] node [left] {$r_4$} (s3);
\node[sgNode] (s5) [below of=s4,yshift=.4cm] 
{$s_5$} edge [<-,very thick] node [left] {$r_5$} (s4);
\node[sgNode] (s6) [below of=s5,yshift=.4cm] 
{$s_6$} edge [<-,very thick] node [left] {$r_6$} (s5);

\node[sgNode] (s7) [below of=s0,yshift=.4cm,xshift=-1.25cm] 
{$s_7$} edge [<-, very thick] node [left,yshift=.2cm,xshift=.1cm] {$r_2$} (s0);
\node[sgNode] (s8) [below of=s7,yshift=.4cm] 
{$s_8$} edge [<-,very thick] node [left] {$r_1$} (s7);
\node[sgNode] (s9) [below of=s8,yshift=.4cm] 
{$s_9$} edge [<-,very thick] node [left] {$r_5$} (s8);
\node[sgNode] (s10) [below of=s9,yshift=.4cm,xshift=-1.25cm] 
{$s_{10}$} edge [<-,very thick] node [left] {$r_3$} (s9)
edge [->,very thick] node [left] {$r_4$} (s5);

\node[pgNode] (s20) [below of=s8,yshift=.4cm,xshift=-1.25cm] 
{$s_{20}$} edge [<-] node [left] {$r_3$} (s8)
edge [->] node [left] {$r_5$} (s10);

\node[pgNode] (s11) [below of=s3,yshift=.4cm] 
{$s_{11}$} edge [<-] node [right,xshift=-.05cm] {$r_5$} (s3)
edge [->] node [left,yshift=.2cm,xshift=.2cm] {$r_4$} (s5);
\node[pgNode] (s12) [below of=s11,yshift=.4cm] 
{$s_{12}$} edge [<-] node [right] {$r_6$} (s11)
edge [->] node [right] {$r_4$} (s6);

\node[pgNode] (s13) [below of=s3,yshift=.4cm,xshift=1.25cm] 
{$s_{13}$} edge [<-] node [right] {$r_6$} (s3);
\node[pgNode] (s14) [below of=s13,yshift=.4cm] 
{$s_{14}$} edge [<-] node [left] {$r_5$} (s13);
\node[sgNode] (s15) [below of=s14,yshift=.4cm,xshift=1.25cm] 
{$s_{15}$} edge [<-] node [left] {$r_4$} (s14);
\node[sgNode] (s16) [below of=s13,yshift=.4cm,xshift=1.25cm] 
{$s_{16}$} edge [<-] node [right] {$r_4$} (s13)
edge [->,very thick] node [right] {$r_5$} (s15);

\node[sgNode] (s17) [below of=s1,yshift=.4cm,xshift=1.25cm] 
{$s_{17}$} edge [<-,very thick] node [right] {$r_3$} (s1);
\node[sgNode] (s18) [below of=s17,yshift=.4cm] 
{$s_{18}$} edge [<-,very thick] node [right] {$r_2$} (s17);
\node[sgNode] (s19) [below of=s18,yshift=.4cm,xshift=1.25cm] 
{$s_{19}$} edge [<-,very thick] node [right] {$r_4$} (s18)
edge [->, very thick] node [right] {$r_6$} (s16);

\end{tikzpicture}
\end{minipage}
&
\begin{tabular*}{\textwidth}{l}
\begin{minipage}[t]{\dimexpr0.5\textwidth-2\tabcolsep}
\centering
\begin{lstlisting}[
    basicstyle=\scriptsize, %or \small or \footnotesize etc.
    mathescape=true
]
$r_1$: post($t_1$,$e_1$,$t$); // runs on thread $t_1$
$r_2$: post($t_2$,$e_2$,$t$); // runs on thread $t_2$
h1 := {$r_3$: post($t$,$e_3$,$t$); $r_4$: y = 2;}
h2 := {$r_5$: x = 5;}
h3 := {$r_6$: x = -5;}

\end{lstlisting}
\end{minipage}
\\
% &
\begin{minipage}[t]{\dimexpr0.5\textwidth-2\tabcolsep}
\vspace{-10pt}
% \hspace{-5pt}
% \captionsetup{font=footnotesize}
\captionof{figure}{Pseudo code of an event-driven program.}
\label{fig:listing}
\end{minipage}
\\
% &
\begin{minipage}{\dimexpr0.5\textwidth-2\tabcolsep}
\centering
\begin{tikzpicture}[auto,node distance=16 mm,semithick, scale=0.7, transform
shape]
\tikzstyle{blankNode} = [thick,inner sep=0pt,minimum size=6mm]

\node[blankNode] (r1) [font=\boldmath] {\large $r_1$};
\node[blankNode] (r3) [below of=r1,yshift=.3cm, font=\boldmath] 
{\large $r_3$} edge [<-,blue] (r1);
\node[blankNode] (r4) [below of=r3,yshift=.3cm,xshift=-.75cm, font=\boldmath] 
{\large $r_4$} edge [<-,blue] (r3);
\node[blankNode] (r6) [below of=r3,yshift=.3cm,xshift=.75cm, font=\boldmath] 
{\large $r_6$} edge [<-,blue] (r3);
\node[blankNode] (r2) [right of=r1,xshift=-.25cm, font=\boldmath] {\large $r_2$};
\node[blankNode] (r5) [below of=r2,yshift=.3cm, font=\boldmath] 
{\large $r_5$} edge [<-,blue] (r2)
edge [->,blue,very thick] (r6);

\node[fit=(r2)(r5), draw, dashed, gray!95] {};

\node[blankNode] (dummynode1) at ($(r4)!0.5!(r2)$) {};
\node[blankNode] (captionA) [below of=dummynode1,yshift=-.65cm] {
\begin{tabular}{c}
\large (a) $w_1$: $r_1.r_2.r_3.r_4.r_5.r_6$, \\
\large \phantom{zzz}$w_2$: $r_2.r_1.r_5.r_3.r_4.r_6$ \\
\end{tabular}};

\node[blankNode] (br1) [right of=r2, xshift=2.3cm,font=\boldmath] {\large $r_1$};
\node[blankNode] (br3) [below of=br1,yshift=.3cm, font=\boldmath] 
{\large $r_3$} edge [<-,blue] (br1);
\node[blankNode] (br4) [below of=br3,yshift=.3cm,xshift=-.75cm, font=\boldmath] 
{\large $r_4$} edge [<-,blue] (br3);
\node[blankNode] (br6) [below of=br3,yshift=.3cm,xshift=.75cm, font=\boldmath] 
{\large $r_6$} edge [<-,blue] (br3);
\node[blankNode] (br2) [right of=br1,xshift=-.25cm, font=\boldmath] {\large $r_2$};
\node[blankNode] (br5) [below of=br2,yshift=.3cm, font=\boldmath] 
{\large $r_5$} edge [<-,blue] (br2)
edge [<-,blue,very thick] (br6);

\node[fit=(br2)(br5), draw, dashed, gray!95] {};

\node[blankNode] (dummynode2) at ($(br4)!0.5!(br2)$) {};
\node[blankNode] (captionB) [below of=dummynode2,yshift=-.5cm] {
\large (b) $w_3$: $r_1.r_3.r_2.r_4.r_6.r_5$};
\end{tikzpicture}
\end{minipage} 

\end{tabular*}
\\
\begin{minipage}[t]{\dimexpr0.5\textwidth-2\tabcolsep}
% \vspace{-7pt}
% \captionsetup{font=footnotesize}
\captionof{figure}{%A transition system depicting
  Partial event-parallel state space of the program in Figure~\ref{fig:listing}.}
\label{fig:state-transition-pg}
\end{minipage}
&
\begin{minipage}[t]{\dimexpr0.5\textwidth-2\tabcolsep}
% \vspace{-7pt}
% \captionsetup{font=footnotesize}
\captionof{figure}{Dependence graphs of some sequences in $\mathcal{S}_G$
of the program in Figure~\ref{fig:listing}.}
\label{fig:dependence-graph}
\end{minipage}

\end{tabular*}
\end{figure*}

We illustrate the event-parallel transition system for the example program
in Figure~\ref{fig:listing}.
% The \texttt{post} posts the event provided as its first argument to 
% the target thread provided as the second argument. 
Here, \texttt{x} and \texttt{y} are shared variables.
The transitions $r_1$ and $r_2$ respectively run on threads $t_1$ and $t_2$. 
The last three lines in Figure~\ref{fig:listing} give
definitions of handlers of the events \texttt{e1}, \texttt{e2} and \texttt{e3} respectively.
Figure~\ref{fig:state-transition-pg} shows a partial state space
of the program in Figure~\ref{fig:listing} according to
the event-parallel transition system semantics. The edges are labeled with
the respective transitions. The shaded states and
thick edges indicate part of the state space that is reachable in the
transition system semantics of Section~\ref{sec:transition-system-por} as well,
under the mapping between states and transitions described above.

\begin{definition}
\label{def:epmt-dependence}
\normalfont 
Let $R_P$ be the set of transitions in the event-parallel transition
system $\mathcal{P}_G$ of a program $A$.
%% Let \todo{$R_P$} be the set of transitions of an event-driven multi-threaded program $A$
%% corresponding to its event-parallel transition system $\mathcal{P}_G$.
Let $D_P \subseteq R_P \times R_P$ be a binary, reflexive and symmetric 
relation. The relation $D_P$ is a valid \textbf{\emph{event-parallel dependence 
relation}} iff for all $(r_1, r_2) \in R_P \times R_P$, $(r_1, r_2) \notin D_P$ 
implies that the following conditions hold for all states $s \in \mathcal{S}_P$:
\begin{enumerate}
\item If $r_1$ is enabled in $s$ and $s' = r_1(s)$ then 
$r_2$ is enabled in $s$ iff it is enabled in $s'$.
% \item If $r_1$ and $r_2$ are both enabled in $s$ then there exists
% a unique state $s'$ such that $s' = r_1(r_2(s)) = r_2(r_1(s))$. 
\item If $r_1$ and $r_2$ are both enabled in $s$ then 
there exists $s' = (l',g',q') = r_1(r_2(s))$ and $s'' = (l'',g'',q'') = r_2(r_1(s))$  
such that $l' = l''$ and $g' = g''$.
\end{enumerate}
\end{definition}

This definition is similar to the definition of dependence relation 
in~\cite{Godefroid:1997:MCP:263699.263717} except that we do \emph{not} require equality of 
the event states $q'$ and $q''$ in the second condition above.
Clearly, any pair of \texttt{post} transitions, even if posting to the same
event queue, are \emph{independent} according to the event-parallel dependence relation.

\begin{definition}
\label{def:dependence-relation}
\normalfont 
Let $R$ be the set of transitions in the transition system $\mathcal{S}_G$
of a program $A$.
Let $D_P$ be a valid 
event-parallel dependence relation for $A$ and 
$D \subseteq R \times R$ be a binary, reflexive and symmetric relation.
The relation $D$ is a valid \textbf{\emph{dependence relation}} 
iff for all $(r_1, r_2) \in R \times R$, $(r_1, r_2) \notin D$ implies 
that the following conditions hold:
\begin{enumerate}
 \item If $r_1$ and $r_2$ are transitions of handlers of two different events
 $e_1$ and $e_2$ executing on the \emph{same thread} then the following conditions hold:
\begin{enumerate}
\item[(A)] Events $e_1$ and $e_2$ may be reordered in $\mathcal{S}_G$.
\item[(B)] $ep(r_1)$ and $ep(r_2)$ are independent in $D_P$, \ie\ $(ep(r_1),ep(r_2))$ $\not\in D_P$.
\end{enumerate}
 \item Otherwise, conditions 1 and 2 in Definition~\ref{def:epmt-dependence}
hold for all states $s \in \mathcal{S}$.
\end{enumerate}
\end{definition}

In the definition above,
we use the event-parallel dependence relation $D_P$ to formalize single-threaded
dependence between transitions of two handlers in $\mathcal{S}_G$ and apply
the constraints in Definition~\ref{def:epmt-dependence} to states in $\mathcal{S}_G$
to define (1)~dependence among transitions of the same event handler and
(2)~multi-threaded dependence.
From the second condition in Definition~\ref{def:dependence-relation}, all 
\texttt{post}s are considered as \emph{independent} of each other
in $\mathcal{S}_G$.

\begin{example}
\label{ex:dependence}
\upshape
The transitions $r_5$ and $r_6$ in Figure~\ref{fig:listing}
run in two different event handlers but on the same thread \texttt{t}.
Since in the event-parallel transition system, the handlers execute
concurrently, we can inspect the effect of reordering $r_5$ and $r_6$
on a state where they are co-enabled.
In particular, at state $s_3$ in Figure~\ref{fig:state-transition-pg},
the sequence $r_{6} . r_{5}$ reaches state $s_{14}$, whereas,
$r_{5} . r_{6}$ reaches $s_{12}$ which differs from $s_{14}$ in the value of
\texttt{x}. Therefore, $(r_5,r_6) \in D_P$ and by condition 1.B of
Definition~\ref{def:dependence-relation}, $(r_5,r_6) \in D$.
\end{example}

The condition 1.A of Definition~\ref{def:dependence-relation} requires that
the ordering between $e_1$ and $e_2$ should not be fixed. Suppose the handler
of $e_1$ posts $e_2$ but the two handlers do not have any pair of transitions
that are in $D_P$. Recall that we do not track dependence through event queues.
Nevertheless, since a \texttt{post} transition in $e_1$ \emph{enables} $e_2$, 
the transitions in the two handlers should be marked as dependent.
This requirement is met through condition 1.A. %% Note that since the handlers
%% run to completion, we simply mark all pairs of transitions in them as dependent.
Intuitively, it serves a purpose analogous to condition 1 of
Definition~\ref{def:epmt-dependence}.

If $(r_i,r_j) \in D$, we simply say that $r_i$ and $r_j$ are \emph{dependent}. 
In practice, we over-approximate the dependence relation, for example, by considering
all conflicting accesses to shared objects as dependent.

\subsection{Dependence-covering Sets}
\label{sec:dcset}

Mazurkiewicz trace~\cite{DBLP:conf:ac:Mazurkiewicz86} forms the
basis of POR for multi-threaded programs
and event-driven programs where \texttt{post}s are
considered dependent. Two transition sequences belong to the same
Mazurkiewicz trace if they can be obtained from each other by
reordering adjacent independent transitions. The objective of POR is
to explore a representative sequence from each Mazurkiewicz trace.
As pointed out in the Introduction, the reordering of 
\texttt{post}s (independent as per Definition~\ref{def:dependence-relation})
in a transition sequence $w$ may not yield another sequence
belonging to the same Mazurkiewicz trace (denoted $[w]$) for two reasons:
(1)~it may reorder dependent transitions from the corresponding event handlers and
(2)~some new transitions, not in $w$, may be pulled in.

We elaborate on the second point. Suppose in $w$, a handler
$h_1$ executes before another handler $h_2$, both on the same thread,
such that $h_2$ is executed only partially in $w$. 
Let us reorder the \texttt{post} operations
for these two and obtain a transition sequence $w'$. Since the handlers run to completion,
in order to include all the transitions of $h_1$ (executed in $w$) in $w'$,
we must complete execution of $h_2$. However, as $h_2$ is only partially executed in $w$,
this results in including \emph{new} ---previously unexplored---
transitions of $h_2$ in $w'$. This renders $w$ and $w'$ inequivalent
by the notion of Mazurkiewicz equivalence which expects the set of transitions
in two equivalent sequences to be identical.

We therefore propose an alternative notion,
suitable to correlate two transition sequences in event-driven programs, called the \emph{dependence-covering sequence}.
The objective of our reduction is to explore a dependence-covering sequence
$u$ at a state $s$ for any transition sequence $w$ starting at $s$.

Let $w: r_1.r_2 \ldots r_n$ and $u: r_1'.r_2' \ldots r_m'$ be two transition sequences
from the same state $s$ in $\mathcal{S}_G$ reaching states $s_n$ and $s_m'$ respectively. 
Let $R_w = \{r_1,\ldots,r_n\}$ and $R_u = \{r_1',\ldots,r_m'\}$.
%% , $R_w' = R_w \cup nextTrans(s_n)$
%% and $R_u' = R_u \cup nextTrans(s_m')$. 
%% For a transition $r$ and a sequence $v$, $index(v,r)$ indicates the position of $r$
%% in $v$. For example, in the above sequence $w$, $index(w,r_i) = i$.

\begin{definition}
\label{def:dep-covering}
\normalfont
The transition sequence $u$ is  called a \textbf{\emph{dependence-covering 
sequence}} of $w$ if (i)~all the transitions in $w$ are in $u$
but $u$ can have more transitions than $w$
(i.e., $R_w \subseteq R_u$) and (ii)~for each pair of dependent transitions 
$r_i',r_j' \in R_u$ such that $i < j$, any one among the following conditions holds:
\begin{enumerate}
% \item $r_i'$ and $r_j'$ are both executed in $u$ as well as $w$
% and their relative order in $u$ is consistent with their order in $w$.
\item $r_i'$ and $r_j'$ are executed in $w$ and their relative order in $u$ is 
consistent with that in $w$.
% \item $r_i'$ is executed in both $u$ as well as $w$,
% $r_j'$ is executed in $u$ but not in $w$ and $r_j' \in nextTrans(s_n)$.
\item $r_i'$ is executed in $w$ and $r_j' \in nextTrans(s_n)$.
\item $r_i'$ is not executed in $w$, $r_j' \in nextTrans(s_n)$
and $w$ can be extended in $\mathcal{S}_G$ such that $r_i'$ executes before $r_j'$.
\item Irrespective of whether $r_i'$ is executed in $w$ or not, 
$r_j'$ is not in $R_w \cup nextTrans(s_n)$.
%% \item $r_i'$ may or may not be executed in $w$, and
%% $r_j'$ is not in $R_w \cup nextTrans(s_n)$.
\end{enumerate}
\end{definition}

The condition (i) above allows new transitions, that are not in $w$,
to be part of $u$. The condition (ii) restricts how the new transitions may interfere
with the dependences exhibited in $w$ and also requires all the dependences in
$w$ to be maintained in $u$. These conditions permit dependence-covering
sequence to be a relaxation of Mazurkiewicz trace, making it
more suitable for stateless model checking of event-driven
programs where \texttt{post}s may be reordered selectively.

\begin{example}
\upshape 
As an example, let $w_1$, $w_2$ and $w_3$ be the three transition sequences
in Figure~\ref{fig:state-transition-pg} which correspond to valid sequences
in the transition system $\mathcal{S}_G$ of the program in Figure~\ref{fig:listing}.
The sequences of transitions in $w_1$, $w_2$ and $w_3$ are listed in
Figure~\ref{fig:dependence-graph}. To illustrate dependence-covering sequences,
we visualize the dependences in these
sequences as directed graphs, called dependence graphs, in Figure~\ref{fig:dependence-graph}.
The nodes in the dependence graph of a transition sequence $w$ represent transitions in $w$.
If a transition $r_i$ executes before another transition $r_j$ in $w$ such that
$r_i$ and $r_j$ are dependent then we draw an edge from $r_i$ to $r_j$.
The sequences $w_1$ and $w_2$ are dependence-covering sequences
of each other. As can be seen in Figure~\ref{fig:dependence-graph}(a),
their dependence graphs are identical.
Also, both $w_1$ and $w_2$ are dependence-covering sequences of
a sequence $w_4 = r_2.r_5$. The dependence graph of $w_4$ is isomorphic to
a subgraph (enclosed in a rectangular box) of Figure~\ref{fig:dependence-graph}(a).
For transitions $r_1$, $r_3$, $r_4$ and $r_6$ which do not belong to this subgraph,
there are no restrictions on dependences among themselves. However,
by Definition~\ref{def:dep-covering}, there can be
no incoming edge to the subgraph from nodes not in the subgraph.
Transition sequences $w_1$ and $w_2$ satisfy these criteria w.r.t. $w_4$ and 
hence are dependence-covering sequences of $w_4$.
However, we note that $w_4$ and $w_1$ (or $w_2$) do not belong to the same Mazurkiewicz trace.
The sequence $w_3$ is not a dependence-covering sequence
of $w_4$ since there is an interfering dependence $(r_6,r_5) \in D$ to
the transition $r_5$ executed in $w_4$. Pictorially, we can see an
incoming edge from $r_6$ to $r_5$ in Figure~\ref{fig:dependence-graph}(b).
%% We note that $w_1$ and $w_2$ are dependence-covering sequences
%% of even the prefixes of each other.  
\end{example}

\paragraph{Note.} An important takeaway from the above example is that a dependence-covering
sequence $u$ of a transition sequence $w$ can reorder event handlers seen in 
$w$ so long as the relative ordering of dependent transitions in $w$ are not 
altered. Hence, in addition to identifying similarities between thread schedules,
dependence-covering sequences enable identification of similar ordering between
events as well. Recognizing similar event orderings was not possible with the 
Mazurkiewicz way of identifying equivalence between transition sequences.
 
\begin{definition}
\label{def:dc-set}
\normalfont
A non-empty subset $L$ of transitions enabled at a state $s$ in $\mathcal{S}_G$ is a 
%\emph{\lazy\ set} 
\textbf{\emph{dependence-covering set}} in $s$ iff,
for all non-empty sequences of transitions $w: r_1 \ldots r_n$ starting at $s$, 
% such that $r_1 \notin L$, 
there exists a dependence-covering sequence $u: r_1' \ldots r_m'$ of $w$ 
starting at $s$ such that $r_1' \in L$.
\end{definition}

\begin{example}
\label{ex:dcs}
\upshape
All the transition sequences connecting state $s_0$ to state $s_5$ in Figure~\ref{fig:state-transition-motivation} 
are dependence-covering sequences of each other. Thus, each of $\{r_1\}$, $\{r_2\}$ and $\{r_1,r_2\}$
are dependence-covering sets at $s_0$.
Even if we take a prefix $\sigma$ of any of these sequences, the shaded sequence
in Figure~\ref{fig:state-transition-motivation} is a dependence-covering sequence of $\sigma$.

In Figure~\ref{fig:motivation-statespace-mt}, $\{r_2\}$ and $\{r_1,r_2\}$ are individually
dependence-covering sets at state $s_0$, whereas, $\{r_1\}$ is not a dependence-covering set at $s_0$.
\end{example}

For efficient stateless model checking of event-driven programs,
we can explore a reduced state space using dependence-covering
sets.

\begin{definition}
\label{def:dep-cover-statespace}
\normalfont
A \textbf{\emph{dependence-covering state space}} of an
event-driven program $A$ is a 
reduced state space $\mathcal{S}_R \subseteq \mathcal{S}_G$ obtained by selectively 
exploring only the transitions in a dependence-covering set at each state in $\mathcal{S}_G$ reached from $s_{init}$.
\end{definition}

The objective of a POR approach is to show
that even while exploring a reduced state space,  
no concurrency bug is missed w.r.t. the complete but possibly
much larger state space.
The exploration of a dependence-covering state space satisfies
this objective.
The following theorem states this guarantee.
\begin{theorem}
\label{theorem:deadlock-reachability}
\normalfont
Let $\mathcal{S}_R$ be a dependence-covering state space of an event-driven
program $A$ with a finite and acyclic state space $\mathcal{S}_G$. 
Then, all deadlock cycles in $\mathcal{S}_G$ are 
reachable in $\mathcal{S}_R$. If there exists a state $v$ in $\mathcal{S}_G$
which violates an assertion $\alpha$ defined over local variables then
there exists a state $v'$ in $\mathcal{S}_R$ which violates $\alpha$.
\end{theorem}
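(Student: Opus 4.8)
The plan is to prove the two claims --- deadlock-cycle reachability and assertion-violation reachability --- by a common induction argument over the structure of the reduced state space $\mathcal{S}_R$, lifting witnesses from $\mathcal{S}_G$ step by step. The key lemma I would establish first is a \emph{projection/replay lemma}: if $w$ is any transition sequence from a state $s$ in $\mathcal{S}_G$ exhibiting a deadlock cycle (resp.\ an assertion violation) at its end, and $u$ is a dependence-covering sequence of $w$ from $s$, then $u$ can be extended in $\mathcal{S}_G$ to a sequence $u'$ that also reaches a state exhibiting the same deadlock cycle (resp.\ assertion violation). The intuition is that the dependence-covering conditions (Definition~\ref{def:dep-covering}) guarantee that every pair of dependent transitions in $w$ occurs in $u$ in a consistent order, so the ``relevant'' causal structure leading to the bug is preserved; the extra transitions $u$ may contain (condition (i)) and the reorderings of independent transitions (including \texttt{post}s) do not destroy the bug because Definition~\ref{def:dependence-relation} and Definition~\ref{def:epmt-dependence} ensure commuting independent transitions preserve both local and global state (and hence enabledness of the blocked transitions forming the cycle, and the valuation of local variables in the assertion).

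Given the replay lemma, the main argument is an induction on the length of the shortest path in $\mathcal{S}_G$ from $s_{init}$ to the bug state $v$. In the inductive step, suppose $v$ is reached from $s_{init}$ by a sequence $w = r_1 . w'$. By Definition~\ref{def:dc-set}, there is a dependence-covering set $L$ at $s_{init}$ in the reduced exploration, and a dependence-covering sequence $u = r_1' . u'$ of $w$ with $r_1' \in L$, so $r_1'$ is explored in $\mathcal{S}_R$. There are two cases. If $r_1' = r_1$, then $u'$ is (essentially) a dependence-covering sequence of $w'$ from the successor state $r_1(s_{init})$, which lies in $\mathcal{S}_R$; the sub-path to the bug in $\mathcal{S}_G$ from this successor is strictly shorter, so by the induction hypothesis the bug is reachable in the part of $\mathcal{S}_R$ rooted there. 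If $r_1' \neq r_1$, I use the replay lemma: $u$ extends to a sequence $u'$ in $\mathcal{S}_G$ from $s_{init}$ reaching a bug state; now the successor $r_1'(s_{init})$ is in $\mathcal{S}_R$, and the remaining path from it to the bug in $\mathcal{S}_G$ is again shorter (since $u$ routes around the ``wrong'' first step), so the induction hypothesis applies at that successor. Either way the bug is reachable in $\mathcal{S}_R$.

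The bulk of the work is in the replay lemma, and within it the two bug classes need slightly different bookkeeping. For deadlock cycles $\langle DC, \rho \rangle$: the transitions in $DC$ are blocked next-transitions at $v$; I would argue that each such transition, being blocked, is either never enabled along $w$ or its (non-)enabledness is a function only of state components that the dependence-covering conditions preserve, so $DC$ with the same map $\rho$ re-forms at the end of $u'$. Here condition~2 of Definition~\ref{def:epmt-dependence}, which gives equality of local and global states (but \emph{not} queue states) after commuting independent transitions, must be shown sufficient --- this is where I would need to check carefully that blockedness of the cycle transitions does not secretly depend on queue contents in a way condition~1.A of Definition~\ref{def:dependence-relation} fails to cover; I expect condition~1.A (events not fixed in order) plus the ``blocked only by / enabled only by a transition on $t_{i+1}$'' structure of deadlock cycles to close this gap. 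For assertion violations: $\alpha$ is a predicate over local variables of one handler and is enabled at $v$; I would track the transitions in $w$ that the assertion transition depends on, show they appear consistently ordered in $u$ (conditions~1--2 of Definition~\ref{def:dep-covering}), and conclude that $\alpha$ is enabled with the same (false) valuation at the end of $u'$. The main obstacle I anticipate is handling the \emph{new} transitions permitted in $u$ (condition (i) of Definition~\ref{def:dep-covering} and condition~3 of its clause (ii)): I must show that completing a partially-executed handler $h_2$ --- which pulls in previously-unexplored transitions --- cannot create a dependence that reorders something relative to the bug-critical transitions; condition~4 of Definition~\ref{def:dep-covering} (forbidding incoming edges to the $R_w \cup nextTrans(s_n)$ ``core'' from outside) is precisely the hook that should make this go through, and getting that argument airtight is the delicate part.
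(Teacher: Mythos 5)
Your overall strategy matches the paper's: a preservation lemma showing that a dependence-covering sequence $u$ of a bug-witnessing sequence $w$ still witnesses the bug, followed by a descent through $\mathcal{S}_R$ that repeatedly replaces the witness by the tail of a covering sequence whose first transition lies in the dependence-covering set. The paper's version of your ``replay lemma'' is in fact slightly sharper: no extension of $u$ is needed. For deadlocks it shows (via the set $R_b$ of transitions transitively dependent with each blocked $b \in DC$, whose relative order is preserved, plus condition~3 of Definition~\ref{def:dep-covering} ruling out any new enabling transition) that the cycle re-forms at the \emph{last} state of $u$ itself; for assertions, since $\alpha$ is over local variables no new transition of $u$ can be dependent with it, so the violation occurs at the state of $u$ reached after the $R_\alpha$-transitions. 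Your instincts about where the delicacies lie (new transitions, queue-dependent enabledness) are the right ones, and the hooks you name are essentially the ones the paper uses.

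The genuine gap is your induction measure. You induct on the length of the shortest path in $\mathcal{S}_G$ to a bug state, and in the case $r_1' \neq r_1$ you assert that the remaining path from $r_1'(s_{init})$ to the bug ``is again shorter (since $u$ routes around the wrong first step).'' This does not hold: Definition~\ref{def:dep-covering} requires $R_w \subseteq R_u$ and explicitly permits $u$ to contain \emph{new} transitions (e.g., the completion of a partially executed handler), so $|u| \geq |w|$ and possibly $|u| > |w|$. The witness available at $r_1'(s_{init})$ is the tail of $u$ (or your extension of it), whose length can equal or exceed $|w|$, and nothing guarantees that the shortest distance to a bug state has decreased. As written, the inductive step fails precisely in the situation this POR framework is designed to handle. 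The fix is what the paper does: do not measure progress by witness length at all, but iterate the one-step argument --- each step moves to a strict successor in $\mathcal{S}_G$ carrying along \emph{some} witness from the new state --- and conclude termination from the assumed finiteness and acyclicity of $\mathcal{S}_G$ (equivalently, induct on the longest remaining path in the DAG). With that replacement, and dropping the unnecessary case split on whether $r_1' = r_1$ (the tail of $u$ serves as the new witness in both cases; it need not be related to $w'$), your argument goes through.
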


The proof follows from the appropriate restrictions on allowed
dependences in a dependence-covering sequence $u$ compared to
the dependences in $w$ where $w$ is required to reach
a deadlock cycle or an assertion violation in the complete state space. We provide
a complete proof of the above theorem in Appendix~\ref{app:dc-properties}.

The set $\{r_1,r_2\}$ is both a persistent set and a dependence-covering
set at state $s_0$ in Figure~\ref{fig:state-transition-motivation}.
We observe that in general, a persistent set $P$ at a state $s \in \mathcal{S}_G$
is also a dependence-covering set at $s$. Here, persistent set
is defined using the dependence relation where \texttt{post}s to the same event queue are
dependent, whereas, dependence-covering set is defined using
the dependence relation where they are not (more formally, using
Definition~\ref{def:dependence-relation}). We present a proof
of this claim in Appendix~\ref{sec:ps-as-dcs}.
Note that a dependence-covering set need not be a persistent set.
As seen in Example~\ref{ex:dcs}, $\{r_1\}$ and $\{r_2\}$ individually are both
dependence-covering sets at $s_0$ in Figure~\ref{fig:state-transition-motivation}
but they are not persistent sets.

\section{Dynamic Algorithm to Compute Dependence-covering Sets}
\label{sec:algo}
This section describes the \emdpor\ algorithm for model checking event-driven
multi-threaded programs to explore a dependence-covering state space 
(see Definition~\defdepcoverstatespace). 
\emdpor\ extends DPOR~\cite{Flanagan:2005:DPR:1040305.1040315} to 
compute dependence-covering sets. However, it differs from DPOR in many key steps.

\subsection{Comparison between DPOR and EM-DPOR}
DPOR performs depth first traversal on the transition system of a program. Instead of exploring
all the enabled transitions at a state, it only explores transitions added as
backtracking choices by the steps of the algorithm which guarantees exploring
a persistent set at each visited state. On exploring a sequence $w$ reaching a state $s'$, and seeing dependence
between a transition $r' \in \nextTrans(s')$ and a transition $r$ executed at a 
state $s$ reached by a prefix of $w$, %in the search stack (transition sequence) reaching a state 
%$s'$ and a transition $r' \in nextTrans(s')$, 
DPOR adds backtracking choices at state $s$, so as to reorder $r$ and $r'$ eventually. 
However, not every pair of
dependent transitions can be reordered. For example, a pair of dependent transitions
where one transition enables the other, cannot be reordered. DPOR uses a dependence
relation which implicitly considers every adjacent pair of transitions executed
on the same thread as dependent, because executing a transition on a thread enables
the execution of the next transition. Hence, DPOR only attempts to reorder dependent 
transitions which \emph{may be co-enabled}, \ie\ atleast executed on different threads.
However, a pair of dependent transitions executed on different threads may have a
strict ordering between them in a given execution, making them unsuitable for reordering
at any state reached in that execution. DPOR uses \emph{happens-before relation}, a partial
order relation on dependent transitions, to capture the ordering between dependent transitions
in a transition sequence. 
% Such a happens-before relation totally orders the transitions executed on the same thread. 
DPOR reorders only those may be co-enabled dependent transitions
which are not ordered by happens-before relation over the explored sequence. 

EM-DPOR, extends the DPOR~\cite{Flanagan:2005:DPR:1040305.1040315}
algorithm and computes dependence-covering sets. However, it differs from DPOR in several
ways. In particular, EM-DPOR incorporates several non-trivial steps
(1)~to reason about both multi-threaded dependences as well as dependent transitions
from different event handlers on the same thread (single-threaded dependences),
and (2)~to identify events for selective reordering and infer appropriate backtracking choices
to achieve the reordering. %We prove that EM-DPOR computes dependence-covering sets. 
In order to perform these steps, EM-DPOR uses the dependence relation defined by Definition~\ref{def:dependence-relation}, to
identify dependent transitions. A happens-before relation based on this dependence
relation does not totally order all the transitions executed on the same thread,
and restricts the total ordering only within a task (due to the second condition in Definition~\ref{def:dependence-relation}).
A task refers to an event handler or a thread without an event queue.
Analogously, EM-DPOR attempts to reorder a pair of dependent transitions which
\emph{may be co-enabled} or executed in the handlers of \emph{may be reordered} events (see Section~\ref{sec:transition-system-por})
on the same thread. Typically, dynamic POR algorithms only reorder dependent transitions
\ie\ they add backtracking choices only at a state which executes a transition $r$ dependent with 
another transition $r'$ such that $r$ and $r'$ are identified for reordering. 
This is not the case with EM-DPOR. Due to atomic execution of event handlers and FIFO 
processing of events in a queue, reordering a pair of dependent transitions from different
handlers on the same thread would require reordering their corresponding \texttt{post}s. 
Transitions posting to the same event queue may have to be reordered even to reorder
dependent transitions on different threads, as shown for the state space 
in Figure~\ref{fig:motivation-statespace-mt}. Hence, EM-DPOR \emph{selectively} reorders
\texttt{post}s to the same event queue %so as to reorder some dependent transitions, 
even though the dependence relation used by EM-DPOR considers all the pairs of \texttt{post}s 
to be independent. 
When attempting to reorder a transition $r$ executed at a state $s$ and a dependent transition $r'$,
if EM-DPOR fails to add backtracking choices at state $s$ %so as to execute $r'$ prior to $r$
then, EM-DPOR employs a recursive strategy to dynamically identify and reorder certain 
\texttt{post}s to the same event queue. 
As will be explained in Example~\ref{ex:reschedulepending}, EM-DPOR requires the enforced ordering between such selectively
reordered \texttt{post} operations to be captured. Hence, the happens-before relation
that we use with EM-DPOR is defined to be a partial order on dependent transitions
as well as selectively reordered \texttt{post}s. 

\subsection{Definitions}
\label{sec:emdpor-algo-defs}
We now define \emph{(selectively) reordered posts} and the happens-before
relation used by our algorithm. We also define a few functions that will be used in the
rest of the section, and a notion of \emph{diverging posts} that will be used by 
EM-DPOR to reorder a pair of transitions from different event handlers 
on the same thread.

\sloppy{
\paragraph{Reordered posts.} We define a function $\reorderedPosts(p,w)$ which takes
a transition $p$ posting an event to a thread $t$'s event queue and a sequence $w$ explored by EM-DPOR
where $p$ is executed in $w$, as input, and returns a set $P$ of transitions 
such that a transition $p'$ is a member of $P$ if the following conditions hold:
\begin{enumerate}
\item $p'$ posts an event to thread $t$'s event queue.
\item There exists a prefix $w_1$ of $w$ such that $w = w_1.w_2$, $w_1$ reaches 
a state $s$, $p$ is executed in $w_2$, and the following holds:
\begin{enumerate}
\item[(A)] EM-DPOR has already explored a sequence $w_1.w_3$ where $w_3 = p.a_1 \ldots a_i .p'. a_{i+1} \ldots a_m$, each 
 $a_i$ for $1 \leq\ i \leq m$ is a transition, and has added backtracking choices
 at state $s$ to reorder the \post{} transitions $p$ and $p'$, and
\item[(B)] $p'$ is a transition in $w_2$ such that $index(w_2,p') < index(w_2,p)$.
% \ie\ $p'$ and $p$ have been reordered in $w = w_1.w_2$ compared to their order in $w_1.w_3$.
\end{enumerate}
\end{enumerate}
}

\paragraph{Happens-before relation.} 
In the concurrency model assumed, the events posted to the same event queue are 
handled in FIFO order. Hence, we extend the happens-before relation defined 
in~\cite{Flanagan:2005:DPR:1040305.1040315} with a rule to reason about FIFO ordering
and a rule to capture ordering between reordered \texttt{post}s. 

\begin{definition}
\label{def:happens-before}
\normalfont
For a transition sequence $w : r_1 . r_2 \ldots r_n$ in $\mathcal{S}_G$ explored by EM-DPOR, 
the \textbf{\emph{happens-before relation}} $\to_{w}$ is the 
smallest relation on $\domof(w)$ such that the following conditions hold:
\begin{enumerate}
\item If $i \leq j$ and $r_i$ is dependent with $r_j$ then $i \to_{w} j$.
\item If $r_i$ and $r_j$ are two different transitions posting events $e$ and $e'$
respectively to the same thread, such that 
$i \to_w j$ and the handler of $e$ has finished and that of
$e'$ has started in $w$, then $getEnd(w,e) \to_w getBegin(w,e')$.
This is the FIFO rule.
\item If $r_j$ is a \texttt{post} transition and $r_i \in \reorderedPosts(r_j,w)$
such that $i = max(\{l \mid r_l \in \reorderedPosts(r_j,w) \})$ then $i \to_{w} j$.
\item $\to_{w}$ is transitively closed.
\end{enumerate}
The relation $\to_w$ is defined over transitions in $w$.
We overload $\to_w$ to relate transitions in $w$ with those in the $nextTrans$
set in the last state, say $s$, reached by $w$.
For a task $(t,e)$ having a transition in $\nextTrans(s)$, $i\to_{w} (t,e)$ if either (a)~$\taskof(r_i)$ $=$ $(t,e)$ or
(b)~$\exists k \in dom(w)$ such that $i\to_{w} k$ and $\taskof(r_k) = (t,e)$.
\end{definition}

We note that unlike the happens-before relation defined in~\cite{Flanagan:2005:DPR:1040305.1040315},
the happens-before relation defined above captures some information related to 
sequences rooted at states reached by prefixes of $w$ explored by EM-DPOR prior
to exploring $w$. This is required to add happens-before mapping between reordered
\texttt{post}s. 
% EM-DPOR requires a happens-before mapping to be added only between nearest (adjacent) 
% reordered \post s, as the handlers of the nearest \post\ operations will not have 
% interference from handlers of other \post\ operations in $reorderedPosts(r,w)$.

\paragraph{Diverging posts.} For a transition sequence $w$ in $\mathcal{S}_G$ reaching a state $s$
and a transition $r$ in $w$ or $\nextTrans(s)$, 
let $\postChain(r,w) = p_{m}.p_{m-1} \ldots p_1$ be the maximal sequence of \post\ transitions in $w$
such that $p_{i-1}$ is a transition in the handler of the event posted
by $p_i$ for $m \geq i > 1$, and $p_1$ posts the event whose
handler executes $r$. Let $r$ and $r'$ be transitions of two handlers
running on the same thread such that
$\postChain(r,w) = p_k \ldots p_1$ and $\postChain(r',w) = q_l \ldots q_1$. 
Then, $\divergingPosts(r,r',w)$ is a pair of \texttt{post}s $(p_i,q_i)$ 
where $i$ is 
the smallest index in the post-chains of $r$ and $r'$ in sequence $w$ 
such that $\threadof(p_i) \neq \threadof(q_i)$.
% In Figure~\ref{fig:trace}, $divergingPosts(r_4,r_5,r_1\ldots r_5) = (r_2,r_3)$.
In Figure~\ref{fig:motivation-mt}, $\divergingPosts(r_3,r_4,r_1\ldots r_6) = (r_1,r_2)$.
Diverging posts are undefined if there exists an index $j$ such that 
$\taskof(p_j) = \taskof(q_j)$ and for all $i < j$, $\threadof(p_i) = \threadof(q_i)$. 

The order of execution of diverging posts of $r$ and $r'$ uniquely determines 
the order of execution of $r$ and $r'$. 
In Figure~\ref{fig:motivation-mt}, the order of execution of $r_1$ and $r_2$
uniquely determines the order of execution of $r_3$ and $r_4$. 
If $r$ and $r'$ do not have diverging \texttt{post}s, their relative order of execution is fixed.

\paragraph{Helper functions and data structures.} 
Function $enabled(s)$ gives the set of threads whose next transitions are
enabled at a state $s$. 
Consider a transition sequence $w : r_1 \ldots r_n$ from 
the initial state $s_{init}$ of a given event-driven multi-threaded program.
The function $last(w)$ gives the last state reached by $w$. 
If $w$ is empty, it is the initial state.
For an index $k \in \domof(w)$, $pre(w,k)$ is the state before executing transition $r_k$.
The function $getPost(w,e)$ gives the transition in $w$ which posted the event $e$. 
Function $\eventof(r)$ gives the event corresponding
to the handler which executes $r$ 
(this is $nil$ if $r$ is executed by a thread
without an event queue). For a thread $t$ with an event queue,
the function $executable(s,t)$ returns the event whose handler can 
perform the next transition on $t$ in a state $s$, whereas $blockedEv(s,t)$ 
returns the set of events present in $t$'s queue in state $s$ that are not executable.
We say that a task $(t,e)$ is executable at a state $s$ if $t$ is a thread without a queue
($e = nil$), or $e = executable(s,t)$. 
Function $execTasks(s)$ returns the set of tasks whose events are executable in state $s$,
whereas $blockedTasks(s)$ returns the set of tasks whose events are blocked in state $s$.
Function $\destof(r)$ takes a transition $r$ posting an event as input and returns the destination thread. %thread to whose event queue $r$ posts an event.
Data structures $backtrack(s)$ and $done(s)$ respectively track the threads added as backtracking choices
at a state $s$, and the threads already explored from a state $s$ during the DFS traversal.
% In the rest of the section let $\mathcal{S}_G$ denote the state space of an event-driven
% multi-threaded program.
Another data structure the algorithm populates is the set $RP$ maintained at every 
visited state. The set $RP(s)$ corresponding to a state $s$ is a set of ordered 
pairs of transitions where a pair $(a,b) \in RP(s)$ is such that $a$ and $b$ are 
posts to the same thread such that $a$ and $b$ have been identified for reordering
in an execution where $a$ is executed prior to $b$. The set $RP$ will be implicitly
looked up to compute the set \reorderedPosts\ of a post operation, and in turn 
derive happens-before ordering between posts as per condition~3 
in Definition~\ref{def:happens-before}.

\subsection{Overview of EM-DPOR Algorithm}
\label{sec:emdpor-algo}
This section describes the \emdpor\ algorithm to explore a dependence-covering 
state space (see Definition~\ref{def:dep-cover-statespace}) of event-driven
programs obeying the concurrency model described in Section~\ref{sec:transition-system-por}.
 
The EM-DPOR algorithm has two components: (1)~a depth first search based state space explorer
called \texttt{Explore}, and (2)~a recursive routine called \texttt{FindTarget} to
compute backtracking points and choices for a pair
of reorderable dependent transitions.  
We note that the algorithms presented in this section assume dependence even 
between transitions reading from the same shared variable, even though the 
dependence relation defined by Definition~\ref{def:dependence-relation} 
considers such non-conflicting transitions to be independent. In Appendix~\ref{app:emdpor-opt}, 
we present modifications to the Algorithm \texttt{Explore} which makes \emdpor\ capable of 
treating such transitions including a few more types of transitions as independent.
We now give an overview of \texttt{Explore} and \texttt{FindTarget}.

%%%%%%%%  algorithm explore   %%%%%%%%%
{
\SetEndCharOfAlgoLine{}
\SetInd{6pt}{0.01pt}
\begin{algorithm*}[t]
\small
\SetKwFunction{findTarget}{FindTarget}
\SetKwFunction{Explore}{Explore}

\SetKwInput{KwInput}{Input}
\KwInput{a transition sequence \boldmath$w$\unboldmath$: r_1 \ldots r_n$ and a set 
\boldmath$rp$ \unboldmath of posts to be reordered}

Let $s$ $=$ $last(w)$; \phantom{xx} $RP(s) = rp$\;
\nllabel{line-init}

\ForEach{thread $t$}{
\nllabel{loop-start}
\Indp
\If{$\exists i = \max(\{ i \in dom(w) \mid r_i \text{ is dependent and }
  (\text{may be co-enabled or reordered with } next(s,t) ) \text{ and }  
  i\; {\not\to}_{w}\; task(next(s,t))\})$
\nllabel{line-cond}
}{
   \Indp
   \tcp{Identify backtracking point and choice to reorder $r_i$ and $next(s,t)$}
   \texttt{FindTarget}$(w, r_i, next(s,t))$\;
   \nllabel{line-findtarget}
}
}
\nllabel{loop-end}

\If{$\exists t \in enabled(s)$}{
\nllabel{line-dfs-start}
\Indp
  Let $backtrack(s)$ $=$ $\{t\}$ and $done(s)$ $=$ $\emptyset$\;
  \tcp{Perform selective depth-first traversal}
  \While{$\exists t \in (backtrack(s) \setminus done(s))$}{
\Indp
    Let $r = next(s,t)$; Execute transition $r$\;
%     \If(\tcp*[h]{Fix the order of reordered \post\ operations}){$r$ is a \emph{\texttt{post}} operation \nllabel{line-ifpost} }{
    \If{$r$ is a \emph{\texttt{post}} operation \nllabel{line-ifpost} }{
\Indp
    \If{$\exists k = max(\{k \in dom(w) \mid r_k \in reorderedPosts(r,w.r)\})$  \nllabel{line-maxpost}}
    {  % Add a happens-before edge from $r_k$ to $r$;
    \Indp
    Add thread $t$ to $backtrack(pre(w,k))$\;
    \nllabel{line-addpost-to-backtrack}
    }
    $rp$ = $RP(s)\setminus\{(r,\_) \in RP(s)\}$\;
    \nllabel{line-remover-from-rp}
    }
    \nllabel{line-endifpost}
    Add $t$ to $done(s)$; $\texttt{Explore}(w \cdot r)$\;
    \nllabel{line-done}
  }
}
\nllabel{line-dfs-end}
\caption{\texttt{Explore}}
\label{algo:explore}
\end{algorithm*}
}

\paragraph{Explore.}
Algorithm \texttt{Explore}, given as Algorithm~\ref{algo:explore}, takes
a transition sequence $w$ and a set $rp$ of \texttt{post}s identified for reordering, as input and obtains the current
state $s = last(w)$ (line~\ref{line-init}). Also, the set $RP(s)$ corresponding
to state $s$ is initialized to $rp$.
Initially, \ie\ when \texttt{Explore} is invoked for the first time, $w$ is empty. %% and $s$ is set to
%% the initial state $s_{init}$ of the transition system. 

The loop at lines~\ref{loop-start}--\ref{loop-end} iterates over all
threads $t$ and identifies transitions from $w$ that have a \emph{race} with $next(s,t)$. 
A transition $r_i$ has a \emph{race} with $next(s,t)$ if they are dependent and 
may be co-enabled (if $thread(r_i) \neq t$) or may be reordered (if $event(r_i)$ and 
$event(next(s,t))$ may be reordered),
and $r_i$ does not happen before any transition in the task that
executes $next(s,t)$. The algorithm selects a transition $r_i$ which satisfies
the above requirements and has the highest index in $w$.
It then invokes the recursive routine \texttt{FindTarget} at line~\ref{line-findtarget} 
to compute backtracking choices to reorder $r_i$ and $next(s,t)$, and if required, identify
\texttt{post}s to same thread for selective reordering. 
% to find an appropriate backtracking state $s'$ to reorder transitions $r_i$ and
% $next(s,t)$, and compute backtracking choices to be added to backtracking set 
% $backtrack(s')$.

Lines~\ref{line-dfs-start}--\ref{line-dfs-end} perform a selective depth first 
traversal starting at state $s$ reached by $w$. The algorithm 
\texttt{Explore} is called recursively by extending the current transition sequence 
with an outgoing transition $r$ of a thread $t \in backtrack(s)$ from $s$, such that $t$
is not already explored from $s$ \ie\ $t \not\in done(s)$. 
Lines~\ref{line-ifpost}--\ref{line-endifpost} are effective only if the transition $r$
executed at state $s$ reached by $w$, is a \texttt{post} transition.
Line~\ref{line-remover-from-rp} removes those members from the set $RP(s)$ where 
the recently executed transition $r$ is the first transition in the ordered pair.
This is because after the execution of \post\ operation $r$, any remaining 
\post\ identified to be reordered w.r.t. $r$ cannot be reordered by extensions 
of the sequence $w.r$. Hence, we do not track such pairs anymore. 
% with a non-empty set of reordered \texttt{post}s. 
We now explain intuitions for lines~\ref{line-maxpost} and \ref{line-addpost-to-backtrack} 
which add $r$'s thread as a backtracking choice at a state from where $r$'s nearest reordered \texttt{post}
is executed.

On inspecting the members of the form $(r,\_)$ in the set $RP(s)$ and checking the 
\post\ transitions in $w$, the \texttt{post}s which have been successfully 
reordered w.r.t. the \post\ transition $r$ can be identified, \ie\ $reorderedPosts(r,w.r)$
can be computed with the help of $RP$.
% Line~\ref{line-cond} identifies pairs of dependent transitions for reordering such 
% that these dependent transitions do not belong to the same task and they cannot be
% ordered by transitive happens-before mappings.
% only one of the transitions is already executed in $w$ and the
% two transitions are not ordered by happens-before relation.
% Happens-before mapping is added between a pair of transitions identified by the rules
% of Definition~\ref{def:happens-before} only after both the transitions are executed.
% Hence, line~\ref{line-cond} identifies those dependent transitions for reordering
% which may only be ordered by condition~1 of Definition~\ref{def:happens-before} in an 
% extension to $w$, and not due to transitivity or FIFO rules. 
If the transition $r = next(s,t)$ has a \texttt{post} operation such that a transition 
$r_k$ in $w$ is its nearest reordered \texttt{post} then, condition~3 in Definition~\ref{def:happens-before}
adds a happens-before mapping from $r_k$ to $r$. The happens-before mapping from 
$r_k$ to $r$ initiates FIFO and transitive ordering 
between transitions across some of the handlers corresponding to post chains 
originating from $r_k$ and $r$; consequently, dependent transitions which could otherwise 
be identified by line~\ref{line-cond} for reordering may get ordered by happens-before.

Transition $r$ is enabled in $pre(w,k)$ --- state from which $r_k$ is executed, 
because $r_k \in reorderedPosts(r,w.r)$ which means \emdpor\ has already seen an 
execution where $r$ is executed from a state reached by a prefix of $w$ but prior
to or at $pre(w,k)$ which makes $r$ the next transition on its thread at 
$pre(w,k)$ (see definition of reordered \texttt{post}s in Section~\ref{sec:emdpor-algo-defs}). 
Hence, line~\ref{line-addpost-to-backtrack} adds thread $t$ as a backtracking choice at $pre(w,k)$,
so as to not miss alternate orderings between dependent transitions across post chains
of $r_k$ and $r$. % ordered due to enforced happens-before between reordered \texttt{post} operations. 
For example, consider a sequence $w.r \ldots p_1 \ldots p_2 \ldots p_n \ldots w'$
explored by EM-DPOR where $w$ and $w'$ are transition sequences, and $p_i$ for $1 \leq i \leq n$
and $r$ are transitions posting to the same event queue. Assume that the handlers
of $p_1, \ldots p_{n-1}$ and $p_n$ contain transitions dependent with transitions
in $r$'s handler, and EM-DPOR identifies $p_1$, $p_2$, \ldots, $p_n$ to be reordered with $r$.
Let EM-DPOR eventually explore $v = w \ldots p_1 \ldots p_n \ldots r.w''$.
Since $p_n$ is the nearest reordered \texttt{post} w.r.t. $r$ in sequence $v$,
a happens-before mapping is added between $p_n$ and $r$. As a result the handlers
corresponding to $p_n$ and $r$ get ordered by FIFO rule, due to which the dependent
transitions in the handlers of $p_n$ and $r$ will not be selected for reordering
by line~\ref{line-cond} in Algorithm~\ref{algo:explore}. Since line~\ref{line-addpost-to-backtrack}
adds $r$'s thread to the backtracking set at the state prior to $p_n$ in sequence $v$,
EM-DPOR will still be able to explore a dependence-covering sequence for 
$w \ldots p_1 \ldots p_{n-1} \ldots r \ldots p_n.w'''$. This may be missed otherwise.
 
%%%%%%%%  algorithm findtarget   %%%%%%%%%
{
\SetEndCharOfAlgoLine{}
\SetInd{6pt}{0.01pt}
\begin{algorithm*}[!t]
\small
 \SetKwFunction{dom}{dom}
 \SetKwFunction{index}{index}
 \SetKwFunction{reschedulePending}{ReschedulePending}
 \SetKwFunction{findTarget}{FindTarget}

\SetKwInput{KwInput}{Input}
\KwInput{a transition sequence \boldmath$w$\unboldmath$: r_1 \ldots r_n$, a transition 
\boldmath$r$ \unboldmath from $w$ and 
a transition \boldmath$r'$ \unboldmath which may or may not belong to $w$}

Let $i$ $=$ $index(w,r)$ and $s$ $=$ $pre(w,i)$\;
\nllabel{line-i}
\lIf{$r' \not\in nextTrans(last(w)) \text{ and } i \to_w index(w,r')$ \nllabel{line-check-no-hb}}
{
  \textbf{return}
  \nllabel{line-return-on-hb}
}
 \tcp{Step 1: Recursively search for diverging posts} 
\If{$thread(r) = thread(r')$}{
  \Indp
  \findTarget{$w,getPost(w,event(r)),getPost(w,event(r'))$}; 
  \nllabel{line-diverging}
  \textbf{return}\;
  \nllabel{line-return-on-same-thread}
}

\tcp{Step 2: Reorder transitions from distinct threads}
\If{$r' \in nextTrans(last(w))$
\nllabel{line-conditional}
}
  {
  \Indp
  Let $candidates$ $=$ $\{task(p) \in execTasks(s)\cup blockedTasks(s) \mid thread(p) \in enabled(s)$\\ 
  \nonl\hspace{4mm} $\text{and } p = r' \text{ or } (\exists k \in dom(w): k > i$
    $\text{ and } k \rightarrow_w task(r') \text{ and } p = r_k)\}$\;
    \nllabel{line-candidate-1}
  }
\Else{
    \Indp
    Let $candidates$ $=$ $\{task(p) \in execTasks(s)\cup blockedTasks(s) \mid thread(p) \in enabled(s)$\\ 
    \nonl\hspace{4mm} $\text{and } p = r' \text{ or } (\exists k \in dom(w): k > i \text{ and } 
    k \rightarrow_w index(w,r') \text{ and } p = r_k)\}$\;
    \nllabel{line-candidate-2}
    }
% \Indm
\nllabel{line-postconditional}
% $candidates$ $=$ $\{(t,e) \mid (t,e) \in candidates \text{ and } t \in enabled(s)\}$\;

%% RECURSIVE - algo
Let $unexplored$ $=$ $\{t \mid (t,e) \in candidates\} \setminus done(s)$\;
\nllabel{line-compute-unexplored}
\If{$unexplored \neq \emptyset$ \nllabel{line-if-not-done}}{
    \Indp
    Add any $t \in unexplored$ to $backtrack(pre(w,i))$\; 
    \nllabel{line-add-unexplored-thread}
    \lIf{$r'$ and $r$ are \emph{\texttt{post}} operations}{$RP(s) = RP(s) \cup \{(r,r')\}$}
    \textbf{return}\;
}
\nllabel{line-return-success}

\tcp{Step 3: Recursively search for backtracking choices to make
a pending (blocked) event executable}
Let $pending$ $=$ $\{(t,e) \in candidates\,|\,e \in blockedEv(s,t)\}$\;
\nllabel{line-pending}
 
  \lIf{$pending \neq \emptyset$ \nllabel{line-recursive-findtarget}}{
     $\texttt{ReschedulePending}(w,pending,r)$ \tcp{See Algorithm~\ref{algo:reschedulepending}}
  }
  \Else(\tcp*[h]{Step 4: All the tasks in $candidates$ are executable, or $candidates = \emptyset$}){
\Indp
    Let $ts = \{t \mid (t,e) \in candidates\}$\;
    \nllabel{line-thread-of-tasks}
    \lIf{$ts \neq \emptyset$}{ 
      Add any $t \in ts$ to $backtrack(pre(w,i))$
      \nllabel{line-redundant-add}
    }
    \lElse{
        $\texttt{BacktrackEager}(w,i,r')$ \tcp*[h]{See Algorithm~\ref{algo:backtrackeager}}
        \nllabel{line-call-backtrackeager}
     }
  }
 \caption{\texttt{FindTarget}}
\label{algo:findtarget}
\end{algorithm*}
}

\paragraph{FindTarget.} 
\texttt{Explore} invokes \texttt{FindTarget} (Algorithm~\ref{algo:findtarget})
to compute backtracking choices to reorder a pair of dependent transitions
$r$ and $r'$.
Let $i$ be the index of $r$ in $w$ and $s$ be the state from which
$r = r_i$ is executed (line~\ref{line-i}). 
If \texttt{FindTarget} fails to identify backtracking choices to be added to $backtrack(s)$, 
then it identifies \texttt{post}s for selective reordering and recursively invokes 
itself to compute corresponding backtracking choices. 
% As we explain later, \texttt{FindTarget} may be invoked recursively and not just from \texttt{Explore}.
Among other criteria, a recursive call terminates when a happens-before ordering between
$r$ and $r'$ is detected (line~\ref{line-check-no-hb}).
Transitions $r$ and $r'$ may be co-enabled or they may belong to different event handlers
on the same thread.
In the latter case, we first identify a pair of \texttt{post} operations executed 
on \emph{different} threads which need to be reordered so as to reorder $r$ and $r'$.
\texttt{FindTarget} operates in four main steps explained below, of which Steps~2 - 4
are applicable only when $thread(r) \neq thread(r')$ and Step~1 only when $thread(r) = thread(r')$.

\vspace{2mm}\noindent\emph{Step 1.}
% \texttt{Explore} may find two transitions $r$ and $r'$
% from the same thread but different handlers to be dependent. 
Transitions $r$ and $r'$ may be from different tasks on the same thread.
Such transitions can only be reordered by reordering their diverging \texttt{post}s. %(see Section~\ref{sec:relations}).
Line~\ref{line-diverging} therefore recursively invokes \texttt{FindTarget} on
\post\ operations of $r$ and $r'$. This way it simultaneously walks up $postChain(r,w)$ and 
$postChain(r',w)$ on each recursive call to \texttt{FindTarget} till it finds $divergingPosts(r,r',w)$.
On reaching the diverging \texttt{post}s, 
the condition $thread(r) = thread(r')$ ---where $r$ and $r'$ are
diverging \texttt{post}s--- evaluates to false and the control goes to Step 2.

\vspace{2mm}\noindent\emph{Step 2.}
This step is reached only when $thread(r) \neq thread(r')$. Similar to the algorithm
DPOR's~\cite{Flanagan:2005:DPR:1040305.1040315}
computation of backtracking choices, this step computes threads to be added to 
$backtrack(s)$ to facilitate executing $r'$ before $r$ in a future run.
% is analogous to the backtracking computation in DPOR. 
% except that we consider lower-granularity \emph{tasks} instead of threads.
Lines~\ref{line-conditional}--\ref{line-candidate-2} compute a set $candidates$
consisting of $task(r')$ and tasks that have a transition, executed after $r$, 
with a happens-before ordering with $r'$. Tasks in set $candidates$ are restricted 
to only those which are either executable or blocked in state $s$. Additionally,
only those tasks 
% Line~\ref{line-candidates-enabled} retains only those tasks in $candidates$ and 
whose threads are enabled at $s$ are added, so that one such thread can be explored 
from $s$ to eventually achieve the reordering. 
% The aim of this step is to identify some \emph{candidate} task whose thread can
% be explored from $s$ in a future run so as to facilitate executing $r'$ prior to $r$.

Threads whose transitions are already explored from state $s$ are added to
$done$ set at $s$ by line~\ref{line-done} in Algorithm~\ref{algo:explore}.
For a task $(t,e) \in candidates$, it is possible that 
its thread $t$ is already in  $done(s)$. If all the tasks in the set $candidates$
are in $done(s)$ then in case of a purely multi-threaded program, 
this would imply that the intended order between $r'$ and $r$ has already been explored. 
However, this reasoning need not hold in the presence of events.
This is because for a task $(t,e) \in candidates$ such that $t \in done(s)$, 
event $e$ may be blocked on its queue in state $s$ --- which means $t \in done(s)$ 
due to exploration of the executable task on $t$ in a prior run. However, the executable task
on $t$ may not even have any happens-before ordering with $r'$. 
In which case exploring it from state $s$ would either not have explored the required order between $r'$ and $r$,
or would not have preserved the required order between other pairs of dependent transitions
when $r'$ is executed before $r$ in a prior run. 

Hence, lines~\ref{line-compute-unexplored}--\ref{line-return-success} compute $unexplored$
to be a set of threads corresponding to tasks in $candidates$ which are not 
in $done(s)$, and add some thread in $unexplored$ to $backtrack(s)$ if
$unexplored \neq \emptyset$.
In addition, if $r$ and $r'$ are \post\ transitions then the algorithm tracks that
these two posts have been identified for reordering and backtracking choices have
been added correspondingly at state $s$ to execute $r'$ prior to $r$. This information
is tracked by adding the ordered pair $(r,r')$ to the set $RP(s)$.

If $unexplored = \emptyset$, \ie\ all the threads with transitions that \emph{happen-before} $r'$ % or execute $r'$ 
are already explored from $s$, does not imply that 
$r'$ cannot be reordered with $r$ or EM-DPOR has already seen a run where $r'$ is 
explored before $r$. Rather it indicates that we need to adopt a different 
strategy to achieve the reordering. This is illustrated through an example below.

\begin{example}
\label{ex:pending}
\upshape
In sequence $w$ of Figure~\ref{fig:motivation-mt}, transitions $r_3$ and $r_6$ are 
dependent, may be co-enabled and do not have a happens-before ordering. 
When \texttt{Explore} invokes \texttt{FindTarget} to compute backtracking choices 
to reorder $r_3$ and $r_6$, Step~1 is skipped as $thread(r_3)$ 
$\neq thread(r_6)$. Step~2 computes $candidates = \{(t_1,e_2)\}$ 
as $t_1$ is enabled at $s_2$ (see Figure~\ref{fig:motivation-statespace-mt}), 
and $r_4$ executed in $(t_1,e_2)$ forks $t_4$ and thus happens before $r_6$. 
However, $t_1$ is already executed from $s_2$ and is in $done(s_2)$. Yet, as can be 
seen in Figure~\ref{fig:motivation-statespace-mt}, $r_3$ and $r_6$ can be reordered; but by reordering
$r_1$ and $r_2$ posting events $e_1$ and $e_2$ respectively. 
But adding thread $t_1$ corresponding to the only task $(t_1,e_2)$ in $candidates$
will not achieve this reordering. Step~3 explains our technique to handle such cases.
% \qed
\end{example}

\noindent\emph{Step 3.} In this step, line~\ref{line-pending} computes a set $pending$ which is a subset 
of tasks in $candidates$ whose events are blocked in their event queues in state $s$. 
If set $pending$ is not empty, line~\ref{line-recursive-findtarget} invokes
\texttt{ReschedulePending}. %passing the sequence $w$, set $pending$ computed in Step~3
% and the transition $r$ with which $r'$ needs to be reordered, as arguments.
Intuitively, \texttt{ReschedulePending} identifies a set of events blocked in $s$ to 
be reordered with their corresponding executable events \ie\ it performs selective reordering
of \texttt{post}s to same thread so as to eventually reorder $r$ and $r'$ executed on different threads. 
We present its details in Section~\ref{sec:reschedulepending}.

\vspace{2mm}\noindent\emph{Step 4.} Finally, the set $pending$ being empty implies that all the tasks in $candidates$ 
are executable at state $s$ or $candidates$ itself is empty. \texttt{FindTarget} computes a set of threads $ts$ 
corresponding to each task in $candidates$. 
If the set $ts$ is non-empty, it only means that another ordering of $r$ and $r'$ 
is already explored in a past run as all the threads in $ts$ are already in $done(s)$ 
(due to lines~~\ref{line-compute-unexplored}--\ref{line-return-success}), 
and the algorithm trivially adds any thread from $ts$ to $backtrack(s)$ (line~\ref{line-redundant-add}). 
If $ts = \emptyset$ which means $candidates = \emptyset$, \texttt{FindTarget} 
invokes \texttt{BacktrackEager} (see Algorithm~\ref{algo:backtrackeager}) at line~\ref{line-call-backtrackeager}. 

\subsection{Selective Reordering of Blocked and Executable Events} 
\label{sec:reschedulepending}
\texttt{ReschedulePending} (Algorithm~\ref{algo:reschedulepending}) is invoked by Algorithm~\ref{algo:findtarget}
on line~\ref{line-recursive-findtarget} in Step~3 of \texttt{FindTarget} when a transition %$r = r_i$ 
$r$ executed from a state $s$
in sequence $w$ explored by EM-DPOR has to be reordered with a transition $r'$ on another thread, 
and Step~2 of \texttt{FindTarget} fails to add backtracking choices to $backtrack(s)$. \texttt{ReschedulePending} is
called only if the \emph{candidate} set of tasks computed by Step~2 has a set of tasks with their
events blocked in state $s$ such that their corresponding executable tasks are already explored
from $s$.
Then, Algorithm~\ref{algo:reschedulepending} identifies suitable events blocked in $s$ to be 
reordered with executable events on their corresponding queues, attempting to co-enable $r$ and 
$r'$ facilitating their reordering. 

We present some intuitions on scenarios where relevant pairs of events enqueued to 
the same event queue should be reordered to explore different orderings between 
a pair of transitions executed on
different threads. A pair of event handlers executed on the same thread may have to be reordered
so as to reorder a pair of transitions, say $p_1$ (assumed to be executed at a state $s$ in a sequence $v$)
and $p_n'$ (may or may not be executed in $v$) on different threads, typically in the following
scenarios. 

\vspace{1mm}
\noindent (a) Even though there exists a sequence in $\mathcal{S}_G$ where $p_n'$
is executed prior to $p_1$, in sequence $v$ however $p_1$ must be executed to eventually
execute $p_n'$. This may be the case if a transition that enables $p_n'$ is in a task whose
event is blocked in $p_1$'s thread in state $s$ (similar to the scenario presented 
for Figure~\ref{fig:motivation-statespace-mt}).

\vspace{1mm}
\noindent (b) Any transition sequence rooted at state $s$ cannot preserve the relative
ordering between a set of pairs of dependent transitions when reordering $p_1$ and $p_n'$,
even though this can be achieved by reordering some relevant pairs of events. 
This may be the case if a transition that happens before $p_n'$ %(\emph{not via enabling}) 
is in a task whose event is blocked in $p_1$'s thread in state $s$. 
In such a case executing $p_n'$ prior to $p_1$ 
by adding backtracking choices at state $s$ breaks the ordering between transitions
in the blocked task on $p_1$'s thread and $p_n'$. More generally case~(b) can occur if a transition
in a task blocked on $p_1$'s thread in state $s$ happens before a transition
in the executable task on another thread, say $t_n$, such that a transition in a task blocked in $s$ on $t_n$
happens before $p_n'$. In general there may be any number of such 
blocked -- executable tasks between $p_1$ and $p_n'$, with
happens-before mapping from transitions in blocked tasks to transitions in executable 
tasks on different threads, as depicted in Figure~\ref{fig:reschedule-structure}(a). 
Clearly, reordering $p_1$ and $p_n'$ by exploring thread $t_n$ (see Figure~\ref{fig:reschedule-structure})
from state $s$ breaks the happens-before ordering between a transition in a blocked task on
thread $t_{n-1}$ and transition $p_n$ in the executable task on $t_n$.

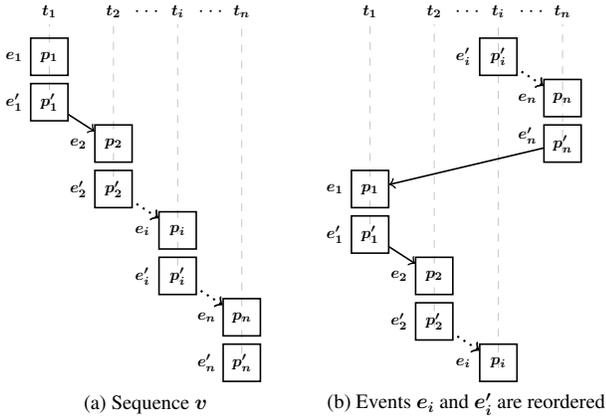
\begin{figure}[h]
\centering
\begin{tikzpicture}[auto,node distance=16 mm,semithick, scale=0.55, transform
shape, font=\boldmath]
\tikzstyle{taskNode} = [rectangle,draw=black,inner sep=0pt,minimum 
size=9mm]
\tikzstyle{plainNode} = [thick,inner sep=0pt,minimum size=6mm]

%%%%%%%%%%%%%%%%%%%%%%%%%%%%%%%%%%%%%%%%%%%%%%%%%%%%%%%%%%%%%%%%%%%%%%%%%%%
%%                   Figure (a)                                         %%
%%%%%%%%%%%%%%%%%%%%%%%%%%%%%%%%%%%%%%%%%%%%%%%%%%%%%%%%%%%%%%%%%%%%%%%%%%%
\node[plainNode] (t')  {\large $t_1$};
\node[taskNode] (rl) [below of=t',yshift=.5cm] {\large $p_1$};
\node[plainNode] (e') [left of=rl,xshift=.75cm] {\large $e_1$};
\node[taskNode] (rj1) [below of=rl,yshift=0.5cm] {\large $p_1'$}; 
\node[plainNode] (ej1) [left of=rj1,xshift=.75cm] {\large $e_1'$};
\begin{scope}[on background layer]
\node[plainNode] (inv1)  [left of=t',xshift=1.6cm] {};
\node[plainNode] (inv2)  [below of=inv1,yshift=-1cm] {}
edge [-,dashed,very thin,color=gray!45] (inv1);
\end{scope}

\node[plainNode] (t1) [right of=t',xshift=-.05cm] {\large $t_2$};
\node[plainNode] (dummynode1) at (rj1 -| t1) {};
\node[taskNode] (rk1) [below of=dummynode1,yshift=.6cm] {\large $p_2$}
edge [<-] (rj1);
\node[plainNode] (ek1) [left of=rk1,xshift=.75cm] {\large $e_2$};
\node[taskNode] (rj2) [below of=rk1,yshift=0.5cm] {\large $p_2'$}; 
\node[plainNode] (ej2) [left of=rj2,xshift=.75cm] {\large $e_2'$};

\begin{scope}[on background layer]
\node[plainNode] (inv11)  [left of=t1,xshift=1.6cm] {};
\node[plainNode] (inv22)  [below of=inv11,yshift=-3.1cm] {}
edge [-,dashed,very thin,color=gray!45] (inv11);
\end{scope}

\node[plainNode] (tellipse1) [right of=t1,xshift=-.8cm] {$\mathbf{\cdotp \, \cdotp \, \cdotp}$};

\node[plainNode] (t2) [right of=t1,xshift=-.05cm] {\large $t_i$};
\node[plainNode] (dummynode2) at (rj2 -| t2) {};
\node[taskNode] (rk2) [below of=dummynode2,yshift=.6cm] {\large $p_i$}
edge [<-, dotted, thick] (rj2);
\node[plainNode] (ek2) [left of=rk2,xshift=.75cm] {\large $e_i$};
\node[taskNode] (rj3) [below of=rk2,yshift=0.5cm] {\large $p_i'$}; 
\node[plainNode] (ej3) [left of=rj3,xshift=.75cm] {\large $e_i'$};

\begin{scope}[on background layer]
\node[plainNode] (inv111)  [left of=t2,xshift=1.6cm] {};
\node[plainNode] (inv222)  [below of=inv111,yshift=-5.25cm] {}
edge [-,dashed,very thin,color=gray!45] (inv111);
\end{scope}

\node[plainNode] (tellipse2) [right of=t2,xshift=-.8cm] {$\mathbf{\cdotp \, \cdotp \, \cdotp}$};

\node[plainNode] (t) [right of=t2,xshift=-.05cm] {\large $t_n$};
\node[plainNode] (dummynode4) at (rj3 -| t) {};
\node[taskNode] (rkk') [below of=dummynode4,yshift=.6cm] {\large $p_{n}$}
edge [<-, dotted, thick] (rj3);
\node[plainNode] (ekk') [left of=rkk',xshift=.75cm] {\large $e_n$};
\node[taskNode] (r) [below of=rkk',yshift=0.5cm] {\large $p_n'$}; 
\node[plainNode] (e) [left of=r,xshift=.75cm] {\large $e_n'$};

\begin{scope}[on background layer]
\node[plainNode] (inv111k')  [left of=t,xshift=1.6cm] {};
\node[plainNode] (inv222k')  [below of=inv111k',yshift=-7cm] {}
edge [-,dashed,very thin,color=gray!45] (inv111k');
\end{scope}

\node[plainNode] (dummynode5) at ($(t1)!0.5!(t2)$) {}; 
\node[plainNode] (captionA) [below of=dummynode5,yshift=-7.9cm] {\Large (a) Sequence $v$};

%%%%%%%%%%%%%%%%%%%%%%%%%%%%%%%%%%%%%%%%%%%%%%%%%%%%%%%%%%%%%%%%%%%%%%%%%%%
%%                   Figure (b)                                         %%
%%%%%%%%%%%%%%%%%%%%%%%%%%%%%%%%%%%%%%%%%%%%%%%%%%%%%%%%%%%%%%%%%%%%%%%%%%%

\node[plainNode] (bt') [right of=t,xshift=1.5cm] {\large $t_1$};
\begin{scope}[on background layer]
\node[plainNode] (binv1)  [left of=bt',xshift=1.6cm] {};
\node[plainNode] (binv2)  [below of=binv1,yshift=-4.25cm] {}
edge [-,dashed,very thin,color=gray!45] (binv1);
\end{scope}

\node[plainNode] (bt1) [right of=bt',xshift=-.05cm] {\large $t_2$};
\begin{scope}[on background layer]
\node[plainNode] (binv11)  [left of=bt1,xshift=1.6cm] {};
\node[plainNode] (binv22)  [below of=binv11,yshift=-6.35cm] {}
edge [-,dashed,very thin,color=gray!45] (binv11);
\end{scope}

\node[plainNode] (btellipse1) [right of=bt1,xshift=-.8cm] {$\mathbf{\cdotp \, \cdotp \, \cdotp}$};

\node[plainNode] (bt2) [right of=bt1,xshift=-.05cm] {\large $t_i$};
\begin{scope}[on background layer]
\node[plainNode] (binv111)  [left of=bt2,xshift=1.6cm] {};
\node[plainNode] (binv222)  [below of=binv111,yshift=-7.35cm] {}
edge [-,dashed,very thin,color=gray!45] (binv111);
\end{scope}

\node[plainNode] (btellipse2) [right of=bt2,xshift=-.8cm] {$\mathbf{\cdotp \, \cdotp \, \cdotp}$};

\node[plainNode] (bt) [right of=bt2,xshift=-.05cm] {\large $t_n$};
\begin{scope}[on background layer]
\node[plainNode] (binv111k')  [left of=bt,xshift=1.6cm] {};
\node[plainNode] (binv222k')  [below of=binv111k',yshift=-2.15cm] {}
edge [-,dashed,very thin,color=gray!45] (binv111k');
\end{scope}

\node[taskNode] (brj3) [below of=bt2,yshift=0.5cm] {\large $p_i'$}; 
\node[plainNode] (bej3) [left of=brj3,xshift=.75cm] {\large $e_i'$};

\node[plainNode] (bdummynode4) at (brj3 -| bt) {};
\node[taskNode] (brkk') [below of=bdummynode4,yshift=.6cm] {\large $p_{n}$}
edge [<-, dotted, thick] (brj3);
\node[plainNode] (bekk') [left of=brkk',xshift=.75cm] {\large $e_n$};
\node[taskNode] (br) [below of=brkk',yshift=0.5cm] {\large $p_n'$}; 
\node[plainNode] (be) [left of=br,xshift=.75cm,yshift=.2cm] {\large $e_n'$};

\node[plainNode] (bdummynode3) at (br -| bt') {};
\node[taskNode] (brl) [below of=bdummynode3,yshift=.5cm] {\large $p_1$}
edge [<-] (br);
\node[plainNode] (be') [left of=brl,xshift=.75cm] {\large $e_1$};
\node[taskNode] (brj1) [below of=brl,yshift=0.5cm] {\large $p_1'$}; 
\node[plainNode] (bej1) [left of=brj1,xshift=.75cm] {\large $e_1'$};

\node[plainNode] (bdummynode1) at (brj1 -| bt1) {};
\node[taskNode] (brk1) [below of=bdummynode1,yshift=.6cm] {\large $p_2$}
edge [<-] (brj1);
\node[plainNode] (bek1) [left of=brk1,xshift=.75cm] {\large $e_2$};
\node[taskNode] (brj2) [below of=brk1,yshift=0.5cm] {\large $p_2'$}; 
\node[plainNode] (bej2) [left of=brj2,xshift=.75cm] {\large $e_2'$};

\node[plainNode] (bdummynode2) at (brj2 -| bt2) {};
\node[taskNode] (brk2) [below of=bdummynode2,yshift=.6cm] {\large $p_i$}
edge [<-, dotted, thick] (brj2);
\node[plainNode] (bek2) [left of=brk2,xshift=.75cm] {\large $e_i$};

\node[plainNode] (bdummynode5) at ($(bt1)!0.5!(bt2)$) {}; 
\node[plainNode] (captionB) at (captionA -| bdummynode5) {\Large (b) Events $e_i$ and $e_i'$ are reordered };

\end{tikzpicture} 
\caption{Partial dependence structure of sequence $v$. Directed edges indicate dependence
or relation by $\to_v$. Even though $p_n'$ is indicated inside task $(t_n,e_n')$,
it may only have a happens-before relation with a transition in $(t_n,e_n')$.}
\label{fig:reschedule-structure}
\end{figure}

In both cases~(a) and (b) it is intuitive to identify the event corresponding to 
the blocked task that happens before $p_n'$ 
for reordering with its corresponding executable event. Also, this blocked task will be in
set $candidates$ computed by Step~2 of Algorithm~\ref{algo:findtarget} invoked to reorder $p_1$ and $p_n'$ when exploring
sequence $v$. From the structure given in Figure~\ref{fig:reschedule-structure}(a), reordering tasks
$(t_n,e_n)$ and $(t_n,e_n')$ seems to reorder $p_1$ and $p_n'$ without disturbing the 
happens-before ordering between any $p_i'$ and $p_{i+1}$, for $1 \leq i < n$. Now assume 
tasks $(t_n,e_n)$ and $(t_n,e_{n}')$ to contain a pair of dependent transitions,
say $q$ and $q'$, in which case reordering these tasks so as to reorder $p_1$ and $p_n'$
breaks the ordering between $q$ and $q'$. In such a scenario reordering events
$e_i$ -- $e_i'$, for some $i \in [1,n-1]$, such that the corresponding
tasks of these event pairs do not have dependent transitions, would aid in reordering 
$p_1$ and $p_n'$ without affecting any other pairs of dependent transitions (see Figure~\ref{fig:reschedule-structure}(b)). 
However, identifying one right pair of events for reordering among various 
available relevant pairs of events is hard, as the dependent transitions that may
be affected by the reordering of a pair of events may not even be present
in the handlers of these events. Hence, we have designed EM-DPOR to reorder
all the relevant pairs of events.

\paragraph{Insights on reordering relevant event pairs.} 
%We now present some insights on identifying relevant pairs of events for reordering.
% \texttt{FindTarget} called to reorder a transition $q_1$ executed at a state $s'$
% and a transition $q_2$, adds a thread to $backtrack(s')$ only if the thread 
% executes a task that has happens-before relation with $q_2$. Also, \texttt{Explore}
% executes all the threads added as backtracking choices at any state. Hence, in
In case of scenario presented for Figure~\ref{fig:reschedule-structure}(a) EM-DPOR
eventually explores every thread $t_i$ for $1 \leq i \leq n$ from state $s$.
This is because exploring any thread $t_i$ from $s$ eventually explores a sequence 
where the order between transitions $p_{i-1}'$ and $p_i$ (or $p_n'$ and $p_1$) is reversed
compared to what is required, while the remaining blocked to executable task happens-before
mapping is as required. As a result \texttt{FindTarget} adds thread $t_{i-1}$ to $backtrack(s)$
eventually exploring it. Even after exploring every $t_i$, $1 \leq i \leq n$, 
from state $s$, one pair of transitions from executable and blocked tasks respectively
on different threads are out of order. \texttt{FindTarget} invoked to reorder this pair
finds threads corresponding to all tasks in $candidates$ to be explored from $s$
resulting in a call to \texttt{ReschedulePending}. Then, \texttt{ReschedulePending}
identifies relevant blocked -- executable event pairs for reordering by checking
for happens-before mapping from blocked tasks to executable tasks such that the
threads corresponding to these tasks are already explored from $s$. The details 
of this process is explained below.

{ 
\SetEndCharOfAlgoLine{}
\SetInd{6pt}{0.01pt}
\begin{algorithm*}[t]
\SetKwFunction{dom}{dom}
\SetKwFunction{index}{index}
\SetKwFunction{findTarget}{FindTarget}

\SetKwInput{KwInput}{Input}
\KwInput{a transition sequence \boldmath$w$\unboldmath$: r_1 \ldots r_n$,
a set of tasks called \boldmath$pending$ \unboldmath whose events are posted in $w$, and
a transition \boldmath$r$ \unboldmath from $w$}

Let $i$ $=$ $index(w,r)$ and $s$ $=$ $pre(w,i)$ \tcp*[h]{Step 3a: Initialization}\;

Let $(t_k,e_k)$ be any task in $pending$\;
\nllabel{line-rp-pick-a-pending}
Let $worklist = \{(t_k, executable(s,t_k))\}$ and $swapMap[t_k] := \{e_k\}$\;
\nllabel{line-rp-init-worklist}
\tcp{Step 3b: Identify blocked events to be reordered with executable events at state $s$ }      
\While{$worklist \neq \emptyset$ \nllabel{rp-line-while}}{
  \Indp
  Remove a task $(t_j,e_j)$ from $worklist$\; %\tcp{Note that $(t_j,e_j)$ is the executable task of thread $t_j$ in state $s$}\;
  \nllabel{rp-line-remove-exec-task}
  Let $C = \{(t,e) \mid \exists l \in dom(w): (t,e) = task(r_l) \text{ and }
  e \in blockedEv(s,t)$ $\text{ and } l \to_w getEnd(w,e_j) \text{ and } t \in done(s) \}$\;
%   \nonl\hspace{14mm}$\text{ and } i \rightsquigarrow_w getEnd(w,executable(s,t)) \}$\;
  \nllabel{rp-line-compute-c}
         
  \ForEach{$(t,e) \in C$ \nllabel{line-iterate-c}}{
       \Indp
       $worklist = worklist \cup \{(t,executable(s,t))\}$\;
       \nllabel{rp-line-update-worklist}
%      \If{$getBegin(w,executable(s,t)) \not\to_w getLast(w,e)$}{ 
       Add event $e$ to the set $swapMap[t]$\;  
       \nllabel{rp-line-update-swapmap}
%      }           
   }
   \nllabel{line-end-iterate-c}
}
\nllabel{rp-line-end-while}
\tcp{Step 3c: Reorder blocked and executable events identified by Step 3b}
\ForEach{thread $t$ such that $swapMap[t] \neq \emptyset$ \nllabel{rp-line-process-swapMap} }{
\Indp
%   Let $e$ be the last event posted to $t$'s queue among events from $swapMap[t]$\;
  Let $e$ be any event in $swapMap[t]$\;
  Let $r$ $=$ $getPost(w,executable(s,t))$ and $r'$ $=$ $getPost(w,e)$\;
  \nllabel{rp-line-redefine}
  \findTarget{$w,r,r'$}\;
%   \tcp{Compute backtracking choices to get a blocked event at $s$ to the front of its queue}\; 
  \nllabel{rp-line-recursive-findTarget}
}
\nllabel{rp-line-end-process-swapMap}
\caption{\texttt{ReschedulePending}}
\label{algo:reschedulepending}
\end{algorithm*}
}

\paragraph{Algorithm ReschedulePending.} 
Algorithm~\ref{algo:reschedulepending} takes a sequence $w$ explored by EM-DPOR, 
a set of tasks $pending$ (same as $pending$ computed by \texttt{FindTarget}), 
and a transition $r = r_i$ identified by \texttt{FindTarget} to be reordered with
a transition $r'$ as input. 
Since \texttt{ReschedulePending} is invoked by the step~3 of \texttt{FindTarget}
(Algorithm~\ref{algo:findtarget}), we refer to the steps of \texttt{ReschedulePending}
as 3a, 3b and 3c.
In Algorithm~\ref{algo:reschedulepending}, variable $worklist$ stores a subset of 
executable tasks in state $s$, and $swapMap$ maintains a map from threads to
a subset of events blocked on their respective queues at $s$.
Lines~\ref{line-rp-pick-a-pending} and~\ref{line-rp-init-worklist} in Step~3a pick any task $(t_k,e_k)$ 
from set $pending$ passed as argument,
initialize $worklist$ with the executable task on thread $t_k$ and add $e_k$ to
the set of blocked events maintained for thread $t_k$ in $swapMap$. 
Step~3b (lines~\ref{rp-line-while}--\ref{rp-line-end-while}) initiated
by a non-empty $worklist$ identifies other relevant blocked events for reordering. 
This is required as it is hard to pick exactly one pair of relevant \emph{blocked -- executable}
events for reordering, as explained earlier.
Line~\ref{rp-line-remove-exec-task} removes some executable task $(t_j,e_j)$ from 
the $worklist$. Line~\ref{rp-line-compute-c} computes a set $C$ of tasks blocked in $s$ 
such that, a blocked task $(t,e)$ is added to $C$ if there exists a transition $r_l$ 
in the handler of $e$ which happens before a transition in $(t_j,e_j)$. This
essentially checks for the blocked task on one thread to executable task on another thread
happens-before pattern, illustrated through Figure~\ref{fig:reschedule-structure}.
Additionally, line~\ref{rp-line-compute-c} only retains those blocked tasks whose 
% corresponding executable tasks in state $s$ contain transitions which have  
% happens-before relation with $r$ through $\rightsquigarrow_w$.
threads are already explored from state $s$. 
Lines~\ref{line-iterate-c}--\ref{line-end-iterate-c} iterate on each blocked task in $C$, add
corresponding executable task to $worklist$ for further processing
and store the event corresponding to blocked task in $swapMap$.
We note that in case of scenario presented for sequence $v$ in Figure~\ref{fig:reschedule-structure},
if $t_n \in done(s)$ then, \texttt{FindTarget} called to reorder dependent transitions
$p_1$ and $p_n'$ reach Step~3, compute $pending = \{(t_n,e_n')\}$ and invoke
\texttt{ReschedulePending}. Step~3a of Algorithm~\ref{algo:reschedulepending} adds 
event $e_n'$ corresponding to a \emph{pending} task $(t_n,e_n')$ to the set $swapMap[t_n]$ 
and initializes $worklist$ with the executable task $(t_n,e_n)$. Initiated 
by the executable task $(t_n,e_n)$, Step~3b iteratively adds $e_i'$ to $swapMap[t_i]$
and $(t_i,e_i)$ to $worklist$ starting from $i = n-1$ to $i = 1$. The \texttt{while}
loop exits on processing executable task $(t_1,e_1)$ and not finding any more blocked events
satisfying the constraints in line~\ref{rp-line-compute-c}.

Lines~\ref{rp-line-process-swapMap}--\ref{rp-line-end-process-swapMap} (Step~3c) 
iterate over each thread $t$ for which the set of blocked events $swapMap[t]$ is non-empty,
pick an event among events in set $swapMap[t]$, 
and invoke \texttt{FindTarget} to reorder the \texttt{post} transition
for the executable event at state $s$ on thread $t$ 
with that of the selected blocked event. 

\begin{example}
\upshape
Continuing Example~\ref{ex:pending}, Step~2 in \texttt{FindTarget} called to
reorder $r_3$ and $r_6$ (Figure~\ref{fig:motivation-mt}) fails to 
add any backtracking choices at state $s_2$ (Figure~\ref{fig:motivation-statespace-mt}).
Then, Step~3 computes $pending = \{(t_1,e_2)\}$ as $e_2$ is blocked in $s_2$, and invokes 
\texttt{ReschedulePending($r_1\ldots r_5,\{(t_1,e_2)\},r_3$)}. Line~\ref{line-rp-init-worklist}
in Algorithm~\ref{algo:reschedulepending} adds $e_2$ to $swapMap[t_1]$.
Step~3b adds no more blocked events to $swapMap$.
Step~3c calls \texttt{FindTarget($r_1\ldots r_5,r_1,r_2$)} to reorder blocked event $e_2$
with executable event $e_1$ at state $s_2$ on $t_1$. 
In the \emph{recursive call}, state $s_0$ (where $r_1$ is executed) is identified 
as the backtracking point and Step~2 adds thread $t_3$ 
to $backtrack(s_0)$ as $t_3$ executes $r_2$. Thus in a future run where $r_2$ is explored before $r_1$, 
$r_3$ and $r_6$ get reordered as shown in Figure~\ref{fig:motivation-statespace-mt}.
\end{example}

{ 
\SetEndCharOfAlgoLine{}
\SetInd{6pt}{0.01pt}
\begin{algorithm*}[t]
\small
 \SetKwFunction{dom}{dom}
 \SetKwFunction{index}{index}
 \SetKwFunction{trans}{trans}
 
 \SetKwInput{KwInput}{Input}
\KwInput{a transition sequence \boldmath$w$\unboldmath$: r_1 \ldots r_n$, 
an index \boldmath$i$\unboldmath$\;\in dom(w)$ and a transition \boldmath$r'$ \unboldmath such that 
\texttt{FindTarget} failed to reorder transitions $r_i$ and $r'$}

Let $\rightsquigarrow$ $=$ $\to_w$ \tcp*[h]{Initialize a new relation with existing members in $\to_w$}\;
\nllabel{backtrack-line-copy-hb}
\ForEach{$k \in \{2,\cdots,i-1\}$ in the increasing order}{
\Indp
\tcp{Reorder nearest co-enabled \post\ operations}
Let $\hat{w} = r_1 \ldots r_{k-1}$\;
\nllabel{backtrack-begin-iterate}
\If{$r_k$ is a \emph{\texttt{post}} operation and
$\exists j = max(\{ j \in dom(\hat{w}) \mid 
r_j \text{ is a \emph{\texttt{post}} operation and } dest(r_j) = dest(r_k) 
  \text{ and } j \not\rightsquigarrow k\})$}{
\Indp
  \nllabel{line-post-post-dep}
  Let $ts = \{ t \in enabled(pre(w,j)) \mid t = thread(r_k) \text{ or }
(\exists l \in dom(\hat{w}): l > j \text{ and } l \rightsquigarrow k \text{ and } t = thread(r_l))\}$\;
  \nllabel{line-dpor-like-ts}
  \lIf{$ts \neq \emptyset$}{
    Add any $t \in ts$ to $backtrack(pre(w,j))$ 
    \textbf{else} Add all $t \in enabled(pre(w,j))$ to $backtrack(pre(w,j))$\;
   \nllabel{line-end-adding-post-backtrack}}
   $RP(pre(w,j)) = RP(pre(w,j)) \cup \{(r_j,r_k)\}$\;
   \nllabel{backtrack-line-update-rp-1}
% Add $j \to_w k$ to the happens-before relation $\to_w$ and update $\to_w$\;
Add $j \rightsquigarrow k$ to the relation $\rightsquigarrow$ and close it by transitivity and FIFO rules in Definition~\ref{def:happens-before}\;
\nllabel{line-add-new-hb}
% \lIf{$r_i \to_w r'$}{\textbf{return}\;}
% \lIf{$r_i$ has happens-before relation with $r'$ in $\to_w$}{\textbf{return}}
\lIf{$r_i$ happens before $r'$ by $\rightsquigarrow$ }{\textbf{return}}
\nllabel{line-return-on-hb}
}
}
\tcp{Compute backtracking choices to reorder $r$ and $r'$ using extended happens-before relation $\rightsquigarrow$}
% \If{$i \not\to_{\overline{r}} X$}{
Let $s = pre(w,i)$\;
\nllabel{line-final-part}
\eIf{$r' \in nextTrans(last(w))$
\nllabel{backtrack-line-conditional}
}
    {
\Indp
$candidates = \{ t \in enabled(s)\,|\, t = thread(r') \text{ or } (\exists l \in  
    dom(w): l > i \text{ and } l \rightsquigarrow task(r') \text{ and } t = thread(r_l)) \}$\;
    \nllabel{bactrack-line-candidate-1}
    }
    {
\Indp
$candidates = \{ t \in enabled(s) \mid t = thread(r') \text{ or } 
    (\exists l \in dom(w): l > i \text{ and } l \rightsquigarrow index(w,r')
    \text{ and } t = thread(r_l))\}$\;
    \nllabel{backtrack-line-candidate-2}
    }
\nllabel{backtrack-line-postconditional}

\lIf{$candidates \neq \emptyset$}{
  Add any $t \in candidates$ to $backtrack(s)$\;
  \nllabel{backtrack-line-add-choice}
  }
\lElse{ Add all $t \in enabled(s)$ to $backtrack(s)$\;
  \nllabel{backtrack-line-add-all}
}
\nllabel{line-end}
\lIf{$r_i$ and $r'$ are \emph{\texttt{post}} operations}{$RP(s) = RP(s) \cup \{(r_i,r')\}$\;}
\nllabel{backtrack-line-update-rp-2}

 \caption{\texttt{BacktrackEager}} 
\label{algo:backtrackeager}
\end{algorithm*}
}

% \subsection{Simulating DPOR (Algorithm BacktrackEager)}
\subsection{Simulating DPOR}
\label{sec:backtrackeager}

Call to \texttt{BacktrackEager($w,i,r'$)} is performed by line~\ref{line-call-backtrackeager} in Algorithm~\ref{algo:findtarget} 
when Steps~2 and~3 of Algorithm~\ref{algo:findtarget} fail to identify backtracking choices to reorder transitions 
$r$ (same as $r_i$) executed at a state $s$ and a transition $r'$. 
% This situation can arise if the model checker 
% misses some happens-before orderings, e.g., due to limitations of the implementation. 
When the DPOR algorithm fails to identify candidate threads using the HB relation so as to 
reorder a pair of racing transitions in the multi-threaded setting, it includes
all the threads enabled at $s$ as backtracking choices, initiating exploration of
all thread interleavings rooted at $s$. In our event-driven setting, in addition,
EM-DPOR must initiate all possible reordering of events in each queue which are posted prior to reaching state $s$.
% ensure that all possible event queue configurations \todo{something is needed here} are visited.
% Algorithm~\ref{algo:findtarget} invokes 
\texttt{BacktrackEager} (Algorithm~\ref{algo:backtrackeager}) achieves the same.
% at line~\ref{line-call-backtrackeager} in order to achieve this.

It initializes a temporary HB relation $\rightsquigarrow$ which will only be used
in the current invocation of \texttt{BacktrackEager}, with the HB ordered pairs
in the relation $\to_w$.
Given a transition sequence $w$, an index $i$ and a transition $r'$,
\texttt{BacktrackEager} treats every nearest pair $(r_j,r_k)$ of transitions 
with no happens-before between them as per $\rightsquigarrow$, and posting to the same event queue
as \emph{dependent}, provided $j,k < i$ (Algorithm~\ref{algo:backtrackeager} line~\ref{line-post-post-dep}). 
We consider $r_j$ to be \emph{nearest} to $r_k$ if $j < k$ and $r_j$ has the highest index
in $w$ among all other transitions satisfying the given constraints.
\texttt{BacktrackEager} then simulates the DPOR approach with this dependence relation from 
the initial state along $w$ up to $r_i$.
Note that dependence through shared objects is already considered in Algorithm~\ref{algo:explore}.
Lines~\ref{line-dpor-like-ts}--\ref{line-add-new-hb}
add backtracking choices at state $pre(w,j)$ to reorder $r_j$ and $r_k$, and mark 
$r_j$ to happen before $r_k$. The new happens-before mapping added to $\rightsquigarrow$ induces 
additional transitive and FIFO mappings to be added to $\rightsquigarrow$ %(see Definition~\ref{def:happens-before}).  
(see line~\ref{line-add-new-hb} in Algorithm~\ref{algo:backtrackeager}).
Hence, we call $\rightsquigarrow$ as the \emph{extended} HB relation.
If $r_i$ is established to happens before $r'$ as per $\rightsquigarrow$, then \texttt{BacktrackEager} returns 
(line~\ref{line-return-on-hb}), because $r_i$ and $r'$ have got related by 
happens-before by considering a pair of \post\ operations $(r_j,r_k)$
as dependent. Thus, $r_i$ and $r'$ will get reordered when $r_j$ and $r_k$ 
get reordered on exploring backtracking choices added by line~\ref{line-end-adding-post-backtrack}. 
Otherwise, the algorithm iterates until $r_i$ is reached, and computes backtracking choices 
to reorder $r_i$ and $r'$ similar to DPOR (lines~\ref{line-final-part}--\ref{line-end})
using the extended HB relation $\rightsquigarrow$. 
Lines~\ref{backtrack-line-update-rp-1} and \ref{backtrack-line-update-rp-2} update the $RP$ sets of different states since the \post{} 
transitions executed from these states were identified to be reordered w.r.t. \texttt{post}s
executed later. As explained earlier $RP$ sets will be queried to identify the set of
\reorderedPosts\ in subsequent explorations.
Below is an example illustrating the working of \texttt{BacktrackEager}.

\begin{example}
\label{ex:backtrackeager}
\upshape
For the purpose of this example, consider an implementation of EM-DPOR which does not track happens-before 
ordering between a \texttt{fork} operation and the initialization of the spawned thread.
%% corresponding \texttt{threadinit}.
Assume exploring a sequence
$w$ given in Figure~\ref{fig:motivation-mt} with such an implementation of EM-DPOR. On
%exploring a prefix of sequence $w$ and 
reaching state $s_5$ (see Figure~\ref{fig:motivation-statespace-mt}) \texttt{Explore} invokes \texttt{FindTarget}
to reorder dependent transitions $r_3$ and $r_6$. 
As thread $t_4$ executing $r_6$ is not enabled at $s_2$
and missing happens-before mapping between $r_4$ and $r_5$
causes $candidates$ computed on line~\ref{line-candidate-1} of Algorithm~\ref{algo:findtarget}  
to be an empty set. Set $pending$ %computed on line~\ref{line-pending} 
is also empty as it is a subset of $candidates$. This causes the control flow of \texttt{FindTarget}
to reach Step~4 invoking \texttt{BacktrackEager($r_1\ldots r_5, 3, r_6$)}. Then,
lines~\ref{line-post-post-dep}--\ref{line-end-adding-post-backtrack} in Algorithm~\ref{algo:backtrackeager}
pick transitions $r_1$ and $r_2$ posting events to the same event queue, as the 
nearest co-enabled \texttt{post}s not ordered by $\rightsquigarrow$,
and add $t_3$ executing $r_2$ to $backtrack(s_0)$. This is because $r_1$ is explored at $s_0$ in $w$.
On backtracking to $s_0$, EM-DPOR explores a run where events $e_1$ and $e_2$ are reordered
which eventually reorders $r_3$ and $r_6$ as shown in Figure~\ref{fig:motivation-statespace-mt}.
\end{example}

\begin{figure*}[t]
\begin{tabular*}{\textwidth}{l@{\;\;\;}r}
\begin{minipage}{\dimexpr0.45\textwidth-2\tabcolsep}
\centering
% \scalebox{0.9}{
\begin{tabular}{@{}r@{\;\;\;\;\;}l@{\;}l@{\;}l@{\;}l@{\;}l@{\;}l@{\;}l@{\;}l@{}}
 & \tn{t0}{$t_0$} & \tn{t1}{\;\;\;$t_1$}  & \tn{t2}{\;\;\;$t_2$} & \tn{t3}{\;\;\;$t_3$} & \tn{t4}{\;\;\;$t_4$} 
 & \tn{t5}{\;\;\;$t_5$} & \tn{t6}{\;\;\;$t_6$}\\
\hline
\\
$r_1$ &&&& \multicolumn{4}{@{}l}{\tn{n1}{\;\;\;\post{($e_1$)}}} \\\myhline
$r_2$ &&&&& \multicolumn{3}{@{}l}{\tn{n2}{\;\;\;\post{($e_2$)}}} \\\myhline
$r_3$ &&&&&&  \multicolumn{2}{@{}l}{\tn{n3}{\;\;\;\post{($e_3$)}}} \\\myhline
$r_4$ &&&&&&& \tn{n4}{\;\;\post{($e_4$)}} \\\myhline
\\\myhline
$r_5$ &&& \multicolumn{5}{@{}l}{\tn{n5}{\;\;\;\post{($e_5$)}}} \\\myhline
\\\myhline
$r_6$ && \multicolumn{6}{@{}l}{\tn{n6}{\;\;\;\post{($e_6$)}}} \\\myhline
$r_7$ && \multicolumn{6}{@{}l}{\tn{n7}{\;\;\;\fork{($t_0$)}}} \\\myhline
\\\myhline
$r_8$ & \multicolumn{7}{@{}l}{\tn{n8}{\texttt{b = 1}}} \\\myhline
\\\myhline
$r_9$ && \multicolumn{6}{@{}l}{\tn{n9}{\;\;\;\texttt{y = 5}}} \\\myhline
$r_{10}$ && \multicolumn{6}{@{}l}{\tn{n10}{\;\;\;\texttt{b = 10}}} \\\myhline
\\\myhline
$r_{11}$ &&& \multicolumn{5}{@{}l}{\tn{n11}{\;\;\;\readX{(y)}}} \\\myhline
\\
\end{tabular}
\tikz[remember picture,overlay] \node[] (fit1) at (t2 |- n5) {};
\tikz[remember picture,overlay] \node[] (fit2) at (t4 |- n5) {};
\tikz[remember picture,overlay] \node[fit=(fit1)(fit2), draw] {};
%%%%%%%%
\tikz[remember picture,overlay] \node[] (fit3) at (t2 |- n11) {};
\tikz[remember picture,overlay] \node[] (fit4) at (t4 |- n11) {};
\tikz[remember picture,overlay] \node[fit=(fit3)(fit4), draw] {};
%%%%%%%%
\tikz[remember picture,overlay] \node[] (fit5) at (t1 |- n6) {};
\tikz[remember picture,overlay] \node[] (fit6) at (t3 |- n7) {};
\tikz[remember picture,overlay] \node[fit=(fit5)(fit6), draw] {};
%%%%%%%%
\tikz[remember picture,overlay] \node[] (fit7) at (t1 |- n9) {};
\tikz[remember picture,overlay] \node[] (fit8) at (t3 |- n10) {};
\tikz[remember picture,overlay] \node[fit=(fit7)(fit8), draw] {};
%%%%%%%%
\tikz[remember picture,overlay] \node[left of=fit1,xshift=.4cm] (e3) {$e_3$};
\tikz[remember picture,overlay] \node[left of=fit3,xshift=.4cm] (e4) {$e_4$};
\tikz[remember picture,overlay] \node[left of=fit5,xshift=.4cm] (e1) {$e_1$};
\tikz[remember picture,overlay] \node[left of=fit7,xshift=.4cm] (e2) {$e_2$};
% \tikz[remember picture,overlay] \node[left of=n4,xshift=-.3cm] (e2) {$e_2$};
% }
\end{minipage}% 
&
\begin{minipage}{\dimexpr0.55\textwidth-2\tabcolsep}
\centering

\begin{tikzpicture}[auto,node distance=16 mm,semithick, scale=0.675, transform
shape]
\tikzstyle{eventNode} = [thick,inner sep=0pt,minimum size=6mm]
\tikzstyle{myarrows}=[line width=.5mm,draw=black,-triangle 45,postaction={draw, line width=1.2mm, shorten >=3mm, -}]
\tikzstyle{stateNode} = [circle,draw=black,thick,inner sep=0pt,minimum 
size=6mm]
\tikzstyle{sequenceNode} = [circle,fill=yellow,draw=black,thick,inner sep=0pt,minimum 
size=6mm]
\tikzstyle{blankNode} = [thick,inner sep=0pt,minimum size=6mm]

\node[sequenceNode] (s0) {$s_0$};
% \node[blankNode] (q0t1) [right of=s0, xshift=-.2cm,yshift=.5cm] {
% \begin{tikzpicture}[node distance=-\pgflinewidth]
% \node [mybox, fill=gray!20] (eb) {$\bot$};
% \end{tikzpicture}};
% \node[blankNode] (q0t1label) [left of=q0t1,xshift=1cm] {\texttt{t1}:};

% \node[blankNode] (q0t2) [below of=q0t1, yshift=.9cm] {
% \begin{tikzpicture}[node distance=-\pgflinewidth]
% \node [mybox, fill=gray!20] (eb) {$\bot$};
% \end{tikzpicture}};
% \node[blankNode] (q0t2label) [left of=q0t2,xshift=1cm] {\texttt{t2}:};

\node[sequenceNode] (s1) [below of=s0,yshift=-.6cm] 
{$s_1$} edge [<-] node [left] {$r_1.r_2.r_3.r_4$} (s0);

\node[blankNode] (q1t1) [left of=s1, xshift=.35cm,yshift=.5cm] {
\begin{tikzpicture}[node distance=-\pgflinewidth]
\node [mybox, fill=gray!20] (e1) {$e_1$};
\node [mybox,fill=gray!20,right=of e1] (e2) {$e_2$};
% \node [mybox, fill=gray!20, right=of e2] (eb) {$\bot$};
\end{tikzpicture}};
\node[blankNode] (q1t1label) [left of=q1t1,xshift=.7cm] {$t_1$:};

\node[blankNode] (q1t2) [below of=q1t1, yshift=.9cm] {
\begin{tikzpicture}[node distance=-\pgflinewidth]
\node [mybox, fill=gray!20] (e3) {$e_3$};
\node [mybox,fill=gray!20,right=of e3] (e4) {$e_4$};
% \node [mybox, fill=gray!20, right=of e4] (eb) {$\bot$};
\end{tikzpicture}};
\node[blankNode] (q1t2label) [left of=q1t2,xshift=.7cm] {$t_2$:};

\node[sequenceNode] (s2) [below of=s1,yshift=.3cm] 
{$s_2$} edge [<-] node [left] {$r_5$} (s1);
\node[sequenceNode] (s3) [below of=s2,yshift=.3cm] 
{$s_3$} edge [<-] node [left] {$r_6.r_7$} (s2);
\node[sequenceNode] (s4) [below of=s3,yshift=.3cm] 
{$s_4$} edge [<-] node [left] {$r_8$} (s3);
\node[sequenceNode] (s5) [below of=s4,yshift=.3cm] 
{$s_5$} edge [<-] node [left] {$r_9.r_{10}$} (s4);
\node[sequenceNode] (s6) [below of=s5,yshift=.3cm] 
{$s_6$} edge [<-] node [left] {$r_{11}$} (s5);

%%%%%%%%%%   state space of interleavings of sequence w1  %%%%%%%%%%
\node[stateNode] (s7) [below of=s0,yshift=-.6cm,xshift=-4.7cm] 
{$s_7$} edge [<-] node [left, yshift=.2cm] {$r_1.r_2.r_4.r_3$} (s0);
\node[blankNode] (q7t1) [left of=s7, xshift=.35cm,yshift=.75cm] {
\begin{tikzpicture}[node distance=-\pgflinewidth]
\node [mybox, fill=gray!20] (e1) {$e_1$};
\node [mybox,fill=gray!20,right=of e1] (e2) {$e_2$};
% \node [mybox, fill=gray!20, right=of e2] (eb) {$\bot$};
\end{tikzpicture}};
\node[blankNode] (q7t1label) [left of=q7t1,xshift=.7cm] {$t_1$:};

\node[blankNode] (q7t2) [below of=q7t1, yshift=.9cm] {
\begin{tikzpicture}[node distance=-\pgflinewidth]
\node [mybox, fill=gray!20] (e4) {$e_4$};
\node [mybox,fill=gray!20,right=of e4] (e3) {$e_3$};
% \node [mybox, fill=gray!20, right=of e3] (eb) {$\bot$};
\end{tikzpicture}};
\node[blankNode] (q7t2label) [left of=q7t2,xshift=.7cm] {$t_2$:};
 
\node[stateNode] (s9) [below of=s7,yshift=.3cm,xshift=-1cm] 
{$s_9$} edge [<-] node [left] {\scalebox{2}{$r_6$}$.r_7$} (s7);
\node[stateNode] (s10) [below of=s9,yshift=.3cm,xshift=-1cm] 
{$s_{10}$} edge [<-] node [left] {$r_8$} (s9);
\node[stateNode] (s11) [below of=s10,yshift=.3cm,xshift=-.7cm] 
{$s_{11}$} edge [<-] node [left] {$r_9.r_{10}$} (s10);
\node[stateNode] (s12) [below of=s11,yshift=.3cm] 
{$s_{12}$} edge [<-] node [left] {$r_{11}$} (s11);
\node[stateNode] (s13) [below of=s12,yshift=.3cm] 
{$s_{13}$} edge [<-] node [left] {\scalebox{2}{$r_5$}} (s12);

\node[blankNode] (s14) [below of=s10,yshift=.3cm,xshift=.7cm] 
{\scalebox{2}{$\mathbf{\cdotp \, \cdotp \, \cdotp}$}} edge [<-, dashed] (s10);
% \node[stateNode] (s15) [below of=s14,yshift=.3cm] 
% {$s_{15}$} edge [<-] node [left] {$r_9.r_{10}$} (s14);
% \node[stateNode] (s16) [below of=s15,yshift=.3cm] 
% {$s_{16}$} edge [<-] node [left] {$r_5$} (s15);

\node[blankNode] (s17) [below of=s9,yshift=.3cm,xshift=.8cm] 
{\scalebox{2}{$\mathbf{\cdotp \, \cdotp \, \cdotp}$}} edge [<-, dashed] (s9);
% \node[stateNode] (s18) [below of=s17,yshift=.3cm,xshift=.3cm] 
% {$s_{18}$} edge [<-] node [left] {$r_8$} (s17);
% \node[stateNode] (s19) [below of=s18,yshift=.3cm] 
% {$s_{19}$} edge [<-] node [left] {$r_{11}$} (s18);
% \node[stateNode] (s20) [below of=s19,yshift=.3cm] 
% {$s_{20}$} edge [<-] node [left] {$r_5$} (s19);
% 
% \node[stateNode] (s21) [below of=s9,yshift=.3cm,xshift=2.3cm] 
% {$s_{21}$} edge [<-] node [left,yshift=-.1cm,xshift=.2cm] {$r_{11}$} (s9);
% \node[stateNode] (s22) [below of=s21,yshift=.3cm,xshift=.2cm] 
% {$s_{22}$} edge [<-] node [left] {$r_9.r_{10}$} (s21);
% \node[stateNode] (s23) [below of=s22,yshift=.3cm] 
% {$s_{23}$} edge [<-] node [left] {$r_8$} (s22);
% \node[stateNode] (s24) [below of=s23,yshift=.3cm] 
% {$s_{24}$} edge [<-] node [left] {$r_5$} (s23);

\node[stateNode] (s25) [below of=s7,yshift=.3cm,xshift=1cm] 
{$s_{14}$} edge [<-] node [right] {\scalebox{2}{$r_{11}$}} (s7);
\node[stateNode] (s26) [below of=s25,yshift=.3cm] 
{$s_{15}$} edge [<-] node [left] {$r_5$} (s25);
\node[stateNode] (s27) [below of=s26,yshift=.3cm] 
{$s_{16}$} edge [<-] node [left] {$r_6.r_7$} (s26);
\node[stateNode] (s28) [below of=s27,yshift=.3cm,xshift=-.6cm] 
{$s_{17}$} edge [<-] node [left] {$r_8$} (s27);
\node[stateNode] (s29) [below of=s28,yshift=.3cm] 
{$s_{18}$} edge [<-] node [left] {\scalebox{2}{$r_9$}$.r_{10}$} (s28);

\node[blankNode] (s30) [below of=s27,yshift=.3cm,xshift=.6cm] 
{\scalebox{2}{$\mathbf{\cdotp \, \cdotp \, \cdotp}$}} edge [<-, dashed] (s27);

% \node[stateNode] (s31) [below of=s30,yshift=.3cm] 
% {$s_{31}$} edge [<-] node [right] {$r_8$} (s30);

%%%%%%%%%%   state space of interleavings of sequence w3  %%%%%%%%%%
\node[stateNode] (s8) [below of=s0,yshift=-.6cm,xshift=2.1cm] 
{$s_8$} edge [<-] node [right, yshift=.2cm] {$r_2.r_1.r_4.r_3$} (s0);

\node[blankNode] (q8t1) [right of=s8, xshift=.2cm,yshift=.6cm] {
\begin{tikzpicture}[node distance=-\pgflinewidth]
\node [mybox, fill=gray!20] (e2) {$e_2$};
\node [mybox,fill=gray!20,right=of e2] (e1) {$e_1$};
% \node [mybox, fill=gray!20, right=of e1] (eb) {$\bot$};
\end{tikzpicture}};
\node[blankNode] (q8t1label) [left of=q8t1,xshift=.7cm] {$t_1$:};

\node[blankNode] (q8t2) [below of=q8t1, yshift=.9cm] {
\begin{tikzpicture}[node distance=-\pgflinewidth]
\node [mybox, fill=gray!20] (e4) {$e_4$};
\node [mybox,fill=gray!20,right=of e4] (e3) {$e_3$};
% \node [mybox, fill=gray!20, right=of e3] (eb) {$\bot$};
\end{tikzpicture}};
\node[blankNode] (q8t2label) [left of=q8t2,xshift=.7cm] {$t_2$:};

\node[stateNode] (s32) [below of=s8,yshift=.3cm] 
{$s_{19}$} edge [<-] node [left] {$r_9.$\scalebox{2}{$r_{10}$}} (s8);
\node[stateNode] (s33) [below of=s32,yshift=.3cm] 
{$s_{20}$} edge [<-] node [left] {$r_{11}$} (s32);
\node[stateNode] (s34) [below of=s33,yshift=.3cm] 
{$s_{21}$} edge [<-] node [left] {$r_5$} (s33);
\node[stateNode] (s35) [below of=s34,yshift=.3cm] 
{$s_{22}$} edge [<-] node [left] {$r_6.r_7$} (s34);
\node[stateNode] (s36) [below of=s35,yshift=.3cm] 
{$s_{23}$} edge [<-] node [left] {\scalebox{2}{$r_8$}} (s35);
\node[blankNode] (s37) [below of=s33,yshift=.3cm,xshift=1.5cm] 
{\scalebox{2}{$\mathbf{\cdotp \, \cdotp \, \cdotp}$}} edge [<-,dashed] (s33);
% \node[stateNode] (s38) [below of=s37,yshift=.3cm] 
% {$s_{38}$} edge [<-] node [right] {$r_5$} (s37);
% \node[stateNode] (s39) [below of=s38,yshift=.3cm] 
% {$s_{20}$} edge [<-] node [right] {$r_8$} (s38)
% edge [<->, bend left, dashed, gray!40] (s20);
% 
\node[blankNode] (s40) [below of=s8,yshift=.3cm,xshift=1.5cm] 
{\scalebox{2}{$\mathbf{\cdotp \, \cdotp \, \cdotp}$}} edge [<-,dashed] (s8);
% \node[stateNode] (s41) [below of=s40,yshift=.3cm] 
% {$s_{40}$} edge [<-] node [left] {$r_9.r_{10}$} (s40);
% \node[stateNode] (s42) [below of=s41,yshift=.3cm,xshift=-.7cm] 
% {$s_{41}$} edge [<-] node [left] {$r_5$} (s41);
% \node[stateNode] (s43) [below of=s42,yshift=.3cm] 
% {$s_{42}$} edge [<-] node [left] {$r_6.r_7$} (s42);
% \node[stateNode] (s44) [below of=s43,yshift=.3cm] 
% {$s_{31}$} edge [<-] node [left] {$r_8$} (s43)
% edge [<->, bend left, dashed, gray!40] (s31);
% 
% \node[stateNode] (s45) [below of=s41,yshift=.3cm,xshift=.7cm] 
% {$s_{43}$} edge [<-] node [right] {$r_6.r_7$} (s41);
% \node[stateNode] (s46) [below of=s45,yshift=.3cm] 
% {$s_{44}$} edge [<-] node [right] {$r_5$} (s45);
% \node[stateNode] (s47) [below of=s46,yshift=.3cm] 
% {$s_{24}$} edge [<-] node [right] {$r_8$} (s46)
% edge [<->, bend left, dashed, gray!40] (s24);

\node[blankNode] (z1) [below of=s13, yshift=.7cm] {\scalebox{1.6}{$z_1$}};
\node[blankNode] (z2) [below of=s29, yshift=.7cm] {\scalebox{1.6}{$z_2$}};
\node[blankNode] (z) [below of=s6, yshift=.7cm] {\scalebox{1.6}{$z$}};
\node[blankNode] (z) [below of=s36, yshift=.7cm] {\scalebox{1.6}{$z_3$}};

\end{tikzpicture}

\end{minipage}
\\
\begin{minipage}[t]{\dimexpr0.45\textwidth-2\tabcolsep}
% \captionsetup{font=footnotesize}
\captionof{figure}{A partial trace $z$ of an event-driven program involving
a multi-threaded dependence.}
\label{fig:rp-hb-role-trace}
\end{minipage}
% &
% \phantom{zz}
&
\begin{minipage}[t]{\dimexpr0.55\textwidth-2\tabcolsep}
% \captionsetup{font=footnotesize}
\captionof{figure}{A partial state space for some valid permutations of transitions
in the trace given in Figure~\ref{fig:rp-hb-role-trace}.}
\label{fig:fig:rp-hb-role-state}
\end{minipage}
\end{tabular*}
\end{figure*}

\subsection{Role of HB Order Induced Between \post\ Transitions}
We now give another example to illustrate the end-to-end %sophisticated reasoning in Algorithm~\ref{algo:reschedulepending} 
working of EM-DPOR along with highlighting the role played by
happens-before mappings added between reordered \texttt{post} operations by
rule~3 in Definition~\ref{def:happens-before}.

\begin{example}
\label{ex:reschedulepending}
\upshape
Consider an execution trace $z$ shown in Figure~\ref{fig:rp-hb-role-trace},
of a program in which two threads $t_1$ and $t_2$ have event queues. 
Transitions $r_1$ and $r_2$ respectively post events $e_1$ and $e_2$ to the event queue of
the thread $t_1$, and the transitions $r_3$ and $r_4$ respectively post events 
$e_3$ and $e_4$ to the event queue of the thread $t_2$.
Transitions $r_5$ and $r_6$ post events $e_5$ and $e_6$
respectively to the same event queue. However, the event handlers 
corresponding to $e_5$ and $e_6$ are not shown in the figure.
We assume that the event handlers 
of $e_5$ and $e_6$ contain dependent transitions. 
Figure~\ref{fig:fig:rp-hb-role-state} shows a partial state space explored by various
permutations of transitions in $z$. For economy of space, we merge prefixes of 
certain transition sequences and represent them by single edges.
Event queue state of threads $t_1$ and $t_2$ are indicated for some of the states reached on executing
the \post\ operations in various orders. The events in an event queue are ordered 
from left to right, which makes the leftmost event the front of the queue.
The sequences of interest are labeled as $z$, $z_1$, $z_2$ and $z_3$ in Figure~\ref{fig:fig:rp-hb-role-state}.
The shaded states correspond to states explored by $z$. Sequence $z$ has two pairs 
of \emph{may be co-enabled} dependent transitions --- $(r_8,r_{10})$ and $(r_9,r_{11})$,
and a pair of \emph{may be reordered} dependent transitions in the handlers of $e_5$ and $e_6$.
%% which make conflicting accesses to shared variables.  

%% \todo{Sequences with thick edges between states correspond to sequences $z_1$, $z_2$ and $z_3$
%% shown in Figure~\ref{fig:need-for-gen-algo}. Figure~\ref{fig:need-for-gen-algo}(c)
%% does not show the prefix containing \post\ operations  of $z_2$ as it is exactly same as that
%% of sequence $z_1$ in Figure~\ref{fig:need-for-gen-algo}(b).}

Assume that EM-DPOR initially explores sequence $z_1$ in which the relative order of 
events $e_3$ and $e_4$ is reversed compared to that in $z$. 
We show how EM-DPOR eventually explores a dependence-covering sequence of $z$, rather $z$ itself,
when the model checking starts with $z_1$. 
A dependence-covering sequence of $z$ must maintain the relative ordering of all pairs of dependent transitions in 
$z$ (see Definition~\ref{def:dep-covering}).
Clearly, $z_1$ is not a dependence-covering sequence of $z$ as the relative order of dependent
transitions in the event handlers of $e_5$ and $e_6$ posted respectively by the 
transitions $r_5$ and $r_6$, is reversed w.r.t. that in $z$. We will be showing the
pair of dependent transitions or \post{} transitions in a transition sequence $z_i$,
whose order is problematic for $z_i$ to be a dependence-covering sequence of $z$, in an enlarged form.

When exploring $z_1$, Algorithm~\ref{algo:explore} invokes \texttt{FindTarget} (Algorithm~\ref{algo:findtarget}) to
compute backtracking choices to reorder dependent transitions in the handlers of $e_6$ and $e_5$ (not shown in Figure~\ref{fig:fig:rp-hb-role-state}). 
Step~1 of \texttt{FindTarget} identifies $r_6$ and $r_5$ as corresponding diverging \texttt{post}s and recursively
invokes \texttt{FindTarget} to reorder $r_6$ and $r_5$. In the recursive call,
Step~2 of \texttt{FindTarget} adds thread $t_2$ to $backtrack(s_7)$ since $r_6$ is executed 
from state $s_7$, and EM-DPOR eventually explores a sequence $z_2$.  
Since $r_5 \in reorderedPosts(r_6,z_2)$, $r_5$ and $r_6$ are related by $\to_{z_2}$.
Again, $z_2$ is not a dependence-covering sequence of $z$ as the relative order of dependent transitions 
$r_9$ and $r_{11}$ is reversed compared to that in $z$. 
On exploring a prefix of $z_2$ till state $s_{17}$ where $r_9 = next(s_{17}, t_1)$, 
\texttt{FindTarget} is invoked to reorder $r_{11}$ and $r_9$. Step~2 of 
\texttt{FindTarget} computes $candidates = \{(t_1,e_2)\}$. Since $t_1$ is 
in $done(s_7)$ due to sequence $z_1$, Step~3 of \texttt{FindTarget} is reached which computes
$pending = \{(t_1,e_2)\}$. Then, \texttt{ReschedulePending} is invoked by
line~\ref{line-recursive-findtarget} of \texttt{FindTarget} to reorder relevant blocked events with executable events at state $s_7$.
Event $e_2$ is added to $swapMap[t_1]$ and $(t_1,e_1)$ to 
$worklist$ (line~\ref{line-rp-init-worklist} in Algorithm \texttt{ReschedulePending}). 
On processing $(t_1,e_1)$ in $worklist$, Step~3(b) of 
\texttt{ReschedulePending} adds blocked event $e_3$ to $swapMap[t_2]$
and $(t_2,e_4)$ to $worklist$, as $r_5$ in the task $(t_2,e_3)$ blocked at state $s_7$ 
happens before $r_6$ in the task $(t_1,e_1)$ executable at state $s_7$, and $t_2 \in done(s_7)$. 
No task is added to $worklist$ on processing $(t_2,e_4)$.
Then, Step~3c invokes \texttt{FindTarget} to reorder \texttt{post}s of
events $e_1$ and $e_2$ and \texttt{post}s of $e_4$ and $e_3$. 
% These respectively add $r_2$ and $r_1$ to $backtrack(s_0)$.
Reordering $e_4$ and $e_3$ allows us to explore $z$ --- our target sequence. 

As mentioned earlier, arbitrarily selecting a blocked event for reordering w.r.t.
an executable event, among the 
set of blocked events identified by Steps~3a - 3b of \texttt{ReschedulePending} may not 
yield a dependence-covering sequence for a target sequence.
For example,  any sequence explored after reordering events $e_1$ and 
$e_2$ reverses the order of dependent transitions $r_8$ (executed by the thread $t_0$) and $r_{10}$
(executed by the handler of $e_2$ on $t_1$)
as shown in sequence $z_3$, making such sequences \emph{non} dependence-covering
w.r.t. $z$. This example also demonstrated the necessity to capture the ordering
between reordered \texttt{post}s. The happens-before mapping from $r_5$ to $r_6$ 
helped in identifying event $e_3$ as a relevant blocked event to be reordered with 
its corresponding executable event $e_4$, leading to the exploration of a dependence-covering
sequence of $z$.
\end{example}

\subsection{Formal Guarantees and Variants of EM-DPOR}
\label{sec:emdpor-soundness}
in Appendix~\ref{app:emdpor-proofs} we provide a sketch outlining the 
proof of correctness  of \emdpor. Through this proof sketch we show that whenever \texttt{Explore}
backtracks from a state $s$ to a prior state in the search stack, it must have 
explored a dependence-covering sequence (see Definition~\defdepcovering) for any 
sequence $w$ in $\mathcal{S}_G$ from state $s$. 
This equivalently proves that \emdpor\ explores a dependence-covering set at each visited
state $s$. 

Appendix~\ref{app:emdpor-opt} discusses a few variants of the Algorithm \texttt{Explore}
capable of identifying more pairs of independent transitions than assumed in this section (see the beginning of 
Section~\ref{sec:emdpor-algo}). We have incorporated 
these optimizations in our \emdpor\ implementation used for experimental evaluation of \emdpor.
% 
% The following theorem  proved in Appendix~\ref{sec:proof-dc-set}  
% states that EM-DPOR explores dependence-covering sets at visited states.
% \begin{theorem}
% \label{th:explore-dep-cov}
% In a finite and acyclic state space $\mathcal{S}_G$, whenever \texttt{Explore} 
% (Algorithm~\ref{algo:explore})
% backtracks from a state $s$ to a state prior to $s$ in the search stack, EM-DPOR has explored
% a dependence-covering sequence for any sequence $w$ in $\mathcal{S}_G$ from 
% $s$, \ie\ the set of transitions explored from a state $s$ is a dependence-covering
% set in $s$.
% \end{theorem}
%  

\section{Implementation}
\label{sec:emdpor-impl}
% This section describes a vector clock based implementation of EM-DPOR on a prototype framework
% called EM-Explorer, and an evaluation of this implementation against DPOR~\cite{DBLP:conf:popl:FlanaganG05}.

This section describes a vector clock based implementation of EM-DPOR on a prototype 
stateless model checking framework called \emexplorer. Since we evaluate \emdpor\
over Android application traces, \emexplorer\ has been designed to handle the concurrency
behavior of Android applications.

\subsubsection*{Vector Clock Based Implementation of \emdpor}
Happens-before relation (see Definition~\ref{def:happens-before}) over a  
given transition sequence which in turn captures the order between dependent transitions  
in the sequence, plays a vital role in various steps of \emdpor\ such as identifying 
unordered dependent operations to be reordered, computing backtracking choices
and so on. We use vector clocks data structure to compute the happens-before 
relation. We have designed the implementation of \emdpor\ similar to the implementation
of the DPOR~\cite{Flanagan:2005:DPR:1040305.1040315} algorithm which too uses vector clocks to capture
the HB relation over traces of multi-threaded programs to 
dynamically computes persistent sets~\cite{Godefroid:1997:MCP:263699.263717}.
% Similar to the implementation of DPOR~\cite{Flanagan:2005:DPR:1040305.1040315} 
% which dynamically computes persistent sets, we have implemented our algorithm using the vector clocks~\cite{mattern1989virtual}
% data structure.  
In a multi-threaded setting where all the operations executed on 
the same thread are totally ordered, %a clock is assigned to each thread so that the 
each component (or clock) of a vector clock corresponds to a thread. Hence, the 
vector clock timestamp of an operation $z$ denotes the last known operation (as known by $z$) performed
by each thread of the program.
% and the components of a vector clock correspond to clock values of the threads of the program. 
In an event-driven program, the operations from different event
handlers on the same thread need not be totally ordered. Hence in the vector clocks
we use, each clock corresponds to a task in the program where a task is either an event 
or a thread. In order to compute the
vector clock timestamps of operations of a task, we maintain a vector clock with
each task.
Most of the computations on vector clocks described in~\cite{Flanagan:2005:DPR:1040305.1040315} are lifted in a straightforward
manner to task-based vector clocks.
% In addition to the vector clock computations described in~\cite{DBLP:conf:popl:FlanaganG05},
As defined by rule~(2) of Definition~\ref{def:happens-before}, \emdpor\ orders event handlers 
executed on the same thread if their corresponding \texttt{post}s 
have a happens-before ordering, so as to respect the FIFO ordering of events.
FIFO ordering is specific to the event-driven concurrency model 
considered in this work and is not handled by the vector clock based implementation of DPOR.
The treatment of FIFO closure %(see Definition~\ref{def:happens-before} rule~(2))
requires a special design explained below.

\paragraph{Computing FIFO closure.}
Initially all the components (scalar clocks) of the vector clocks of all the tasks are initialized to zero.
Let $V_1$ be the vector clock of a task in which the transition with visible operation
\post{($\_$,$e$,$t$)} is executed. Let $V_2$ be the vector clock of the task $(t,e)$.
On executing \post{($\_$,$e$,$t$)}, the component $(t,e)$ in the vector clock $V_1$,
\ie\ $V_1((t,e))$, is incremented 
% a task with vector clock $V_1$ increments its component $V_1(t,e)$ 
making this component of $V_1$ non-zero, and the vector clock $V_2$ of task $(t,e)$ 
is initialized with the same value as that of $V_1$.
After initialization $V_2$ remains unmodified till event $e$ is dequeued. When dequeuing
event $e$ we check the value of each component corresponding to events posted to the thread $t$, in vector clock $V_2$.
If the value of any such component of $V_2$, say $(t,e')$, is non-zero, we update $V_2$ by performing 
a vector clock join between $V_2$ and the 
vector clock of the task $(t,e')$. A non-zero component
value for a task $(t,e')$ in $(t,e)$'s vector clock $V_2$ indicates that \post{($\_$,$e'$,$t$)} \emph{happens-before}
\post{($\_$,$e$,$t$)}, and thus FIFO rule in Definition~\ref{def:happens-before} is applicable.
Since the event $e'$ is handled prior to $e$ on the thread $t$, the vector clock
of $(t,e')$ has a value corresponding to the VC timestamp of \aend{($t$,$e'$)} 
when it is used to update $V_2$. Thus the event handler of $e$ gets ordered w.r.t. 
that of $e'$.

\subsubsection*{\emexplorer\ Framework}
The order of execution of operations in an Android application is influenced not only by the
sources of non-determinism in the application, but also by the Android 
framework and the inter-process communication between the applications running in 
different processes on an Android device. %~\cite{Hsiao:2014,DBLP:conf:pldi:MaiyaKM14,bielik:oopsla:2015}.
Interpreting or modeling various concurrency relevant APIs and operations from
application/framework code, makes building a full fledged model checker for 
Android applications a challenge in itself.
% Building a full-fledged model checker for Android applications is a challenge
% in itself but is not the focus of this work.
Tools such as JPF-Android~\cite{DBLP:journals/sigsoft/MerweMV12} and
AsyncDroid~\cite{DBLP:conf/cav/OzkanET15} take promising steps in this direction. However,
presently they either explore only a limited number of sources of
non-determinism~\cite{DBLP:conf/cav/OzkanET15}
% than are required to be controlled by our POR algorithm, 
or require a lot of framework libraries to be modeled~\cite{DBLP:journals/sigsoft/MerweMV12,DBLP:conf/icse/Merwe15}.
%% making it difficult to model check
%% concurrency behaviour of real world apps. 
% At present, there is no stateless model checker for the concurrency 
% model we are considering.
%% Reserving extending existing tools to suit our requirements for future work, 
We have therefore built a prototype exploration 
framework called EM-Explorer, which emulates the semantics of visible 
operations like \texttt{post}, \texttt{read}, \texttt{acquire} and so on.

Our framework takes an execution trace generated 
by an automated testing and race detection tool for Android applications, called 
\droidracer~\cite{Maiya:2014:RDA:2594291.2594311}, as input. 
Since \droidracer\ has the capability to run on real-world applications, we can 
experiment on real concurrency behaviors seen in Android applications and evaluate 
different POR techniques on them.
\droidracer\ records all concurrency relevant operations and memory \texttt{read}s and \texttt{write}s.
% All \texttt{read} and \texttt{write} operations performed by an application on shared objects
% are recorded by DroidRacer during testing.
% It also records all concurrency relevant operations such as \post s and synchronization 
% operations over locks. 
EM-Explorer \emph{emulates} such a trace based on their operational semantics and
 %% presented in Section~\ref{sec:transition-system} and 
% provides APIs, such as an API method to inspect the next transitions of a thread, required for stateless model checking, and 
explores all interleavings of the given execution trace permitted by the semantics. 
Android permits user and system-generated events apart from programmatically generated events 
by the application. 
\emexplorer\ only explores the non-determinism between program and system generated events 
while keeping the order of user events fixed.
This is analogous to model checking w.r.t. a fixed data input.
EM-Explorer does not track %memory contents or 
variable values and is incapable of evaluating conditionals on a different interleaving of the trace. 
\emexplorer\ is a stateless model checker, \ie\ it does not store program states which
can be restored when backtracking to a state. Hence, backtracking is performed by re-executing the prefix
of the last explored sequence upto the backtracking point.

Android supports different types of component classes, \eg\ Activity class
for user interface, and enforces a happens-before ordering between 
handlers of lifecycle events of component classes. \emexplorer\ seeds the 
happens-before relation for such events in each trace before starting the model checking,
to avoid exploring invalid interleavings of lifecycle events.
%%PM: below statements are already discussed in related work. So no need to repeat.
Android applications may post events in different modes such as associating a delay with an event
or posting an event to the front of the queue.  
We over-approximate the effect of posting with delay by forking
a new thread which does the \post{} non-deterministically, 
as mentioned in Section~\ref{sec:transition-system-por}. We leave handling of other
variants of posting events as future work.
We subject the execution trace generated by \droidracer\ to post-processing. 
Specifically, we recursively remove \emph{empty} event handlers (event handlers which only
execute \abegin{} and \aend{} with either no other visible operations in between
or only posting events whose event handlers are empty) 
from the traces obtained from \droidracer\ before model checking. This is done to 
facilitate fair comparison with DPOR which does not
inspect the contents of the handlers before reordering events. DPOR would otherwise 
unnecessarily reorder even such events.

\section{Experimental Evaluation}
\label{sec:eval}

\renewcommand{\arraystretch}{1}
\begin{table}[t]
\caption{Statistics on execution traces from Android applications}
\centering
\scalebox{0.8}{
% \begin{tabular}{|@{\;}c@{}@{\;}||@{\;}r@{\;}|@{\;}r@{\;}|@{\;}r@{\;}|@{\;}r@{\;}|@{\;}r@{\;}@{\;}|@{}}
\begin{tabular}{|@{\;}c@{\;}||@{\;}r@{\;\;}@{\;\;}r@{\;\;}@{\;\;}r@{\;\;}@{\;\;}r@{\;}@{\;}|}
\hline
{Application} & {Trace length} & {Threads} & {Events} & {Memory locations}\\
\hline
% SGT Puzzles           & $366$ & $5$ & $8$ & $105$ \\
Remind Me             & $444$ & $4$ & $9$ & $89$ \\
My Tracks             & $453$ & $10$ & $9$ & $108$ \\
Music Player          & $465$ & $6$ & $24$ & $68$  \\
Character Recognition & $485$ & $4$ & $22$ & $40$ \\
Aard Dictionary       & $600$ & $5$ & $30$ & $30$ \\
% Tomdroid Notes        & $2283$ & $5$  & $30$  & $213$ \\
\hline
\end{tabular}
}
\label{tab:stats}
\end{table}

\begin{table*}[t]
\vspace{-12pt}
\caption{Statistics on model checking runs using different POR techniques}
\centering
\scalebox{0.8}{
\begin{tabular}{|@{\;}c@{\;}||r@{\;\;\;}r@{\;\;\;}r@{\;\;}|r@{\;\;\;}r@{\;\;\;}r@{\;\;}|}
\hline
\multirow{2}{*}{{Application}} & \multicolumn{3}{c|}{\textbf{DPOR}} & 
\multicolumn{3}{c|}{\textbf{EM-DPOR}}\\
 & Traces & Transitions & Time & Traces & Transitions & Time \\
\hline
% SGT Puzzles           & $5505123^*$ & $184271632^*$ & $4$h$^*$ & $5120772^*$ & $172302070^*$ & $4$h$^*$ \\
Remind Me             & $24$ & $1864$ & $0.18$s & $3$  & $875$ & $0.05$s \\
My Tracks             & $1610684^*$ & $113299092^*$ & $4$h$^*$ & $405013$  & $26745327$ & $101$m $30$s \\
Music Player          & $1508413^*$ & $93254810^*$ & $4$h$^*$ & $266$ & $34333$ & $4.15$s \\
Character Recognition & $1284788$ & $67062526$ & $199$m $28$s & $756$ & $39422$ & $6.58$s \\
Aard Dictionary        & $359961^*$ & $14397143^*$ & $4$h$^*$ & $14$ & $4772$ & $1.4$s \\
% Tomdroid Notes        & $127897^*$ & $12930225^*$ & $4$h$^*$ & $84279^*$ & $40368329^*$ & $4$h$^*$ \\
% \hline %\hline
% \multirow{2}{*}{{Application}} & \multicolumn{3}{c|}{\textbf{DPOR + sleep}} & \multicolumn{3}{c|}{\textbf{EM-DPOR + sleep}}\\
%  & Traces & Transitions & Time & Traces & Transitions & Time \\
%  \hline
% SGT Puzzles           & $5161$ & $380541$ & $19.91$s & $688$ & $45986$ & $2.35$s \\
% Remind Me             & $8$ & $1060$ & $0.08$s & $2$ & $660$ & $0.04$s \\
% My Tracks             & $151$ & $11520$ & $1.1$s & $18$ & $2225$ & $0.48$s \\
% Music Player          & $1765415^*$ & $103446231^*$ & $4$h$^*$ & $83$ & $8931$ & $1.08$s \\
% Character Recognition & $19577$ & $861921$ & $2$m $29$s & $108$ & $5959$ & $0.93$s \\
% Aard Dictinary        & $354162^*$ & $11636888^*$ & $4$h$^*$ & $6$ & $2204$ & $0.65$s \\
% Tomdroid Notes        & $129340^*$ & $13154449^*$ & $4$h$^*$ & $8$ & $8022$ & $1.89$s \\
\hline
\end{tabular}
}
\label{tab:results}
\end{table*}

We evaluate the performance of EM-DPOR
which computes dependence-covering sets, by comparing with DPOR~\cite{Flanagan:2005:DPR:1040305.1040315} which computes 
persistent sets. DPOR is designed to use a dependence relation in
which transitions with operations posting to the same event queue are considered 
dependent. Whereas, \emdpor\ uses the dependence relation given in Definition~\ref{def:dependence-relation}.  
Both the algorithms are implemented in the EM-Explorer framework described in Section~\ref{sec:emdpor-impl}
and evaluated on post-processed execution traces of Android applications obtained
by running \droidracer.

We evaluated these two POR techniques on execution traces generated by DroidRacer 
on $5$ Android applications obtained from the Google Play Store~\cite{playStore}. 
Table~\ref{tab:stats} %lists these $5$ applications and 
presents statistics like the number of \emph{visible} operations in the trace 
(which is same as the count of concurrency relevant operations logged by \droidracer), 
threads, events, threads with event loops and (shared) memory locations in the 
collected execution trace of each of these applications.
We only report the threads created by the application, and the number of events
excluding events with empty event handlers.

We analyzed each of the traces described in Table~\ref{tab:stats} using both the POR techniques.
% We ran DPOR and EM-DPOR both with and without sleep sets on each of the traces
% in Table~\ref{tab:stats}. Thus, we evaluated \emph{four} model checking 
% configurations. 
Table~\ref{tab:results} gives the number of interleavings (listed as ``\emph{Traces}'')
and distinct transitions explored by DPOR and EM-DPOR.
It also gives the time taken for exploring the reduced state space
for each execution trace.
If a model checking run did not terminate within $4$ hours, we force-kill it and 
report the statistics for $4$ hours. The statistics for force-killed runs
are marked with $*$ in Table~\ref{tab:results}.
Since EM-Explorer does not track variable values,
it cannot prune executions that are infeasible due to conditional sequential execution.
However, both DPOR and EM-DPOR are implemented on top of EM-Explorer and therefore
operate on the same set of interleavings. The difference in their performance
thus arises from the different POR strategies.

In our experiments, DPOR's model checking run terminated only on 
two execution traces among the five, whereas, EM-DPOR terminated on all
of them. Except for the execution trace from My Tracks application, EM-DPOR finished state space exploration within
a few seconds. As can be seen from Table~\ref{tab:results},
DPOR explores a much larger number of interleavings and transitions, often 
\emph{orders of magnitude} larger compared to EM-DPOR.
While this is a small evaluation, it does show that significant reduction
can be achieved for real-world multi-threaded event-driven programs by avoiding
unnecessary reordering of events.

\paragraph{Performance.}
Both the techniques used about the same memory and the maximum peak
memory consumed by EM-DPOR across all traces, as reported by Valgrind, was less than $50$MB.
% Table~\ref{tab:results-2} gives the peak memory consumption and the time taken
% by each model checking configuration. All of them use about the same memory.
The experiments were performed on a machine with Intel Core i5 3.2GHz CPU 
with 4GB RAM, and running Ubuntu 12.04 OS.

\section{Related Work}
\label{sec:related}

Exploring all possible interleaving of transitions executed by threads (or 
processes) is one of the causes of state explosion problem faced by state space 
exploration based verification techniques. 
Partial order reductions consisting of techniques like \emph{stubborn sets,
persistent sets and sleep 
sets}~\cite{Valmari:1991:SSR:647736.735461,Godefroid:1996:PMV:547238} alleviate 
this problem by trying to explore only a representative interleaving of each 
Mazurkiewicz trace~\cite{DBLP:conf:ac:Mazurkiewicz86} (an equivalence class on 
thread interleavings).
Traces are partial orders of a dependency 
relation~\cite{Katz:1992:DCI:136373.136380,Godefroid:1996:PMV:547238} over  
transitions which classifies a pair of non-interfering transitions as 
independent. A POR enabled state space explorer only reorders dependent transitions, 
and this has been proved to visit all deadlocks and safety violations present
in the original \emph{non-reduced} space of thread interleavings~\cite{Godefroid:1996:PMV:547238}.
Practically, dependent transitions are identified based on the operations 
performed on communication objects like shared memory, FIFO buffers and so on. 
Dynamic partial order reduction (DPOR)~\cite{Flanagan:2005:DPR:1040305.1040315}
is an algorithm to compute persistent sets by checking for dependences 
during runtime, thus improving the precision of the persistent sets computed 
and resulting in greater reductions in state space explored, while the older 
techniques~\cite{Godefroid:1997:MCP:263699.263717} inspect static program structures.

A few works~\cite{musuvathi2007partial, Coons:2013:BPR:2509136.2509556} in the past 
have combined POR with bounded exploration~\cite{Musuvathi:2007:ICB:1250734.1250785,Emmi:2011:DS:1926385.1926432} of the state space.
Coons et al.~\cite{Coons:2013:BPR:2509136.2509556,coonsPhdThesis} have extended persistent sets to account for various bound functions such 
as context bounding and preemption bounding, and have soundly combined the DPOR algorithm 
with various search bounding techniques. They achieve this by conservatively 
identifying more backtracking points where backtracking choices computed to reorder 
a pair of dependent transitions can be added than the default one computed by DPOR,
so that a partial order between transitions which could be explored within the bound
is not missed. Their algorithm which performs bounded POR dynamically is integrated 
with the \textsc{Chess}~\cite{Musuvathi:2008:FRH:1855741.1855760} model checker.

Recent algorithms guarantee optimality in POR~\cite{Abdulla:2014:ODP:2535838.2535845,DBLP:conf/concur/RodriguezSSK15},
\ie\ they explore exactly one transition sequence per Mazurkiewicz trace~\cite{DBLP:conf:ac:Mazurkiewicz86}.
Whereas prior POR techniques guarantee exploring atleast one member from each equivalence
class of execution traces and provided no such optimality guarantees. 
Abdulla et al.~\cite{Abdulla:2014:ODP:2535838.2535845} have devised an optimal DPOR
technique based on a novel backtracking set
called \emph{source set} and a data structure called \emph{wakeup tree}. 
However, the notion of source sets and the optimal DPOR algorithm assume 
total ordering between transitions executed on the same thread. Hence, 
integrating our new dependence relation with source sets will involve significant 
changes to the definitions and algorithms presented in~\cite{Abdulla:2014:ODP:2535838.2535845}. 
Rodr{\'{\i}}guez et al.~\cite{DBLP:conf/concur/RodriguezSSK15} describe unfolding semantics 
parametrized on the commutativity based classical independence relation~\cite{Godefroid:1996:PMV:547238}, and present 
an unfolding based optimal POR algorithm. The unfolding semantics identifies dependent 
transitions with no ordering relation between them to be in conflict.
Their POR algorithm backtracks and explores a new transition sequence $w$ from a state $s$
only if every prior transition explored from $s$ is in conflict with some transition
in $w$. This is problematic in our setting where posts are considered independent and hence
trivially non-conflicting, causing unfolding based POR to miss reordering \texttt{post}s when required.
Establishing optimality in our setting is an interesting but non-trivial future direction.

Huang~\cite{Huang:2015:SMC:2737924.2737975} has developed a state space reduction technique for multi-threaded programs
based on a notion called maximal causality~\cite{Serbanuta2013,Huang:2014:MSP:2594291.2594315}, where an explored 
thread interleaving
is guaranteed to have an operation that reads a value different from all the prior interleavings.
Whereas Mazurkiewicz trace based conventional POR techniques explore different
thread interleaving so as to explore different partial order of dependent
transitions without any constraints on the values observed. Hence, exploration 
based on maximal causality are capable of reducing the number of equivalence
classes over execution traces even further, compared to Mazurkiewicz trace based equivalence. 
Unlike dynamic POR based techniques which explore the thread interleavings using 
depth-first search of the state space,
this technique identifies the interleavings by starting from a seed
interleaving and generate other interleavings by encoding the interleaving and
the allowed variations as a quantifier-free first-order logic formula. Solving
the constraints of the generated formula using an SMT solver identifies an interleaving
from another equivalence class. 
While the number of explorations by maximal causality based reduction
technique can be much smaller, the constraint solving may be time consuming.
However this technique is shown to be parallelized where multiple
interleavings are explored parallely, and the constraint solving corresponding
to various interleavings can also be carried out parallely.

Sen and Agha~\cite{Sen:2006:AST:2182061.2182094} and Tasharofi 
et al.~\cite{Tasharofi:2012:TND:2366649.2366663} describe 
dynamic POR techniques for distributed programs with actor semantics where 
actors execute concurrently. Actors do not have shared memory 
and communicate only via asynchronous message exchanges. Both the POR techniques
for the actor model explore all possible interleavings of messages sent to the 
same process. 
Sen and Agha~\cite{Sen:2006:AST:2182061.2182094} present a way of combining 
concolic execution~\cite{Sen:2005:CCU:1081706.1081750} with partial order reduction 
in the context of actor based systems, thus being able to reason about various data input as 
well as thread interleavings. The dynamic partial order reduction technique 
outlined in~\cite{Sen:2006:AST:2182061.2182094} is adapted in a tool called Basset~\cite{Lauterburg:2009:FSE:1747491.1747541} 
which is a model checker for actor programs built on top of Java PathFinder~\cite{Visser:2003:MCP:641151.641186}.
Tasharofi et al.~\cite{Tasharofi:2012:TND:2366649.2366663} identify 
the dependence relation defined in the context of actor programs to be transitive, 
which is not the case for dependence relation over transitions of multi-threaded
programs. The authors have adapted the DPOR algorithm~\cite{Flanagan:2005:DPR:1040305.1040315}
given for multi-threaded programs to be 
sensitive to this transitive dependence relation, causing it to explore fewer transitions
than a na\"ive adaptation of DPOR for actor programs.
Reduction techniques and model checking algorithms for MPI programs are 
described in~\cite{Palmer:2007:SDD:1273647.1273657,Vakkalanka:2008:DVM:1427782.1427794}. 
MPI programs too use message-passing constructs like non-blocking send and receive 
to exchange data between processes, and use global synchronization constructs 
like barriers. However, the message processing semantics of actor programs and 
MPI programs are quite different compared to the event handling semantics of 
event-driven programs such as Android applications.

$R^4$~\cite{Jensen:2015:SMC:2814270.2814282} is a stateless model checker 
for event-driven programs such as client-side web applications. $R^4$
adapts persistent sets~\cite{Godefroid:1997:MCP:263699.263717} and the DPOR 
algorithm to the domain of single-threaded event-driven programs where enqueued
events are non-deterministically dequeued in any order and each event handler 
is atomically executed to completion without interference from other handlers.
As described in~\cite{Jensen:2015:SMC:2814270.2814282}, the concurrency
model handled by $R^4$ allows an entire event handler to be considered as a single 
transition. 
In contrast, the focus of our POR technique is on  
multi-threaded programs with event queues, and thus needs to be sensitive to 
interference from multiple threads.
Mirzaei et al.~\cite{jpfsymandroid} and Merwe et al.~\cite{vanderMerwe:2012:VAA:2382756.2382797} 
model Android libraries and extend Java PathFinder~\cite{Visser:2003:MCP:641151.641186} to model check 
Android applications. However, these works do not model various concurrency aspects of Android
present in real-world applications.
AsyncDroid~\cite{DBLP:conf/cav/OzkanET15} is a systematic concurrency testing tool for Android applications
which explores various thread schedules for a given sequence of UI events.

While most of the state space reduction techniques in the literature assume the 
target programs to be run under a sequentially consistent (SC) memory model, recently, 
many efficient stateless model checking techniques have been developed for weaker memory
models as well~\cite{Abdulla:2015:SMC:2945565.2945622,Zhang:2015:DPO:2737924.2737956,
Demsky:2015:SSS:2814270.2814297,DBLP:conf/cav/AbdullaAJL16,Huang:2016:MCR:2983990.2984025}.
The challenges faced when developing efficient exploration techniques for event-driven 
programs are orthogonal to those faced when handling different memory models.

\section{Conclusions and Future Work}
\label{sec:conclusions}

The event-driven multi-threaded style of programming concurrent applications
is becoming increasingly popular. We considered the problem of 
POR-based efficient stateless model checking for this concurrency model. The key insight of
our work is that more reduction is achievable by treating 
operations that post events to the same thread as independent and only
reordering them if necessary. 

Towards this, we presented new formulations
of dependence-covering sequences and sets such that exploring only
dependence-covering sets suffices to provide certain formal guarantees.
We also presented EM-DPOR ---a dynamic algorithm to perform POR by computing
dependence-covering sets for event-driven
multi-threaded programs. %and experimentally show that it explores
% many few transitions ---often orders of magnitude fewer--- compared to DPOR.
Our experiments provide empirical evidence that EM-DPOR explores orders of magnitude fewer transitions compared
to DPOR for event-driven multi-threaded programs.
% Indeed, while DPOR timed out in several model checking runs, 
% EM-DPOR could explore the state space within a few seconds.

In future, we plan to develop further optimizations and a practical tool
to model check these programs.
Also, we aim to achieve better reductions by defining a notion of sleep sets suitable for this
concurrency model and combining it with dependence-covering sets. 
Another non-trivial but interesting problem would be to establish
optimality in our event-driven setting on the similar lines as~\cite{Abdulla:2014:ODP:2535838.2535845,
DBLP:conf/concur/RodriguezSSK15}. A few other directions are to 
extend~\cite{Huang:2015:SMC:2737924.2737975} to develop maximal causality based state space exploration technique
for event-driven programs, and to explore bounded POR for event-driven programs.

\newpage
\bibliographystyle{abbrvnat}
\bibliography{paper}

\appendix
\section{Properties of Dependence-covering Sets}
\label{app:dc-properties}

In this section, we prove that a selective state space exploration using
the dependence-covering sets (see Definition~\ref{def:dc-set}) is sufficient to detect all 
deadlock cycles in $\mathcal{S}_G$ (see Section~\ref{sec:transition-system-por}), 
and if there is a state 
$s$ in $\mathcal{S}_G$ where a local assertion $\alpha$ fails then some state 
$s'$ where $\alpha$ fails is reached in the reduced state space as well.
We then give a theorem relating dependence-covering sets and persistent sets~\cite{Godefroid:1996:PMV:547238}.

\subsection{Deadlock Cycles}

In the following discussion,
let $w$ be a transition sequence from a state $s$ in $\mathcal{S}_G$ to reach
a deadlock cycle $\langle DC,\rho \rangle$. 
Let $u$ be a dependence-covering sequence (see Definition~\ref{def:dep-covering}) of $w$ starting from $s$. 
Further, $R_w$ and $R_u$ be the sets of transitions executed in $w$
and $u$ respectively, and $s_n$ and $s_m'$
be the last states reached by $w$ and $u$ respectively. 
% Since $u$ is a 
% dependence-covering sequence of $w$, we have an identity map $\pi$ from
% $R_w \cup nextTrans(s_n)$ to $R_u \cup nextTrans(s_m')$.

\begin{lemma}
\label{lmm:blocking}
\normalfont
Let $s_i$ be a state reached by
a prefix of $w$ where a transition $b \in DC$ is blocked and
not enabled later in $w$. 
Then, there exists a prefix of $u$ which reaches a state $s_j'$
where $b$ is blocked and not enabled later in $u$.
\end{lemma}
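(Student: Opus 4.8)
The plan is to transplant the execution of the thread $t := \threadof(b)$ from $w$ into $u$, and then show $b$ can never be re-enabled along $u$ once $t$ has reached the point at which $b$ is its next transition. Write $w = w_1 . w_2$, where $w_1$ is the prefix of $w$ reaching $s_i$. First I would note two easy consequences of the hypothesis: since $b = \nextop(s_i, t)$ is blocked at $s_i$ and disabled at every state of $w$ reached after $s_i$, thread $t$ executes no transition in $w_2$; hence every transition of $t$ occurring in $w$ occurs already in $w_1$, $b \notin R_w$, and --- because $w$ reaches a deadlock state --- $b \in \nextTrans(s_n)$.

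Next I would locate the analogue of $s_i$ in $u$. Since $u$ is a dependence-covering sequence of $w$ we have $R_w \subseteq R_u$, so every transition that $t$ executes in $w_1$ also occurs in $u$. Using that transitions are deterministic and that the transitions of a single event handler execute contiguously on one thread, I would argue that in $u$ thread $t$ must re-play precisely these transitions: the handlers of the events that precede, on $t$, the event $e$ whose handler contains $b$ must be run to completion in $u$ before $e$'s handler is entered (otherwise some of their transitions, which lie in $R_w \subseteq R_u$, could not occur in $u$ at all); then $e$ is dequeued and its handler executed up to, but not including, $b$. This yields a prefix of $u$ reaching a state $s_j'$ with $b = \nextop(s_j', t)$. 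Any re-ordering of the handlers preceding $e$ on $t$, which $u$ is permitted to perform, is harmless here.

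It then remains to show that $b$ is disabled at $s_j'$ and at every later state of $u$, in particular $b \notin R_u$; granting this, $s_j'$ is the required state. The heart of the argument is the following use of Definition~\ref{def:dep-covering}(ii). Let $a$ be a transition on the thread named by the deadlock cycle (the thread $t'$ such that $b$ is blocked by, and can be enabled only by, a transition on $t'$) that is responsible for $b$ being disabled at $s_i$; then $a$ and $b$ are dependent and $a \in R_w \subseteq R_u$. One first checks that $a$ must precede $b$'s position in $u$: applied to the (dependent) pair $(b, a)$ with $b$ earlier, all four cases of Definition~\ref{def:dep-covering}(ii) fail --- cases 1--2 because $b \notin R_w$, case 4 because $a \in R_w$, case 3 because $a \notin \nextTrans(s_n)$ --- so $b$ cannot precede $a$ in $u$. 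Consequently, were $b$ executed (or merely enabled) in $u$ after reaching $s_j'$, some transition $c$ re-enabling $b$ would have to occur in $u$ after $a$; such a $c$ is dependent with $b$, is not in $R_w$ (in $w$ no re-enabler of $b$ runs after $a$, and the ordering constraints push any earlier $R_w$-re-enabler before $a$ in $u$), and so for the pair $(c, b)$ with $c$ earlier in $u$ all four cases fail --- cases 1--2 because neither $b$ nor $c$ is in $R_w$, case 3 because $w$ reaches a deadlock and hence admits no extension executing $c$ before $b$, case 4 because $b \in \nextTrans(s_n)$. This contradiction shows no such $c$ exists, whence $b$ remains disabled in $u$ from $s_j'$ on.

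The step I expect to be the main obstacle is the third. The subtlety is precisely that a dependence-covering sequence is allowed to contain \emph{new} transitions --- notably those obtained by completing handlers that were only partially executed in $w$ --- and one must rule out that such a new transition re-enables $b$. This is exactly where the shape of the deadlock cycle (that the only enablers of $b$ live on a thread that is itself blocked in $w$, so that no enabler is executed in $w$ after $b$ becomes pending and $w$ cannot be extended past $s_n$) has to be combined with the relaxed conditions of Definition~\ref{def:dep-covering} to exclude those new transitions.
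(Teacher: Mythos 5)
Your proof is, in its essentials, the paper's proof: both transplant the blocking of $b$ into $u$ by using $R_w \subseteq R_u$ together with the order-preservation clause of Definition~\ref{def:dep-covering}, and both handle the genuinely delicate point --- a \emph{new} transition of $u$ enabling $b$ --- by observing that condition~3 of Definition~\ref{def:dep-covering} would then force $w$ to be extendable so that this transition runs before $b$, which the deadlock cycle forbids because every enabler of $b$ lives on a thread that is itself blocked along the cycle. You correctly identified that this exclusion of new enablers is the crux. The only structural difference is cosmetic: the paper defines $R_b$ as the (transitive) dependence closure of $b$ inside $R_w$ and takes $s_j'$ to be the state of $u$ reached after all of $R_b$ has executed, whereas you anchor $s_j'$ to thread $\threadof(b)$'s replay of its own $w$-transitions and then argue explicitly through the four cases of clause~(ii).

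That anchoring is where your write-up has a genuine (though locally fixable) slip. You place $s_j'$ at the first state of $u$ where $b = \nextop(s_j',t)$ and then assert $b$ is disabled there; but $u$ may schedule $t$'s replay \emph{before} the disabling transition $a$ has run, in which case $b$ is transiently enabled at your $s_j'$ --- and since Definition~\ref{def:dep-covering}(ii) only constrains pairs of dependent transitions that both occur in $R_u$, nothing in the definition forbids $b$ from being enabled-but-unexecuted at such an intermediate state. Your subsequent sentence ("some transition $c$ \emph{re-enabling} $b$ would have to occur after $a$") silently presupposes that $a$ has already executed by $s_j'$. Since the lemma only asks for the existence of \emph{some} prefix, the repair is immediate: push $s_j'$ past the occurrence of $a$ in $u$ (the paper's choice, after all of $R_b$, does exactly this). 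With $s_j'$ so placed, your third paragraph --- no enabler of $b$ can follow $a$ in $u$, whether it lies in $R_w$ (none does in $w$, and the order of $R_w$-dependences with $b$ is preserved) or outside $R_w$ (excluded by clause~3 plus the deadlock cycle) --- completes the proof as the paper does.
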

\begin{proof}
Let $R_b \subseteq R_w$ denote the transitions
which have a direct dependence with $b$ or a dependence with some other transition
which directly or transitively has a dependence with $b$. Clearly, all the transitions
in $R_b$ are executed prior to $b$ since the transition $b$ is blocked by a prefix
of $w$ and never enabled as per the premise of the lemma.
By the definition of dependence-covering sequence (see Definition~\ref{def:dep-covering}), $R_w \subseteq R_u$
and hence, $R_b \subseteq R_u$. Further, the relative ordering of dependent transitions
in $R_b$ is maintained in $u$. Let $s_j'$ be the state reached after executing
all transitions in % $\pi(R_b)$ 
$R_b$ in $u$. 
Since $b \in nextTrans(s_n)$, there can be a transition $r_k'$ in $u$ such that 
$r_k' \not\in R_w$ and $r_k'$ is dependent with $b$, in particular, $r_k'$ enables $b$. 
Since $u$ is a dependence-covering sequence of $w$, $r_k'$ exists only if $w$ can be extended
so that $r_k'$ executes before $b$. By the definition of deadlock cycle, this is
not possible. Hence, the transition $b$ will be blocked at $s_j'$ and
there is no transition in $u$ which enables $b$ after $s_j'$.
%  \qed
\end{proof}

\begin{lemma}
\label{lmm:deadlock}
\normalfont
The pair $\langle DC,\rho \rangle$ is a deadlock cycle at the state $s_m'$ reached by $u$.
\end{lemma}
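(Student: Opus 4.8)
The plan is to show that $\langle DC, \rho \rangle$, which is a deadlock cycle at $s_n$ (the last state reached by $w$), remains a deadlock cycle at $s_m'$ (the last state reached by $u$). Recall from Definition of deadlock cycle that $DC \subseteq \nextTrans(s_n)$ is a set of $n$ transitions blocked in $s_n$, with $\rho$ a bijection from $[1,n]$ to $DC$ such that each $\rho(i)$ is blocked by some transition on thread $\threadof(\rho(i+1))$ and may be enabled only by a transition on that thread, while $\rho(n)$ is blocked and may be enabled by two different transitions of thread $\threadof(\rho(1))$. So I must establish two things at $s_m'$: (i) every $b \in DC$ is still blocked at $s_m'$ and not enabled later in $u$, and (ii) the blocking/enabling structure captured by $\rho$ is preserved, so that $\langle DC, \rho\rangle$ still satisfies the definition at $s_m'$.

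First I would invoke Lemma~\ref{lmm:blocking}: for each $b \in DC$, since $b$ is blocked at $s_n$ (reached by all of $w$) and never enabled later in $w$ (as $s_n$ is a deadlock state, nothing executes after it), the lemma gives a prefix of $u$ reaching a state $s_j'$ where $b$ is blocked and not enabled later in $u$. Since $u$ contains all transitions of $w$ (as $R_w \subseteq R_u$) and none of them enables $b$, and no new transition of $u$ can enable $b$ either (by the argument inside Lemma~\ref{lmm:blocking} using the deadlock-cycle structure), in particular $b$ is blocked at the final state $s_m'$ and stays blocked. Applying this to all $n$ transitions in $DC$ shows $DC \subseteq \nextTrans(s_m')$ and every transition in $DC$ is blocked in $s_m'$.

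Next I would argue that the cyclic structure witnessed by $\rho$ is intrinsic to $DC$ and does not depend on which state (among $s_n$, $s_m'$, or intermediate states) we evaluate it at. The properties ``$\rho(i)$ is blocked by some transition on thread $t_{i+1}$'' and ``$\rho(i)$ may be enabled only by a transition on thread $t_{i+1}$'' (and the analogous statement for $\rho(n)$) are statements about the transition $\rho(i)$ itself and about which threads/transitions can enable it --- they are determined by the local state of the thread executing $\rho(i)$, which is fixed once we know $\rho(i)$ is the next transition of that thread. Since we have just shown each $\rho(i) \in \nextTrans(s_m')$ with the same local states as in $s_n$ (no transition of that thread has executed in the meantime, because $\rho(i)$ is blocked along the way and is still the next transition of its thread), the enabling conditions are unchanged. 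Hence the same map $\rho$ witnesses that $\langle DC, \rho\rangle$ is a deadlock cycle at $s_m'$.

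The main obstacle I anticipate is the bookkeeping in the second paragraph: carefully showing that no transition of $u$ enables any $b \in DC$, even though $u$ may contain ``new'' transitions not present in $w$. This is exactly where the restricted conditions of Definition~\ref{def:dep-covering} on dependence-covering sequences do the work --- a new transition $r_k'$ dependent with $b$ (hence potentially enabling it) is admissible in $u$ only if $w$ can be extended so that $r_k'$ executes before $b$, which is impossible by the definition of a deadlock cycle. This reasoning is carried out in Lemma~\ref{lmm:blocking}, so for Lemma~\ref{lmm:deadlock} it suffices to cite that lemma $n$ times and then observe that the thread-local enabling structure recorded by $\rho$ transfers verbatim; the remaining steps are routine.
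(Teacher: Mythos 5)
Your proposal is correct and follows essentially the same route as the paper: invoke Lemma~\ref{lmm:blocking} to conclude every $b \in DC$ is blocked at $s_m'$, then argue the cyclic blocking/enabling structure recorded by $\rho$ carries over unchanged. The paper is slightly more explicit in the second step --- it names the specific transition $r_i$ of thread $t_{i+1}$ that blocks $\rho(i)$ in $w$, observes $r_i \in R_b$, and uses the preserved relative order of $R_b$ in $u$ to conclude $r_i$ also blocks $\rho(i)$ in $u$ --- but this is the same argument you compress into the observation that the enabling conditions are properties of the transitions themselves and that $R_w \subseteq R_u$.
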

\begin{proof}
By Lemma~\ref{lmm:blocking}, for any $b \in DC$,
there exists a state $s_j'$ reachable from $s$
by some prefix of $u$ such that $b$ is blocked at $s_j'$ and not enabled
later in $u$. Thus, all the transitions in $DC$ are blocked at $s_m'$.

Let $DC$ contain $k$ transitions and $b = \rho(a)$ for some $a \in [1,k]$.
Let $t$ be the thread blocked on the transition $b' = \rho(a+1)$ at $s_n$ where
$k+1$ is taken to be $1$. In $w$, let $r_i$ be the transition of $t$ that
blocks $b$ after which it is never enabled. 
Clearly, $r_i \in R_b$ for the set $R_b$ defined in the proof of 
Lemma~\ref{lmm:blocking}. Since the state $s_j'$ is reached in $u$ once
all the transitions in $R_b$ are executed and in the same relative order
between themselves, %$\pi(r_i)$ 
$r_i$ blocks $b$ before or at $s_j'$ in $u$. 
Since $b$ remains blocked from $s_j'$ onwards (Lemma~\ref{lmm:blocking}), 
there is no other transition
in $u$ that can enable $b$.
%% \todo{so that another transition $r_k'$ in $u$ can block it again}. 
By Lemma~\ref{lmm:blocking}, the thread $t$ itself 
subsequently blocks on $b' \in DC$ in $u$.
Thus, $\langle DC,\rho \rangle$ is also a deadlock cycle at $s_m'$.
% \qed
\end{proof}

% \begin{theorem}
\begin{customthm}{\ref{theorem:deadlock-reachability}.1}
\label{th:preserve-deadlock}
\normalfont
[\textbf{Part of Theorem~\ref{theorem:deadlock-reachability}}]
Let $\mathcal{S}_R$ be a 
dependence-covering state space of a program $A$
with a finite and acyclic state space $\mathcal{S}_G$. 
Then, all deadlock cycles in $\mathcal{S}_G$ are 
reachable in $\mathcal{S}_R$.
\end{customthm}
% \end{theorem}
\begin{proof}
Let $\langle DC,\rho \rangle$ be a deadlock cycle at 
a state $d$ in $\mathcal{S}_G$, reachable from $s_{init}$.
Let $s$ be a state which is common to 
both $\mathcal{S}_G$ and $\mathcal{S}_R$ such that there exists a transition
sequence $w$ from $s$ to $d$ in $\mathcal{S}_G$. In the least, the initial 
state $s_{init}$ is such a state.

Let $L$ be a dependence-covering set at $s$. By definition~(see 
Definition~\ref{def:dc-set}), there exists a transition sequence
$u$ from $s$, starting with a transition $r \in L$ such that $u$ is a 
dependence-covering sequence of $w$. By Lemma~\ref{lmm:deadlock}, $u$ eventually
reaches the deadlock cycle $\langle DC,\rho \rangle$. Let $s' = r(s)$. Since
$r \in L$, $s'$ is in $\mathcal{S}_R$. If $u = r.u'$ then $u'$ is a transition
sequence from $s'$ in $\mathcal{S}_G$ to a state with deadlock cycle 
$\langle DC,\rho \rangle$. There exists a dependence-covering sequence for $u'$
from $s'$ in $\mathcal{S}_R$. With a similar argument, there exists a 
successor state $s''$ of $s'$ in 
$\mathcal{S}_R$ from which the same deadlock cycle can be reached and so on.
Since the state space is finite and acyclic, eventually a state $d'$ is reached
in $\mathcal{S}_R$ where $\langle DC,\rho \rangle$ is a deadlock cycle.
% \qed
\end{proof}

A dependence-covering state space only preserves all the deadlock cycles and not 
deadlock states present in $\mathcal{S}_G$. Suppose $w$ is a transition sequence
in $\mathcal{S}_G$ reaching a deadlock state $d$. 
Let $u$ be a dependence-covering sequence of $w$.
Since $u$ may contain some transitions not in $w$ (Definition~\ref{def:dep-covering}) and those may
modify some shared objects, $u$ may reach another state $d'$ with the same
deadlock cycle as in $d$. But $d$ and $d'$ may not be the same.
Note that exploration of the dependence-covering state space does
detect the set of transitions involved in \emph{each} deadlock in $\mathcal{S}_G$.

\subsection{Assertion Violations}

\begin{customthm}{\ref{theorem:deadlock-reachability}.2}
\label{th:preserve-assertion}
\normalfont
[\textbf{Part of Theorem~\ref{theorem:deadlock-reachability}}]
% Let $\mathcal{S}_R$ be the lazily-reduced state space obtained by exploring only the 
% transitions in the lazy set at each reachable state in $\mathcal{S}_R$.
Let $\mathcal{S}_R$ be a
dependence-covering state space of an event-driven multi-threaded program $A$
with a finite and acyclic state space $\mathcal{S}_G$. 
If there exists a state $v$ in $\mathcal{S}_G$
which violates an assertion $\alpha$ defined over local variables then
there exists a state $v'$ in $\mathcal{S}_R$ which violates $\alpha$.
%% If there exists a state $v$ in $\mathcal{S}_G$ such that an 
%% assertion $\alpha$, defined over local variables, evaluates
%% to \textit{false} at $v$ then there exists a state $v'$ in $\mathcal{S}_R$ where
%% $\alpha$ evaluates to \textit{false}.
\end{customthm}
\begin{proof}
The state $v$ is reachable from the initial state $s_{init}$ in $\mathcal{S}_G$.
Let $s$ be a state which is common to both $\mathcal{S}_G$ and 
$\mathcal{S}_R$ such that there exists a transition sequence $w$ from $s$
to $v$ in $\mathcal{S}_G$. In the least, the initial state $s_{init}$ is such a state. 
% Let $r_n$ 
% be the transition in $w$ containing assertion $\alpha$
% which is violated in state $v$, where $v = s_n$.

Let $L$ be a dependence-covering set at $s$. By the definition of dependence-covering
set (see Definition~\ref{def:dc-set}), there exists a transition sequence
$u$ from $s$, starting with a transition $r \in L$ such that $u$ is a 
dependence-covering sequence of $w$. Let $R_w$ and $R_u$ be the sets of transitions
executed in $w$ and $u$ respectively. Let $R_{\alpha} \subseteq R_w$ denote the
set of transitions which have a direct dependence with $\alpha$ or a dependence with
some other transition which directly or transitively has a dependence with $\alpha$.
By definition (see Definition~\ref{def:dep-covering}), $R_w \subseteq R_u$ and hence, $R_{\alpha} \subseteq R_u$. Further,
the relative ordering of dependent transitions in $R_\alpha$ is maintained in $u$.
% Let $\pi$ be an identity map from $R_w \cup nextTrans(s_n)$ to
% $R_u \cup nextTrans(s_m')$ where $s_n = last(w)$ and $s_m' = last(u)$.
% Such a map exists as $u$ is a dependence-covering sequence of $w$. 
Let $v'$ be the state reached after executing all transitions in %$\pi(R_\alpha)$ 
$R_\alpha$ in $u$.
Since $\alpha$
is an assertion on local variables, no new transition $r_k' \in R_u$ \ie\ 
% $r_k' \in R_u \setminus \pi(R_w)$ can have a dependence with $\alpha$.
$r_k' \in R_u \setminus R_w$ can have a dependence with $\alpha$.
Thus state $v'$ violates the assertion $\alpha$.

Let $s' = r(s)$. Since $r \in L$, $s'$ is in $\mathcal{S}_R$. If $u = r.u'$
then $u'$ is a transition sequence from $s'$ in $\mathcal{S}_G$ to a state
which violates $\alpha$. With a similar argument, there
exists a successor state $s''$ of $s'$ in $\mathcal{S}_R$ from which a state
which violates $\alpha$ is reachable and so on.
Since the state space is finite and acyclic, eventually a state is
reached in $\mathcal{S}_R$ which violates $\alpha$.
% \qed
\end{proof}

\subsection{Relation between Persistent Sets and Dependence-covering Sets}
\label{sec:ps-as-dcs}
% \begin{customthm}{\ref{theorem:dep-cov-generalization}}
\begin{theorem}
\normalfont
 If $P$ is a persistent set in a state $s \in \mathcal{S}_G$ according to 
 the standard dependence relation which considers \texttt{post}s to the same
 event queue to be dependent,
 then $P$ is a dependence-covering set in $s$ according to the dependence relation
 of Definition~\ref{def:dependence-relation}.
\end{theorem}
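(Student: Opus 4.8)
The plan is to reduce the claim to the classical persistent-set commutation argument of~\cite{Godefroid:1996:PMV:547238}, after first observing that the dependence relation $D$ of Definition~\ref{def:dependence-relation} is \emph{subsumed} by the standard relation $D_{\mathit{std}}$ used to define $P$ --- the relation that combines Godefroid's commutativity-based relation with the rules ``any two transitions of the same thread are dependent'' and ``any two \texttt{post}s to a common queue are dependent'' (as DPOR uses it). So the first step is to prove $D \subseteq D_{\mathit{std}}$: if $(r_1,r_2) \notin D_{\mathit{std}}$ then $r_1$ and $r_2$ run on different threads and commute with full state equality (queue state included), so in particular conditions 1 and 2 of Definition~\ref{def:epmt-dependence} hold for them at every state of $\mathcal{S}$, whence $(r_1,r_2) \notin D$ by the second clause of Definition~\ref{def:dependence-relation}. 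Consequently every $D_{\mathit{std}}$-independent pair is $D$-independent and, being $D_{\mathit{std}}$-independent, commutes with full state equality.

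Then I would fix an arbitrary non-empty sequence $w : r_1 \ldots r_n$ from $s$ and construct the required dependence-covering sequence $u$ of $w$ by cases; since $s$ has an enabled transition, $P$ is non-empty and every $r \in P$ is enabled at $s$. \textbf{Case A:} some $r_i \in P$; take $i$ least. Applying the persistent-set property of~\cite{Godefroid:1996:PMV:547238} to the prefixes $r_1 \ldots r_k$ for $k < i$ (which use no transition of $P$) shows each of $r_1,\ldots,r_{i-1}$ is $D_{\mathit{std}}$-independent with every transition of $P$, in particular with $r_i$. Hence $u := r_i . r_1 \ldots r_{i-1} . r_{i+1} \ldots r_n$ is a valid sequence from $s$ (push $r_i$ to the front past $r_1,\ldots,r_{i-1}$, each swap a commutation of $D_{\mathit{std}}$-independent transitions) reaching $s_n$ with $R_u = R_w$; the only pairs whose relative order differs between $w$ and $u$ are $\{r_j, r_i\}$ for $j<i$, all $D_{\mathit{std}}$-independent hence $D$-independent, so every $D$-dependent pair of $R_u$ keeps its $w$-order and clause (ii).1 of Definition~\ref{def:dep-covering} is met. \textbf{Case B:} $w$ uses no transition of $P$; fix any $r \in P$. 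The persistent-set property gives that $r$ is $D_{\mathit{std}}$-independent with every $r_i$, so $u := r . r_1 \ldots r_n$ is a valid sequence from $s$ with $R_w \subseteq R_u = R_w \cup \{r\}$; pairs inside $R_w$ keep their $w$-order, and $r$ lies in no $D$-dependent pair since it is $D_{\mathit{std}}$- (hence $D$-) independent with each $r_i$. In both cases $u$ is a dependence-covering sequence of $w$ starting with a transition of $P$; as $w$ was arbitrary, $P$ is a dependence-covering set at $s$.

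The main obstacle I anticipate is the inclusion $D \subseteq D_{\mathit{std}}$ rather than the combinatorial part. One has to be sure that Definition~\ref{def:dependence-relation} never declares independent a pair that $D_{\mathit{std}}$ declares dependent, which is precisely the delicate interaction between the event-driven clauses 1.A--1.B of Definition~\ref{def:dependence-relation} and the coarser verdict of $D_{\mathit{std}}$: the point is that $D_{\mathit{std}}$ already treats \emph{all} same-thread transitions --- including transitions of different handlers on one thread --- as dependent, so the extra independences $D$ introduces only shrink the relation and cannot break any commutation that the persistent-set construction relies on. Everything after that is the textbook single-commutation argument, lifted verbatim along the passage from $D_{\mathit{std}}$ to the finer relation $D$.
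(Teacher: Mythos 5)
Your proof is correct, and it rests on the same two pillars as the paper's own argument: the persistent-set property from~\cite{Godefroid:1996:PMV:547238} and the observation that the relation $D$ of Definition~\ref{def:dependence-relation} refines the standard relation, so that a swap of standard-independent transitions can never disturb a $D$-dependence. The case analysis is also the same in substance (either $w$ eventually executes a transition of $P$, or every transition of $P$ is independent of all of $w$). Where you genuinely differ is in the first case: the paper invokes Lemma~6.8 of~\cite{Godefroid:1996:PMV:547238} to obtain some Mazurkiewicz-equivalent $w'\in[w]$ beginning with a transition of $P$, and must then argue that an \emph{arbitrary} such $w'$ preserves every $D$-dependent ordering --- which forces it through a FIFO detour (posts to a common queue are dependent in the standard relation, hence keep their order in $w'$, hence the handlers, and with them the same-thread cross-handler dependences, keep their order). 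Your explicit bubbling of the first $P$-transition $r_i$ to the front makes that detour unnecessary: the only pairs whose relative order changes are $\{r_j,r_i\}$ with $j<i$, each standard-independent by the persistent-set property and hence $D$-independent, so clause~(ii) of Definition~\ref{def:dep-covering} is verified locally without any reasoning about event orderings. The one point to flag is your opening inclusion $D\subseteq D_{\mathit{std}}$: Definition~\ref{def:dependence-relation} only constrains pairs declared \emph{independent}, so the inclusion is not literally forced by validity of $D$ alone; it holds under the intended reading that both relations return the same verdict on different-thread non-\texttt{post} pairs, which is precisely the (equally implicit) assumption the paper's own case~(a) relies on when it asserts that the two relations differ only on posts and on same-thread cross-handler pairs.
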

% \end{customthm}
\begin{proof}
 Let  $w: r_1.r_2 \ldots r_n$ 
 be any transition sequence in $\mathcal{S}_G$ from a state $s$.
As $w$ is a dependence-covering sequence of itself, if $r_1 \in P$ then
 $P$ is also a dependence-covering set in $s$. 
 
 If $r_1 \not\in P$ then by Lemma~6.8 in~\cite{Godefroid:1996:PMV:547238} we can infer that 
 either (a)~there exists a sequence $w' \in [w]$ (where $[w]$ is the Mazurkiewicz
 trace of $w$) such that the first transition in $w'$, say $w_1'$, is in the 
 persistent set $P$, or (b)~all the transitions in $P$ are independent with all the transitions in $w$.
 We prove the lemma for the two cases (a) and (b) identified.

 \vspace{2mm}\noindent
 \textbf{Case~(a)}: We show that $w'$ is a dependence-covering sequence of $w$ 
 even according to dependence relation of Definition~\ref{def:dependence-relation}.
 Since $w' \in [w]$, the relative ordering of each pair of dependent transitions 
 in $w'$ is the same as that in $w$. 
 The only difference between the dependence relation of Definition~\ref{def:dependence-relation}
 and the standard dependence relation resulting in Mazurkiewicz traces is that, 
 Definition~\ref{def:dependence-relation} considers \texttt{post}s to be independent
 and does not totally order transitions executed by different event handlers on the
 same thread. However, if interfering (non-\texttt{post})
 transitions are executed on two different threads,
 then both these dependence relations identity such pairs to be dependent.
 Since \texttt{post}s are considered dependent as per the dependence relation 
 resulting in $[w]$, the relative ordering of all \texttt{post}s in $w'$ posting 
 to the same event queue is consistent with that in $w$. As a result, the relative
 ordering of operations across event handlers executed on the same thread is the same
 in both $w'$ and $w$. Thus, the relative orderings of all dependent transitions
 in $w$ are preserved in $w'$ even according to Definition~\ref{def:dependence-relation}.
 Additionally, $R_w = R_{w'}$ because of the property of Mazurkiewicz trace. 
 Thus, $w'$ is a dependence-covering sequence of $w$ such that $w_1' \in P$ (assumption of this case).
 Therefore, $P$ is a dependence-covering set in $s$ as per Definition~\ref{def:dc-set}.

\vspace{2mm}\noindent
 \textbf{Case~(b)}: Consider a state $s' = r(s)$ such that $r \in P$. As per the 
 assumptions of this case, $r$ is independent with all the transitions in $w$ as per the
 standard dependence relation which considers \texttt{post}s to the same event queue dependent.
 Then, sequence $w$ is enabled at $s'$ making $r.w$ a valid sequence in $\mathcal{S}_G$. 
 If any transition $r_i$ in $w$ has a \texttt{post} operation, then $r$ cannot be 
 a transition posting to the same event queue as $r_i$. Otherwise, $r_i$ would be 
 dependent with $r$, contradicting the assumption of this case. Also, if $r$ is executed
 on a thread $t$ then, no transition in $w$ is executed on thread $t$, because 
 $next(s,t)$ is unique. A pair of transitions from different threads considered
 independent by the standard dependence relation, are considered independent even
 by Definition~\ref{def:dependence-relation} (see condition~2 of the definition).
 Hence, $r.w$ is a dependence-covering sequence of $w$ in conjunction with
 dependence relation of Definition~\ref{def:dependence-relation}. Thus, $P$ is a 
 dependence-covering set in $s$.
% \qed 
\end{proof}

\section{Correctness of EM-DPOR}
\label{app:emdpor-proofs}

This section presents a sketch to prove the correctness of the algorithm 
\emdpor\ to dynamically compute dependence-covering sets (see Definition~\deflazypersistentset), 
presented in Section~\ref{sec:algo}. 
Algorithm \texttt{Explore} (Algorithm~\algoexplore) performs a
depth first traversal of the state space. We want to prove that whenever \texttt{Explore}
backtracks from a state $s$ to a prior state in the search stack, it must have 
explored a dependence-covering sequence (see Definition~\defdepcovering) for any 
sequence $w$ in $\mathcal{S}_G$ from state $s$. 
We equivalently prove that EM-DPOR explores a dependence-covering set at each visited
state $s$. Theorem~\ref{thm:emdpor-correctness} given towards the end of this section  
formally states this property.

We organize this section as follows. Section~\ref{sec:proof-strategy} gives the 
proof strategy for the Theorem~\ref{thm:emdpor-correctness}. %in~\cite{emdporpaper}.
Section~\ref{sec:lemma-em-dpor} provides a complete proof or a proof sketch for 
the lemmas related to the cases introduced in the
proof strategy, and Section~\ref{sec:em-dpor-proof} presents the main proof.
The variables and notation introduced in Section~\ref{sec:proof-strategy} will be used
in the rest of this section.

Even though in Section~\ref{sec:emdpor-algo-defs} we had defined helper 
functions such as $last()$, $pre()$ and a few others over a transition sequence
starting from the initial state $s_{init}$, we may abuse the notation to use these
functions over transition sequences starting from an intermediate state in the 
state space $\mathcal{S}_G$ as well. 
In the rest of the section, \emph{happens-before} relation ($\to_w$) used in the context of
a transition sequence $w$ in $\mathcal{S}_G$ which is not assumed to be explored by EM-DPOR
is defined as follows: 
\begin{definition}
\label{def:happens-before-no-emdpor}
\normalfont
The \textbf{\emph{happens-before relation}} $\to_{w}$ for a transition
sequence $w : r_1 . r_2 \ldots r_n$ in $\mathcal{S}_G$ is the 
smallest relation on $dom(w)$ such that the following conditions hold:
\begin{enumerate}
\item If $i < j$ and $r_i$ is dependent with $r_j$ then $i \to_{w} j$.
\item If $r_i$ and $r_j$ are transitions posting events $e$ and $e'$
respectively to the same thread, such that 
$i \to_w j$ and the handler of $e$ has finished and that of
$e'$ has started in $w$, then $getEnd(w,e) \to_w getBegin(w,e')$.
\item $\to_{w}$ is transitively closed.
\end{enumerate}
\end{definition}

While the above happens-before relation is similar to that defined in Definition~\ref{def:happens-before},
it does not reason about reordered \texttt{post}s in $w$. This is because if we do not assume $w$ to be explored
by EM-DPOR then the notion of reordered \texttt{post}s is irrelevant in the context of $w$. 
Hence, the above happens-before relation can be derived for a transition sequence
in $\mathcal{S}_G$ without any prior information and purely with the help of dependence relation on $\mathcal{S}_G$.
However, if $w$ is assumed to be explored by EM-DPOR then happens-before relation ($\to_w$) referred in 
its context is the one defined in Definition~\ref{def:happens-before}.
% We use analogous definition even for $\rightsquigarrow_w$ depending on whether
% $w$ is explored or not by EM-DPOR. 

\paragraph{Transition in a sequence.}
Given a transition sequence $w$ and another transition sequence $z$ we say that
a transition $c \in R_z \cup nextTrans(last(z))$ to be a transition executed in 
$w$, \ie\ $c \in R_w$, if at a state $s'$ reached by a prefix $w'$ of $w$ such 
that $c \in nextTrans(s')$ and a state  $s''$ reached by a prefix $z'$ of $z$ such 
that $c \in nextTrans(s'')$ we have $index(w',b) \to_{w'} \taskof{(c)}$ iff $index(z',b) \to_{z'} \taskof{(c)}$
for any transition $b$. 

This intuitively means that the transition prior to $c$ in $\taskof(c)$ has an 
identical incoming direct and transitive dependence edges in the dependence graph
of $w$ as well as $z$, due to which transition $c$ in $w$ is guaranteed to be discovered 
(may or may not be enabled) in the transition sequence $z$ too.

\subsection{Proof Strategy and Notation}
\label{sec:proof-strategy}
\subsubsection{Inductive Reasoning}
\label{sec:inductive-reason}
\emdpor\ consists of four algorithms --- \texttt{Explore} (Algorithm~\algoexplore),
\texttt{FindTarget} (Algorithm~\algofindtarget), \texttt{ReschedulePending} (Algorithm~\algoreschedulepending)
and \texttt{BacktrackEager} (Algorithm~\algobacktrackeager).
The proof is by induction on the order in which states visited by \texttt{Explore} (Algorithm~\algoexplore) are backtracked. 
This is similar to the inductive strategy used to prove Theorem~1 in~\cite{Flanagan:2005:DPR:1040305.1040315}, 
which states that the DPOR algorithm computes persistent sets at each explored state. 
However, we cannot directly borrow the structure of DPOR's proof, as we additionally need to
consider the effect of event-driven semantics and reason about the recursive nature of 
\texttt{FindTarget} (Algorithm~\algofindtarget).

Let $S$ be a sequence explored by Algorithm \texttt{Explore} of EM-DPOR, starting from
an initial state $s_{init} \in \mathcal{S}_G$. Let $s = last(S)$, and $L = \{next(s,t) \mid t \in backtrack(s)\}$
where $backtrack(s)$ is the backtracking set computed by EM-DPOR before backtracking to a state prior to $s$ in the 
search stack $S$.
Assume $\mathcal{S}_R \subseteq \mathcal{S}_G$ to be the state space explored by EM-DPOR starting from state $s_{init}$.
State $s$ is in $\mathcal{S}_R$ as sequence $S$ is explored by Algorithm~\algoexplore.

\begin{mdframed}
 {\textbf{Claim C1.}} The EM-DPOR algorithm explores a dependence-covering 
 sequence for every transition sequence $w$ in $\mathcal{S}_G$ from a state $s$ reached
 by \texttt{Explore($S$)}.
\end{mdframed}
\paragraph{Induction hypothesis H1.} For every transition sequence from a 
state reached on each recursive call \texttt{Explore($S.r$)}, for all $r \in L$, the algorithm 
explores a corresponding dependence-covering sequence.

The base case of the induction based proof of Claim~C1 which captures the essence of
Theorem~\ref{thm:emdpor-correctness}, will be proved in %in the main proof presented in 
Section~\ref{sec:em-dpor-proof}. The proof strategy for the induction step is presented below.
 
\paragraph{Induction step.} 
We prove that for any sequence $w: s \xrightarrow{r_1} s_1 \xrightarrow{r_2} 
s_2 \ldots \xrightarrow{r_n} s_n$ in $\mathcal{S}_G$, %the algorithm 
\texttt{Explore($S$)} explores a dependence-covering sequence of $w$ from 
state $s$. Here, $s' \xrightarrow{r'} s''$ means $s'' = r'(s'')$.
 
If $r_1 \in L$ then, the algorithm explores a dependence-covering sequence $u$
of $r_2 \ldots r_n$ from state $s_1$ by induction hypothesis~H1, making $r_1.u$ a 
dependence-covering sequence of $w$ from state $s$. Assume $r_1 \not\in L$ henceforth.
We also assume that $w$ has no dependence-covering sequence starting with any
transition in $L$ from state $s$. 

We prove the inductive case by doing an exhaustive case analysis of the contents of
set $L$. Set $L$ satisfies the properties presented in one of the following five cases.
\begin{enumerate}
 \item[A.] $\exists p \in L$ such that $p$ is a non-\post\ transition and $p$
 is independent with all the transitions in $w$.
 \item[B.] $L$ contains a non-empty subset of non-\post\ transitions such that all 
 the non-\post\ transitions in $L$ are dependent with some transition in $w$, and
 no transition in $L$ is in $w$. 
 \item[C.]  $L$ contains a non-empty subset of non-\post\ transitions such that all 
 the non-\post\ transitions in $L$ are dependent with some transition in $w$, and
 the first transition in $w$ from $L$ is a non-\post\ transition.
 \item[D.] $L$ contains only \post\ transitions and no transition in $L$ is in $w$.
 \item[E.] The first transition in $w$ from $L$ is a \post\ transition. In this case
 if $L$ contains non-\post\ transitions we assume all of them to be dependent with some transition
 in $w$. Note that the presence of non-\post\ transitions in $L$ does not affect the proof
 in this case. 
\end{enumerate}

Section~\ref{sec:lemma-em-dpor} presents lemmas reasoning the induction step
for each of the five cases above. 
Lemma for case~A is proved by deriving contradiction to our assumption on non-existence
of a dependence-covering sequence of $w$ starting with any transition in $L$ from $s$.
Lemmas for cases~B, C, D and E are proved by deriving contradictions to the assumptions
made on the contents of $L$, when we assume non-existence
of a dependence-covering sequence of $w$ starting with any transition in $L$ from $s$.
This in turn proves the existence of a dependence-covering sequence of $w$ in $\mathcal{S}_R$
from state $s$.

% These lemmas are proved by deriving contradiction to our assumption on non-existence
% of a dependence-covering sequence of $w$ starting with any transition in $L$ from $s$.
% Cases~D and E significantly differ from cases~A, B and C as proving
% inductive hypothesis for these two cases involves critical reasoning on
% reordering of events due to selective reordering of \post\ transitions by EM-DPOR.
% We also need to consider the effect of $future$ sets (see Section~\ref{sec:lazy-sleep}) when handling 
% cases~D and E. 

\subsubsection{Common Construction for Cases~B, C, D and E}
\label{sec:common-construction}
As shown in Figure~\ref{fig:common-construct} we construct a transition sequence $z: s \xrightarrow{r_1'} s_1' \xrightarrow{r_2'} 
s_2' \ldots \xrightarrow{r_m'} s_m'$ in $\mathcal{S}_G$, such that (a)~$r_1' \in L$ 
and (b)~$\exists r \in nextTrans(s_m')$ where $r$ is a transition in $w$, say $r_\varpi = r$ 
for $1 \leq \varpi \leq n$ in $w$, and $r$ is dependent with a transition $r_l'$ in $z$ such that
$r_l'$ is the nearest may be co-enabled or may be reordered transition that does 
not happen before $r$.
Additionally, $r_l'$ may or may not be executed in $w$. If $r_l'$ is executed in $w$
then $index(w,r) < index(w,r_l')$. We use $z$ which is not a dependence-covering
sequence of $z$ in our proof arguments, provided $z$ is valid in $\mathcal{S}_G$. 
We reason about the validity of $z$ in each of cases~B, C, D and E separately.
In cases~D and E we generate a set of relevant \emph{non} dependence-covering
transition sequences of $w$ with the help of $z$, all of which will be used by 
the proofs related to cases~D and E.

Figure~\ref{fig:common-construct} pictorially depicts some of the 
key states, transitions, sequences and function calls required
when reasoning about cases B, C, D and E. Any other properties
of $r_l'$ specific to the case B, C, D or E considered, will be presented in 
Section~\ref{sec:lemma-em-dpor}. 
% We impose additional constraints on $z$ in order 
% to make it a \emph{non} dependence-covering sequence of $w$ in $\mathcal{S}_G$,
% for each of the four cases. 
Let $Z = z.z'.r$ where $z'$ is the shortest sequence in $\mathcal{S}_G$ which enables $r$.
If there exists no such $z'$ then $Z = z$. Note that $z' = \epsilon$ if $thread(r) \in enabled(s_m')$.
% If $thread(r) \in enabled(s_m')$ then we extend sequence $z$ to $z.r$ as $z.r$ is a valid sequence
% in $\mathcal{S}_G$ in such a case, otherwise $z$ remains as is. 
Let $v$ be a suffix of $Z$ from state $s_1'$ \ie\ $v = r_2'.r_3'\ldots r_m'.z'.r$
if $Z = z.z'.r$ or $v = r_2'.r_3'\ldots r_m'$ if $Z = z$.
Since $s \in \mathcal{S}_R$ and $r_1' \in L$, state $s_1' = r_1'(s)$ is in $\mathcal{S}_R$.
Then by induction hypothesis~H1, EM-DPOR explores a dependence-covering sequence 
$u$ of $v$ from $s_1'$. 
Since $r_1' \in L$ algorithm explores $r_1'.u$. Clearly, $r_1'.u$ is a dependence-covering 
sequence of $Z = r_1'.v$ from state $s$. 
% Figure~B1 pictorially depicts some of the key
% states, transitions, sequences and function calls introduced in the
% proof strategy. The high level structure shown in Figure~B1 holds when reasoning about cases B, C, D and E.

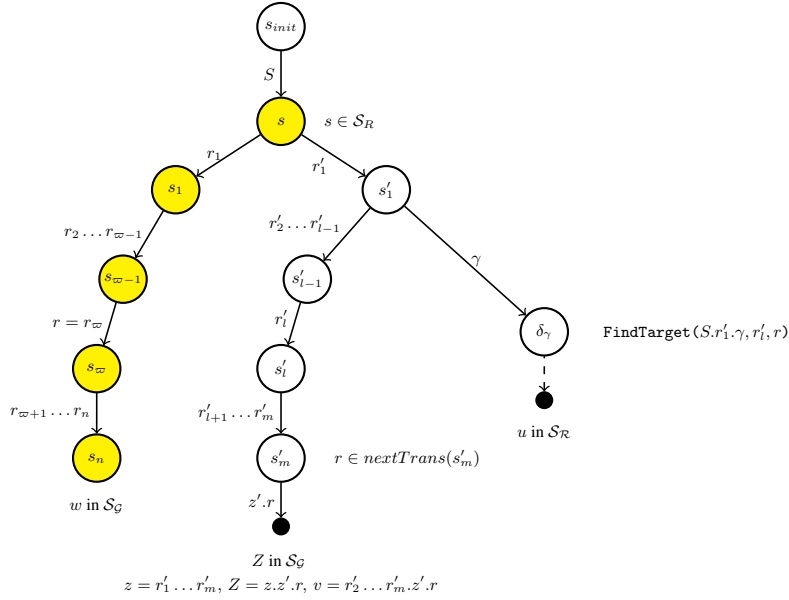
\begin{figure*}[t]
\centering
\begin{tikzpicture}[auto,node distance=16 mm,semithick, scale=.7, transform
shape]
\tikzstyle{stateNode} = [circle,draw=black,thick,inner sep=0pt,minimum 
size=9mm]
\tikzstyle{dporstateNode} = [circle,fill=yellow,draw=black,thick,inner sep=0pt,minimum 
size=9mm]
\tikzstyle{blankNode} = [thick,inner sep=0pt,minimum size=6mm]
\tikzstyle{filledNode} = [circle,fill=black,draw=black,thick,inner sep=0pt,minimum 
size=3mm]

\node[stateNode] (sinit)  {$s_{init}$};
\node[dporstateNode] (s) [below of=sinit,yshift=-.2cm] 
{$s$} edge [<-] node [left] {$S$} (sinit);
\node[blankNode] (b1) [right of=s, xshift=-.3cm] {$s \in \mathcal{S}_R$};

\node[dporstateNode] (s1) [below of=s,yshift=.3cm,xshift=-2cm] 
{$s_1$} edge [<-] node [left] {$r_1$} (s);
\node[dporstateNode] (s2) [below of=s1,yshift=-.1cm, xshift=-1cm] 
{$s_{\varpi-1}$} edge [<-] node [left] {$r_2 \ldots r_{\varpi -1}$} (s1);
\node[dporstateNode] (s3) [below of=s2,yshift=-.1cm, xshift=-.5cm] 
{$s_\varpi$} edge [<-] node [left] {$r = r_\varpi$} (s2);
\node[dporstateNode] (sn) [below of=s3,yshift=-.1cm] 
{$s_n$} edge [<-] node [left] {$r_{\varpi +1}\ldots r_n$} (s3);
\node[blankNode] (b2) [below of=sn, yshift=.7cm] {$w$ in $\mathcal{S_G}$};

\node[stateNode] (s1p) [below of=s,yshift=.3cm,xshift=2cm] 
{$s_1'$} edge [<-] node [left,yshift=-.2cm] {$r_1'$} (s);
\node[stateNode] (s2p) [below of=s1p,yshift=-.1cm, xshift=-1.5cm] 
{$s_{l-1}'$} edge [<-] node [left,yshift=.2cm] {$r_2'\ldots r_{l-1}'$} (s1p);
\node[stateNode] (slp) [below of=s2p,yshift=-.1cm, xshift=-.5cm] 
{$s_l'$} edge [<-] node [left, yshift=.1cm] {$r_l'$} (s2p);
\node[stateNode] (smp) [below of=slp,yshift=-.1cm] 
{$s_m'$} edge [<-] node [left] {$r_{l+1}'\ldots r_m'$} (slp);
\node[filledNode] (sZ) [below of=smp, yshift=.3cm] {}
edge [<-] node [left] {$z'.r$} (smp);
\node[blankNode] (b6) [right of=smp, xshift=.8cm] {$r \in nextTrans(s_m')$};
\node[blankNode] (b3) [below of=sZ, yshift=.7cm] 
{\begin{tabular}{c}
$Z$ in $\mathcal{S_G}$ \\[1mm]
$z = r_1'\ldots r_m', \, Z = z.z'.r, \, v = r_2'\ldots r_m'.z'.r$ \\
\end{tabular}
};
\node[stateNode] (sgamma) [below of=s1p,yshift=-1.1cm,xshift=3cm] 
{$\delta_\gamma$} edge [<-] node [right] {$\gamma$} (s1p);
\node[blankNode] (b4) [right of=sgamma, xshift=1.3cm] {\texttt{FindTarget($S.r_1'.\gamma, r_l', r$)}};
\node[filledNode] (su) [below of=sgamma,yshift=.3cm] {} 
edge [<-, dashed]  (sgamma);
\node[blankNode] (b5) [below of=su, yshift=1cm] {$u$ in $\mathcal{S_R}$};

\end{tikzpicture}
\caption{Illustration of key components in the proof strategy of cases B, C, D and E.
Transition sequences $w$ and $Z$ start at state $s$, and sequences $v$ and $u$ start at state $s_1'$.
A solid circle with no annotation denotes a state in $\mathcal{S}_G$. States coloured
yellow correspond to sequence $w$.}
\label{fig:common-construct}
\end{figure*}

Let $\delta_\gamma \in \mathcal{S}_R$ be the state reached by $S.r_1'.\gamma$ where $\gamma$ is a prefix of 
$u$ such that $r_l'$ is a transition in $r_1'.\gamma$, and 
transition $r$ dependent with $r_l'$ is in $nextTrans(\delta_\gamma)$. Due to the characteristics
of $r_l'$ and $r$ described in constraint~(b) given earlier on sequence $z$, 
and $r_1'.u$ being a dependence-covering sequence of $Z$ whose prefix is $z$,
\texttt{Explore($S.r_1'.\gamma$)} invokes \texttt{FindTarget($S.r_1'.\gamma,r_l',r$)} 
(line~\linefindtarget\ in Algorithm~\algoexplore). With this being a common
scenario for cases B, C, D and E, we present specific arguments for each of the cases
in their respective lemmas in Section~\ref{sec:lemma-em-dpor}, and derive 
contradictions to the assumptions made on the contents of L. 

\paragraph{Notation.} 
Given transition sequences $w_1$ and $w_2$, let $w_1 \setminus w_2$ denote transitions which are in 
sequence $w_1$ but not in sequence $w_2$. 
For a set of tasks $tks$, $threadSet(tks) = \{t \mid (t,e) \in  tks\}$, \ie\ $threadSet$
gives a set of threads corresponding to a set of tasks.
Whenever we need to reason about multiple instances of 
variables like $candidates$ and $pending$ from Algorithm~\algoexplore, \algofindtarget,
\algoreschedulepending\ or \algobacktrackeager\ in our proofs, we use numerical
subscripts to distinguish one instance from the other (\eg\ $candidates_1$ is different from $candidates_2$ and so on).
We do not add any subscripts for variable instances corresponding to the first \texttt{FindTarget}
call (\texttt{FindTarget($S.r_1'.\gamma,r_l',r$)}) from \texttt{Explore($S.r_1'.\gamma$)}.

\subsection{Supporting Lemmas}
\label{sec:lemma-em-dpor}
We use the induction hypothesis H1 and prove induction step separately for each of the 
cases A -- E introduced in section~\ref{sec:inductive-reason}.

\subsubsection{Case A}
\begin{lemma}
\label{lmm:casea}
\upshape
 EM-DPOR explores a dependence-covering sequence of $w$ from state $s$ when set $L$
 satisfies case~A.
\end{lemma}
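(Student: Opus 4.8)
The plan is to contradict the standing assumption of the induction step --- that $w$ has no dependence-covering sequence from $s$ beginning with a transition of $L$ --- and thereby conclude, via induction hypothesis~H1, that EM-DPOR explores such a sequence. Under Case~A we are handed a non-\post{} transition $p \in L$ that is independent (w.r.t.\ the relation of Definition~\ref{def:dependence-relation}) with every transition of $w$. I would show that $p.r_1.r_2 \ldots r_n$ is itself a valid dependence-covering sequence of $w$ from $s$; since it begins with $p \in L$, this immediately contradicts the assumption, and then a short closing argument promotes this into ``EM-DPOR explores a dependence-covering sequence of $w$''.

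I would begin with two easy observations. Since the dependence relation of Definition~\ref{def:dependence-relation} is reflexive, $p \notin R_w$: otherwise $p$ would be dependent with a transition of $w$, namely itself. Second, the thread $t_p = \threadof(p)$ executes no transition of $w$. Indeed $p \in L$ forces $p = \nextop(s,t_p)$ with $t_p$ enabled at $s$, so $t_p$'s queue is non-empty or its handler is live; appending events to $t_p$'s FIFO queue along $w$ cannot change $\nextop(\cdot,t_p)$, hence if $t_p$ ever acted in $w$ its first action would be $p$, contradicting $p \notin R_w$. The technical core is then a commutation argument --- the event-driven analogue of the standard ``slide an independent transition to the front'' lemma: walking left to right, I would use condition~1 of Definition~\ref{def:epmt-dependence} (applicable to each pair $(p,r_i)$ via Definition~\ref{def:dependence-relation}) to keep each $r_i$ enabled when $p$ is performed first, and condition~2 to see that commuting $p$ past $r_i$ preserves the local and global components of the state (the queue component may change, which is irrelevant below). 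Because $t_p$ contributes nothing to $w$, sliding $p$ leftward never disturbs the intra-task ordering of any handler occurring in $w$, so $p.w$ is a valid sequence of $\mathcal{S}_G$ from $s$ reaching a state agreeing on local and global components with the one reached by $w$.

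It then remains to check that $p.w$ is a dependence-covering sequence of $w$ per Definition~\ref{def:dep-covering}. Condition~(i) is immediate: $R_w \subseteq R_w \cup \{p\} = R_{p.w}$. For condition~(ii), the only transition of $p.w$ outside $R_w$ is $p$, and $p$ is independent with every transition of $w$, so $p$ appears in no dependent pair of $R_{p.w}$ whatsoever, while every dependent pair lies entirely within $R_w$ and occurs in $p.w$ in exactly the order it has in $w$; hence clause~1 of Definition~\ref{def:dep-covering} holds for each dependent pair. This contradicts the standing assumption. To obtain the stated conclusion, observe that EM-DPOR eventually explores every transition of $L$ from $s$ (the while loop at lines~\ref{line-dfs-start}--\ref{line-dfs-end} drains $backtrack(s)$ into $done(s)$), so it explores $p$, reaching $p(s) \in \mathcal{S}_R$; $w$ is enabled at $p(s)$, so by induction hypothesis~H1 EM-DPOR explores a dependence-covering sequence $u'$ of $w$ from $p(s)$, and the sequence $p.u'$ it thereby explores from $s$ is a dependence-covering sequence of $w$: $p$ is dependent with no transition of $R_w$, and any transition $c \in R_{u'}$ dependent with $p$ is either genuinely new (clause~4 of Definition~\ref{def:dep-covering} applies, as $c \notin R_w \cup \nextTrans(s_n)$) or lies in $\nextTrans(s_n)$ (clause~2 applies, $p$ being executed in $p.u'$).

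The main obstacle is the commutation lemma of the second paragraph. In contrast with the purely multi-threaded setting one must be careful about (a)~the queue-state component --- confirming that it is genuinely irrelevant to the dependence-covering relation and never breaks enabledness of later transitions of $w$ --- and (b)~the relaxed dependence relation of Definition~\ref{def:dependence-relation}, under which \post{}s are independent and transitions of distinct handlers on one thread are not totally ordered, so that ``independent with all of $w$'' must be shown to actually license sliding $p$ past each $r_i$; the observation that $t_p$ occurs nowhere in $w$ is what makes this manageable.
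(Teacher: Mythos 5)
Your proposal is correct and follows essentially the same route as the paper's proof: the paper likewise observes that no transition of $w$ runs on $\threadof(p)$ (since $p = \nextop(q,\threadof(p))$ at every state $q$ along $w$), commutes $p$ to the front so that $w$ is enabled at $p(s) \in \mathcal{S}_R$, applies hypothesis~H1 to obtain a dependence-covering sequence $u$ of $w$ from $p(s)$, and concludes that $p.u$ is dependence-covering for $w$ --- so your opening detour through showing that $p.w$ itself is dependence-covering is harmless but unnecessary. One small correction to your closing argument: for a new transition $c \in R_{u'} \cap \nextTrans(s_n)$ dependent with $p$, clause~2 of Definition~\ref{def:dep-covering} does not apply (it requires the earlier transition, here $p$, to be executed in $w$); the applicable clause is~3, which holds because independence keeps $p$ enabled throughout $w$, so $w.p$ is an extension in which $p$ executes before $c$.
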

\begin{proof}
Case A states that, $\exists p \in L$ such that $p$ is a non-\post\ transition and $p$
is independent with all the transitions in $w$. 
Then, no transition in $w$ is executed on the same thread as $p$. This is because,
$p = next(q,thread(p))$ for any state $q$ visited by a prefix of $w$. Since the
next transition of a thread at any state is unique, no transition in $w$ is executed
on $thread(p)$. 
Then, by the second condition of the dependence relation 
(Definition~\defdependencerelation), $p$ commutes with all the transitions in
$w$ and sequence $w$ is enabled at state $s' = p(s)$. Since $s \in \mathcal{S}_R$ and 
$p \in L$, $s' \in \mathcal{S}_R$. Then by induction hypothesis H1, EM-DPOR
explores a dependence-covering sequence $u$ of $w$ from $s'$. Therefore, $p.u$
is a dependence-covering sequence of $w$ at state $s$.
\end{proof}

\subsubsection{Case B}
Case~B states that the backtracking set $L$ in state $s$ contains a 
non-empty subset of non-\post\ transitions such that all 
the non-\post\ transitions in $L$ are dependent with some transition in $w$, and
no transition in $L$ is in $w$. 
With the help of the transition sequence $z$ described in Section~\ref{sec:common-construction}
we prove that \emdpor\ identifies a transition in $w$ to be reordered with a 
non-\post{} transition in $L$, due to which a transition in $w$ gets 
added to the set $L$. This establishes contradiction to the property of set $L$
which in turn proves that our primary assumption of absence of a dependence-covering
sequence of $w$ starting from a transition in set $L$, does not hold.

To suit the case under consideration, we refine the construction of sequence $z$ 
as follows.

\begin{construction}
\label{construct:caseb}
\upshape
Let $z: s \xrightarrow{r_1'} s_1' \xrightarrow{r_2'} 
s_2' \ldots \xrightarrow{r_m'} s_m'$ in $\mathcal{S}_G$ be a sequence satisfying
the following constraints:
\begin{enumerate}[label=M\arabic*.]
 \item $r_1' \in L$ is a non-\post\ transition.
 \item For all $r_i'$ in $z$, $i \neq 1$, $r_i'$ is a transition in $w$ and 
 $r_i' = r_{i-1}$. Recall that $w = r_1.r_2 \ldots r_n$.
 \item $\exists r \in nextTrans(s_m')$ such that $r = r_\varpi$ is in $w$, and
 $r$ is the first transition in $w$ to be dependent with $r_1'$.
\end{enumerate}
\end{construction}

\noindent Following properties can be inferred for a sequence $z$ adhering to Construction~\ref{construct:caseb}.
\begin{enumerate}[label=P\arabic*.]
 \item All the transitions in $z$ except $r_1'$ are in $w$.
 \item Transitions $r_1'$ and $r$ may be co-enabled \ie\ $thread(r_1') \neq thread(r)$. 
 This is because, $r_1' = next(q,thread(r_1'))$
 at any state $q$ visited by a prefix of sequence $w$, including the state $s_{\varpi-1}$ where $r = r_\varpi$
 is executed. Since both $r_1'$ and $r$ are in $nextTrans(s_{\varpi-1})$, $thread(r_1') \neq thread(r)$.
 \item A sequence $z$ satisfying M1, M2 and M3 exists as all the transitions in $L$
 are dependent with some transition in $w$. In the worst case, $z$ may only consist 
 of $r_1'$ if $r_1'$ is dependent with some transition in $nextTrans(s)$ which is executed in $w$. 
 \item $z$ is not a dependence-covering sequence of $w$ at state $s$ as the dependence
 between $r_1'$ and $r$ does not satisfy any constraints of a 
 dependence-covering sequence (Definition~\defdepcovering).
\end{enumerate}

\begin{lemma}
\label{lmm:caseb}
\upshape
EM-DPOR explores a dependence-covering sequence of $w$ from state $s$ when set $L$
satisfies case~B. 
\end{lemma}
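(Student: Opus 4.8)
The plan is to argue by contradiction, pairing Construction~\ref{construct:caseb} with the common construction of Section~\ref{sec:common-construction} and the inductive hypothesis~H1. In addition to the case~B hypotheses on $L$, recall the standing assumption of the induction step, that $w$ has no dependence-covering sequence from $s$ that starts with a transition in $L$; the goal is to contradict the case~B hypothesis that \emph{no} transition of $L$ occurs in $w$. The contradiction will come from exhibiting a run in which \texttt{FindTarget}, called by \texttt{Explore} while it follows a dependence-covering sequence handed to it by H1, must place into $backtrack(s)$ a thread $t$ with $next(s,t)\in R_w$; since $S$ is fully backtracked, $L=\{next(s,t')\mid t'\in backtrack(s)\}$, so a transition of $w$ then lies in $L$, which is the contradiction. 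This refutes the standing assumption and so proves that EM-DPOR explores a dependence-covering sequence of $w$ from $s$ (the step from ``the assumption is false'' to ``such a sequence is explored'' being the bookkeeping shared with the other cases and carried out in the main proof of Section~\ref{sec:em-dpor-proof}).

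First I would fix $z$ as in Construction~\ref{construct:caseb}: $z$ consists of $r_1'\in L$, a non-\texttt{post} transition, followed by the prefix $r_1\ldots r_{m-1}$ of $w$, with $r=r_\varpi\in nextTrans(last(z))$ the first transition of $w$ dependent with $r_1'$, so that $thread(r_1')\neq thread(r)$ (property~P2). Since $r_1'$ is enabled at $s$ (it is explored from $s$) and, by the choice of $r$, independent with each of $r_1,\ldots,r_{\varpi-1}$, it commutes past them by condition~2 of Definition~\ref{def:dependence-relation} (\texttt{post}s commute with everything), so $z$, and hence $Z=z.z'.r$ with $z'$ the shortest suffix enabling $r$, is a valid sequence of $\mathcal{S}_G$ from $s$; by property~P4 it is not a dependence-covering sequence of $w$. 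Now I apply the common construction: with $v=r_2'\ldots r_m'.z'.r$, a sequence from $s_1'=r_1'(s)\in\mathcal{S}_R$, H1 gives a dependence-covering sequence $u$ of $v$ explored by EM-DPOR from $s_1'$; hence EM-DPOR explores $S.r_1'.u$, and $r_1'.u$ is a dependence-covering sequence of $Z$ from $s$. Taking $\gamma$ to be the prefix of $u$ preceding the occurrence of $r$ (so $r\in nextTrans(\delta_\gamma)$ with $\delta_\gamma=last(S.r_1'.\gamma)$), the common construction yields that \texttt{Explore}$(S.r_1'.\gamma)$ invokes \texttt{FindTarget}$(S.r_1'.\gamma,r_1',r)$; because the index of $r_1'$ in $S.r_1'.\gamma$ is $|S|+1$, the backtracking point this call computes is exactly $s=last(S)$.

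The heart of the proof is then a control-flow analysis of this \texttt{FindTarget} call. Step~1 is skipped because $thread(r_1')\neq thread(r)$; since $r\in nextTrans(\delta_\gamma)$, Step~2 builds $candidates$ from $task(r)$ and the tasks of transitions of $\gamma$ that happen-before $task(r)$, keeping only those that are executable or blocked at $s$ and whose thread is enabled at $s$. I would then trace every branch the algorithm can take: a thread of $candidates$ added directly in Step~2; the trivial re-addition in Step~4 when all those threads already lie in $done(s)$; the recursive \texttt{ReschedulePending}/\texttt{FindTarget} route invoked on the \texttt{post} operations that posted $event(r)$ and its causal predecessors, which are themselves transitions of $w$; or, when $candidates$ is empty (for instance because $event(r)$ has not yet been posted at $s$, so $task(r)\notin execTasks(s)\cup blockedTasks(s)$), the fall-through to \texttt{BacktrackEager}, whose closing phase adds either a thread on the extended happens-before path to $task(r)$ or else every thread in $enabled(s)$ (and $thread(r_1)\in enabled(s)$ with $r_1\in R_w$). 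In each branch the thread $t$ ultimately placed into $backtrack(s)$ is one that takes a step in $w$, and its first step in $w$ is exactly $next(s,t)$; hence $next(s,t)\in L\cap R_w$, the desired contradiction.

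The main obstacle will be precisely this branch-by-branch argument: tracking the happens-before relation across $u$ --- which, being a dependence-covering (not a Mazurkiewicz-equivalent) sequence, may contain fresh transitions and may reorder whole event handlers with respect to $w$ --- closely enough to guarantee that the task \texttt{FindTarget} eventually backtracks on belongs to a thread that genuinely moves in $w$ and is enabled at $s$, so that its step $next(s,t)$ is the first move of that thread \emph{in $w$}, not an unrelated transition. Two finer points subsumed by this are: checking that $r_1'$ is really the nearest transition of $S.r_1'.\gamma$ that races with $r$ (and not a later transition of $\gamma$), which rests on the dependence-covering constraints relating $u$ to $v$ and on the fact that $r$ is the first transition of $w$ dependent with $r_1'$; and checking that each helper routine called recursively in this scenario adds its backtracking choice at $s$ itself rather than at a state strictly earlier in $S$, that is, that the \texttt{post} operations it reorders were executed inside $r_1'.\gamma$.
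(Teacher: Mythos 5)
Your overall route is the same as the paper's: build $z$ via Construction~\ref{construct:caseb}, extend to $Z$, invoke H1 on the suffix from $s_1'$ to obtain $r_1'.u$, and argue that the resulting call \texttt{FindTarget($S.r_1'.\gamma,r_1',r$)} forces a thread of $w$ into $backtrack(s)$, contradicting the case~B hypothesis. However, the two places you defer as ``the main obstacle'' are exactly where the paper's proof does its work, and your plan for them is either harder than necessary or incomplete.

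First, for the branch where $candidates\neq\emptyset$, you propose to trace the Step~4 re-addition and the recursive \texttt{ReschedulePending}/\texttt{FindTarget} route, and you correctly worry that the recursive calls might deposit their backtracking choices at states strictly earlier than $s$. The paper avoids this entirely with one observation: $done(s)\subseteq backtrack(s)$ at every point of the algorithm, so once $candidates$ is shown to be nonempty and to contain only tasks whose enabled transitions at $s$ lie in $w$ (which follows from $u$ being dependence-covering for $v$, so no transition of $\gamma\setminus v$ can happen-before $r$), you already have $threadSet(candidates)\cap backtrack(s)\neq\emptyset$ regardless of whether line~\lineunexploredthread{} fires or the control falls through to Steps~3 or~4. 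You do not need to follow \texttt{ReschedulePending} at all in case~B. Second, in the $candidates=\emptyset$ branch you describe \texttt{BacktrackEager}'s ``closing phase'' but omit the early return at line~\linereturnonhb: if the extended relation $\rightsquigarrow$ were to order $r_1'$ before $task(r)$, \texttt{BacktrackEager} returns having only added choices at states inside $S$, and no contradiction at $s$ is reached. The paper rules this out by noting that $S.w$ and $S.r_1'.\gamma$ realize both orders of $r$ and $r_1'$ from the common prefix $S$, while \texttt{BacktrackEager} only adds orderings among \texttt{post}s of $S$ and their handlers, so $\rightsquigarrow$ cannot order $r_1'$ before $r$. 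Without this step the \texttt{BacktrackEager} branch of your argument does not close. (A minor slip: in case~B the transition $r_1'$ is a non-\texttt{post}, so the parenthetical ``\texttt{post}s commute with everything'' is out of place; commutation of $r_1'$ past $r_1\ldots r_{\varpi-1}$ follows from independence via condition~2 of Definition~\ref{def:dependence-relation}.)
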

\begin{proof}
Consider a sequence $z$ in $\mathcal{S}_G$ constructed as per Construction~\ref{construct:caseb}.
Then, as explained in the proof strategy (Section~\ref{sec:common-construction}) 
let $Z = z.z'.r$ or $Z = z$ based on the existence of shortest $z'$ that enables $r$. EM-DPOR explores a dependence-covering
sequence $u$ for $v = r_2'.r_3'\ldots r_m'.z'.r$ or $v = r_2'.r_3'\ldots r_m'$, making $r_1'.u$ a 
dependence-covering sequence of $Z$. Note that in this case $v$ only consists of transitions from $w$
and if there exists a $z'$ satisfying the criteria considered, $v$ also has transitions from $z'$ .
Also, $r_l' = r_1'$ (see Figure~\ref{fig:common-construct}).
From Section~\ref{sec:proof-strategy}, $\delta_\gamma \in \mathcal{S}_R$ is the state reached by $S.r_1'.\gamma$ 
where $\gamma$ is a prefix of $u$ such that $r \in nextTrans(\delta_\gamma)$.

By Construction~\ref{construct:caseb} and its properties, transition $r_1'$ is the nearest
dependent and may be co-enabled transition which does not happen before $r$ at state $\delta_\gamma$.
Then, \texttt{Explore($S.r_1'.\gamma$)} invokes \texttt{FindTarget($S.r_1'.\gamma,r_1',r$)}.
Line~\linediverging\ in \texttt{FindTarget} (Algorithm~\algofindtarget)
is skipped as $thread(r_1') \neq thread(r)$.
Step~2 of \texttt{FindTarget} identifies state $s$ from where $r_1'$ is executed, as the 
state to add backtracking choices to reorder $r_1'$ and $r$.
We first show that the line~\linecandidateone\ in Algorithm~\algofindtarget\ computes 
$candidates \subseteq$ $\{task(r_2'), task(r_3'), \ldots ,task(r_m'), task(r)\}$. In other words,
set $candidates$ contains no task $(t,e)$ such that transition $next(s,t)$ is in $\gamma \setminus (v \setminus z')$. 
This is because $u$ is a dependence-covering sequence of $v$ and thus there exists no transition 
$p \in u \setminus v$ dependent with a transition $p' \in v$
such that $index(S.r_1'.u,p) < index(S.r_1'.u,p')$. As a result there exists no 
$p \in \gamma \setminus v$, $\gamma$ being a prefix of $u$, such that $p$ is 
dependent with $r$. Hence there exists no $p \in \gamma \setminus v$
such that $p$ happens before $r$. Since $z'$
is a sequence to enable $r$ from state $s_m'$ reached by $z$, no transition in $z'$ 
happens before any transition of $thread(r)$ executed in $z$. Also, $r$ is not yet executed 
in $S.r_1'.\gamma$ and hence by Definition~\defhappensbefore, 
no transition in $z'$ happens before $r$ in $S.r_1'.\gamma$.
Set $candidates$ only consists of those tasks 
whose threads are enabled at state $s$, and in this case only those tasks whose
enabled transition are in sequence $v \setminus z'$ and thus in $w$. Now there are two cases.

\vspace{1.5mm}
\noindent 1. \underline{$candidates \neq \emptyset$} $\,$ Then $threadSet(candidates) \, \cap \, backtrack(s)$ 
is not an empty set.
This is due to line~\lineunexploredthread\ in Algorithm~\algofindtarget, and even if 
$threadSet(candidates)$ $\subseteq done(s)$, $done(s) \subseteq backtrack(s)$ at 
any point of execution of the algorithm. This
contradicts the assumption that set $L$ has no transition from sequence $w$. 

\vspace{1.5mm}
\noindent 2. \underline{$candidates = \emptyset$} $\,$ Then set $pending$ computed at line~\linepending\
in Algorithm~\algofindtarget\ is an empty set since $pending \subseteq candidates$. 
This results in a call to \texttt{BacktrackEager($S.r_1'.\gamma,|S| + 1, r$)}
on line~\linecallbacktrackeager. Note that $index(S.r_1'.\gamma,r_1') = |S| + 1$.

\texttt{BacktrackEager($S.r_1'.\gamma,|S| + 1, r$)} (Algorithm~\algobacktrackeager) 
temporarily copies the HB relation in $\to_{S.r_1'.\gamma}$ to $\rightsquigarrow$ 
(see line~\linecopyhb\ in \texttt{BacktrackEager}).
Then, it orders each pair of co-enabled \texttt{post} transitions in $S$ posting 
events to the same destination thread, and closes the happens-before relation 
$\rightsquigarrow$ with FIFO and transitivity due to newly added \post\ to \post\ 
mappings (lines~\ref{line-post-post-dep}--\ref{line-add-new-hb}). 
We refer to the modified happens-before relation
as \emph{extended} happens-before relation. We show that the extended happens-before 
relation does not order $r_1'$ and $r$ \ie\ $i \not\rightsquigarrow task(r)$ where $i = |S|+1$.
This is because, sequence $S.w$ executes $r$ and not $r_1'$ whereas, sequence $S.r_1'.\gamma$
executes $r_1'$ and not $r$. Hence with a common prefix $S$, sequences $S.w$ and
$S.r_1'.\gamma$ explore both the ordering between $r$ and $r_1'$. 
% This is because, sequence $S.w$ in $\mathcal{S}_G$ has $r$ before $r_1'$ whereas $S.r_1'.\gamma$ executes 
% $r_1'$ before $r$, and 
Thus the order of \texttt{post} operations in $S$ does not 
determine the order of $r_1'$ and $r$. However \texttt{BacktrackEager} only orders 
\texttt{post} operations in $S$ and their respective handlers in $S.r_1'.\gamma$.

Since $i \not\rightsquigarrow task(r)$, Algorithm~\algobacktrackeager\ does not 
return via line~\linereturnonhb\ and proceeds to compute set $candidates_1$ 
on line~\bactracklinecandidateone\ using extended happens-before relation. 
% If $candidates_1$ is non-empty then $candidates_1 \; \cap \; backtrack(s) \neq \emptyset$ due to line~\ref{backtrack-line-add-choice}. 
Due to FIFO the handlers posted to the same thread execute in the order in which
they are posted. Since $S$ is a prefix of both $S.w$ and $S.r_1'.\gamma$, the relative 
execution order of event handlers in $w$ and $r_1'.\gamma$ whose events are posted in $S$ is the same.
\texttt{BacktrackEager} only augments happens-before mappings between transitions
of handlers posted in $S$. Thus any new happens-before mappings in the extended happens-before relation, 
between a transition $p$ in $r_1'.\gamma$ and $r$ is such that $p$ is a transition in $w$, 
and $index(S.w,p) < index(S.w,r)$. This along with our earlier reasoning on the 
dependence-covering property of $S.r_1'.u$ proves that $candidates_1$ computed 
by line~\ref{bactrack-line-candidate-1} only contains 
threads whose enabled transitions at $s$ are executed in sequence $w$. If 
$candidates_1 \neq \emptyset$ then line~\backtracklineaddchoice\ of Algorithm~\algobacktrackeager\
adds a thread from $candidates_1$ to $backtrack(s)$. This contradicts the assumption 
that sequence $w$ has no transition from $L$. If $candidates_1 = \emptyset$, then 
$backtrack(s) = enabled(s)$ (line~\backtracklineaddall). 
Then, $thread(r_1) \in backtrack(s)$ which implies $r_1 \in L$. Transition $r_1$ being 
the first transition in $w$ contradicts the assumption that sequence $w$ has no transition from set $L$.
\end{proof}

\subsubsection{Case C}
Case C states that $L$ contains a non-empty subset of non-\post\ transitions such that all 
the non-\post\ transitions in $L$ are dependent with some transition in $w$, and
the first transition in $w$ from $L$ is a non-\post\ transition.
With the help of the transition sequence $z$ described in Section~\ref{sec:common-construction}
we prove that \emdpor\ identifies a transition in $w$ executed prior to the first 
transition from $L$ in $w$ to be reordered with the first transition from $L$
in $w$. We will further prove that this causes a transition in $w$ executed prior
to the first transition from $L$ in $w$, to get added to the set $L$. 
This establishes contradiction to the property of set $L$.

To suit the case under consideration, we refine the construction of sequence $z$ 
as follows.

\begin{construction}
\label{construct:casec}
\upshape
Let $z: s \xrightarrow{r_1'} s_1' \xrightarrow{r_2'} 
s_2' \ldots \xrightarrow{r_m'} s_m'$ in $\mathcal{S}_G$ be a sequence satisfying
the following constraints:
\begin{enumerate}[label=M\arabic*.]
 \item $r_1' \in L$ is a non-\post\ transition and $r_1'$ is the first transition in $w$ from $L$.
 Let $k = index(w,r_1')$.
 \item For all $r_i'$ in $z$, $i \neq 1$, $r_i'$ is a transition in $w$ and 
 $r_i' = r_{i-1}$. 
 \item $\exists r \in nextTrans(s_m')$ such that $r = r_\varpi$ is in $w$ such that
 $\varpi < k$, and $r$ is the first transition in $\alpha: r_1.r_2\ldots r_{k-1}$ to be dependent with $r_1'$.
\end{enumerate}
\end{construction}

\noindent Construction~\ref{construct:casec} differs from Construction~\ref{construct:caseb}
in constraints M1 and M3. Following properties can be inferred for a sequence $z$
constructed as per Construction~\ref{construct:casec}.
\begin{enumerate}[label=P\arabic*.]
 \item All the transitions in $z$ are in $w$.
 \item Transitions $r_1'$ and $r$ may be co-enabled \ie\ $thread(r_1') \neq thread(r)$. 
 This is because, $r_1' = next(q,thread(r_1'))$
 at any state $q$ visited by a prefix of sequence $\alpha$, including the state $s_{\varpi-1}$ where $r = r_\varpi$
 is executed. Since both $r_1'$ and $r$ are in $nextTrans(s_{\varpi-1})$, $thread(r_1') \neq thread(r)$.
 \item A sequence $z$ satisfying M1, M2 and M3 exists only if $\exists r_1' \in L$ such that $r_1'$ is dependent
 with some transition prior to its index in $w$. In such a case $z$ can at least consist 
 of $r_1'$ if $r_1'$ is dependent with some transition in $nextTrans(s)$ which is executed prior to $r_1'$ in $w$. 
 If $r_1'$ is not dependent with any transition prior to it in $w$, then $z$ does not exist. Nevertheless, 
 as will be shown in Lemma~\ref{lmm:casec}, we get a dependence-covering sequence of $w$ from 
 $s$ in $\mathcal{S}_R$ without constructing $z$ in such a case.
 \item $z$ is not a dependence-covering sequence of $w$ at state $s$ as the dependence
 between $r_1'$ and $r$ does not satisfy any constraints of a 
 dependence-covering sequence (Definition~\defdepcovering).
\end{enumerate}

\begin{lemma}
\label{lmm:casec}
\upshape
EM-DPOR explores a dependence-covering sequence of $w$ from state $s$ when set $L$
satisfies case~C. 
\end{lemma}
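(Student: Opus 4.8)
The plan is to follow the proof of Lemma~\ref{lmm:caseb} almost verbatim, using the common construction of Section~\ref{sec:common-construction} instantiated through Construction~\ref{construct:casec}; only the contradiction being targeted changes. Fix $r_1'$ to be the first transition of $w$ that lies in $L$, with $k=index(w,r_1')$, so by the hypothesis of Case~C, $r_1'$ is a non-\texttt{post} transition and $r_1,\ldots,r_{k-1}\notin L$. Since $r_1'\in L$ is the next transition of its thread at $s$, it is the first transition of its thread in $w$, so $thread(r_1')\neq thread(r_j)$ for every $j<k$. If $r_1'$ is independent with every transition preceding it in $w$ (the degenerate case flagged in property~P3 of Construction~\ref{construct:casec}, where $z$ does not exist), then $r_1'$ commutes across $\alpha=r_1\ldots r_{k-1}$ by the second condition of Definition~\ref{def:dependence-relation} (independence together with the thread-disjointness just noted), so $r_1'.r_1\ldots r_{k-1}.r_{k+1}\ldots r_n$ is a valid sequence from $s$ and a dependence-covering sequence of $w$ (it carries exactly $R_w$, and relocating $r_1'$ past transitions it does not depend on disturbs no dependence ordering). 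Since $r_1'\in L$ gives $r_1'(s)\in\mathcal{S}_R$, induction hypothesis~H1 yields a dependence-covering sequence $u'$ of $r_1\ldots r_{k-1}.r_{k+1}\ldots r_n$ from $r_1'(s)$, and hence $r_1'.u'$ is a dependence-covering sequence of $w$ explored by EM-DPOR, already contradicting the standing assumption; so from now on $z$ exists.

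For the remaining case I would take $z=r_1'.r_1\ldots r_{m-1}$ as in Construction~\ref{construct:casec}, with $r=r_\varpi\in nextTrans(s_m')$ the first transition of $\alpha$ dependent with $r_1'$ (so $m-1<\varpi<k$), form $Z$ and $v$ as in Section~\ref{sec:common-construction}, and apply H1 to obtain the dependence-covering sequence $u$ of $v$ from $s_1'=r_1'(s)$, so $r_1'.u$ is a dependence-covering sequence of $Z$ explored by EM-DPOR. Let $\delta_\gamma$ be the state reached by $S.r_1'.\gamma$ for the prefix $\gamma$ of $u$ after which $r\in nextTrans(\delta_\gamma)$. Because $u$ is a dependence-covering sequence of $v$ and every transition of $v$ is actually executed in $v$ (none is merely in $nextTrans$ of its last state), no transition of $\gamma\setminus v$ can be both dependent with and executed before any transition of $v$ in $u$ — none of the four clauses of Definition~\ref{def:dep-covering} would apply. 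Combined with $thread(r_1')\neq thread(r)$ (property~P2) and the independence of $r_1'$ from $r_1,\ldots,r_{\varpi-1}$ (they precede the first transition dependent with $r_1'$), this shows both that no transition of $\gamma\setminus v$ happens before $r$ and that $r_1'\not\to task(r)$ in $S.r_1'.\gamma$, and it makes $r_1'$ the nearest dependent, may-be-co-enabled, not-happening-before transition; hence \texttt{Explore($S.r_1'.\gamma$)} invokes \texttt{FindTarget($S.r_1'.\gamma,r_1',r$)}, Step~1 is skipped, and Step~2 takes $s$ as the backtracking point.

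The crux, exactly as in Lemma~\ref{lmm:caseb}, is to bound the candidate set: I would show that $candidates$ at line~\linecandidateone\ — and, if it is empty, the set $candidates_1$ computed inside \texttt{BacktrackEager($S.r_1'.\gamma,|S|+1,r$)} relative to the extended happens-before relation — contains only tasks whose enabled transition at $s$ occurs in $\alpha$. The ingredients are: (i) transitions of $\gamma\setminus v$ do not happen before $r$ (the dependence-covering property of $u$ over $v$, above); (ii) transitions of the enabling sequence $z'$ do not happen before $r$ in $S.r_1'.\gamma$ since $r$ has not executed (Definition~\ref{def:happens-before}); and (iii) any transition of $w$ lying in $\gamma$ that happens before $task(r)$ must precede $r=r_\varpi$ in $w$ and so lies in $\alpha$, while the FIFO and transitive closure added by \texttt{BacktrackEager} relates only handlers whose \texttt{post} transitions lie in the common prefix $S$ — so it can neither relate $r_1'$ to $r$ (hence no early return at line~\linereturnonhb) nor drag a transition of $w$ occurring after $r$ into $candidates_1$. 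Given this, a non-empty $candidates$ (resp.\ $candidates_1$) forces, via line~\lineunexploredthread\ together with $done(s)\subseteq backtrack(s)$ (resp.\ line~\backtracklineaddchoice), a thread whose enabled transition lies in $\alpha$ into $backtrack(s)$; an empty one falls through to \texttt{BacktrackEager} and ultimately to line~\backtracklineaddall, which adds all of $enabled(s)$ — in particular $thread(r_1)$ with $r_1\in\alpha$ — to $backtrack(s)$. Either way a transition of $\alpha$ lands in $L$, contradicting that $r_1'$ is the first transition of $w$ in $L$. I expect the main obstacle to be part~(iii): carefully separating the transitions of $w$ from the ``new'' transitions that can appear along $\gamma$ and along the \texttt{BacktrackEager} simulation, and — since here $r$ precedes $r_1'$ in $w$, unlike in Lemma~\ref{lmm:caseb} — re-verifying that the temporary orderings between \texttt{post} transitions introduced by \texttt{BacktrackEager} never order $r_1'$ before $r$.
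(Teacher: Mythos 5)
Your proposal follows the paper's proof of this lemma essentially verbatim: the same split into the degenerate sub-case (where the first $L$-transition in $w$ is independent of everything before it and commutes to the front) and the main sub-case using Construction~\ref{construct:casec} with the common construction, the same argument that $candidates$ is confined to tasks of $\alpha = r_1\ldots r_{k-1}$, and the same two-way case analysis (non-empty $candidates$ versus falling through to \texttt{BacktrackEager}) ending in a contradiction with $r_1'$ being the first transition of $w$ in $L$. Your handling of the degenerate case is if anything slightly more explicit than the paper's (invoking H1 on the suffix to conclude EM-DPOR actually explores the sequence), but the approach is the same.
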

\begin{proof}
In the context of case C two sub-cases exist:\\
I. \textbf{\emph{Assume there exists a transition \boldmath$r_1' \in L$ executed in $w$ such that $k = index(w,r_1')$
and $r_1'$ is independent with all the transitions $r_i$ in $w$ for $1 \leq i < k$\unboldmath.}}
Then by Definition~\defdependencerelation\ transition $r_1'$ commutes 
with all such $r_i$ and thus $r_{k-1}( \ldots r_2(r_1(r_1'(s)))\ldots) = s_k$. 
Since $r_{k+1}\ldots r_{n-1} r_n$ is enabled at $s_k$, sequence $r_1' r_1 \ldots 
r_{k-1} r_{k+1} \ldots r_n$ is a dependence-covering sequence of $w$ from state $s$.

\vspace{1.5mm}
\noindent II. \textbf{\emph{Assume no transition in \boldmath$L$ executed in $w$ satisfies sub-case~I. 
Let $r_1'$ be the first transition in $w$ from $L$, and let $r_1'$ be dependent 
with some transition $r_i$ in $w$ where $1 \leq i < k$ where $k = index(w,r_1')$\unboldmath.}}
Consider a sequence $z$ in $\mathcal{S}_G$ constructed as per Construction~\ref{construct:casec}.
Sequence $z$ exists as sub-case II satisfies the pre-condition of property~P3 of 
Construction~\ref{construct:casec}.
As explained in Section~\ref{sec:common-construction}, let $Z = z.z'.r$ or $Z = z$ based on the 
existence of shortest $z'$ that enables $r$. EM-DPOR explores a dependence-covering
sequence $u$ for $v = r_2'.r_3'\ldots r_m'.z'.r$ or $v = r_2'.r_3'\ldots r_m'$, making $r_1'.u$ a 
dependence-covering sequence of $Z$. 
Note that here $r_l' = r_1'$.
From Section~\ref{sec:proof-strategy}, $\delta_\gamma \in \mathcal{S}_R$ is the state reached by $S.r_1'.\gamma$ 
where $\gamma$ is a prefix of $u$ such that $r \in nextTrans(\delta_\gamma)$.

By Construction~\ref{construct:casec} and its properties, transition $r_1'$ is the nearest
dependent and may be co-enabled transition which does not happen before $r$. 
Then, \texttt{Explore($S.r_1'.\gamma$)} invokes \texttt{FindTarget($S.r_1'.\gamma,r_1',r$)}.
Line~\linediverging\ in \texttt{FindTarget} is skipped since $thread(r_1') \neq thread(r)$. 
Step~2 of \texttt{FindTarget} identifies state $s$ from where $r_1'$ is executed, as the 
state to add backtracking choices to reorder $r_1'$ and $r$.
We show that the line~\linecandidateone\ 
computes $candidates \subseteq \{task(r_1), task(r_2)$ $\ldots task(r_{k-1})\}$ \ie\ 
$candidates$ has no task $(t,e)$ such that transition $next(s,t)$ is in $\gamma \setminus \alpha$. 
% This is because relative ordering of dependent transitions in $w$ are maintained in $v$,
% and $\gamma$ is a prefix of $u$ which is a dependence-covering sequence of $v$. 
% Hence there exists no $p \in \gamma \setminus \alpha$ 
% such that $p$ happens before $r$.
The reason for this is similar to a corresponding step in the proof of Lemma~\ref{lmm:caseb}.
Set $candidates$ only consists of those tasks 
whose threads are enabled at state $s$, and in this case only those tasks whose
enabled transition at $s$ are in $\alpha : r_1.r_2\ldots r_{k-1}$, a prefix of $w$. Now there are two cases.

\vspace{1.5mm}
\noindent 1. \underline{$candidates \neq \emptyset$} $\,$ Then $threadSet(candidates) \, \cap \, backtrack(s)$ 
is not an empty set. This is due to line~\linereturnsuccess\ in Algorithm~\algofindtarget. 
This contradicts the assumption that $r_1'$ is the first transition in $w$ from $L$, as
index of any transition in $\alpha$ is lesser than $k = index(w,r_1')$. 

\vspace{1.5mm}
\noindent 2. \underline{$candidates = \emptyset$} $\,$ Then set $pending$ computed at line~\linepending\ in 
Algorithm~\algofindtarget\ is also an empty set as $pending \subseteq candidates$. 
This results in a call to \texttt{BacktrackEager($S.r_1'.\gamma,|S| + 1, r$)}
on line~\linecallbacktrackeager\ in Algorithm~\algofindtarget. Then, we 
can use a reasoning similar to that in the proof of Lemma~\ref{lmm:caseb} to derive 
contradiction for the assumption that $r_1'$ is the first transition in $w$ from $L$.
\end{proof}

\subsubsection{Case D}
\label{sec:caseD}
 
Case D assumes all the transitions in the set $L$ to be of the type \texttt{post}
such that none of the transitions in $L$ are in the transition sequence $w$.
Unlike the lemmas related to cases~B and C proving which involved reasoning about 
non-\post{} transitions in the set $L$, case~D requires reasoning about \post{} 
transitions in the set $L$. Similar to the proof strategy of cases 
B and C we will show that \emdpor\ identifies some transition in $w$ to 
be reordered with a \post{} transition in the set $L$. Since \emdpor\ considers a \post{}
transition to be independent w.r.t. all the transitions, the \texttt{Explore} algorithm
never invokes \texttt{FindTarget} to reorder a \post{} with some other transition.
However, a recursive call to \texttt{FindTarget} may reorder a pair of \texttt{post}s
to the same event queue (see step~1 of Algorithm~\algofindtarget\ (\texttt{FindTarget}) 
and step~3c of Algorithm~\algoreschedulepending\ (\texttt{ReschedulePending})).
For this to happen the pair of \texttt{post}s will have to be somehow related 
to a pair(s) of dependent transitions which are originally identified
for reordering by \texttt{Explore}. However, establishing this relation between
a pair(s) of dependent transitions and a pair of \post{} transitions is 
non-trivial and may involve reasoning about a set of transition sequences ultimately
leading to the identification of \texttt{post}s to be reordered starting from 
a transition sequence which identifies a pair of dependent transitions to be reordered. 

With the help of the transition sequence $z$ described in Section~\ref{sec:common-construction}
we will be generating a set of transition sequences of interest, which will lead to 
the identification of a \post{} in $w$ to be reordered with a \post{} in the backtracking
set $L$. After identifying these \texttt{post}s and establishing that \texttt{FindTarget}
will be called to reorder them, we will show that 
a transition from $w$ gets added to the set $L$ by invoking arguments similar
to the proofs of Lemma~\ref{lmm:caseb} and \ref{lmm:casec}. 
This establishes contradiction to the property of set $L$, and in turn 
establishes the existence of a dependence-covering sequence of $w$ starting from 
some transition belonging to the set $L$.
\\
\\
We make the following assumptions to simplify the proof sketch. 
\begin{assumption}
\label{assume:construction}
\upshape
For any transition sequence $\alpha$ executed from state $s$ considered 
henceforth (including sequence $w$ and $z$) we make the following assumptions. Let $K_{\alpha}$ 
be the set of transitions in $\alpha$ such that for each $k \in K_{\alpha}$, $k \not\in w$.  
% on a set of transitions $K$ which comprises of 
% all the transitions in $v$ (recall that $z = r_1'.v$) which do not belong to $w$.
Then,
\begin{enumerate}
 \item Transitions in $K_{\alpha}$ do not form a deadlock cycle consisting only of transitions
 in $K_{\alpha}$.
 \item Every lock acquired inside a task is released within the same task 
 (thread  or event handler). This assumes that a lock acquire and release does not 
 span multiple event handlers.
 \item If a transition $p \in K_{\alpha}$ is disabled at a state, then it does not require a
 transition in $w$ to be executed for $p$ to eventually enable.
 %% the above assumption does not mean that a lock l outside w cannot be executed 
 %% if w too acquires l. In such a case lock acquire by non-w transition need not 
 %% compulsorily block. However, if lock l is executed in S and if unlock l is 
 %% is performed in w then we assume that the transitions outside w do not acquire l.
\end{enumerate}
\end{assumption}

Reasoning about this case requires a few new definitions which we introduce below.
% We define the dependence graph $DG$ of a transition w.r.t. a given sequence as follows.

\begin{definition}
 \label{def:dep-graph}
 \upshape
 A function \boldmath $DG(p,\alpha,s')$ \unboldmath which defines a dependence graph
 of a transition w.r.t. a sequence, takes a transition $p$
 and a sequence $\alpha$ executed from a state $s'$ where $p \in R_\alpha$ or 
 $p \in nextTrans(last(\alpha))$, and returns a set $P$ such that $P \subseteq R_{\alpha}$ 
 and for each transition $a \in P$, if $p \in R_\alpha$ then $index(\alpha,a) \to_{\alpha} index(\alpha,p)$
 else either $index(\alpha,a) \to_{\alpha} task(p)$ or $a$ is dependent with $p$.
\end{definition}

The following definition gives the criteria when dependence graphs of a transition
w.r.t. two different sequences are equivalent.
\begin{definition}
 \label{def:identicalDG}
 \upshape
 Predicate \boldmath$identicalDG(p,\alpha,\beta,s')$ \unboldmath takes a transition 
 $p$, and transition sequences $\alpha$ and $\beta$, both executed from the
 same state $s'$, such that $p \in R_\alpha \cup nextTrans(last(\alpha))$, 
 $p \in R_\beta \cup nextTrans(last(\beta))$, and the predicate
 evaluates to \texttt{TRUE} only if $DG(p,\alpha,s') = DG(p, \beta,s')$.
\end{definition}

\begin{definition}
\label{def:future}
\upshape
A set \boldmath $\mathit{future}$\unboldmath\ of a \post\ operation $r$, \ie\ $\mathit{future}(r)$ 
is a set of transitions such that a transition $r' \in \mathit{future}(r)$ if,
(1)~there exists a sequence $\alpha$ in $\mathcal{S}_G$ such that $r'$ is executed by 
the handler of the event posted by $r$ in $\alpha$, or
(2)~$r' \in \mathit{future}(r'')$ such that $r''$ is a \texttt{post} operation and $r'' \in \mathit{future}(r)$.
\end{definition}

\begin{definition}
\label{def:enabledfuture}
\upshape
A set \boldmath $\mathit{enabledFuture}$\unboldmath\ of a \post\ operation $r$, 
\ie\ $\mathit{enabledFuture}(r)$ is a set of transitions such that a transition 
$r' \in \mathit{enabledFuture}(r)$ if,
(1)~$r'$ $\in$ $\mathit{future}(r)$, or (2)~there exists a sequence $\alpha$ in 
$\mathcal{S}_G$ reaching a state $s'$ such that a transition $r'' \in \mathit{future}(r)$
is blocked in $s'$ and $r'$ is the first transition of a shortest sequence from 
$s'$ which enables $r''$, 
% $r'$ must be executed in $\alpha$ to eventually enable 
% another transition executed in $\alpha$ which belongs to $\mathit{future}(r)$, 
or (3)~$r' \in \mathit{enabledFuture}(r'')$ such that 
$r''$ is a \texttt{post} operation and $r'' \in \mathit{enabledFuture}(r)$.
\end{definition}

\begin{construction}
\label{construct:cased}
\upshape
Let $z: s \xrightarrow{r_1'} s_1' \xrightarrow{r_2'} 
s_2' \ldots \xrightarrow{r_m'} s_m'$ in $\mathcal{S}_G$, 
where $r_1' \in L$ and $v = r_2'\ldots r_m'$, %is a transition sequence from the state $s_1'$, 
be a sequence satisfying the following constraints:
\begin{enumerate}[label=M\arabic*.]
 \item Sequence $v$ consists of transitions belonging to $w$ as well as transitions
 outside $w$. For a transition $r_i' \in v$, if $r_i' \in w$ then $identicalDG(r_i',w,z,s)$. 
 For a transition $r_i' \in v$, if $r_i' \not\in w$ then $r_i'$ is the first transition 
 of a shortest sequence from the state $s_{i-1}'$, comprising only of transitions which do not belong to $w$, 
 to be executed to make an event $e$ which was dequeued in $w$ but blocked in $s_{i-1}'$ executable.
 
%%%%%%%%%%  uncomment below property talking about contiguity of transitions not in w, 
%%%%%%%%%%  only if some proof requires it      %%%%%%%%%%%%%%%%%%%%%%%%%%%%%%%%%%%%%%
%  \item If $\alpha$ is a subsequence of $v$ such that $\alpha$ only contains
%  transitions not in $w$ and which have been added to make an event $e$ dequeued
%  in $w$ executable, then $\alpha$ is a contiguous subsequence of $v$.
  
 \item There exists no extension to any prefix $\alpha$ of $z$ which results in 
 a transition sequence $\alpha.\gamma$ such that there exists a pair of dependent transitions $c$ 
 and $d$ with the following properties:
 \begin{enumerate}
  \item (i)~Either $c$ and $d$ are transitions in $w$ such that they are ordered
  differently in $\gamma$ compared to their order in $w$, or (ii)~$c \not\in w$,
  $c$ is a transition in the $\mathit{enabledFuture}$ set of a \post{} in $w$, and
  $d \in w$ such that $c$ is executed prior to $d$ in $\alpha.\gamma$, and 
  
  \item Attempting to reorder $c$ and $d$ through some other extension to $\alpha$ will 
  only result in a transition sequence $\alpha.\gamma'$ which breaks the order between another pair 
  of dependent transitions $c'$ and $d'$ such that either (i)~$c',d' \in w$, or 
  (ii)~$d' \in w$, $c'$ is a transition in the $\mathit{enabledFuture}$ set of a \post{} in $w$,
  and $c'$ is executed prior to $d'$ in $\alpha.\gamma'$.
 \end{enumerate}

 \item There exists a transition $r \in nextTrans(s_m')$ such that $r$ is a transition
 in $w$ and $r$ is dependent with a transition $r_l' \not\in w$ executed in $v$ such that 
 $r_l' \in \mathit{enabledFuture}(r_1')$.
 
 \item $z$ is a sequence with maximum transitions from $w$ while satisfying the 
 constraints M1, M2 and M3.
\end{enumerate}
\end{construction}

From the constraints given in Construction~\ref{construct:cased}, a pair of transitions in $w$ posting
to the same event queue can be reordered in sequence $z$ so long as the properties M1 and M2
are respected. We now present a lemma describing the property of transitions not in $w$
but present in $v$.

\begin{lemma}
\label{lmm:transition-property-cased}
\upshape
% In transition sequence $z = r_1'.v$ constructed as per Construction~\ref{construct:cased},
% a transition $r_i' \in v$ such that $r_i' \not\in w$ satisfies one of the following properties. 
In a transition sequence $z = r_1'.v$ constructed by only following the constraint M1
of Construction~\ref{construct:cased}, a transition $r_i' \in v$ such that 
$r_i' \not\in w$ satisfies one of the following properties.
\begin{enumerate}
 \item $r_i'$ $\in$ $\mathit{enabledFuture}(r_1')$, or 
 \item $r_i' \in \mathit{enabledFuture}(r_j')$ where $r_j'$ is a \post{} transition in $w$.
%  such that $r_i'$ was not executed in $w$. % , or   
\end{enumerate}
\end{lemma}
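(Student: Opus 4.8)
The plan is to prove the statement by induction on the index $i$ of $r_i'$ within $v$, equivalently on the length of the prefix of $z$ ending just before $r_i'$. The base case and the inductive step share one core argument, which hinges on reading off exactly why constraint M1 of Construction~\ref{construct:cased} inserts a non-$w$ transition: such an $r_i'$ is, by M1, the first transition of a \emph{shortest} sequence $\sigma$ whose transitions \emph{all} lie outside $w$ and whose execution from $s_{i-1}'$ turns some event $e$ --- one that was dequeued in $w$ but is blocked at $s_{i-1}'$ --- into an executable event. Thus the whole task reduces to tracing the \emph{blocker} of $e$ back to a \texttt{post} that is either $r_1'$ itself or a \texttt{post} occurring in $w$, and then observing that the first step of $\sigma$ lands in the corresponding $\mathit{enabledFuture}$ set.

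First I would pin down the blocker. Since $e$ is blocked at $s_{i-1}'$, the event $e''$ that is executable on $e$'s thread at $s_{i-1}'$ lies strictly ahead of $e$ in that queue and its handler has not been completed; as $\sigma$ uses only non-$w$ transitions and must drive $e''$'s handler (and any further events ahead of $e$) to completion, the unexecuted part of the handler of $e''$ consists of non-$w$ transitions, i.e.\ $e''$ is ``new'' relative to $w$. A short sub-argument then shows $e''$ cannot already be present in $e$'s queue at state $s$: if it were, then since $e$ is dequeued in $w$ and \texttt{post}s only append to the back of a FIFO queue, $e''$ would necessarily be dequeued before $e$ in $w$, so its handler would be part of $w$ --- a contradiction. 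Hence $e''$ was posted by a transition $p_{e''}$ occurring in $z$ strictly before $s_{i-1}'$. Now I case split on $p_{e''}$: (i) $p_{e''} = r_1'$; (ii) $p_{e''}$ is a \texttt{post} appearing in $w$; (iii) $p_{e''}$ is one of the non-$w$ transitions of $v$ occurring before $r_i'$ --- necessarily a \texttt{post}, since it posted $e''$ --- in which case the induction hypothesis gives $p_{e''} \in \mathit{enabledFuture}(r_1')$ or $p_{e''} \in \mathit{enabledFuture}(r_j')$ for some \texttt{post} $r_j' \in w$, and clause~(3) of Definition~\ref{def:enabledfuture} absorbs $\mathit{enabledFuture}(p_{e''})$ into that set. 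In every case the handler transitions of $e''$ lie in $\mathit{future}(p) \subseteq \mathit{enabledFuture}(p)$ (modulo the transitive absorption above) for a \texttt{post} $p$ that is $r_1'$ or a \texttt{post} of $w$.

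To finish, I would analyse $r_i'$ itself, the first transition of $\sigma$. If $r_i'$ is the dequeue transition of the new event at the front of the queue, it lies directly in $\mathit{future}(p) \subseteq \mathit{enabledFuture}(p)$. Otherwise $r_i'$ is a step needed to enable some transition that is blocked inside $e''$'s handler, or inside a further new handler in the chain between $e''$ and $e$; by Assumption~\ref{assume:construction}(3) enabling a non-$w$ transition never requires a transition of $w$, and by Assumption~\ref{assume:construction}(1) the non-$w$ transitions cannot deadlock among themselves, so such an enabling step exists and stays outside $w$; clause~(2) of Definition~\ref{def:enabledfuture}, applied to the blocked transition (which lies in $\mathit{future}(p)$), places this first enabling step in $\mathit{enabledFuture}(p)$, with clause~(3) again handling chains through intervening \texttt{post}s. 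Either way $r_i' \in \mathit{enabledFuture}(r_1')$ or $r_i' \in \mathit{enabledFuture}(r_j')$ for a \texttt{post} $r_j' \in w$, which is the claim; the base case $i = 2$ is the same argument with case~(iii) unavailable (every transition of $z$ before $s_{i-1}'$ other than $r_1'$ belongs to $w$), so $p_{e''}$ is $r_1'$ or a \texttt{post} of $w$.

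The main obstacle I anticipate is the FIFO bookkeeping in the second paragraph: making watertight that (a) the blocker $e''$ is genuinely new rather than a $w$-event that has merely been re-ordered or left partially executed, and (b) a shortest unblocking sequence for $e$ can indeed be chosen entirely within non-$w$ transitions, so that M1 is actually applicable to $r_i'$ in the first place. Both rest on combining the well-definedness of Construction~\ref{construct:cased} with Assumption~\ref{assume:construction} and with the structural fact that in $w$ every event lying ahead of a dequeued event is itself dequeued earlier. Closing case~(iii), where $p_{e''}$ is an earlier non-$w$ transition, also needs the clause-(3) absorption property of $\mathit{enabledFuture}$ stated and used carefully, but that is routine once the blocker-tracing argument is in place.
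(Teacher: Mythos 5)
Your proposal is correct and follows essentially the same route as the paper's proof: an induction over the non-$w$ transitions of $v$ in order of occurrence, tracing the event that blocks $e$ back to its \texttt{post} (which is $r_1'$, a \texttt{post} in $w$, or an earlier non-$w$ \texttt{post} handled by the induction hypothesis and the transitive clause of $\mathit{enabledFuture}$), with the same FIFO argument ruling out that the blocking event was already queued at $s$ or dequeued in $w$. The paper phrases the base case in terms of the set of all events ahead of $e$ on its queue rather than just the executable one, but the substance is identical.
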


\begin{proof}
We prove this by inducting on the order in which transitions not belonging to
$w$ are added to $v$. 

\paragraph{Base case.} 
% We will show that the first transition in $v$
% which does not belong to $w$, say $r_i'$, is the \abegin{} transition in the handler of the 
% event posted by $r_1'$, or $r_i'$ is a transition that must be executed so as to 
% eventually enable a transition in the handler of the event posted by $r_1'$. 
Let $r_i'$ be the first transition in $v$ which does not belong to $w$.
Recall that $r_1' \in L$ is the first transition of $z$, and from the 
property of case~D we know that $r_1' \not\in w$ and it is a \post{} transition. 
This makes $r_i'$ the second
transition in $z$ to not be from the sequence $w$.
As per constraint M1 of Construction~\ref{construct:cased}, a transition not 
belonging to $w$ is added to $v$ only to make an event dequeued in $w$ executable. 
Assume that $r_i'$ has been added to make an event $e$ blocked in the state $s_{i-1}'$ 
and dequeued in $w$, executable. 
Let $E$ be the set of events on $\threadof(e)$ such that for each event $e' \in E$,
either $e'$ is the executable event on $\threadof(e)$ at state $s_{i-1}'$ or 
$e'$ is an event blocked in $e$'s event queue in state $s_{i-1}'$ such that
$e'$ is dequeued prior to $e$.
If any of the events in $E$ is dequeued in $w$, then executing $r_i'$ breaks the 
property M1. This is because in such a case, the shortest sequence to make $e$ executable will also
comprise of transitions from $w$. Also, if any event $e'$ in $E$ is posted by 
a transition in sequence $S$ prior to reaching the state $s$ from where $w$ is assumed to be executed, then
due to FIFO ordering we can infer that $e'$ is dequeued prior to $e$ in sequence
$w$ as well.
% Hence the only remaining conclusion is that, the executable event 
% on $\threadof(e)$ is the event posted by $r_1'$, since $r_1'$ is the only 
% transition in $z$ that does not belong to $w$ when state $s_{i-1}'$  is reached.
By elimination, each event in $E$ is either 
(a)~the event posted by $r_1'$, since $r_1'$ is the only transition in $z$ that 
does not belong to $w$ when state $s_{i-1}'$  is reached, or (b)~an event posted 
by a transition in $w$ but was not dequeued in $w$. 
The shortest sequence to make $e$ executable will atleast comprise of the transitions
in the handlers of the events in $E$, and $r_i'$ could be the \abegin{} transition of 
the executable event on $\threadof(e)$ which too is in $E$. 
However, if a transition, say $b$, in the handler of $e' \in E$ is blocked
on some transition outside the handler of $e'$ (\eg\ if it is a transition acquiring a 
lock held by some other thread), then the shortest transition sequence to make
$e$ executable will also include transitions to enable $b$; transition $r_i'$ could
be a transition executed to eventually enable $b$. In either case, $r_i'$ 
satisfies the constraints to be in $enabledFuture(r_1')$ or $enabledFuture(r_j')$ 
where $r_j'$ is a transition posting an event in $E$. Hence proved.

% Let $e'$ be the event posted by $r_1'$.
% Then, the shortest sequence to make $e$ executable will atleast comprise of the transitions
% in the handler of $e'$, and $r_i'$ could be a transition of this handler. 
% However, if a transition, say $b$, in the handler of $e'$ is blocked
% on some transition outside the handler of $e'$ (\eg\ if it is a transition acquiring a 
% lock held by some other thread), then the shortest transition sequence to make
% $e$ executable will also include transitions to enable $b$. In either case, $r_i'$ 
% satisfies the constraints to be in $enabledFuture(r_1')$. Hence proved.

\paragraph{Induction hypothesis.} All the transitions upto $k^{th}$ transition added to $v$  
which do not belong to $w$ either belong to $\mathit{enabledFuture}(r_1')$, or
belong to $\mathit{enabledFuture}(r_j')$ where $r_j'$ is a \post{} transition in $w$.

\paragraph{Induction step.} We need to show that one of the two properties
specified in the lemma holds even for the $(k+1)^{th}$ transition, say $r_i'$, that does not
belong to $w$ but is added to $v$. Then from the condition M1, $r_i'$ should be the 
first transition in a shortest sequence comprising only of transitions not in
$w$, to make an event $e$ executable such that $e$ is dequeued in $w$
but is currently blocked in the state $s_{i-1}'$. Then, either 
$r_i'$ is a transition in the executable task on $\threadof(e)$, or it is a 
transition that must be executed so as to eventually enable a transition $b \not\in w$ in the executable 
task on $\threadof(e)$ or the handler of an event prior to $e$ in $e$'s event queue. 
Otherwise, $r_i'$ can be removed to obtain a shorter sequence executing which
can make $e$ an executable event. Let us firstly reason about the case where $r_i'$ is a transition
in the executable task on $\threadof(e)$, and let $e'$ be the corresponding event
of the executable task. Then, \post{($e'$)} which is clearly executed
prior to $r_i'$, is either a transition in $w$ or a transition not in $w$. In the latter 
case due to induction hypothesis, \post{($e'$)} satisfies one of the 
two properties listed in the lemma. If \post{($e'$)} is a transition in $w$ or
\post{($e'$)} is a transition in the $\mathit{enabledFuture}$ set of a \post{} transition in $w$ 
(as per property~2 listed by the lemma), then by Definition~\ref{def:enabledfuture} transition 
$r_i'$ too belongs to the $\mathit{enabledFuture}$ set of a \post{} transition in $w$
thus satisfying condition~2. If \post{($e'$)} $\in$ $enabledFuture(r_1')$ 
(as per property~1 listed in the lemma), then $r_i'$ being in the handler of 
$e'$ satisfies the constraints to be in $enabledFuture(r_1')$.

Now consider the case where $r_i'$ has been added to eventually enable a transition $b \not\in w$ in the executable 
task on $\threadof(e)$ or the handler of an event prior to $e$ in $e$'s event queue.
Let $e'$ be the event corresponding to the handler in which the transition $b$
is executed.
Now we can show that $r_i'$ belongs to $enabledFuture(r_1')$ or 
$\mathit{enabledFuture}(r_j')$ where $r_j'$ is a \post{} transition in $w$, 
by reasoning about \post{($e'$)} similar to the first case presented above. 
\end{proof}

\begin{lemma}
\label{lmm:witness-cased}
\upshape
A transition sequence $z = r_1'.v$ which satisfies the constraints M1, M2 and M3
exists in $\mathcal{S}_G$.
\end{lemma}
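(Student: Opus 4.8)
The plan is to build $z$ explicitly, by a greedy replay of $w$ layered on top of a chosen $r_1'\in L$ that adds only the transitions FIFO processing forces, and then to read off M1--M3 from the outcome. Fix any $r_1'\in L$; by the hypothesis of case~D it is a \texttt{post} not occurring in $w$. First I would execute $r_1'$ from $s$ and then scan the transitions of $w$ in order, maintaining a partial sequence $z$ that so far respects M1 and M2. To append the next transition $r_i$ of $w$: if $\taskof(r_i)$ is executable in the current state and appending $r_i$ keeps $identicalDG(r_i,w,z,s)$ and creates no forbidden dependence reversal of M2 (in $z$ or in any extension of a prefix of $z$), append $r_i$; otherwise, if $\taskof(r_i)$ is blocked only because an event $e$ dequeued in $w$ now sits behind events that $r_1'$ or earlier insertions pushed ahead of it, first append a \emph{shortest} sequence consisting solely of transitions not in $w$ that makes $e$ executable, and then append $r_i$. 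Such a shortest unblocking sequence always exists: by Assumption~\ref{assume:construction}(3) a non-$w$ transition never needs a $w$-transition to become enabled, by Assumption~\ref{assume:construction}(1)--(2) the non-$w$ transitions do not deadlock among themselves, and $\mathcal{S}_G$ is finite and acyclic so the attempt terminates. By Lemma~\ref{lmm:transition-property-cased}, every transition inserted this way lies in $\mathit{enabledFuture}(r_1')$ or in $\mathit{enabledFuture}(r_j')$ for some \texttt{post} $r_j'$ of $w$, which is exactly what M1 permits.

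Next I would argue that this replay cannot incorporate \emph{all} of $w$. If it did, the resulting sequence would contain $R_w$ in the original relative order and, by M1 and M2, would place every new transition so as not to reverse any dependence exhibited in $w$; that makes it a dependence-covering sequence of $w$ (Definition~\ref{def:dep-covering}) beginning with $r_1'\in L$, contradicting the standing assumption that $w$ has no dependence-covering sequence from $s$ beginning in $L$. Hence the replay stalls at a first transition $r=r_\varpi$ of $w$ that cannot be appended. This stall is not a non-$w$ deadlock or a lock held across handlers (Assumption~\ref{assume:construction}); steering the greedy choices to keep M1 and M2 means it is likewise not an unavoidable M2 violation; and an $identicalDG$ obstruction for $r$ would have to arise from a new incoming dependence edge, i.e.\ from a dependence of $r$ with an already-placed transition $r_l'\notin R_w$ --- which is the very situation M3 describes. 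Thus at the state $s_m'$ reached, $r$ is enabled, $r\in\nextTrans(s_m')$, and $r$ is dependent with some $r_l'\notin R_w$ already in $v$. By Lemma~\ref{lmm:transition-property-cased}, $r_l'\in\mathit{enabledFuture}(r_1')$ or $r_l'\in\mathit{enabledFuture}(r_j')$ with $r_j'$ a \texttt{post} of $w$; but in the latter case the dependence of $r$ with $r_l'$ would lie inside $w$'s own handler of the event posted by $r_j'$ and so would already be an edge of $w$'s dependence graph, contradicting that $r$'s position in $w$ respects all of $w$'s dependences. Hence $r_l'\in\mathit{enabledFuture}(r_1')$, which is M3.

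Finally, M4 is not a separate obstacle: the previous step exhibits \emph{one} sequence meeting M1--M3, and since $\mathcal{S}_G$ is finite and acyclic there are only finitely many such sequences, so one carrying the maximum number of $w$-transitions exists. The step I expect to be the main obstacle is the stall classification in the middle paragraph: one must rigorously rule out that the greedy replay, while respecting M1 and M2, ever halts for a reason other than a dependence of the pending $w$-transition against a transition of $\mathit{enabledFuture}(r_1')$. This rests on Assumption~\ref{assume:construction} to exclude deadlocks and cross-handler locks among the inserted transitions, on the observation that dependences internal to a handler of a $w$-transition are already recorded in $w$'s dependence graph (so they cannot be the thing that goes wrong), and on a careful bookkeeping argument that the M2 constraints --- which concern reversing a dependence present in $w$, or pulling an $\mathit{enabledFuture}$ transition ahead of a conflicting $w$-transition --- never force a halt once the earlier $w$-transitions have been ordered compatibly with $w$.
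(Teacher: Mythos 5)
Your overall shape matches the paper's: build a sequence satisfying M1, argue it must stall before absorbing all of $w$ (else you get a dependence-covering sequence starting in $L$), and classify the stall as M3. The existence of an M1-sequence and the use of Assumption~\ref{assume:construction} plus finiteness for M4 are fine. But the decisive step --- the stall classification --- contains a genuine gap, and your one-line dismissal of the ``$r_l'\in\mathit{enabledFuture}(r_j')$ for a \texttt{post} $r_j'\in w$'' alternative is based on a false premise. You write that in this case ``the dependence of $r$ with $r_l'$ would lie inside $w$'s own handler of the event posted by $r_j'$ and so would already be an edge of $w$'s dependence graph.'' That cannot be right: by hypothesis $r_l'\notin R_w$, so no dependence involving $r_l'$ is an edge of $w$'s dependence graph. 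The whole point of this case is that the handler of the event posted by $r_j'$ was only \emph{partially} executed in $w$, and $r_l'$ is a freshly pulled-in transition from its unexecuted suffix (or a transition needed to enable one); such an $r_l'$ can perfectly well be dependent with $r$ without that dependence appearing anywhere in $w$. The paper rules this case out by a quite different argument: it strengthens the construction of $z$ to preserve the relative order of all pairs of \texttt{post}s in $w$ to the same queue, and then derives a FIFO contradiction --- the event $e'$ whose unblocking forced the new transitions is fully handled in $w$ while $e$ (posted by $r_j'$) is only partially handled, so $\post{($e'$)}$ precedes $\post{($e$)}$ in $w$, yet in $z$ the order is reversed, violating the imposed post-ordering.

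The same strengthened construction is what lets the paper dispose of your other unresolved case --- a $w$-transition $r$ that must execute before a dependent $r'\in w$ that preceded it in $w$ (and the analogous case for \texttt{post}s to the same queue). You wave this away with ``steering the greedy choices to keep M1 and M2,'' and then concede in your final paragraph that this is the main obstacle; but without the post-ordering invariant there is nothing to steer against, since a chain of forced reorderings can propagate until it flips a pair that was correctly ordered in $w$. The paper's argument is precisely that any such chain must bottom out in a pair of \texttt{post}s to the same queue ordered differently than in $w$, contradicting the construction. So the missing ingredient in your proof is this auxiliary invariant on \texttt{post} orderings (and the longest-such-sequence choice), without which neither case 1.b nor case 2 can be excluded; with it, both exclusions go through and only the $\mathit{enabledFuture}(r_1')$ case survives, which is M3.
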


\begin{proof}
A transition sequence $z = r_1'.v$ satisfying the constraint M1 in Construction~\ref{construct:cased} 
trivially exists in $\mathcal{S}_G$.
One such sequence can be constructed by concatenating $r_1'$ with a prefix of $w$ till the next 
transition to be executed in $w$ is a \abegin{} transition whose event $e$ is blocked 
on $\destof{(r_1')}$, such that the event posted by $r_1'$ must be dequeued and
handled for $e$ to become executable. If there is no such prefix then we must be
able to execute $r_1'.w$ which clearly is a dependence-covering sequence of $w$,
since $r_1'$ being a \post{} transition is independent w.r.t. all the transitions
in $w$ as per the dependence relation defined in Definition~\defdependencerelation.

Now, let $z = r_1'.v$ be a transition sequence in $\mathcal{S}_G$ satisfying the 
constraints of M1. Our main assumption is that there exists no transition
sequence starting from any transition in the backtracking set $L$ at $s$ which
is a dependence-covering sequence of $w$. Then, the sequence $z$ must reach a 
state $s_m'$ where a transition $r \in nextTrans(s_m')$
is such that $r \in w$, $identicalDG(r,w,z,s)$ $=$ \texttt{FALSE} and no extension
$\gamma$ to $z$ can result in $identicalDG(r,w,z.\gamma,s)$. Otherwise, we can 
obtain a sequence $z$ which is a dependence-covering sequence of $w$. Since all the transitions belonging to 
$w$ in the constructed sequence $z$ have their $DG$ identical to that found
in $w$, there can be only two causes for the $DG(r,z,s)$ to be different from that w.r.t. $w$ ---
(1)~a transition $r_l' \not\in w$ present in $v$ is dependent with $r$, or
(2)~$r \in nextTrans(s_m')$ must be executed so as to execute a dependent transition $r' \in w$
such that $r'$ is executed prior to $r$ in $w$. From Lemma~\ref{lmm:transition-property-cased},
if it is case~(1) then there can be two subcases: (1.a)~$r_l' \in \mathit{enabledFuture(r_1')}$, or  
(1.b)~$r_l' \in \mathit{enabledFuture}(r_j')$ where $r_j'$ executed prior to $r_l'$ in $v$ 
is a \post{} transition in $w$. 
 
Let us assume that any sequence $z = r_1'.v$ satisfying the constraints of M1 
can only satisfy the cases 1.b or 2 defined above. We will establish a 
contradiction for this assumption thus establishing the validity of constraint 
M2 and M3 given in Construction~\ref{construct:cased}. Towards this, we construct
a sequence $z$ by strengthening the constraints of M1. Let $z = r_1'.v$ be a 
transition sequence constructed as per M1 as well as a constraint that for every pair of \texttt{post}s
$p_1 \in w$ and $p_2 \in w$ executed in $z$ and posting events to the same event queue,
$index(z,p_1) < index(z,p_2)$ iff $index(w,p_1) < index(w,p_2)$. Let $z$ be the longest
such sequence. 
% Since such a sequence $z$ satisfies the constraints of M1, as assumed above, $z$ 
% reaches a state $s_m'$ where a transition $r \in nextTrans(s_m')$ is such that 
% $r \in w$ and $r$ satisfies one among the cases 1.b or 2. 
% As assumed, $z$ satisfies the constraints of M1 along with the constraint on ordering
% between \texttt{post}s from $w$ to the same destination event queue. 
Then, $z$ reaches 
a state $s_m'$ where a transition $r \in nextTrans(s_m')$ is such that $r \in w$ 
and $r$ satisfies one among the cases 1.b or 2 or (3)~$r$ is a \post{} transition 
which must be executed so as to execute a \post{}  $r' \in w$ posting 
to the same event queue as $r$ 
such that $r'$ is executed prior to $r$ in $w$. If $z$ satisfies condition 3 then 
executing $r$ at $s_m'$ violates the constraint on ordering between \post{} transitions.

The reasoning for cases 2 and 3 are similar. Let us firstly assume that $r$ satisfies
either of case~2 or 3. This indicates the existence of a transition $r' \in w$
either dependent with $r$ (if case 2) or posting to the same event queue as $r$ 
(if case 3) such that $r$ must be executed to eventually execute $r'$, even though $r'$ was executed
prior to $r$ in $w$. 
Let a sequence $\gamma$ from $s_m'$ be the shortest sequence 
such that $r' \in nextTrans(last(z.\gamma))$. Let $P$ be a set of transitions 
such that a transition $p \in P$ if $index(S.z.\gamma,p) \to_{S.z.\gamma} \taskof(r')$
and $p \in \gamma$. Since $r'$ is a transition in $w$, all the transitions in the 
set $P$ too are in $w$.
We argue that either there exists a transition $p \in P$ such that $task(p)$
is blocked on $\threadof(r)$ at state $s_m'$, or attempting to reorder $r$ 
and $r'$ by executing transitions related to $r'$ (for example those in the set $P$ 
or in event handlers executed prior to handlers containing some transitions in $P$)
prior to $r$ breaks the ordering between another pair of transitions from $w$
such that these transitions are either dependent or post to the same event queue.
In the latter case we argue that attempting to reorder the new adversely ordered 
transitions in turn breaks the ordering between some other pair of transitions 
from $w$ and so on. Since the state space we consider is finite and acyclic, 
continuing this process of reordering adversely ordered transitions eventually
causes $r$ to execute prior to $r'$. 
If this is not the case then $r'$ can be executed prior to $r$ which indicates the 
existence of some other sequence longer than $z$  and satisfying the constraints 
M1 and ordering restriction between \texttt{post}s. This violates our assumption of $z$ being 
the longest such sequence.
% If this would not be the case we could have 
% modified the suffix of $z$ so as to obtain a transition sequence which neither 
% satisfies 2 nor 3. 
All these pairs of transitions including $r$ and $r'$ 
were ordered as desired in sequence $w$. This indicates the presence of a pair of 
events posted to the same event queue by a pair of transitions in $z$ such that these 
events were differently ordered in $w$. This contradicts the 
constraints of $z$ which should have preserved the relative ordering between
transitions in $w$ posting to the same event queue. Hence, $z$ does not satisfy
case~2 or 3.

Now consider the case~1.b according to which $r$ is dependent with a transition
$r_l' \in \mathit{enabledFuture}(r_j')$ such that $r_l' \not\in w$ and $r_j'$  
is a \post{} transition in $w$. We choose nearest such $r_j'$ w.r.t. $r_l'$. In 
other words, if $r_l' \in \mathit{enabledFuture}(r_i')$ and $r_l' \in \mathit{enabledFuture}(r_k')$
such that $r_k' \in \mathit{enabledFuture}(r_i')$ where $\{r_i',r_k'\} \subseteq z$ are \post{}
transitions in $w$, then we choose $r_j' = r_k'$. From (i)~the constraints of Definition~\ref{def:enabledfuture}, 
(ii)~knowing that a subset of transitions in $\mathit{enabledFuture}(r_j')$ executed 
in $z$ were not executed in $w$ and (iii)~$r_j'$ being the nearest such \post{} to whose 
$\mathit{enabledFuture}$ set $r_l'$ belongs, we infer that a suffix of the handler
of the \post{} $r_j' \in w$ is not executed in $w$. Let $e$ be the event posted 
by $r_j'$. Either a set of transitions from such a suffix of $e$'s handler are 
included in $z$, or $r_l'$ is a transition executed to eventually
enable a transition from such a suffix of $e$'s handler. In 
either case since these transitions do not belong to $w$, these transitions have 
been added to $z$ so as to make an event $e'$ dequeued in $w$ executable such that $e'$ is 
blocked on $e$'s event queue when $e$ is the executable event. From the concurrency
semantics of the event-driven model considered, an event handler is executed to 
completion before the next event on the corresponding event queue is dequeued.
We know that both the events $e$ and $e'$ are posted %and dequeued in $w$. 
to $w$. Then, 
atleast one among the handlers of $e$ and $e'$ must have been executed to completion. 
Since we have assumed that $e$'s handler is partially executed in $w$, clearly
event $e'$'s handler has been executed to completion in $w$. This implies that
$index(w,\post{($e'$)}) < index(w,\post{($e$)})$ (due to FIFO processing of events). This inference contradicts 
their ordering in $z$ since $e'$ is blocked when $e$ is executable in $z$. 
This in turn violates the constraints assumed on $z$. Thus, we have shown that 
a transition sequence $z$ constructed this way cannot satisfy case~1.b.

% From the above arguments we have established that for a transition sequence 
% $z = r_1'.v$ constructed as per constraint M1 and by respecting the relative 
% ordering of transitions belonging to $w$ and posting to the same event queue, 
% neither of cases 1.b or 2 holds. 

From the above arguments we have established the existence of a transition sequence 
$z = r_1'.v$ constructed as per the constraint M1 for which neither of cases 1.b or 2 holds.
Then, such a transition sequence must satisfy the case 1.a according to which a transition 
$r \in nextTrans(s_m')$ is 
dependent with a transition $r_l' \not\in w$ such that $r_l' \in \mathit{enabledFuture(r_1')}$. 
This shows that a transition sequence satisfying the constraints M1, M2 and M3
of Construction~\ref{construct:cased} exists in $\mathcal{S}_G$.
\end{proof}

\begin{lemma}
\label{lmm:reorder-impossible}
\upshape
Reordering $r$ and $r_l'$ identified by Construction~\ref{construct:cased} by 
reordering some pair of transitions in $v$, is either not possible or will only 
result in a sequence $v'$ (executed from state $s_1'$) such that for $v'$ and 
any extension $\gamma$ to $v'$ one of the 
following holds --- (i)~the dependence graph of a transition 
$r_j' \in R_{v'.\gamma} \cup nextTrans(last(v'.\gamma))$ such that $r_j'$ is executed in 
both $w$ and $z$, becomes non identical to $DG(r_j',w,s)$, or (ii)~there exists a 
pair of dependent transitions $p \in v'.\gamma$ and $q \in nextTrans(last(v'.\gamma))$ such that 
$q$ is executed prior to $p$ in $w$, or (iii)~there exists a 
pair of dependent transitions $p \in v'.\gamma$ and $q \in nextTrans(last(v'.\gamma))$ such that 
$q \in w$, $p \not\in w$ and $p \in \mathit{enabledFuture(p')}$ where $p'$ is a \post{} in $w$.
\end{lemma}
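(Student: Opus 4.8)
The statement is a "no local fix" lemma, and the natural route is a proof by contradiction that exploits the maximality clause M4 of Construction~\ref{construct:cased} together with constraint M2. Assume there is a pair of transitions $a,b$ occurring in $v$ whose reordering produces a transition sequence $v'$ from $s_1'$ in which $r$ is executed before $r_l'$, and assume moreover that $v'$ admits an extension $\gamma$ for which \emph{none} of the three alternatives (i)--(iii) holds; concretely, every transition common to $w$ and $z$ keeps its dependence graph along $v'.\gamma$, and $v'.\gamma$ exhibits no dependent pair $(p,q)$ with $q\in nextTrans(last(v'.\gamma))$ that is adversely ordered either between two $w$-transitions, or between a $w$-transition and a transition pulled in from the $\mathit{enabledFuture}$ set of a \texttt{post} of $w$. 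The goal is to show no such $v'.\gamma$ can exist.

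First I would reduce ``reordering $r$ and $r_l'$ by swapping a pair in $v$'' to a case analysis on the nature of the swapped pair $a,b$, using the fact that $r_l'\in\mathit{enabledFuture}(r_1')$ (so, by Lemma~\ref{lmm:transition-property-cased}, $r_l'$ is reached through the post-chain of $r_1'$, possibly together with \texttt{post}s of $w$): pushing $r$ ahead of $r_l'$ forces $a$ or $b$ onto an enabling chain of $r_l'$. If $a$ and $b$ are both \texttt{post}s of $w$ to the same queue, their swap changes the FIFO order (Definition~\ref{def:happens-before}) of two $w$-event handlers; either those handlers contain dependent $w$-transitions, triggering alternative (i) or (ii), or they do not, in which case by constraint M1 the swap is already admissible and the resulting sequence, prefixed by $r_1'$, contains $r$ in addition to all the $w$-transitions of $z$, contradicting M4 (the existence of a well-defined maximal such $z$ being guaranteed by Lemma~\ref{lmm:witness-cased}). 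If the swapped pair pushes a transition $p\notin w$ lying in the $\mathit{enabledFuture}$ set of a \texttt{post} of $w$ before a transition $q\in w$, that is exactly alternative (iii); if it swaps two dependent $w$-transitions, we obtain alternative (i) (a changed dependence graph) or (ii) (an adverse ordering against a next transition); and if the swap touches only transitions outside $w$, then Assumption~\ref{assume:construction} (locks released within a task, disabled non-$w$ transitions needing no $w$-transition to enable, no deadlock cycles among non-$w$ transitions) shows the swap cannot change which of $r,r_l'$ fires first, so it does not achieve the reordering at all.

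Throughout, the load-bearing fact is that $z$ is the \emph{longest} sequence attaining the maximum number of $w$-transitions subject to M1--M3 (M4): whenever a candidate swap is harmless --- violates none of (i)--(iii) and keeps M1 and M2 intact --- the resulting sequence strictly dominates $z$ in $w$-transitions, since it now also carries $r$, which is the contradiction. Constraint M2, which forbids a prefix of $z$ from being extended into a configuration where a badly-ordered dependent pair can only be repaired by breaking another badly-ordered dependent pair, is what closes the ``chase the reordering around the dependence structure'' escape: any iterative attempt to repair the order of $r$ and $r_l'$ while dodging (i)--(iii) would manufacture precisely such a forbidden configuration, and finiteness and acyclicity of $\mathcal{S}_G$ bound the iteration.

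The hard part will be the bookkeeping in the reduction step: precisely tracking how a swap of $a,b$ inside $v$ propagates through $\to_{v'}$, how the $\mathit{enabledFuture}$ sets of the \texttt{post}s on the enabling chains of $r_l'$ change, which transitions survive into $v'.\gamma$, and then making rigorous the claim that a harmless swap always yields a sequence with one more $w$-transition than $z$. Getting the interaction between FIFO-induced handler orderings, the enabling chains captured by $\mathit{enabledFuture}$ (Lemma~\ref{lmm:transition-property-cased}), and the maximality clause M4 exactly right is where the subtlety lies; the remainder is a finite, acyclicity-driven induction on the length of $v'.\gamma$.
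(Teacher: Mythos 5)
Your core strategy --- assume the reordering succeeds while avoiding (i)--(iii), observe that the repaired sequence carries strictly more transitions of $w$ than $z$ (it now includes $r$ with its dependence graph intact), and contradict the maximality clause M4 of Construction~\ref{construct:cased} --- is exactly the paper's argument. The paper's proof is considerably leaner, though: it does not case-split on the nature of the swapped pair $a,b$ at all, but reasons directly about the resulting sequence $v'$ and its extensions. Your case analysis (posts to the same queue, dependent $w$-transitions, non-$w$ transitions from $\mathit{enabledFuture}$ sets, swaps entirely outside $w$) is extra scaffolding the paper avoids, and some of it is shaky --- in particular the claim that Assumption~\ref{assume:construction} rules out a purely-outside-$w$ swap from affecting the order of $r$ and $r_l'$ is not justified, although it lands on an outcome (``the reordering is not achieved'') that the lemma permits anyway.

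There is one genuine gap. The contradiction with M4 only applies if the repaired-and-extended sequence is itself a \emph{candidate} for the maximality comparison, i.e.\ satisfies M1--M3, and in particular M3: some extension must eventually hit a state where a $w$-transition in $\mathit{nextTrans}$ is dependent with a non-$w$ transition drawn from $\mathit{enabledFuture}(r_1')$. Your plan assumes every ``harmless'' swap yields such a candidate. But there is a second branch: the extension may \emph{never} hit such a state, in which case the sequence does not satisfy M3, M4 is not contradicted, and instead the extension produces a dependence-covering sequence of $w$ beginning with $r_1' \in L$ --- which contradicts the standing assumption of the induction step (that no transition in $L$ starts a dependence-covering sequence of $w$), not the construction. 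The paper's proof explicitly splits into these two cases (``If not, an extension to $r_1'.v'$ will result in a dependence-covering sequence for $w$''); without that branch your argument does not close.
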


\begin{proof}
We prove this by contradiction. Assume that a resulting transition sequence $v'$
executes $r$ before $r_l'$ such that $identicalDG(r,w,r_1'.v',s)$ holds, and without resulting 
in any scenario listed in (i), (ii) or (iii) above. Then such a transition sequence 
$r_1'.v'$ clearly has more number of transitions from $w$ compared to $z = r_1'.v$, with their dependence
graphs consistent with that found in the context of $w$. Also, since no 
extension to $v'$ satisfying (ii) or (iii) is possible when $r$ and $r_l'$ are reordered as per our assumption,
an extension to $r_1'.v'$, say $\gamma$, must hit a state where a transition 
$r' \in nextTrans(last(r_1'.v'.\gamma))$ belonging to $w$ is dependent with a transition 
$r'' \in v'$ such that $r'' \not\in w$. If not, an extension to $r_1'.v'$ will result in a dependence-covering
sequence for $w$. Then, $r_1'.v'.\gamma$ is a transition sequence satisfying the constraints
M1, M2 and M3 of Construction~\ref{construct:cased}, and having more transitions from $w$ than $z$. This implies that $z$
did not satisfy the constraint M4 of Construction~\ref{construct:cased}. Thus,
one of the properties (i), (ii) or (iii) must hold on any such sequence $r_1'.v'$.
\end{proof}

\begin{lemma}
\label{lmm:r1P-must-reorder-post} 
\upshape
Transition $r_1'$ must be reordered w.r.t. some transition $r_{\mu}' \in v$ (where $z = r_1'.v$)
belonging to $w$ such that $r_{\mu}'$ posts an event to the same destination 
event queue as $r_1'$, so as to obtain a transition sequence $z'$ from the state $s$ which satisfies
the following properties --- (1)~either $r$ is executed prior to $r_l'$ in $z'$ 
or only $r$ is executed in $z'$, (2)~every transition $r_j'$ in $z$ which also belongs to 
$w$ is executed in $z'$ such that $identicalDG(r_j',w,z',s)$, (3)~there exists 
atleast one extension $\gamma$ to $z'$ where neither of the following hold: (a)~there exists
a pair of dependent transitions $p \in z'.\gamma$ and $q \in nextTrans(last(z'.\gamma))$ 
such that $q$ is executed prior to $p$ in $w$, or (b)~there exists a pair of dependent 
transitions $p \in z'.\gamma$ and $q \in nextTrans(last(z'.\gamma))$ such that $q \in w$, 
$p \not\in w$ and $p \in \mathit{enabledFuture(p')}$ where $p'$ is a \post{} in $w$.
\end{lemma}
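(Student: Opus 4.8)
The plan is to proceed by exhausting the reordering operations available to EM-DPOR and then arguing by contradiction that none of them, other than moving $r_1'$ against a \post{} operation belonging to $w$, can produce a sequence with the three required properties. First I would record the structural fact that $r_1'$, being a \post{} operation, is independent of every transition by Definition~\ref{def:dependence-relation}: \texttt{Explore} therefore never calls \texttt{FindTarget} to reorder $r_1'$ directly, and the only mechanism by which EM-DPOR ever changes the position of $r_1'$ in an explored sequence is by reordering it against another \post{} operation targeting the queue $\destof(r_1')$ (Step~1 of \texttt{FindTarget} on diverging posts, or Step~3c of \texttt{ReschedulePending}). Hence it suffices to show (a) that a sequence $z'$ satisfying (1)--(3) cannot exist with $r_1'$ left where Construction~\ref{construct:cased} places it, i.e.\ ahead of every \post{} of $w$ to $\destof(r_1')$, and (b) that the \post{} operation $r_1'$ must be slid behind is one belonging to $w$ (and hence, by Construction~\ref{construct:cased}(M1,M4), present in $v$).

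For (a) I would combine Lemma~\ref{lmm:reorder-impossible} with the fact that $r_l' \notin w$ while $r_l' \in \mathit{enabledFuture}(r_1')$. By Lemma~\ref{lmm:reorder-impossible}, reordering $r$ and $r_l'$ by touching only transitions of $v$ is either impossible or forces one of the three bad patterns of that lemma, each of which violates property~(2) or property~(3); so the position of $r_1'$ itself must change. The reason $r_l'$ is present at all is, by Definition~\ref{def:enabledfuture} and Lemma~\ref{lmm:transition-property-cased}, that the event posted by $r_1'$ (or a downstream event of a \post{}-chain rooted at $r_1'$) is enqueued on $\destof(r_1')$ and its handler must be dispatched --- pulling $r_l'$ in --- to make some event $e$, dequeued in $w$ but blocked in $z$, executable (Construction~\ref{construct:cased}, M1). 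Consequently, as long as that handler is dispatched on $\destof(r_1')$ before the handler in which $r$ runs, the dependent pair $(r_l',r)$ occurs in an order that never occurs in $w$, so the dependence graph of $r$ cannot match $DG(r,w,s)$, contradicting the conjunction of (1)--(3).

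For (b) I would use that $r_1' \notin w$ and that event identifiers are unique, so the event posted by $r_1'$ is never posted in $w$; hence in $w$ the event $e$ is made executable without any transition of $r_1'$'s handler-chain, i.e.\ on $\destof(r_1')$ the relative order of the $w$-handlers --- in particular the one containing, or leading to, $r$ --- is already pinned down by the post order of $w$'s own \post{}s to $\destof(r_1')$. I would then extract from that post order a \post{} $r_\mu' \in w$ to $\destof(r_1')$ whose event is dispatched, in $w$, in a position that by the FIFO rule (rule~2 of Definition~\ref{def:happens-before}) forces $r_1'$'s event's handler --- and thus its whole downstream \post{}-chain --- to be dispatched after $r$ once $r_\mu'$ precedes $r_1'$; by Construction~\ref{construct:cased}(M1,M4) this $r_\mu'$ occurs in $v$ with $identicalDG(r_\mu',w,z,s)$. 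Sliding $r_1'$ behind $r_\mu'$ then (i) restores the dependence graph of $r$, (ii) leaves every \post{}-order among $w$-transitions on $\destof(r_1')$ untouched, and (iii) --- by the case-2/case-3 exclusion argument already used in the proof of Lemma~\ref{lmm:witness-cased} --- introduces no new adversely ordered pair of $w$-transitions, yielding a $z'$ that meets (1)--(3). Combined with the necessity shown in (a), this is exactly the asserted statement.

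I expect part (b) to be the main obstacle: pinning down that the \post{} operation one must reorder $r_1'$ against really is in $w$ (not merely one of the enabling \post{}s pulled into $z$ by M1 or a downstream \post{} of $r_1'$, against which sliding $r_1'$ is either causally impossible or useless), and that moving $r_1'$ behind it is ``safe'' --- i.e.\ preserves all $w$-dependences and all \post{}-orders on $\destof(r_1')$ while not re-introducing the situation Lemma~\ref{lmm:reorder-impossible} warns about. The safety hinges on the FIFO rule together with the observation that, because $r_1'$ is a \post{}, every transition of its handler-chain relevant here lies outside $w$, so the relative order of the $w$-handlers on $\destof(r_1')$ in $w$ is determined solely by $w$'s own \post{}-order, which $r_\mu'$ respects by construction.
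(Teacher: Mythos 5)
Your necessity argument tracks the paper's proof closely. Like the paper, you invoke Lemma~\ref{lmm:reorder-impossible} to rule out repairing the $(r_l',r)$ inversion by permuting $v$ alone, and you reduce the question of which reorderings of $r_1'$ can matter to \post{} operations targeting $\destof(r_1')$; the paper phrases this second step as a pure commutativity observation (reordering $r_1'$ against a non-\post{} or a \post{} to a different queue alters neither the final global state nor any queue configuration, so it is a no-op), whereas you route it partly through what the EM-DPOR implementation is able to do to $r_1'$ --- since the lemma is a claim about $\mathcal{S}_G$, the semantic commutativity argument under Definition~\ref{def:dependence-relation} (which you also invoke) is the one that should carry the weight. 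Your justification that the relevant \post{} belongs to $w$, via uniqueness of event IDs and the fact that $w$ achieves properties (1)--(3) without any transition of $r_1'$'s handler chain, is a more detailed version of a step the paper dispatches by simply exhibiting $w$ as the witness.

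The genuine gap is part (b)'s sufficiency claim: that sliding $r_1'$ behind a single well-chosen $r_\mu'$ ``yields a $z'$ that meets (1)--(3)''. The lemma and the paper's proof assert only necessity (``must be reordered \ldots so as to obtain''); the paper nowhere claims that one swap restores $identicalDG(r_j',w,z',s)$ for every $w$-transition or avoids reintroducing the patterns of Lemma~\ref{lmm:reorder-impossible}. Indeed the entire \gtree{} construction that follows (Lemmas~\ref{lmm:gtree-exists} and \ref{lmm:gtree-emdpor-property}) exists precisely because reordering one pair of \post{}s typically exposes a fresh cascade of adversely ordered dependent pairs that must themselves be reordered. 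Your FIFO argument --- that putting $r_\mu'$ before $r_1'$ forces $r_1'$'s downstream chain after $r$ --- presupposes a causal link between $r_\mu'$'s handler and $r$ that need not hold for an arbitrary same-queue \post{} of $w$, and you correctly flag this as the obstacle; but the repair is not to strengthen that argument, it is to drop items (i)--(iii) of part (b) altogether and prove only the necessity direction, which your part (a) together with the commutativity and event-uniqueness observations already establishes.
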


\begin{proof}
From Lemma~\ref{lmm:reorder-impossible} we have established that attempting to 
reorder $r_l'$ and $r$ by reordering any pair of transitions in the sequence $v$
(executed from state $s_1'$ reached on executing $r_1'$ from state $s$)
including $r_l'$ and $r$ themselves, can only result in a sequence which 
satisfies conditions (i), (ii) or (iii) listed in Lemma~\ref{lmm:reorder-impossible}
which are clearly not consistent with the constraints 1, 2 and 3 listed in
this lemma. The transition sequence $w$ executed from state $s$ satisfies all of the
conditions 1, 2 and 3 listed above. However, transition $r_1'$ is not executed in 
$w$. Hence, some transition, say $r_{\mu}'$, belonging to $w$ and executed in $v$ must be reordered w.r.t. $r_1'$ 
so as to reorder $r$ and $r_l'$ with neither breaking the dependence graphs of  
transitions in $z$ belonging to $w$ nor resulting in scenarios described by 3(a) or 3(b)
listed in the lemma. Then, the transition $r_{\mu}'$ too must be a transition posting to 
the same event queue as the destination event queue of $r_1'$. This is because 
if $r_{\mu}'$  is a non-\post{} transition or $r_{\mu}'$  is a transition posting to some other 
queue, then reordering $r_{\mu}'$ and $r_1'$ neither alters the final global state 
reached nor the final event queue configuration which affects the order 
between event handlers, because $r_1'$ would then commute with such a $r_{\mu}'$. 
\end{proof}

\paragraph{Intuition to prove that \emdpor\ explores a dependence-covering sequence for case~D.}
% Based on the property established by Lemma~\ref{lmm:reorder-impossible} and using 
% sequence $z$ as an initial sequence, we generate a set $\Gamma$ of transition 
% sequences from state $s_1'$ reached on executing $r_1'$ such that each transition
% sequence in $\Gamma$ identifies a transition in $w$ whose dependence graph is not 
% identical to that seen in $w$. From induction hypothesis H1, 
% \emdpor\ is guaranteed to explore dependence-covering sequences of all the 
% sequences in the set $\Gamma$ since they are executed from a state reached on 
% executing a transition in the backtracking set $L$. We will show that \emdpor\ inspects dependence-covering 
% sequences of transition sequences in $\Gamma$ systematically which eventually 
% leads \emdpor\ to identify $r_{\mu}' \in w$ (identified in Lemma~\ref{lmm:r1P-must-reorder-post}) to be 
% reordered w.r.t. $r_1'$. This will result in the addition of 
% $r_\mu'$ or some transition in $w$ executed prior to $r_\mu'$ to the set $L$, thus 
% contradicting the assumptions on $L$.
% 
% \paragraph{Other way of presenting the intuition}
In the cases B and C the transition $r_1'$ belonging to the backtracking set $L$ and
executed at state $s$, was a non-\post{} transition. Hence a transition in $w$, say $r$, 
which was dependent with $r_1'$ could be easily identified leading to a \emph{non} dependence-covering 
sequence of $w$. We then argued that \emdpor\ would 
attempt to reorder $r_1'$ and $r$, and add a backtracking choice at state $s$ which
would break the property assumed on the backtracking set $L$ at $s$. Thus we were
able to prove the existence of a dependence-covering sequence of $w$ starting from
a transition in the set $L$, through proof by contradiction.

In case~D however $r_1' \in L$ executed at state $s$ is a \post{} transition, which 
makes it harder to identify a transition in $w$ that must be reordered with $r_1'$ 
to obtain a dependence-covering sequence of $w$. This is because even though ordering 
between events posted to the same queue affect the ordering between dependent 
transitions, the ordering between transitions posting to the same event queue
are not directly captured in a dependence-covering sequence. Hence, the 
influence of $r_1'$ on the non-\post{} transitions in $w$ can only be identified
through the interference from non-post{} transitions in the $\mathit{enabledFuture}$ set of 
$r_1'$. Lemma~\ref{lmm:r1P-must-reorder-post} establishes that $r_1'$ must be reordered
with a transition $r_{\mu}' \in v$ posting to the same event queue as the destination
of $r_1'$ where $r_{\mu}' \in w$, so as to explore more transitions from $w$ in the 
resulting sequence but without resulting in adversarial scenarios i, ii and iii listed in 
Lemma~\ref{lmm:reorder-impossible}. This could eventually lead to a dependence-covering
sequence of $w$. In order to identify such an $r_{\mu}'$, we systematically generate 
a set $\Gamma$ of transition sequences starting from $z$ by flipping the ordering 
between certain dependent transitions and transitions posting to the same event queues
belonging to $R_z \cup nextTrans(last(S.z))$. A few pairs of dependent transitions
and transitions posting to the same event queues and seen in transition sequences of the
set $\Gamma$, are encoded as a tree called \gtree. Intuitively the \gtree\ encodes
all pairs of dependent transitions and \post{} operations explored in the subspace
reached from $s_1'$ (state reached on executing $r_1'$) such that exploring every pair 
of transitions in \gtree\ in a manner consistent to obtain a dependence-covering
sequence of $w$, requires reordering $r_{\mu}'$ with $r_1'$. We will then show that
\emdpor\ too is capable of identifying all the transition pairs of \gtree\ ultimately
leading to the identification that $r_{\mu}'$ must be reordered with $r_1'$.
\\
\\
To aid the proof we define a tree called \gtree\ which can encode certain transition
pairs which are of interest to the proof.

\begin{definition}
\label{def:gtree}
\upshape
\gtree\ is a tree with each of its nodes being a set of ordered pairs of transitions
and its root node being $\{(r_1',r_{\mu}')\}$. 
Let $\Gamma_{node} = \{(c_1,d_1),(c_2,d_2) \ldots (c_{\chi},d_{\chi})\}$
be a non-root node in the \gtree. Then, 
\begin{enumerate}[label=N\arabic*.]
\item For all $i \in [1,\chi]$, $d_i \in w$ and $c_i$ may or may
not be a transition of $w$ such that if $c_i \in w$ then $d_i$ is executed prior to $c_i$
in $w$. If $c_i \not\in w$ then either $c_i \in \mathit{enabledFuture}(r_1')$,
or $c_i \in \mathit{enabledFuture}(r_j)$ where $r_j$ is a \post{} transition in $w$.

\item Each pair $(c_i,d_i)$ is such that either $c_i$ and $d_i$ are dependent
transitions, or $c_i$ and $d_i$ are transitions posting events to the same event queue.

\item If $\Gamma_{node}$ is a leaf node of \gtree\ then it only contains pairs of 
dependent transitions as its members. If it is a non-leaf node then it contains atleast 
one pair of transitions posting to the same event queue. 

\item Transition pairs $c_i$ and $d_i$ can either be from two 
different threads or two different handlers on the same thread. If it is the latter
then $\Gamma_{node}$ is a singleton set. 
However, if $\Gamma_{node}$ has multiple transition pairs then 
each pair $(c_i,d_i)$ are such that $\threadof(c_i) \neq \threadof(d_i)$. 

\item There exists a subspace $\mathcal{S}_{cd}$ of $\mathcal{S}_G$ reachable from the state $s_1'$ (reached 
on executing $r_1' \in L$), such that in $\mathcal{S}_{cd}$ it is not possible for $d_i$ from every pair of 
transitions in $\Gamma_{node}$ to execute prior to $c_i$.
% so that their relative ordering becomes consistent with that seen in $w$. 
In other words, there always exists
one pair of transitions $c_i$ and $d_i$ such that $d_i$ cannot be executed prior
to $c_i$ in $\mathcal{S}_{cd}$ even when for all the other pairs $(c_j,d_j)$, $j \neq i$, 
$d_j$ executes prior to $c_j$ in $\mathcal{S}_{cd}$. Attempting to reorder $(c_i,d_i)$
within $\mathcal{S}_{cd}$ will alter the order between another pair of transitions 
in $\Gamma_{node}$, thus making the resultant transition sequence \emph{non} dependence-covering
w.r.t. $w$.

\item In the subspace $\mathcal{S}_{cd}$, the transitions $c_1$ and $d_{\chi}$ satisfy one of the following
criteria --- (a)~$c_1$ and $d_{\chi}$ are transitions of two different event 
handlers on the same thread such that $\eventof(\taskof(d_{\chi}))$ is blocked 
when $\eventof(\taskof(c_1))$ is executable, or (b)~a transition prior to $d_{\chi}$ in the task of 
$d_{\chi}$ is enabled by a transition in the event handler blocked on $\threadof(c_1)$
when $c_1$ is the next transition on that thread, or (c)~$c_1$ needs to be executed 
to enable a transition $q$ blocked in an event handler such that either $\eventof(\taskof(d_{\chi}))$
is blocked on $\threadof(q)$ when $q$ is the next transition on that thread, or
a transition prior to $d_{\chi}$ in the task of $d_{\chi}$ is enabled by a transition in the event 
handler blocked on $\threadof(q)$ when $q$ is the next transition on that thread.

\item Let $e_c$ and $e_d$ respectively be the executable event related to $c_1$
and the blocked event related to $d_{\chi}$ (as identified by N6 above) in the 
subspace $\mathcal{S}_{cd}$. The events $e_c$ and $e_d$ posted 
to the same thread are such that, either $e_c$ is not posted in the transition sequence
$S.w$ whereas $e_d$ is posted in $w$ or $e_d$ is posted prior to $e_c$ in $w$. 
%%% the below line is more like an observation...so need not be part of properties %%%
% Then clearly, \post{($e_c$)} and \post{($e_d$)} are executed in the state space
% rooted in the state $s_1'$ reached on executing $r_1'$, since their ordering
% is reversed compared to that found in $w$. 

\item The parent node of $\Gamma_{node}$ in \gtree\ is a node $\Gamma_{par}$ which contains 
$(\post{($e_c$)},\post{($e_d$)})$
as a transition pair. Reordering \post{($e_c$)} and \post{($e_d$)} results in a state 
space where every transition $d_i$ in the transition pairs of the $\Gamma_{node}$
can be executed prior to corresponding $c_i$ thus making them consistent w.r.t. 
ordering observed in $w$. We refer to $\Gamma_{node}$ as the \emph{child of $\Gamma_{par}$
obtained on exploring {\upshape\post{($e_c$)}} prior to {\upshape\post{($e_d$)}}}.

\item $\Gamma_{node}$ has the same number of child nodes as the number of 
pairs of \texttt{post}s in  $\Gamma_{node}$.

\item Every path in the tree from the root to a node containing atleast one pair of dependent
transitions encodes a \emph{non} dependence-covering sequence of $w$ from state $s$, 
say $v_k$, which identifies one pair of dependent transitions which either are ordered 
differently compared to their ordering in $w$ or form a new incoming dependence into 
a transition in $w$.
\end{enumerate}
\end{definition}

We can systematically construct certain interesting \emph{non} dependence-covering transition sequences
of $w$ using \gtree\ paths, each of which have a transition $b$
belonging to $w$ whose $DG$ over the constructed transition sequence does not 
match $DG(b,w,s)$.

\begin{construction}
\label{const:gtree-seq}
\upshape
A transition sequence $v_k$ is constructed using a path of \gtree\ by performing
steps I, II and III below. The order between those transitions in $v_k$ which 
are not explicitly specified by the step~II below can be arbitrary but valid w.r.t. the 
orders fixed for transitions reasoned in step~II and consistent 
w.r.t. dependence graph over $w$.\\
\underline{Step I.}~Start from the root of the \gtree. \\
\underline{Step II.}~At each node $\Gamma_{node} = \{(c_1,d_1),(c_2,d_2) \ldots (c_{\chi},d_{\chi})\}$, pick a pair of transitions
$(c_i,d_i)$ such that $c_i$ will be executed prior to $d_i$ in the sequence $v_k$ 
while the order of other transition pairs (if can be executed in $v_k$) are consistent
w.r.t. $w$ \ie\ $d_j$ is executed prior to $c_j$ for $j \neq i$. \\
\underline{Step III.}~If the pair $(c_i,d_i)$ selected are non-\post{} dependent 
transitions then the construction of $v_k$ is complete, else move to the child 
obtained on exploring $c_i$ prior to $d_i$ and repeat Step~II. 
\end{construction}

\paragraph{Observations for Construction~\ref{const:gtree-seq}.}
From the step~III of Construction~\ref{const:gtree-seq} we note that the process 
of constructing a transition sequence of interest can stop even at an intermediate node.
A transition sequence $v_k$ identified by this construction has one transition $d_i \in w$ (corresponding
to the last pair of transitions selected from a \gtree\ node) which has dependence
with a prior executed transition $c_i$ such that either $c_i \not\in w$ or
$d_i$ is executed prior to $c_i$ in $w$. Hence, this construction cleanly identifies
a pair of transitions in $v_k$ which need to be reordered so as to eventually
obtain a dependence-covering sequence of $w$.

\begin{lemma}
\label{lmm:gtree-exists}
\upshape
A \gtree\ defined by Definition~\ref{def:gtree} can be constructed in the state space
$\mathcal{S}_G$. 
\end{lemma}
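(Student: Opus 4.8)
The plan is to build the \gtree\ by a top-down recursive construction, starting from the root $\{(r_1',r_\mu')\}$ and expanding one node at a time, and to derive finiteness from $\mathcal{S}_G$ being finite and acyclic together with the boundedness of every post-chain in a trace. Here $r_1'\in L$ is the \post\ transition of Case~D and $r_\mu'\in v$ is the \post\ transition to $\destof(r_1')$ lying in $w$ whose existence is guaranteed by Lemma~\ref{lmm:r1P-must-reorder-post}. Throughout I would invoke the standing assumption of Section~\ref{sec:caseD} --- that no transition in $L$ begins a dependence-covering sequence of $w$ from $s$ --- to guarantee that each expansion step meets a genuine, irremovable obstruction, so that the construction never terminates prematurely with a dependence-covering sequence already in hand.

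First I would spell out the expansion step. Given a node $\Gamma_{node}$ and a \post\ pair \post{($e_c$)}, \post{($e_d$)} in it (at the root this pair is $r_1'$ and $r_\mu'$), reordering these two \post\ operations changes, through the FIFO rule of Definition~\ref{def:happens-before}, the relative execution order of the handlers of $e_c$ and $e_d$, and hence singles out a subspace $\mathcal{S}_{cd}$ of $\mathcal{S}_G$ reachable from $s_1'$ in which $e_d$ is handled before $e_c$. By an argument parallel to Lemma~\ref{lmm:reorder-impossible}, together with the standing assumption, this reordering cannot by itself realize a dependence-covering sequence of $w$: some residual obstruction --- of one of the forms (i), (ii), (iii) of that lemma --- survives in $\mathcal{S}_{cd}$ and in all its extensions. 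I would collect the witnessing pairs of transitions into the child node $\Gamma_{child}$ and then check conditions N1--N10 one at a time. Condition N1, and the cross-thread/same-thread dichotomy of N4, follow from Lemma~\ref{lmm:transition-property-cased} and the observation that the next transition of a thread at a state is unique (so a same-thread obstruction forces a singleton node); N2 just records that each pair is dependent or posts to a common queue; N5 is exactly the irreducibility of the obstruction inside $\mathcal{S}_{cd}$, which holds because otherwise one could keep flipping orderings and reach a dependence-covering sequence of $w$. The blocked--executable chain shape demanded by N6 and N7 is the pattern of Figure~\ref{fig:reschedule-structure}: the only channel through which the order of two events to a common queue can be forced is the FIFO rule acting through a chain of handlers, each blocked while the next in the chain is executable; tracing this chain from $c_1$ back to $d_\chi$ yields the executable event $e_c$ and the blocked event $e_d$, whose posting transitions are ordered in $w$ as N7 demands, because the handler of $e_d$ is dequeued earlier in $w$ and hence, by FIFO, \post{($e_d$)} precedes \post{($e_c$)} in $w$. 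This is the same case split --- cases 1.a, 1.b, 2, 3 --- already carried out in the proof of Lemma~\ref{lmm:witness-cased}. Finally, if $\Gamma_{child}$ contains a \post\ pair we recurse on it, obtaining a non-leaf node with one child per \post\ pair (N3, N8, N9); otherwise it contains only dependent pairs and is a leaf, and N10 holds since the root-to-leaf path, read off through Construction~\ref{const:gtree-seq}, names a non-dependence-covering sequence of $w$.

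The remaining work is termination. I would argue that the \post\ pairs appearing in the tree admit a well-founded measure: resolving a \post\ pair passes to a strictly larger subspace in which that pair cannot reappear below it, and only finitely many \post\ transitions occur in the relevant sequences, so every root-to-leaf path has bounded length; branching is finite by N9, so the whole tree is finite. Every path must bottom out at a leaf because the post-chains have bounded length, so the residual obstruction can no longer be relieved by reordering \post\ operations and must finally be a pair of dependent non-\post\ transitions. I expect the main obstacle to be the rigorous discharge of conditions N5--N7 in full generality: one must show that whenever reordering a \post\ pair fails to eliminate the current obstruction, the residual obstruction again has precisely the prescribed blocked--executable chain structure and again terminates in a \post\ pair to a common queue (or a dependent pair). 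This is where essentially all the case analysis resides, and it leans on the two facts already in hand --- that FIFO ordering is the sole mechanism by which the order of \post\ operations to a common queue can influence the order of dependent transitions, and the $\mathit{enabledFuture}$ characterization of out-of-$w$ transitions from Lemma~\ref{lmm:transition-property-cased}.
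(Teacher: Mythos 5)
Your high-level ingredients are the right ones --- Lemma~\ref{lmm:reorder-impossible} for the persistence of obstructions, Lemma~\ref{lmm:transition-property-cased} for N1/N4, the blocked--executable chains of Figure~\ref{fig:reschedule-structure} for N6--N7, and finiteness and acyclicity for termination --- but the top-down direction of your construction hides a genuine gap. By N6--N8, the \post{} pair $(\post{($e_c$)},\post{($e_d$)})$ sitting in a parent node is \emph{determined by} its child: $e_c$ and $e_d$ are the executable and blocked events obtained by tracing the blocked--executable chain between the extreme transitions $c_1$ and $d_\chi$ of the child node, and those transitions are only known once the child has been fully populated. A top-down expansion therefore cannot justify which \post{} pairs to place in a node before descending, which is exactly the step you defer to ``collect the witnessing pairs''. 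The paper's proof goes the other way: it anchors the construction on the one pair it already knows exists --- the dependent pair $(r_l',r)$ guaranteed by constraint M3 of Construction~\ref{construct:cased} --- together with the \post{}-chain $p_\eta\ldots p_2.p_1$ (with $p_\eta=r_1'$) extracted from $r_l'\in\mathit{enabledFuture}(r_1')$, which fixes the first components of the leftmost-branch pairs a priori; each node is then populated by iteratively flipping its most recently added pair while holding the earlier ones to $w$'s order, and the moment this process cycles back to the node's first pair is precisely the signal that a pair of handlers (hence the parent's \post{} pair) must be reordered. Your proposal omits this intra-node iteration entirely, so it never explains how a node acquires more than one pair, nor how the second component of each intermediate \post{} pair is discovered.

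There is also a directional slip in your expansion step: you place into the child the obstructions that \emph{survive after} reordering $\post{($e_c$)}$ and $\post{($e_d$)}$, in the subspace where $e_d$ is handled first. By N5 and N8 the child is the node obtained on exploring $\post{($e_c$)}$ \emph{prior to} $\post{($e_d$)}$, its pairs are the ones irreducibly mis-ordered in that un-reordered subspace, and reordering the parent's \post{}s is what \emph{resolves} them; at the root the reordered subspace is not even reachable from $s_1'$, whereas Definition~\ref{def:gtree} requires $\mathcal{S}_{cd}$ to lie below $s_1'$. Repairing both points essentially forces you back onto the paper's bottom-up construction, so I would not count this as an alternative proof in its current form.
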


\begin{proof}
We prove this by giving a sketch for constructing \gtree\ starting with the transition
sequence $z = r_1'.v$ constructed as per Construction~\ref{construct:cased}. 

Figure~\ref{fig:sequence-space} pictorially represents the \gtree.
%represents the \gtree\ generated with the help of the transition sequence $z$. 
The variables $a$ and $b$ in the root stand respectively
for $r_1'$ and $r_{\mu}'$. We annotate the only child of the root node as $\Gamma_0$. 
Except the root node, every other node $\Gamma_{[k_1\ldots k_j]}$
contain transition pairs of the form $(a_{[k_1\ldots k_j]i},b_{[k_1\ldots k_j]i})$,
such that exploring the transition $a_{[k_1\ldots k_{j-1}]k_j}$ prior
to $b_{[k_1\ldots k_{j-1}]k_j}$ in a transition sequence results in the discovery
of the subtree rooted at $\Gamma_{[k_1\ldots k_j]}$. This makes 
$(a_{[k_1\ldots k_{j-1}]k_j},b_{[k_1\ldots k_{j-1}]k_j})$ the transition pair
corresponding to $\Gamma_{[k_1\ldots k_j]}$ in its parent node. The last pair of
transitions in the set corresponding to $\Gamma_{[k_1\ldots k_j]}$ is identified
as $(a_{[k_1\ldots k_j](ab)_{k_1\ldots k_j}},b_{[k_1\ldots k_j](ab)_{k_1\ldots k_j}})$, where 
$(ab)_{k_1\ldots k_j}$ symbolically denotes the count of the number of transition
pairs in the node $\Gamma_{[k_1\ldots k_j]}$.

\begin{figure*}
\centering
\includegraphics[scale=0.42]{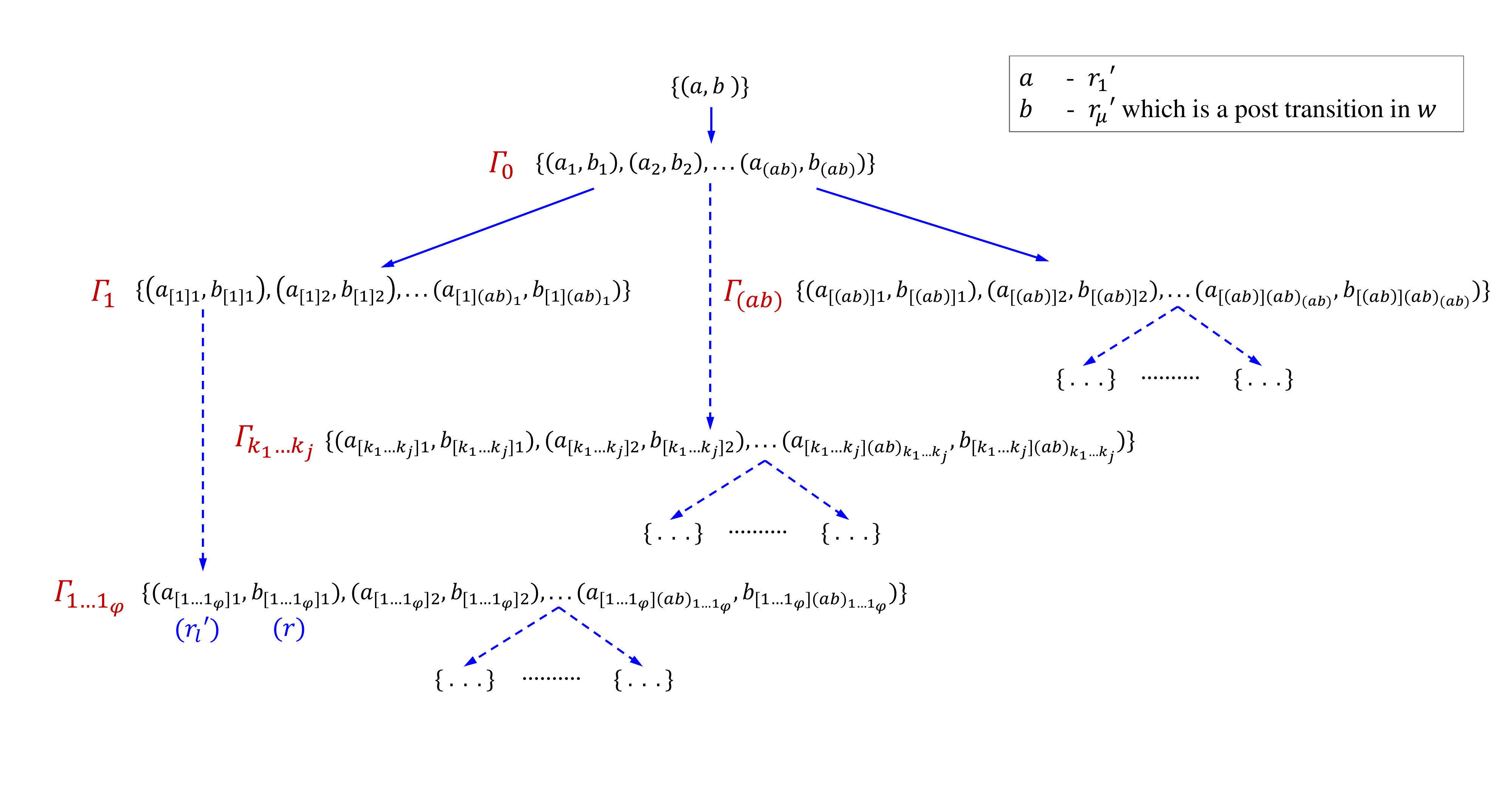}
\vspace{-1.3cm}
\caption{Tree encoding (\gtree) of the set $\Gamma$ of transition sequences 
used in identifying a \post{} transition $r_{\mu}' \in w$ executed in $v$, 
which is to be reordered with $r_1'$.}
\label{fig:sequence-space}
\end{figure*}

From the constraint M3 of Construction~\ref{construct:cased}, $r_l'$ 
is in the set $\mathit{enabledFuture}(r_1')$. Based on this and the 
Definition~\ref{def:enabledfuture} we can identify 
a chain of \post{} transitions related to $r_l'$, which is a subsequence of $z = r_1'.v$
identified henceforth as $p_{\eta} \ldots p_2.p_1$. The chain of \texttt{post}s
$p_{\eta} \ldots p_2.p_1$ is such that
(i)~$p_{\eta} = r_1'$, (ii)~for any $i \in [1,\eta-1]$, $p_i$ is either in the handler
of the event posted by $p_{i+1}$ or $p_i$ has been added to enable 
a blocked transition in the handler of the event posted by $p_{i+1}$, and 
(iii)~the transition $r_l'$ is either in the handler of the event posted by $p_1$
or $r_l'$ is a transition added to enable a blocked transition
in the handler of the event posted by $p_1$. 
We encode each of the transitions in the chain $p_{\eta} \ldots p_2.p_1$ and $r_l'$
as the first transition in the ordered pair of transitions belonging to different 
\gtree\ nodes in the leftmost branch of the \gtree\ in Figure~\ref{fig:sequence-space}.
Through this construction sketch we will reason that for any 
$i \in [1,\eta-1]$, $p_i$ is the first transition in a pair belonging to a node 
whose parent node contains the transition $p_{i+1}$ in a member pair.
Similarly, $r_l'$ is a transition in a pair belonging to a node whose parent node 
contains the transition $p_1$ in a member pair. 
% The singleton set containing the pair of transitions $(r_1',r_{\mu}')$ is the root 
% of this \gtree. In this proof sketch we will only give an intuition on why these properties hold. 

In case of $r_l'$, $r$ is the transition occupying the second position in the 
pair corresponding to $r_l'$. In Figure~\ref{fig:sequence-space}, the node 
annotated $\Gamma_{1\ldots 1_{\varphi}}$ identifies the \gtree\ node containing 
the pair $(r_l',r)$. In this node we use $(a_{[1 \ldots 1_{\varphi}]1}, b_{[1 \ldots 1_{\varphi}]1})$ to denote 
$(r_l',r)$, where $\varphi$ denotes the count of ``1''s in the first part of the subscript. Indeed
we will reason that $\varphi = \eta-1$ and for $\vartheta \in [1,\eta-1]$, 
$a_{[1 \ldots 1_{\vartheta-1}] 1} = p_{\eta - \vartheta}$. 
% where the index $1_0$ obtained should be substituted as $0$. 

Let us now see how to 
identify the rest of the transition pairs in the node $\Gamma_{1 \ldots 1_{\varphi}}$.
Assume $\threadof(r_l') \neq \threadof(r)$. Let $P$ be a set of transitions 
such that a transition $p \in P$ if $index(S.z,p) \to_{S.z} \taskof(r)$
and $index(S.z,p) > index(S.z,r_l')$. Clearly, all the transitions in $P$ were executed prior to $r$
in $w$ as well (by M1 in Construction~\ref{construct:cased}). Now in order to execute
$r$ prior to $r_l'$, all the transitions in the set $P$ also need to execute prior
to $r_l'$. 
% Hence, similar to EM-DPOR's strategy (see Step~2 in Algorithm~\algofindtarget) 
% we can attempt to explore some transition in set $P$ or $r$ itself from the state 
% $s_{l-1}'$ (from where $r_l'$ is executed in sequence $z$). 
However, attempting to explore transitions in set $P$ from the state $s_{l-1}'$
(from where $r_l'$ is executed in sequence $z$) will 
result in one of the three scenarios listed in Lemma~\ref{lmm:reorder-impossible}.
Concretely, this happens because of one of the following reasons.
\begin{enumerate}[label=S\arabic*.]
\item A non-\post{} transition $c$ belonging to $P$ or belonging to a task $h$
on whose thread the task of some transition in $P$ is blocked, gets shifted 
prior to $r_l'$ even though $c$ has dependence with a transition $d \in w$
such that $d$ can be executed only after $r_l'$ and $d$ is executed prior to $c$
in $z$, or
\item A \post{} transition $c$ belonging to $P$ or belonging to a task $h$
on whose thread the task of some transition in $P$ is blocked, gets shifted 
prior to $r_l'$ and gets reordered w.r.t. a \post{} transition $d \in w$
such that $d$ can be executed only after $r_l'$ and $d$ is executed prior to $c$
in $z$. The reordering of \texttt{post}s $c$ and $d$ in turn breaks the $DG$ of a 
transition $d' \in w$ making it non-identical to $DG(d',w,s)$ either by reordering
it w.r.t. a dependent transition $c' \in w$ or by exploring a dependent transition
$c' \not\in w$ prior to $d'$.
\end{enumerate}

The transition pair $(c,d)$ identified above is the transition pair
$(a_{[1 \ldots 1_{\varphi}]2}, b_{[1 \ldots 1_{\varphi}]2})$ in the set corresponding
to the node $\Gamma_{1\ldots 1_{\varphi}}$. The transition pair 
$(a_{[1 \ldots 1_{\varphi}]3}, b_{[1 \ldots 1_{\varphi}]3})$ can be identified 
by attempting to explore $b_{[1 \ldots 1_{\varphi}]2}$ prior to $a_{[1 \ldots 1_{\varphi}]2}$
by using a strategy similar to the one devised to reorder $b_{[1 \ldots 1_{\varphi}]1}$ and $a_{[1 \ldots 1_{\varphi}]1}$.
However when doing so we also need to try to explore $b_{[1 \ldots 1_{\varphi}]1}$
prior to $a_{[1 \ldots 1_{\varphi}]1}$, else we will obtain a transition sequence
similar to $z$ which is already established to be a \emph{non} dependence-covering
sequence of $w$. Similarly, attempting to reorder the recently identified pair of 
transitions while keeping the order between prior identified transition pairs in $\Gamma_{1\ldots 1_{\varphi}}$
consistent w.r.t. dependence graph of $w$, aids in identifying newer transition 
pairs to be added to $\Gamma_{1\ldots 1_{\varphi}}$. However, due to Lemma~\ref{lmm:reorder-impossible}
and the state space being finite and acyclic, 
we will soon run out of transition pairs which can be added this way. Indeed,
attempting to reorder the last transition pair consisting of 
$a_{[1 \ldots 1_{\varphi}](ab)_{1 \ldots 1_{\varphi}}}$ and $b_{[1 \ldots 1_{\varphi}](ab)_{1 \ldots 1_{\varphi}}}$
using the above technique will result in exploring a transition sequence 
where $r_l' = a_{[1 \ldots 1_{\varphi}]1}$ gets explored
prior to $r = b_{[1 \ldots 1_{\varphi}]1}$, thus re-identifying an already added 
transition pair. However, in the sequence $w$ each transition 
$b_{[1 \ldots 1_{\varphi}]i}$ was explored prior to $a_{[1 \ldots 1_{\varphi}]i}$ 
or $a_{[1 \ldots 1_{\varphi}]i}$ was not even explored.
This indicates that the ordering between dependent transitions can be made 
consistent w.r.t. $w$ by reordering a pair of event handlers
related to the transition pairs in $\Gamma_{1\ldots 1_{\varphi}}$. 
Indeed we can 
show that this can be achieved by reordering handler of the event posted by 
$p_1$ (belonging to the chain $p_{\eta} \ldots p_2.p_1$) with the handler 
in which $b_{[1 \ldots 1_{\varphi}](ab)_{1 \ldots 1_{\varphi}}}$ is executed or 
a handler that enables a transition prior to $b_{[1 \ldots 1_{\varphi}](ab)_{1 \ldots 1_{\varphi}}}$ in 
$\taskof(b_{[1 \ldots 1_{\varphi}](ab)_{1 \ldots 1_{\varphi}}})$.
Let $e$ be the event corresponding to the latter handler. Clearly, $\post{($e$)} \in w$. 
We will be able to show that $p_1$ posts to the same event queue as $e$, thus satisfying
property N6 of Construction~\ref{const:gtree-seq}. This will result
in adding the pair of transitions $(p_1,\post{($e$)})$ into the parent node of 
$\Gamma_{1\ldots 1_{\varphi}}$ thus satisfying the property N8 of a \gtree.
The parent node $\Gamma_{1\ldots 1_{\varphi-1}}$ can be populated similarly
starting with the reordering of $a_{[1 \ldots 1_{\varphi-1}]1} = p_1$ and 
$b_{[1 \ldots 1_{\varphi-1}]1} = \post{($e$)}$, and so on eventually identifying
the parent node of $\Gamma_{1\ldots 1_{\varphi-1}}$ which will contain $(p_2,\_)$
as a member. 

We note that if $\threadof(r_l') = \threadof(r)$ then $\Gamma_{1\ldots 1_{\varphi}}$
would be a singleton set consisting only the pair $(r_l',r)$, and the parent node
would be identified as the node with transition pair $(p_1,\post{($\eventof(r)$)})$.
In general, for any pair of transitions on the same thread but different handlers we identify
the parent node as the node with the pair of transitions posting the events corresponding
to these handlers, as a member.
 
If $c$ and $d$ identified by the scenario S2 introduced earlier, are \post{} transitions,
then the dependent transition pair $(c',d')$ identified by this scenario becomes a transition pair in 
one of the nodes in the subtree that can be generated by exploring $c$
prior to $d$ in a transition sequence, say $v_k$, from state $s_1'$. Transition
pairs belonging to the nodes of this subtree can be systematically identified by 
attempting to reorder $c'$ explored in $v_k$ w.r.t. $d'$ by doing as described 
in the context of reordering transitions $a_{[1 \ldots 1_{\varphi}]i}$ and $b_{[1 \ldots 1_{\varphi}]i}$
belonging to the node $\Gamma_{1\ldots 1_{\varphi}}$.
By only reordering transitions in the subspace obtained when the \post{} transition $c$
is explored prior to \post{} transition $d$, will end up breaking the $DG$ of some 
transition in $w$ %or wrongly reordering a \post{} transition in $w$ with some other
% transition (may or may not belong to $w$) posting to the same event queue, 
thus resulting in \emph{non} dependence-covering
sequences of $w$. If this is not the case then it implies that the transition pairs 
in $\Gamma_{1\ldots 1_{\varphi}}$ added prior to $(c,d)$ can be explored in a manner
consistent w.r.t. $w$ which makes such a transition sequence dependence-covering
w.r.t. $w$, or contain more transitions from $w$ than in $z$ thus breaking the 
constraint M4 of Construction~\ref{construct:cased}.
\end{proof}

For each pair of transitions in the nodes of \gtree\ and the entire node itself, 
we assign a level called  \gidx\ defined as below.
\begin{definition}
\label{def:gidx}
\upshape
The \gidx{} of a tree node, say $\Gamma_i$, referred as \gidx{($\Gamma_i$)}
is assigned a level same as the \gidx{} of a transition pair in $\Gamma_i$
which has the highest \gidx{} among all the transition pairs in $\Gamma_i$.
The \gidx{} of a pair $(c,d)$ of dependent
transitions is considered to be $0$ and referred as \gidx{($(c,d)$)}. 
Let $(c,d)$ be a pair of \post{} transitions in a \gtree\ node, say $\Gamma_{par}$, such that 
$\Gamma_{kid}$ be the child node of $\Gamma_{par}$ discovered on executing
$c$ prior to $d$ in a transition sequence from state $s_1'$. Then,
\gidx{($(c,d)$)} $=$ $\gidx{($\Gamma_{kid}$)} + 1$. 
\end{definition}

\paragraph{Note.} 
\gtree\ essentially identifies all the pairs of transitions (dependent or posting to
the same event queue) which will have to be systematically identified for reordering by 
\emdpor\ (by invoking \texttt{FindTarget}) in order to discover a dependence-covering 
sequence of $w$, assuming that \emdpor\ initially explored only the members of $L$ from state $s$. 
% 
% The purpose of constructing \gtree\ is to show that starting from a dependence-covering
% sequence of any transition sequence constructed using Construction~\ref{const:gtree-seq}, 
% \emdpor\ explores dependence-covering sequences for all the 
% transition sequences which can be constructed from \gtree\ (using Construction~\ref{const:gtree-seq}). 
Also during the process, \emdpor\ needs to adequately set up data structures such as $backtrack$, $done$ and 
$RP$ sets at the explored states in the subspace reachable from $s_1'$ so as to eventually 
invoke \texttt{FindTarget($r_1'.v'$,$r_1'$,$r_{\mu}'$)} where $r_1'.v'$ is a 
transition sequence constructed from \gtree\ using Construction~\ref{const:gtree-seq}. 
After establishing this we can use the arguments used to prove lemmas
corresponding to cases B and C (Lemma~\ref{lmm:caseb} and \ref{lmm:casec}) to show that some transition
executed in $w$ prior to $r_{\mu}'$ or $r_{\mu}'$ itself gets added to the backtracking set $L$ at state $s$,
thus contradicting the property assumed for $L$ as per the case~D.
Note that all the transition sequences
which can be constructed by running the Construction~\ref{const:gtree-seq} on
the \gtree\ in Figure~\ref{fig:sequence-space}, have 
$r_1'$ as their first transition. Hence by induction hypothesis H1, \emdpor\ 
explores dependence-covering sequences of all these transition sequences. The 
challenge however is to show that \texttt{FindTarget} gets invoked to reorder $r_1'$ and 
$r_{\mu}'$. We achieve this by proving the following property by inducting on
the \gidx{} levels of the \gtree\ nodes.

\begin{lemma}
\label{lmm:gtree-emdpor-property}
\upshape
For each node $\Gamma_{node} = \{(c_1,d_1),(c_2,d_2) \ldots (c_k,d_k)\}$ in the \gtree\ generated from 
the sequence $z$, \emdpor\ explores a sequence $r_1'.u$ whose prefix reaches a state  
$s'$ such that the following properties hold.
\begin{enumerate}[label=P\arabic*.]
 \item For every transition pair $(c_i,d_i)$ in $\Gamma_{node}$, for $i \in [1,k]$, 
 $c_i = next(s',\threadof(c_i))$ and either $\threadof(c_i) \in done(s')$
 or $\threadof(c_i)$ is not enabled in $s'$. 
 \item There exists a pair $(c_i,d_i)$ in $\Gamma_{node}$, for $i \in [1,k]$,
 such that $c_i$ is executed at $s'$ and \texttt{FindTarget($r_1'.u$,$c_i$,$d_i$)} 
 is invoked such that for $j \in [i+1,k]$, $index(r_1'.u,d_j) < index(r_1'.u,c_j)$. 
 \item For each transition pair $(c_i,d_i)$ in $\Gamma_{node}$, for $i \in [1,k]$, where $c_i$ and $d_i$ are \post{}
 transitions, either $(c_i,d_i) \in RP(s')$ or \texttt{FindTarget} has been invoked 
 to reorder $c_i$ executed at $s'$ and the later executed transition $d_i$. 
\end{enumerate}
\end{lemma}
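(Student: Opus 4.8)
The plan is to prove Lemma~\ref{lmm:gtree-emdpor-property} by induction on the \gidx\ level of the \gtree\ node $\Gamma_{node}$, as already indicated in the text. The base case is the leaf nodes, which have \gidx\ equal to $0$ and contain only pairs of dependent (non-\post) transitions. For such a node $\Gamma_{node} = \{(c_1,d_1),\ldots,(c_k,d_k)\}$, I would appeal to Construction~\ref{const:gtree-seq}: the \gtree\ path leading to $\Gamma_{node}$ encodes a \emph{non} dependence-covering sequence $v_k$ of $w$ from state $s$ whose first transition is $r_1'$ (property N10 of Definition~\ref{def:gtree}, together with the observation that every sequence from the construction begins with $r_1'$). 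Since $r_1' \in L$, state $s_1' = r_1'(s)$ is in $\mathcal{S}_R$, so by the induction hypothesis~H1 (on the outer induction, over backtracking order) \emdpor\ explores a dependence-covering sequence $u$ of the suffix of $v_k$ from $s_1'$, hence it explores $r_1'.u$. Because $v_k$ puts some $d_i$ after $c_i$ contrary to their order in $w$ while keeping all later pairs $(c_j,d_j)$, $j>i$, consistent with $w$, and because $r_1'.u$ is a dependence-covering sequence of $r_1'.v_k$, the dependence graph forces $r_1'.u$ to reach a state $s'$ at which $c_i$ is the next transition of $\threadof(c_i)$ and $d_i \in nextTrans(s')$ (using the FIFO and transitivity closure of $\to$, and the properties N5--N7 guaranteeing the blocked/executable pattern). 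At that point \texttt{Explore} (line~\linecond--\linefindtarget\ of Algorithm~\algoexplore) invokes \texttt{FindTarget($r_1'.u$, $c_i$, $d_i$)}, giving P2; P1 follows because \texttt{Explore} has exhausted the backtracking set at $s'$ before returning (so every enabled thread that could be explored is in $done(s')$), and P3 is vacuous for a leaf node.

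For the inductive step, let $\Gamma_{node}$ have \gidx\ level $\lambda > 0$; then it contains at least one pair $(c_i,d_i)$ of \post\ transitions posting to a common event queue (property N3). By the same argument as in the base case, the \gtree\ path to $\Gamma_{node}$ and Construction~\ref{const:gtree-seq} yield a sequence $r_1'.v'$ explored by \emdpor\ whose prefix reaches a state $s'$ satisfying P1 and P2; in particular \texttt{FindTarget($r_1'.v'$, $c_i$, $d_i$)} is invoked for some pair with $c_i$ executed at $s'$. If $(c_i,d_i)$ is a non-\post\ dependent pair we are done as before. If $(c_i,d_i)$ are \post\ transitions, then since \emdpor\ treats \post{}s as independent, \texttt{FindTarget} cannot be called directly on them by \texttt{Explore}; instead I would trace how the recursive machinery reaches them. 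If $\threadof(c_i)=\threadof(d_i)$ (the singleton case of N4), then $c_i,d_i$ are the diverging posts of some dependent pair on the same thread and Step~1 of \texttt{FindTarget} (line~\linediverging) recursively invokes \texttt{FindTarget} on them. If $\threadof(c_i)\neq\threadof(d_i)$, then by the structure in Figure~\ref{fig:reschedule-structure} and properties N5--N8, the blocked task corresponding to $d_i$ appears in the $candidates$ set of the \texttt{FindTarget} call that tried to reorder the \emph{child} pair, Step~2 fails (its thread is in $done(s')$), $pending$ is nonempty, and \texttt{ReschedulePending} (Algorithm~\algoreschedulepending) is invoked; its Step~3b computation of the set $C$ (line~\rplinecomputec) walks the blocked$\to$executable happens-before chain and places $\eventof(d_i)$ in $swapMap$, after which Step~3c (line~\rplinerecursivefindTarget) calls \texttt{FindTarget} on $\post{}$s of the executable and blocked events, i.e.\ on $(c_i,d_i)$. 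In either sub-case, once \texttt{FindTarget($\cdot$, $c_i$, $d_i$)} is invoked, Step~2 either adds a backtracking choice and records $(c_i,d_i)\in RP$ at the relevant state, or fails and recurses further; in all cases where the reorder succeeds locally, $(c_i,d_i)$ is added to $RP(s')$ (line~\linereturnsuccess\ of Algorithm~\algofindtarget), establishing P3 for this pair. Exploring that backtracking choice then produces a run that flips $c_i$ before $d_i$ while keeping the other pairs of $\Gamma_{node}$ consistent, which by property N8 is exactly the run that discovers the child node $\Gamma_{kid}$ of $\Gamma_{node}$ obtained by exploring $c_i$ before $d_i$; applying the inductive hypothesis to $\Gamma_{kid}$ (whose \gidx\ is strictly smaller, by Definition~\ref{def:gidx}) gives the required state for that child, and iterating over the children (one per \post\ pair in $\Gamma_{node}$, property N9) establishes P1--P3 for all of $\Gamma_{node}$. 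Feeding this back: applying the lemma to the root $\{(r_1',r_{\mu}')\}$ shows \texttt{FindTarget($\cdot$, $r_1'$, $r_{\mu}'$)} is eventually invoked, which by the arguments of Lemma~\ref{lmm:caseb} and Lemma~\ref{lmm:casec} forces a transition of $w$ (either $r_{\mu}'$ or one executed earlier in $w$) into $backtrack(s)$, contradicting the case~D assumption on $L$.

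The main obstacle I anticipate is the inductive step's bookkeeping of the $done$, $backtrack$ and $RP$ sets across the recursive \texttt{FindTarget}/\texttt{ReschedulePending} calls: I must verify that at the state $s'$ reached by $r_1'.u$, \emph{every} thread carrying a $c_i$ from $\Gamma_{node}$ is already in $done(s')$ (P1) before the \texttt{FindTarget} call of P2 happens, so that Step~2 genuinely fails and the rescheduling machinery is triggered rather than a spurious early success. This requires a careful argument that the DFS order in which \texttt{Explore} visits the subspace rooted at $s_1'$ exhausts exactly the threads of the $c_i$'s — which hinges on the finiteness and acyclicity of $\mathcal{S}_G$ and on the happens-before/FIFO closure ensuring that no $c_i$ gets ordered before $d_i$ by $\to$ (so it is never pruned by line~\linecond). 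A secondary difficulty is confirming that the events $e_c,e_d$ identified by properties N6--N7 are precisely the executable/blocked pair that \texttt{ReschedulePending}'s set $C$ picks up at line~\rplinecomputec, i.e.\ that the happens-before edge it checks ($l \to_w getEnd(w,e_j)$) matches the dependence structure encoded in the \gtree\ node; this is essentially a restatement of Example~\ref{ex:reschedulepending} in general form and I expect it to be the most delicate matching argument.
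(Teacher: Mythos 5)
Your proposal follows essentially the same route as the paper's own proof sketch: induction on the \gidx\ level, a base case for leaf nodes that uses hypothesis~H1 and \texttt{Explore}'s race detection to obtain the \texttt{FindTarget} invocation, and an inductive step that applies the hypothesis to the child node so that its \texttt{FindTarget} call fails at Step~2, triggers \texttt{ReschedulePending}, and thereby invokes \texttt{FindTarget} on the parent's \post{} pair, with the $RP$/$done$ bookkeeping flagged as the delicate part. The obstacles you anticipate (exhausting the threads of the $c_i$'s into $done(s')$ via repeated reorderings, and matching $e_c,e_d$ to the set $C$ computed at line~\rplinecomputec) are exactly the points the paper's sketch also leaves at the level of outline, so nothing essential is missing relative to the paper.
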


\begin{proof}
We prove the above property by inducting on the \gidx{} level of nodes.

\paragraph{Base case (\gidx{} $=$ $0$).} We present an outline on how to reason about this
case. Only leaf nodes of \gtree\ belong to \gidx{} 
level $0$. Let $\Gamma_{node} = \{(c_1,d_1),(c_2,d_2) \ldots (c_k,d_k)\}$ be a leaf node.
Let $v_i$ be a transition sequence constructed over \gtree\ starting from the root 
and ending with a suffix where $c_i$ is explored prior to dependent transition $d_i \in w$.
Let (dependence-covering sequence of) $v_i$ be the first transition sequence related to  $\Gamma_{node}$ to be explored
by \emdpor. Exploration of $v_i$ by \emdpor\ is guaranteed due to induction 
hypothesis H1, since the first transition of $v_i$ is $r_1'$ which is a transition in 
the set $L$.
We will then have to show that $c_i$ and $d_i$ will be identified as racing transitions
by the Algorithm~\texttt{Explore} leading to invocation of \texttt{FindTarget($S.v_i$,$d_i$,$c_i$)}.
Exploring backtracking choices thus added results in exploring $c_{i+1}$ prior to
$d_{i+1}$. Again these will be identified as racing transitions and so on. Ultimately,
\texttt{FindTarget($S.v_{i-1}$,$c_{i-1}$,$d_{i-1}$)} gets invoked when threads of all 
the other $c_j$ transitions are either in $done$ set at the state, say $s'$, from where 
$c_{i-1}$ is executed or disabled in $s'$. This proves property P1 and P2. Property P3 is not relevant for 
the base case because a leaf node does not contain any pair of \post{} transitions.

\paragraph{Induction hypothesis.} For a node $\Gamma_{node}$ such that \gidx{($\Gamma_{node}$)}
$=$ $\theta$, the properties P1, P2 and P3 hold. 

\paragraph{Induction step (\gidx{} $=$ $\theta+1$).} 
Let $\Gamma_{node} = \{(c_1,d_1),(c_2,d_2) \ldots (c_k,d_k)\}$ be a node in \gtree\
such that \gidx{($\Gamma_{node}$)} $=$ $\theta + 1$. This indicates that the 
highest \gidx{} of any pair of transitions in $\Gamma_{node}$ is $\theta + 1$.
Let $(c_i,d_i) \in \Gamma_{node}$ be a pair of \post{} transitions with its 
corresponding child node being $\Gamma_{i} = \{(c_{[i]1},d_{[i]1}),(c_{[i]2},d_{[i]2}) 
\ldots (c_{[i](cd)_i},d_{[i](cd)_i})\}$. Then by our assumption on the \gidx{}
of $(c_i,d_i)$ and the definition of \gidx, we can establish that \gidx{($\Gamma_i$)}
can be atmost $\theta$. Then, with the help of induction hypothesis we can show 
that \texttt{FindTarget} gets invoked to reorder a transition $c_{[i]j}$ and later
executed transition $d_{[i]j}$ with suitable constraints over $done$ and $RP$
sets (as established by P1 and P3),
resulting in the invocation of \texttt{ReschedulePending} by the Step~3 of \texttt{FindTarget}
(see line~\linerecursivefindtarget). This in turn invokes \texttt{FindTarget} to reorder
the \post{} transitions $c_i$ and $d_i$. 
Note that it is important for the $RP$ set to be adequately set up since the HB relation
computed by \emdpor\ adds edges based on $RP$ set as well (see Definition~\ref{def:happens-before}).
After backtracking choices are 
computed at the state from where $c_i$ is executed, $(c_i,d_i)$ get added to the $RP$
set at that state. On eventually reordering $c_i$ and $d_i$, $c_{i+1}$ and $d_{i+1}$
get reordered. Based on whether these are \post{} or non-\post{} transitions we
can apply suitable reasoning to show how $done$ and $RP$ sets get populated.
Ultimately, we can show that \texttt{FindTarget} gets invoked to reorder the 
last pair of transitions belonging to the set $\Gamma_{node}$ with the corresponding
$done$ and $RP$ sets appropriately set up as required. 
\end{proof}

\begin{lemma}
\label{lmm:cased}
\upshape
\emdpor\ explores a dependence-covering sequence for $w$ from state $s$ when set 
$L$ satisfies case D.
\end{lemma}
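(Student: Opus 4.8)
Here is the plan. The argument proceeds by contradiction in exactly the style of Lemmas~\ref{lmm:caseb} and~\ref{lmm:casec}: assume that \emdpor\ explores no dependence-covering sequence of $w$ from $s$, which together with the standing assumption of the induction step means no transition of $L$ begins a dependence-covering sequence of $w$ at $s$. Case~D posits that every transition in $L$ is a \post\ and that no transition of $L$ occurs in $w$; the goal is to exhibit a transition of $w$ that \emdpor\ is forced to put into $backtrack(s)$, hence into $L$, contradicting this.

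First I would assemble the machinery already proved for case~D. By Lemma~\ref{lmm:witness-cased} there is a sequence $z = r_1'.v$ in $\mathcal{S}_G$ meeting Construction~\ref{construct:cased}, with $r_1' \in L$, that reaches a state $s_m'$ at which some $r \in nextTrans(s_m')$ belonging to $w$ is dependent with a transition $r_l' \in \mathit{enabledFuture}(r_1')\setminus R_w$ occurring in $v$. Lemma~\ref{lmm:r1P-must-reorder-post} then yields a transition $r_{\mu}' \in v$ that lies in $w$, posts to $\destof(r_1')$, and must be reordered with $r_1'$ if one is to obtain any extension that executes $r$ before $r_l'$ (or omits $r_l'$) while keeping the dependence graph of every $w$-transition of $z$ intact and avoiding the adversarial dependences of Lemma~\ref{lmm:reorder-impossible}. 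By Lemma~\ref{lmm:gtree-exists} we may then build, starting from $z$, the \gtree\ of Definition~\ref{def:gtree} rooted at $\{(r_1',r_{\mu}')\}$: its leaves encode the genuinely dependent pairs whose consistent reordering forces the move, and its internal nodes carry the corresponding \post\ pairs, related to their children by properties~N6--N8.

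The heart of the proof is to run Lemma~\ref{lmm:gtree-emdpor-property} from the leaves of the \gtree\ up to its root. Applying it to the unique child $\Gamma_0$ of the root, \emdpor\ explores some $r_1'.u$ reaching a state $s'$ where, for a \post\ pair $(c_i,d_i) \in \Gamma_0$, \texttt{FindTarget} is invoked with $done(s')$ and $RP(s')$ configured (by P1 and P3) so that Step~2 fails and Step~3 fires \texttt{ReschedulePending} (line~\linerecursivefindtarget); the recursive call at line~\rplinerecursivefindTarget\ reorders the \post\ of the executable event with the \post\ of the blocked event, which by N6--N8 is exactly the root pair, i.e.\ \emdpor\ invokes \texttt{FindTarget}$(r_1'.v', r_1', r_{\mu}')$ for a sequence $r_1'.v'$ obtainable via Construction~\ref{const:gtree-seq}. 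From here the analysis is a reprise of Lemma~\ref{lmm:caseb} (and~\ref{lmm:casec}), with $r_1'$ in the role of ``first transition of $L$'' executed at $s$ and $r_{\mu}'$ (or a $w$-transition executed before it) in the role of $r$; if $r_1'$ and $r_{\mu}'$ share a thread, Step~1 of \texttt{FindTarget} first walks up the two post-chains to their diverging posts before this applies. Step~2 of that call either adds to $backtrack(s)$ a thread whose enabled transition lies in $w$ (line~\lineunexploredthread), or $candidates=\emptyset$ and \texttt{BacktrackEager}$(\ldots, r_{\mu}')$ is called, where the extended happens-before $\rightsquigarrow$ cannot order $r_1'$ before $r_{\mu}'$ (the common prefix $S$ leaves both orders of those two posts open, so line~\linereturnonhb\ is not taken), whence \texttt{BacktrackEager} again adds a $w$-thread to $backtrack(s)$ (line~\backtracklineaddchoice) or falls through to line~\backtracklineaddall\ and sets $backtrack(s)=enabled(s)$, so $r_1 \in L$. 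In every branch a transition of $w$ enters $L$, contradicting case~D, and hence the assumption that no dependence-covering sequence of $w$ starts from a transition of $L$ is refuted.

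The main obstacle is the penultimate paragraph: certifying that Lemma~\ref{lmm:gtree-emdpor-property} really chains from $\Gamma_0$ up to the root, i.e.\ that the nesting \texttt{FindTarget}$\to$\texttt{ReschedulePending}$\to$\texttt{FindTarget} tracks edges of the \gtree\ one \gidx{} level at a time, and that the intermediate $done$ and $RP$ sets get populated precisely as P1--P3 demand --- the $RP$ bookkeeping being delicate because rule~3 of Definition~\ref{def:happens-before} feeds it back into the happens-before relation, which in turn decides which pairs \texttt{FindTarget} and line~\linecond\ of \texttt{Explore} even consider for reordering. Once that correspondence between the static tree and the dynamic search is pinned down, the reduction to the already-settled cases~B and~C is routine.
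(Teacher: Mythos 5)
Your proposal follows essentially the same route as the paper's proof: it assembles Lemma~\ref{lmm:witness-cased} (existence of $z$ per Construction~\ref{construct:cased}), Lemma~\ref{lmm:r1P-must-reorder-post} (identification of $r_{\mu}'$), Lemma~\ref{lmm:gtree-exists} (the \gtree\ rooted at $\{(r_1',r_{\mu}')\}$), and property~P2 of Lemma~\ref{lmm:gtree-emdpor-property} to force the invocation of \texttt{FindTarget} on $(r_1',r_{\mu}')$, then closes by reusing the argument of Lemma~\ref{lmm:caseb} to place a transition of $w$ into $backtrack(s)$ and contradict the case-D assumption on $L$. The obstacle you flag --- chaining Lemma~\ref{lmm:gtree-emdpor-property} up the \gidx{} levels with the $done$ and $RP$ bookkeeping --- is precisely the part the paper itself leaves at the level of a sketch, so your account matches the paper's in both structure and level of rigor.
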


\begin{proof}
From Lemma~\ref{lmm:gtree-exists}, \gtree\ exists and a \gtree\ can be generated
using the transition sequence $z$ constructed as per Construction~\ref{construct:cased}
which too has been proven to exist (see Lemma~\ref{lmm:witness-cased}).
From the definition of a \gtree\ the transition pair $(r_1',r_{\mu}')$ is the
only element of the singleton set at the root of the \gtree\ generated from $z$. 
Then, from the property P2 established by Lemma~\ref{lmm:gtree-emdpor-property}, 
\texttt{FindTarget($S.r_1'.u,r_1',r_{\mu}'$)} is invoked for some transition 
sequence $r_1'.u$ which is a dependence-covering sequence of a sequence constructed 
using Construction~\ref{const:gtree-seq}. Since $r_{\mu}' \in w$ we can use the 
arguments used to prove Lemma~\ref{lmm:caseb} to show that some transition
executed in $w$ prior to $r_{\mu}'$ or $r_{\mu}'$ itself gets added to the backtracking 
set $L$ at state $s$, thus contradicting the property assumed for $L$ as per the case~D.
This in turn proves the existence of a dependence-covering sequence of $w$.
\end{proof}

\subsubsection{Case E}
\label{sec:caseE}
Case E assumes a subset of transitions in the backtracking set $L$ to be present
in $w$ such that the first transition in $w$ from the set $L$ is of type \post.
This case additionally assumes that if there are non-\post{} transitions in the 
set $L$ then all such transitions are dependent with some transition in $w$.
This is because if there exists a non-\post{} transition independent w.r.t. all the transitions 
in $w$ then this case becomes equivalent to case~A which has already been shown 
to result in a dependence-covering sequence of $w$ (see Lemma~\ref{lmm:casea}).
Note that we had considered a variant of case~E in case~C where we had assumed 
the first transition in $w$ from the set $L$ to be a non-\post{} transition.
However, a \post{} transition being the first transition in $w$ from $L$
makes the reasoning of this case very similar to that used to establish contradiction
to the property of $L$ in case~D.

% We give a proof sketch to show a contradiction to the property of $L$, and in turn 
% establish the existence of a dependence-covering sequence of $w$ starting from 
% some transition belonging to the set $L$.

\begin{construction}
\label{construct:casee}
\upshape
Let $z: s \xrightarrow{r_1'} s_1' \xrightarrow{r_2'} 
s_2' \ldots \xrightarrow{r_m'} s_m'$ in $\mathcal{S}_G$, 
where $r_1'$ is the first transition in $w$ from the set $L$ and $v = r_2'\ldots r_m'$, %is a transition sequence from the state $s_1'$, 
be a sequence satisfying the following constraints:
\begin{enumerate}[label=M\arabic*.]
 \item Sequence $v$ consists of transitions belonging to $w$ as well as transitions
 outside $w$. For a transition $r_i' \in v$, if $r_i' \in w$ and $r_i' \not\in \mathit{enabledFuture}(r_1')$ 
 then $identicalDG(r_i',w,z,s)$. For a transition $r_i' \in v$, if $r_i' \in w$ 
 and $r_i' \in \mathit{enabledFuture}(r_1')$  then either $identicalDG(r_i',w,z,s)$
 or $r_i'$ is the first transition of a shortest sequence from the state $s_{i-1}'$ 
 comprising only of transitions which do not belong to $w$ or belong to $\mathit{enabledFuture}(r_1')$, 
 to be executed to make an event $e$ blocked in $s_{i-1}'$ executable 
 such that $e$ was dequeued in $w$. %% dont add from here as it is strict and prevents from obtaining dep-cov seq % and $\abegin{($e$)} \not\in \mathit{enabledFuture}(r_1')$.
 Finally, for a transition $r_i' \in v$, if $r_i' \not\in w$ 
 then $r_i'$ is the first transition of a shortest sequence from the state $s_{i-1}'$ 
 comprising only of transitions which do not belong to $w$ or belong to $\mathit{enabledFuture}(r_1')$, 
 to be executed to make an event $e$ blocked in $s_{i-1}'$ executable 
 such that $e$ was dequeued in $w$. % and $\abegin{($e$)} \not\in \mathit{enabledFuture}(r_1')$.
%%%%%%%%%%  uncomment below property talking about contiguity of transitions not in w, 
%%%%%%%%%%  only if some proof requires it      %%%%%%%%%%%%%%%%%%%%%%%%%%%%%%%%%%%%%%
%  \item If $\alpha$ is a subsequence of $v$ such that $\alpha$ only contains
%  transitions not in $w$ and which have been added to make an event $e$ dequeued
%  in $w$ executable, then $\alpha$ is a contiguous subsequence of $v$.
  
 \item There exists no extension to any prefix $\alpha$ of $z$ which results in 
 a transition sequence $\alpha.\gamma$ such that there exists a pair of dependent transitions $c$ 
 and $d$ with the following properties:
 \begin{enumerate}
  \item (i)~Either $c$ and $d$ are transitions in $w$ such that $c \not\in \mathit{enabledFuture}(r_1')$,
  $index(w,d) < index(w,c)$ but $index(\gamma,c) < index(\gamma,d)$, %they are ordered
  %differently in $\gamma$ compared to their order in $w$, 
  or (ii)~$c \not\in w$, $c \not\in \mathit{enabledFuture}(r_1')$, $c$ is a transition 
  in the $\mathit{enabledFuture}$ set of a \post{} in $w$, and
  $d \in w$ such that $c$ is executed prior to $d$ in $\alpha.\gamma$, and 
  
  \item Attempting to reorder $c$ and $d$ through some other extension to $\alpha$ will 
  only result in a transition sequence $\alpha.\gamma'$ which breaks the order between another pair 
  of dependent transitions $c'$ and $d'$ such that either (i)~$c',d' \in w$,
  $c' \not\in \mathit{enabledFuture}(r_1')$, $index(w,d') < index(w,c')$ but $index(\gamma',c) < index(\gamma',d)$, or 
  (ii)~$d' \in w$, $c' \not\in \mathit{enabledFuture}(r_1')$, $c'$ is a transition 
  in the $\mathit{enabledFuture}$ set of a \post{} in $w$,
  and $c'$ is executed prior to $d'$ in $\alpha.\gamma'$.
 \end{enumerate}

 \item There exists a transition $r \in nextTrans(s_m')$ such that $r$ is a transition
 in $w$, $r \not\in \mathit{enabledFuture}(r_1')$ and $r$ is dependent with a transition $r_l'$ 
 executed in $v$ such that (i)~$r_l' \in \mathit{enabledFuture}(r_1')$
 and (ii)~if $r_l' \in w$ then $r$ is executed prior to $r_l'$ in $w$.
 
 \item $z$ is a sequence with maximum transitions from $w$ while satisfying the 
 constraints M1, M2 and M3.
\end{enumerate}
\end{construction}

We now present the lemma which establishes that a dependence-covering sequence
of $w$ from state $s$ gets explored even when the set $L$ satisfies the 
property stated in case~E. The proof sketch of this lemma is similar 
to that outlined for Lemma~\ref{lmm:cased} which reasons about the case~D.
However, the proof for this case will use a transition sequence $z$ constructed
as per Construction~\ref{construct:casee} to generate the \gtree. 
\begin{lemma}
\label{lmm:casee}
\upshape
EM-DPOR explores a dependence-covering sequence for $w$ from state $s$ when set 
$L$ satisfies case E.
\end{lemma}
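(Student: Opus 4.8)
The plan is to mirror the proof structure of Case~D (Lemma~\ref{lmm:cased}) almost verbatim, since Case~E differs from Case~D only in that the first transition $r_1' \in L$ appearing in $w$ is now a \post{} transition rather than a non-\post{} one, and that $L$ may also contain non-\post{} transitions which we have already assumed (by reduction to Case~A via Lemma~\ref{lmm:casea}) to be dependent with some transition in $w$. First I would observe that, as in Case~D, we cannot directly point to a transition of $w$ that is dependent with $r_1'$, because $r_1'$ is a \post{} and is therefore independent of every transition by Definition~\ref{def:dependence-relation}. So the influence of $r_1'$ on the non-\post{} transitions of $w$ can only surface through non-\post{} transitions in $\mathit{enabledFuture}(r_1')$. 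The key structural difference from Case~D is that $r_1'$ now \emph{is} in $w$: transitions of $w$ that lie in $\mathit{enabledFuture}(r_1')$ may be legitimately reordered relative to other transitions of $w$ in the witness sequence $z$, which is precisely what Construction~\ref{construct:casee} encodes (note the repeated carve-outs ``$r_i' \not\in \mathit{enabledFuture}(r_1')$'' in M1, M2, M3 compared with Construction~\ref{construct:cased}).

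Next I would establish the Case~E analogues of the supporting lemmas. I would re-prove (or invoke with the obvious modifications) the statements corresponding to Lemma~\ref{lmm:transition-property-cased}, Lemma~\ref{lmm:witness-cased}, Lemma~\ref{lmm:reorder-impossible}, and Lemma~\ref{lmm:r1P-must-reorder-post}: a transition sequence $z = r_1'.v$ satisfying M1--M3 of Construction~\ref{construct:casee} exists in $\mathcal{S}_G$; every transition of $v$ not in $w$ (or in $w$ but relocated) belongs to the $\mathit{enabledFuture}$ of $r_1'$ or of some \post{} in $w$; reordering $r_l'$ and $r$ within $v$ is impossible without breaking the dependence graph of some transition of $w$ (relative to $w$) or creating a new incoming dependence edge; and therefore $r_1'$ must ultimately be reordered with some transition $r_{\mu}' \in v \cap w$ that posts to the same event queue as $r_1'$. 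The FIFO-ordering argument inside the witness-existence proof is the crucial one here: because $r_1' \in w$, whenever we are forced to keep an event handler partially executed in $w$, the FIFO rule (Definition~\ref{def:happens-before}, rule~2) pins down the relative order of the relevant \post{}s, contradicting the order forced in $z$ --- this is exactly the mechanism used in Lemma~\ref{lmm:witness-cased} and it transfers without essential change.

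Then I would build the \gtree{} (Definition~\ref{def:gtree}) rooted at $\{(r_1',r_{\mu}')\}$ from the Case~E witness $z$, re-running the construction sketch of Lemma~\ref{lmm:gtree-exists}: identify the post-chain $p_\eta \ldots p_1$ of $r_l'$ with $p_\eta = r_1'$, place $(r_l',r)$ in the leftmost leaf-ward node, and populate each node by the reorder-and-observe-obstruction procedure, using Lemma~\ref{lmm:reorder-impossible} and finiteness/acyclicity of $\mathcal{S}_G$ to guarantee termination of each node's transition-pair list. With the tree in hand, I would invoke the Case~E analogue of Lemma~\ref{lmm:gtree-emdpor-property}, proved by induction on the \gidx{} levels (Definition~\ref{def:gidx}): for every \gtree{} node, \emdpor{} explores some $r_1'.u$ (a dependence-covering sequence of a Construction~\ref{const:gtree-seq} sequence, guaranteed by induction hypothesis~H1 since each such sequence begins with $r_1' \in L$) reaching a state where the threads of all but one transition-pair's first component are in $done$ or disabled, and \texttt{FindTarget} is invoked on the remaining pair with the appropriate $done$ and $RP$ bookkeeping --- the $RP$ set being essential so that rule~3 of Definition~\ref{def:happens-before} seeds the happens-before edges needed to make \texttt{ReschedulePending} recurse correctly up the post-chain. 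Applying property~P2 at the root yields a call \texttt{FindTarget}$(S.r_1'.u, r_1', r_{\mu}')$; since $r_{\mu}' \in w$, the Case~B/Case~C reasoning (Lemma~\ref{lmm:caseb}, Lemma~\ref{lmm:casec}) then shows that $r_{\mu}'$ itself, or a transition of $w$ executed before it, is added to $backtrack(s)$, i.e.\ lands in $L$ --- contradicting the assumption defining Case~E that the first transition in $w$ from $L$ is $r_1'$. Hence a dependence-covering sequence of $w$ starting from a transition of $L$ exists in $\mathcal{S}_R$, completing the induction step.

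The main obstacle I anticipate is the same one the authors flag for Case~D: rigorously justifying that the \gtree{} construction of Lemma~\ref{lmm:gtree-exists} terminates and that its nodes satisfy all of properties N1--N10 simultaneously, and then threading the induction of Lemma~\ref{lmm:gtree-emdpor-property} through the mutual recursion between \texttt{FindTarget} and \texttt{ReschedulePending} while maintaining the exact invariants on $done$ and $RP$. In Case~E this is slightly more delicate because $r_1'$, being a \post{} that is itself in $w$, participates both as a tree-root transition and as a transition whose $\mathit{enabledFuture}$ supplies the relocated transitions of $v$; one must be careful that the ``$\not\in \mathit{enabledFuture}(r_1')$'' side conditions in Construction~\ref{construct:casee} are preserved under all the reorderings performed during tree construction, so that the obstructions identified are genuine (i.e.\ not artifacts of a permissible reordering). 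I expect the resolution to be entirely parallel to Case~D, so I would present the Case~E proof largely by reference: state Construction~\ref{construct:casee}, note the analogues of Lemmas~\ref{lmm:transition-property-cased}--\ref{lmm:r1P-must-reorder-post} with their proofs obtained by inserting the $\mathit{enabledFuture}(r_1')$ carve-outs into the Case~D arguments, and then conclude exactly as in Lemma~\ref{lmm:cased}.
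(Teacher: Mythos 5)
Your proposal matches the paper's own treatment: the paper gives no explicit proof body for this lemma, stating only that the argument is the proof of Lemma~\ref{lmm:cased} rerun with the witness $z$ built from Construction~\ref{construct:casee} instead of Construction~\ref{construct:cased}, which is precisely the plan you lay out (including the $\mathit{enabledFuture}(r_1')$ carve-outs in the analogues of Lemmas~\ref{lmm:transition-property-cased}--\ref{lmm:r1P-must-reorder-post}, the \gtree{} rooted at $\{(r_1',r_{\mu}')\}$, and the induction on \gidx{} levels). Your identification of the final contradiction target --- that a transition of $w$ preceding $r_1'$ lands in $backtrack(s)$, contradicting $r_1'$ being the first transition of $w$ in $L$, in the style of Case~C rather than Case~D --- is the correct adaptation and is in fact more explicit than what the paper records.
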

% \begin{proofsketch*}
% 
% \end{proofsketch*}

\subsection{Main Result}
\label{sec:em-dpor-proof}

% \begin{customthm}{\thexploredepcov} %[\textbf{Reproduced from~\cite{emdporpaper}}]
\begin{theorem}
\label{thm:emdpor-correctness}
\upshape
In a finite and acyclic state space $\mathcal{S}_G$, whenever \texttt{Explore} (Algorithm~\algoexplore) 
backtracks from a state $s$ to a state prior to $s$ in the search stack, EM-DPOR has explored
a dependence-covering sequence for any sequence $w$ in $\mathcal{S}_G$ from $s$, \ie\ the set of 
transitions explored from a state $s$ is a dependence-covering set in $s$.
% \end{customthm}
\end{theorem}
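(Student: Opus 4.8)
The plan is to prove Theorem~\ref{thm:emdpor-correctness} by structural induction on the order in which \texttt{Explore} backtracks from states, following the outline already laid out in Section~\ref{sec:proof-strategy}. The claim to establish is Claim~C1: for every transition sequence $w$ in $\mathcal{S}_G$ from a state $s$ reached by \texttt{Explore($S$)}, EM-DPOR explores a dependence-covering sequence of $w$. Equivalently, the set $L = \{next(s,t) \mid t \in backtrack(s)\}$ explored from $s$ is a dependence-covering set. The base case is a leaf of the DFS tree: if no transition is enabled at $s$, then $s$ is terminal and the only sequence from $s$ is the empty sequence, which is trivially dependence-covering of itself; moreover whenever $enabled(s)\neq\emptyset$ the algorithm puts at least one thread in $backtrack(s)$, so $L\neq\emptyset$. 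For the induction step I would assume the hypothesis H1 --- that for each $r\in L$ the recursive call \texttt{Explore($S.r$)} explores a dependence-covering sequence of every transition sequence from $r(s)$ --- and argue for $w$ as follows.

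First I would dispose of the easy sub-case: if $r_1\in L$ (the first transition of $w$ is already a backtracking choice), then by H1 the algorithm explores a dependence-covering sequence $u$ of $r_2\ldots r_n$ from $s_1 = r_1(s)$, and $r_1.u$ is then a dependence-covering sequence of $w$ from $s$, because prepending the same transition to both sequences preserves all the dependence-ordering conditions of Definition~\ref{def:dep-covering} and $R_{r_1.u}\supseteq R_w$. So assume $r_1\notin L$, and for contradiction assume that no dependence-covering sequence of $w$ starting with a transition in $L$ is explored from $s$. Now I would perform the exhaustive case analysis on the contents of $L$ as enumerated in Section~\ref{sec:inductive-reason}: cases A, B, C, D, E. Case~A (some $p\in L$ is a non-\post\ transition independent with all of $w$) is handled by Lemma~\ref{lmm:casea}, which shows $p$ commutes past all of $w$ so $p.u$ is dependence-covering for a dependence-covering sequence $u$ of $w$ from $p(s)$ --- a direct construction, not a contradiction. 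Cases~B and~C (some non-\post\ transition in $L$ is dependent with a transition of $w$, and the first transition of $w$ from $L$ either lies outside $L$ or is itself the witness) are handled by Lemmas~\ref{lmm:caseb} and~\ref{lmm:casec}: using the common construction of Section~\ref{sec:common-construction} one builds a non-dependence-covering witness sequence $z=r_1'.v$ with $r_1'\in L$, invokes H1 to get a dependence-covering sequence $u$ of the relevant suffix, and then shows that at the state $\delta_\gamma$ reached along $S.r_1'.\gamma$ the algorithm must call \texttt{FindTarget} (possibly routing through \texttt{BacktrackEager}), which forces a transition of $w$ into $backtrack(s)$, contradicting the assumption on $L$. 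Cases~D and~E (the first transition of $w$ from $L$ is a \post, respectively all of $L$ is \post s disjoint from $w$) are the delicate ones, handled by Lemmas~\ref{lmm:cased} and~\ref{lmm:casee} via the \gtree\ machinery: Lemma~\ref{lmm:witness-cased} gives existence of the refined witness $z$, Lemma~\ref{lmm:r1P-must-reorder-post} shows $r_1'$ must be reordered with some \post\ $r_\mu'\in w$, Lemma~\ref{lmm:gtree-exists} constructs the \gtree\ encoding all intermediate reorderings, and Lemma~\ref{lmm:gtree-emdpor-property} (proved by induction on \gidx) shows EM-DPOR realizes every node's reorderings, eventually invoking \texttt{FindTarget($\cdot,r_1',r_\mu'$)}, which --- by the Case~B argument reused --- adds a transition of $w$ to $backtrack(s)$, the desired contradiction.

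Finally I would assemble the pieces: since cases A--E are exhaustive for the possible contents of a nonempty $L$, and each either directly produces a dependence-covering sequence of $w$ starting in $L$ (Case~A) or contradicts the standing assumption that none exists (Cases~B--E), the assumption is untenable; hence EM-DPOR explores a dependence-covering sequence of $w$ from $s$. Because $w$ was arbitrary, the set of transitions explored from $s$ is a dependence-covering set in $s$ (Definition~\ref{def:dc-set}). Finiteness and acyclicity of $\mathcal{S}_G$ guarantee that the DFS terminates and that the induction on backtracking order is well-founded, so the result holds at every state visited by \texttt{Explore}, which is exactly the statement of the theorem. One subtlety I would be careful about when wiring everything together: the happens-before relation $\to_w$ used inside the algorithm (Definition~\ref{def:happens-before}) carries information about previously explored sequences via the $RP$ sets, whereas the "pure" relation of Definition~\ref{def:happens-before-no-emdpor} does not; the proof must consistently use the former when reasoning about sequences the algorithm has explored and the latter for arbitrary sequences $w\in\mathcal{S}_G$, and the bridging "transition in a sequence" notion from Section~\ref{sec:proof-strategy} is what lets one transfer guarantees between the two.

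The main obstacle, unsurprisingly, is Case~D (and the nearly identical Case~E): one must show that when $L$ consists only of \post\ transitions disjoint from $w$, EM-DPOR still eventually forces a transition of $w$ into $backtrack(s)$. The difficulty is that \post s are independent of everything under Definition~\ref{def:dependence-relation}, so \texttt{Explore} never directly calls \texttt{FindTarget} on a \post; the reordering of \post s only happens through the recursive \texttt{FindTarget}/\texttt{ReschedulePending} cascade, and one has to certify that the cascade (i) is actually triggered, (ii) terminates, and (iii) converges on reordering precisely the pair $(r_1',r_\mu')$. This is exactly what the \gtree\ construction and its associated lemmas (\ref{lmm:witness-cased}--\ref{lmm:gtree-emdpor-property}) are designed to make rigorous, and the chief challenge is keeping the bookkeeping of the $done$, $backtrack$, and $RP$ sets synchronized with the \gtree\ levels across the nested inductions (on backtracking order on the outside, on \gidx\ on the inside) --- together with verifying that every transition the construction introduces outside $w$ genuinely lies in an $\mathit{enabledFuture}$ set (Lemma~\ref{lmm:transition-property-cased}) so that the assumptions of Construction~\ref{construct:cased} are met and no spurious dependence with $w$ is created.
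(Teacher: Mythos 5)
Your proposal follows essentially the same route as the paper: induction on the order in which \texttt{Explore} backtracks states, with the base case being a state with no enabled transitions, and the induction step discharged by the exhaustive case analysis A--E on the contents of $L$ via Lemmas~\ref{lmm:casea}--\ref{lmm:casee}, including the \gtree{}/\gidx{} machinery for the \post{}-only cases. The only quibble is a swapped parenthetical gloss of cases D and E near the end (D is the one where $L$ contains only \post{}s disjoint from $w$), but your citations and substantive reasoning assign the lemmas correctly, so this does not affect the argument.
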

\begin{proof}
The proof for this theorem is by induction on the order in which states visited by EM-DPOR
are backtracked, as explained in the proof strategy in Section~\ref{sec:proof-strategy}.

\paragraph{Base case.} The first backtracked state is a state with no 
transitions enabled. Such a state is reached as Algorithm~\texttt{Explore} performs
a depth first search on the state space of $\mathcal{S}_G$ which is finite and acyclic.
The induction hypothesis H1 vacuously holds for such a state with no outgoing transitions.

\paragraph{Induction hypothesis (same as induction hypothesis H1 in Section~\ref{sec:proof-strategy}).} 
Let $S$ be a sequence from $s_{init} \in \mathcal{S}_G$ reaching state $s$, explored by Algorithm \texttt{Explore} of \emdpor. 
Let $L$ be the set of transitions explored by \emdpor\ from the state $s$.
Then, for every transition sequence from a state reached on each recursive call \texttt{Explore($S.r$)}, for all $r \in L$, the algorithm 
explores a corresponding dependence-covering sequence.

\paragraph{Induction step.}
Lemmas~\ref{lmm:casea}, \ref{lmm:caseb}, \ref{lmm:casec}, \ref{lmm:cased} and \ref{lmm:casee}
prove the induction step for the exhaustive cases based on the contents of the set
$L$, introduced in Section~\ref{sec:proof-strategy}.

Thus EM-DPOR explores a dependence-covering sequence for any sequence $w$ in $\mathcal{S}_G$
from a state $s$ reached on \texttt{Explore($S$)}, which in turn establishes that the set of transitions $L$ explored from $s$
is a dependence-covering set as per Definition~\deflazypersistentset.
\end{proof}
\begin{figure*}[t]
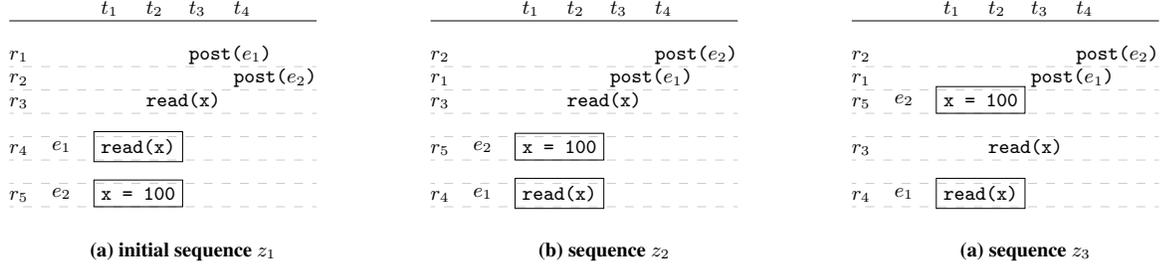

\begin{tabular*}{\textwidth}{l@{\;\;}c@{\;\;}r}
\begin{minipage}{\dimexpr0.33\textwidth-2\tabcolsep}
\centering
\scalebox{0.85}{
\begin{tabular}{@{}r@{\;\;\;\;\;\;\;\;\;\;\;\;\;}l@{\;}l@{\;}l@{\;}l@{}}
 & \tn{t1}{$t_1$}  & \tn{t2}{\;\;\;\;$t_2$} & \tn{t3}{$t_3$} & \tn{t4}{\;\;\;\;$t_4$} \\
\hline
\\
$r_1$ &&& \multicolumn{2}{@{}l}{\tn{n1}{\post{($e_1$)}}} \\\myhline
$r_2$ &&&& \tn{n2}{\;\;\;\;\post{($e_2$)}} \\\myhline
$r_3$ && \multicolumn{3}{@{}l}{\tn{n3}{\;\;\;\;\readX{(x)}}} \\\myhline
\\\myhline
$r_4$ &  \multicolumn{2}{@{}l}{\tn{n4}{\readX{(x)}}} & \\\myhline
\\\myhline
$r_5$ & \multicolumn{2}{@{}l}{\tn{n5}{\texttt{x = 100}}} & \\\myhline
% $r_5$ &&&& \tn{n5}{\;\;\;\;\tinit{($t_4$)}} \\\myhline
% $r_6$ &&&& \tn{n6}{\;\;\;\;\texttt{x = 2}} \\\myhline
\\
\end{tabular}
\tikz[remember picture,overlay] \node[fit=(n4), draw] {};
\tikz[remember picture,overlay] \node[fit=(n5), draw] {};
\tikz[remember picture,overlay] \node[left of=n4,xshift=-.2cm] (e1) {$e_1$};
\tikz[remember picture,overlay] \node[left of=n5,xshift=-.2cm] (e2) {$e_2$};
%%%%%%%%%%
\tikz[remember picture,overlay] \node (dummycaptionA) at ($(t1)!0.5!(t4)$) {};
\tikz[remember picture,overlay] \node[below of=dummycaptionA,xshift=.2cm,yshift=-2.8cm] 
(captionA) {\textbf{(a) initial sequence $z_1$}};
}
\end{minipage}% 
&
\begin{minipage}{\dimexpr0.33\textwidth-2\tabcolsep}
\centering
\scalebox{0.85}{
\begin{tabular}{@{}r@{\;\;\;\;\;\;\;\;\;\;\;\;\;}l@{\;}l@{\;}l@{\;}l@{}}
 & \tn{t1}{$t_1$}  & \tn{t2}{\;\;\;\;$t_2$} & \tn{t3}{$t_3$} & \tn{t4}{\;\;\;\;$t_4$} \\
\hline
\\
$r_2$ &&&& \tn{n2}{\;\;\;\;\post{($e_2$)}} \\\myhline
$r_1$ &&& \multicolumn{2}{@{}l}{\tn{n1}{\post{($e_1$)}}} \\\myhline
$r_3$ && \multicolumn{3}{@{}l}{\tn{n3}{\;\;\;\;\readX{(x)}}} \\\myhline
\\\myhline
$r_5$ & \multicolumn{2}{@{}l}{\tn{n5}{\texttt{x = 100}}} & \\\myhline
\\\myhline
$r_4$ &  \multicolumn{2}{@{}l}{\tn{n4}{\readX{(x)}}} & \\\myhline
% $r_5$ &&&& \tn{n5}{\;\;\;\;\tinit{($t_4$)}} \\\myhline
% $r_6$ &&&& \tn{n6}{\;\;\;\;\texttt{x = 2}} \\\myhline
\\
\end{tabular}
\tikz[remember picture,overlay] \node[fit=(n4), draw] {};
\tikz[remember picture,overlay] \node[fit=(n5), draw] {};
\tikz[remember picture,overlay] \node[left of=n4,xshift=-.2cm] (e1) {$e_1$};
\tikz[remember picture,overlay] \node[left of=n5,xshift=-.2cm] (e2) {$e_2$};
%%%%%%%%%%
\tikz[remember picture,overlay] \node (dummycaptionA) at ($(t1)!0.5!(t4)$) {};
\tikz[remember picture,overlay] \node[below of=dummycaptionA,xshift=.2cm,yshift=-2.8cm] 
(captionA) {\textbf{(b) sequence $z_2$}};
}
\end{minipage}
&
\begin{minipage}{\dimexpr0.33\textwidth-2\tabcolsep}
\centering
\scalebox{0.85}{
\begin{tabular}{@{}r@{\;\;\;\;\;\;\;\;\;\;\;\;\;}l@{\;}l@{\;}l@{\;}l@{}}
 & \tn{t1}{$t_1$}  & \tn{t2}{\;\;\;\;$t_2$} & \tn{t3}{$t_3$} & \tn{t4}{\;\;\;\;$t_4$} \\
\hline
\\
$r_2$ &&&& \tn{n2}{\;\;\;\;\post{($e_2$)}} \\\myhline
$r_1$ &&& \multicolumn{2}{@{}l}{\tn{n1}{\post{($e_1$)}}} \\\myhline
$r_5$ & \multicolumn{2}{@{}l}{\tn{n5}{\texttt{x = 100}}} & \\\myhline
\\\myhline
$r_3$ && \multicolumn{3}{@{}l}{\tn{n3}{\;\;\;\;\readX{(x)}}} \\\myhline
\\\myhline
$r_4$ &  \multicolumn{2}{@{}l}{\tn{n4}{\readX{(x)}}} & \\\myhline
% $r_5$ &&&& \tn{n5}{\;\;\;\;\tinit{($t_4$)}} \\\myhline
% $r_6$ &&&& \tn{n6}{\;\;\;\;\texttt{x = 2}} \\\myhline
\\
\end{tabular}
\tikz[remember picture,overlay] \node[fit=(n4), draw] {};
\tikz[remember picture,overlay] \node[fit=(n5), draw] {};
\tikz[remember picture,overlay] \node[left of=n4,xshift=-.2cm] (e1) {$e_1$};
\tikz[remember picture,overlay] \node[left of=n5,xshift=-.2cm] (e2) {$e_2$};
%%%%%%%%%%
\tikz[remember picture,overlay] \node (dummycaptionA) at ($(t1)!0.5!(t4)$) {};
\tikz[remember picture,overlay] \node[below of=dummycaptionA,xshift=.2cm,yshift=-2.8cm] 
(captionA) {\textbf{(a) sequence $z_3$}};
}
\end{minipage}
\end{tabular*}%
% \captionsetup{font=footnotesize}
\vspace{4mm}
\caption{An example illustrating challenges in reordering dependent 
\texttt{read} - \texttt{write} operations.}
\label{fig:readwrite-ex1}
\end{figure*}

\begin{figure}[t]
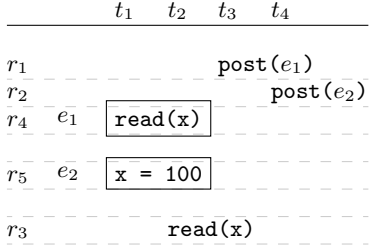

\centering
\begin{tabular}{@{}r@{\;\;\;\;\;\;\;\;\;\;\;\;\;}l@{\;}l@{\;}l@{\;}l@{}}
 & \tn{t1}{$t_1$}  & \tn{t2}{\;\;\;\;$t_2$} & \tn{t3}{$t_3$} & \tn{t4}{\;\;\;\;$t_4$} \\
\hline
\\
$r_1$ &&& \multicolumn{2}{@{}l}{\tn{n1}{\post{($e_1$)}}} \\\myhline
$r_2$ &&&& \tn{n2}{\;\;\;\;\post{($e_2$)}} \\\myhline
$r_4$ &  \multicolumn{2}{@{}l}{\tn{n4}{\readX{(x)}}} & \\\myhline
\\\myhline
$r_5$ & \multicolumn{2}{@{}l}{\tn{n5}{\texttt{x = 100}}} & \\\myhline
\\\myhline
$r_3$ && \multicolumn{3}{@{}l}{\tn{n3}{\;\;\;\;\readX{(x)}}} \\\myhline
% $r_5$ &&&& \tn{n5}{\;\;\;\;\tinit{($t_4$)}} \\\myhline
% $r_6$ &&&& \tn{n6}{\;\;\;\;\texttt{x = 2}} \\\myhline
\\
\end{tabular}
\tikz[remember picture,overlay] \node[fit=(n4), draw] {};
\tikz[remember picture,overlay] \node[fit=(n5), draw] {};
\tikz[remember picture,overlay] \node[left of=n4,xshift=-.2cm] (e1) {$e_1$};
\tikz[remember picture,overlay] \node[left of=n5,xshift=-.2cm] (e2) {$e_2$};

\caption{An interesting sequence $z$ corresponding to Example~\ref{ex:readwrite-1}, 
not explored by EM-DPOR when \texttt{read}s 
to same variable are considered independent.}
\label{fig:readwrite-ex1-missing}
\end{figure}

\section{Optimizations to \emdpor}
\label{app:emdpor-opt}

This section presents two main optimizations that we have applied to \emdpor\
(see Section~\ref{sec:algo}) to 
further prune the exploration of redundant states and transitions. Both of these
optimizations refine the set of pairs of dependent transitions and thus reduce the 
number of pairs of transitions considered dependent. We present 
modifications to \emdpor\ so as to not miss exploring interesting transition 
sequences when using the refined notion of dependence.

\subsection{Eliminate \texttt{read} - \texttt{read} Dependence}
\label{sec:read-write-dep}
\emdpor\ algorithm presented in Section~\ref{sec:emdpor-algo} needs to consider each pair of 
\texttt{read} operations to the  same shared variable as dependent, to not miss 
some interesting interleavings. This may result in exploring many redundant 
transition sequences.
A minor variation to Algorithm~\algoexplore\ (Algorithm \texttt{Explore}) while keeping
the algorithms \texttt{FindTarget}, \texttt{ReschedulePending} and 
\texttt{BacktrackEager} as is solves this problem. 
With this variation, \emdpor\ considers a \texttt{read} operation to be dependent 
only with a conflicting \texttt{write} operation. 

Before presenting the modifications to Algorithm~\texttt{Explore} in Section~\ref{sec:explore-modified}, 
we discuss some examples for which applying \emdpor\ presented in Section~\ref{sec:algo} as is, considering a 
pair of \texttt{read} operations to the same shared variable to be 
\emph{independent}, does not explore all possible partial orders of dependent 
transitions.

\subsubsection{Problematic Cases}
\label{sec:readwrite-examples}

\begin{example}
\label{ex:readwrite-1}
\upshape
Consider an execution trace $z_1$ given in Figure~\ref{fig:readwrite-ex1}(a), 
of an Android program. Among the threads $t_1$, $t_2$ and $t_3$
and $t_4$, only $t_1$ is associated with an event queue. Sequence 
$z_1$ has two pairs of \emph{may be co-enabled} or \emph{may be 
reordered} dependent transitions: $(r_3,r_5)$ and $(r_4,r_5)$, assuming every 
pair of \texttt{read} transitions to be independent.

Assume \emdpor\ to initially explore the sequence $z_1$ given in 
Figure~\ref{fig:readwrite-ex1}(a). On exploring a prefix of $z_1$ upto $r_4$, 
line~\linecond\ of Algorithm~\algoexplore\ (\texttt{Explore}) identifies $r_4$
and $r_5$ to be nearest pair of dependent and \emph{may be reordered} 
transitions executing on different handlers on the same thread. 
\texttt{FindTarget} invoked to compute backtracking choices to reorder $r_4$ 
and $r_5$ identifies $r_1$ and $r_2$ to be the corresponding diverging \texttt{post}s to be 
reordered. Thus, thread $t_4$ is added to backtracking set at the state $pre(z_1,r_1)$, \ie\ the state from which
$r_1$ is executed in sequence $z_1$. This eventually reorders $r_1$ and $r_2$ and results in 
exploring sequence $z_2$ (Figure~\ref{fig:readwrite-ex1}(b)).  On executing a 
prefix of $z_2$, transitions $r_3$ and $r_5$ are identified to be nearest 
dependent and co-enabled transitions. \texttt{FindTarget} reorders these two, 
eventually exploring sequence $z_3$ (Figure~\ref{fig:readwrite-ex1}(c)). \emdpor\ 
does not explore any other partial orders over $r_3$, $r_4$ and $r_5$ after sequence 
$z_3$. 

We note that, on seeing sequence $z_1$ \emdpor\ does not attempt to reorder 
$r_3$ and $r_5$, as $r_3$ is not the nearest reorderable dependent transition 
corresponding to $r_5$ in sequence $z_1$. As a result, 
Algorithm \texttt{Explore} considering a \texttt{read} to be only dependent 
with a conflicting \texttt{write} operation, misses exploring a sequence similar 
to $z$ (Figure~\ref{fig:readwrite-ex1-missing}), where $r_3$ reads the write 
performed by $r_5$ while $r_4$ does not.
\end{example}

\begin{figure}[h]
\centering
\begin{tikzpicture}[auto,node distance=16 mm,semithick, scale=.7, transform
shape]
\tikzstyle{stateNode} = [circle,draw=black,thick,inner sep=0pt,minimum 
size=7mm]
\tikzstyle{blankNode} = [thick,inner sep=0pt]
% \tikzstyle{branchNode} = [regular polygon,regular polygon sides=3, 
% draw=black,thick,inner sep=0pt,minimum size=12mm,anchor=north]
\tikzstyle{branchNode} = [isosceles triangle,draw,shape border 
rotate=90,isosceles triangle stretches=true, minimum height=20mm,minimum 
width=12mm,inner sep=0,anchor=north]

\node[stateNode] (sinit)  {$s_{init}$};

\node[stateNode] (s1) [below of=sinit,yshift=.3cm,xshift=-1.5cm] 
{$s_1$} edge [<-] (sinit);
\node[stateNode] (si) [below of=s1,yshift=.3cm, xshift=-1cm] 
{$s_i$} edge [<-] (s1);
\node[blankNode] (siB) [below of=si,yshift=.3cm] 
{} edge [<-] (si);
\node at (siB) [isosceles triangle, shape border rotate=+90,
draw, isosceles triangle stretches=true, minimum height=20mm,minimum 
width=12mm,inner sep=0, anchor=north] (siBranch) {};
\node[stateNode] (sj) [below of=s1,yshift=.3cm,xshift=1cm] 
{$s_j$} edge [<-] (s1);
\node[blankNode] (sjB) [below of=sj,yshift=.3cm] 
{} edge [<-] (sj);
\node at (sjB) [isosceles triangle, shape border rotate=+90,
draw, isosceles triangle stretches=true, minimum height=20mm,minimum 
width=12mm,inner sep=0, anchor=north] (sjBranch) {};

\node[blankNode] (dummynode1) at ($(si)!0.5!(sj)$) {$\ldotp$ $\ldotp$ 
$\ldotp$ $\ldotp$};

\node[blankNode] (arrowNode1) [left of=sinit, xshift=.85cm] {};
\node[blankNode] (arrowNode2) [left of=s1, xshift=.85cm] {} edge [<-,very 
thick,blue] (arrowNode1);
\node[blankNode] (arrowNode3) [left of=si, xshift=.85cm] {} edge [<-,very 
thick,blue] (arrowNode2);
\node[blankNode] (arrowNode4) [right of=si, xshift=-1.15cm] {};
\node[blankNode] (arrowNode5) [below of=s1, yshift=1.05cm] {} edge [<-,very 
thick, blue] (arrowNode4);
\node[blankNode] (arrowNode6) [left of=sj, xshift=1.1cm] {} edge [<-,very 
thick, blue] (arrowNode5);

\node[stateNode] (s2) [below of=sinit,yshift=.3cm,xshift=1.3cm] 
{$s_2$} edge [<-] (sinit);
\node[blankNode] (s2B) [below of=s2,yshift=.3cm] 
{} edge [<-] (s2);
\node at (s2B) [isosceles triangle, shape border rotate=+90,
draw, isosceles triangle stretches=true, minimum height=33mm,minimum 
width=12mm,inner sep=0, anchor=north] (s2Branch) {};

\node[stateNode] (s3) [below of=sinit,yshift=.3cm,xshift=3cm] 
{$s_n$} edge [<-] (sinit);
\node[blankNode] (s3B) [below of=s3,yshift=.3cm] 
{} edge [<-] (s3);
\node at (s3B) [isosceles triangle, shape border rotate=+90,
draw, isosceles triangle stretches=true, minimum height=33mm,minimum 
width=12mm,inner sep=0, anchor=north] (s3Branch) {};

\node[blankNode] (dummynode2) at ($(s2)!0.5!(s3)$) {$\ldotp$ $\ldotp$ 
$\ldotp$ $\ldotp$};
\end{tikzpicture}
\caption{Systematic exploration of branches in DFS based dynamic POR.}
\label{fig:dfs-branches}
\end{figure}
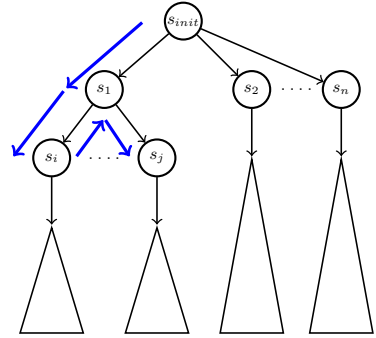

\begin{figure*}[t]
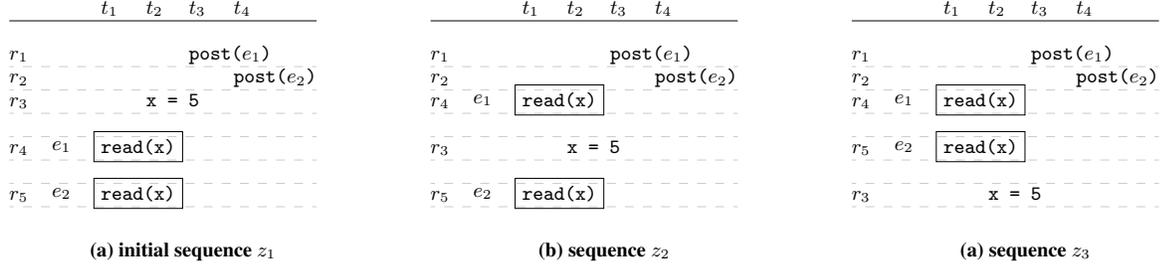

\begin{tabular*}{\textwidth}{l@{\;\;}c@{\;\;}r}
\begin{minipage}{\dimexpr0.33\textwidth-2\tabcolsep}
\centering
\scalebox{0.85}{
\begin{tabular}{@{}r@{\;\;\;\;\;\;\;\;\;\;\;\;\;}l@{\;}l@{\;}l@{\;}l@{}}
 & \tn{t1}{$t_1$}  & \tn{t2}{\;\;\;\;$t_2$} & \tn{t3}{$t_3$} & \tn{t4}{\;\;\;\;$t_4$} \\
\hline
\\
$r_1$ &&& \multicolumn{2}{@{}l}{\tn{n1}{\post{($e_1$)}}} \\\myhline
$r_2$ &&&& \tn{n2}{\;\;\;\;\post{($e_2$)}} \\\myhline
$r_3$ && \multicolumn{3}{@{}l}{\tn{n3}{\;\;\;\;\texttt{x = 5}}} \\\myhline
\\\myhline
$r_4$ &  \multicolumn{2}{@{}l}{\tn{n4}{\readX{(x)}}} & \\\myhline
\\\myhline
$r_5$ & \multicolumn{2}{@{}l}{\tn{n5}{\readX{(x)}}} & \\\myhline
\\
\end{tabular}
\tikz[remember picture,overlay] \node[fit=(n4), draw] {};
\tikz[remember picture,overlay] \node[fit=(n5), draw] {};
\tikz[remember picture,overlay] \node[left of=n4,xshift=-.2cm] (e1) {$e_1$};
\tikz[remember picture,overlay] \node[left of=n5,xshift=-.2cm] (e2) {$e_2$};
%%%%%%%%%%
\tikz[remember picture,overlay] \node (dummycaptionA) at ($(t1)!0.5!(t4)$) {};
\tikz[remember picture,overlay] \node[below of=dummycaptionA,xshift=.2cm,yshift=-2.8cm] 
(captionA) {\textbf{(a) initial sequence $z_1$}};
}
\end{minipage}% 
&
\begin{minipage}{\dimexpr0.33\textwidth-2\tabcolsep}
\centering
\scalebox{0.85}{
\begin{tabular}{@{}r@{\;\;\;\;\;\;\;\;\;\;\;\;\;}l@{\;}l@{\;}l@{\;}l@{}}
 & \tn{t1}{$t_1$}  & \tn{t2}{\;\;\;\;$t_2$} & \tn{t3}{$t_3$} & \tn{t4}{\;\;\;\;$t_4$} \\
\hline
\\
$r_1$ &&& \multicolumn{2}{@{}l}{\tn{n1}{\post{($e_1$)}}} \\\myhline
$r_2$ &&&& \tn{n2}{\;\;\;\;\post{($e_2$)}} \\\myhline
$r_4$ &  \multicolumn{2}{@{}l}{\tn{n4}{\readX{(x)}}} & \\\myhline
\\\myhline
$r_3$ && \multicolumn{3}{@{}l}{\tn{n3}{\;\;\;\;\texttt{x = 5}}} \\\myhline
\\\myhline
$r_5$ & \multicolumn{2}{@{}l}{\tn{n5}{\readX{(x)}}} & \\\myhline
\\
\end{tabular}
\tikz[remember picture,overlay] \node[fit=(n4), draw] {};
\tikz[remember picture,overlay] \node[fit=(n5), draw] {};
\tikz[remember picture,overlay] \node[left of=n4,xshift=-.2cm] (e1) {$e_1$};
\tikz[remember picture,overlay] \node[left of=n5,xshift=-.2cm] (e2) {$e_2$};
%%%%%%%%%%
\tikz[remember picture,overlay] \node (dummycaptionA) at ($(t1)!0.5!(t4)$) {};
\tikz[remember picture,overlay] \node[below of=dummycaptionA,xshift=.2cm,yshift=-2.8cm] 
(captionA) {\textbf{(b) sequence $z_2$}};
}
\end{minipage}
&
\begin{minipage}{\dimexpr0.33\textwidth-2\tabcolsep}
\centering
\scalebox{0.85}{
\begin{tabular}{@{}r@{\;\;\;\;\;\;\;\;\;\;\;\;\;}l@{\;}l@{\;}l@{\;}l@{}}
 & \tn{t1}{$t_1$}  & \tn{t2}{\;\;\;\;$t_2$} & \tn{t3}{$t_3$} & \tn{t4}{\;\;\;\;$t_4$} \\
\hline
\\
$r_1$ &&& \multicolumn{2}{@{}l}{\tn{n1}{\post{($e_1$)}}} \\\myhline
$r_2$ &&&& \tn{n2}{\;\;\;\;\post{($e_2$)}} \\\myhline
$r_4$ &  \multicolumn{2}{@{}l}{\tn{n4}{\readX{(x)}}} & \\\myhline
\\\myhline
$r_5$ & \multicolumn{2}{@{}l}{\tn{n5}{\readX{(x)}}} & \\\myhline
\\\myhline
$r_3$ && \multicolumn{3}{@{}l}{\tn{n3}{\;\;\;\;\texttt{x = 5}}} \\\myhline
\\
\end{tabular}
\tikz[remember picture,overlay] \node[fit=(n4), draw] {};
\tikz[remember picture,overlay] \node[fit=(n5), draw] {};
\tikz[remember picture,overlay] \node[left of=n4,xshift=-.2cm] (e1) {$e_1$};
\tikz[remember picture,overlay] \node[left of=n5,xshift=-.2cm] (e2) {$e_2$};
%%%%%%%%%%
\tikz[remember picture,overlay] \node (dummycaptionA) at ($(t1)!0.5!(t4)$) {};
\tikz[remember picture,overlay] \node[below of=dummycaptionA,xshift=.2cm,yshift=-2.8cm] 
(captionA) {\textbf{(a) sequence $z_3$}};
}
\end{minipage}
\end{tabular*}%
% \captionsetup{font=footnotesize}
\vspace{4mm}
\caption{An example illustrating challenges in reordering dependent 
\texttt{read} - \texttt{write} operations even when a \texttt{write} is attempted
to be reordered with multiple prior \texttt{read}s.}
\label{fig:readwrite-ex2}
\end{figure*}

\begin{figure}[t]
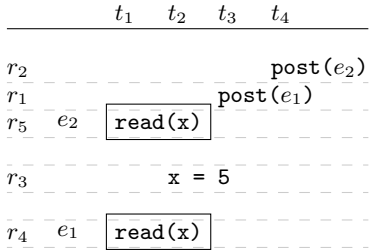

\centering
\begin{tabular}{@{}r@{\;\;\;\;\;\;\;\;\;\;\;\;\;}l@{\;}l@{\;}l@{\;}l@{}}
 & \tn{t1}{$t_1$}  & \tn{t2}{\;\;\;\;$t_2$} & \tn{t3}{$t_3$} & \tn{t4}{\;\;\;\;$t_4$} \\
\hline
\\
$r_2$ &&&& \tn{n2}{\;\;\;\;\post{($e_2$)}} \\\myhline
$r_1$ &&& \multicolumn{2}{@{}l}{\tn{n1}{\post{($e_1$)}}} \\\myhline
$r_5$ & \multicolumn{2}{@{}l}{\tn{n5}{\readX{(x)}}} & \\\myhline
\\\myhline
$r_3$ && \multicolumn{3}{@{}l}{\tn{n3}{\;\;\;\;\texttt{x = 5}}} \\\myhline
\\\myhline
$r_4$ &  \multicolumn{2}{@{}l}{\tn{n4}{\readX{(x)}}} & \\\myhline
% $r_5$ &&&& \tn{n5}{\;\;\;\;\tinit{($t_4$)}} \\\myhline
% $r_6$ &&&& \tn{n6}{\;\;\;\;\texttt{x = 2}} \\\myhline
\\
\end{tabular}
\tikz[remember picture,overlay] \node[fit=(n4), draw] {};
\tikz[remember picture,overlay] \node[fit=(n5), draw] {};
\tikz[remember picture,overlay] \node[left of=n4,xshift=-.2cm] (e1) {$e_1$};
\tikz[remember picture,overlay] \node[left of=n5,xshift=-.2cm] (e2) {$e_2$};

\caption{An interesting sequence $z$ corresponding to Example~\ref{ex:readwrite-2}, 
not explored by EM-DPOR when \texttt{read}s 
to same variable are considered independent.}
\label{fig:readwrite-ex2-missing}
\end{figure}

\paragraph{Analysis of Example~\ref{ex:readwrite-1}.} A DFS  based explorer 
explores all paths originating at a state in the state space before backtracking to a 
prior state in the search stack and exploring other branches. \emdpor\ is a POR 
algorithm which prunes some redundant transition sequences explored by a 
na\"{i}ve DFS based state space explorer. Hence, \emdpor\ should explore all the 
interesting interleaving of dependent transitions originating, say at some state 
$s_i$, before backtracking to a prior state in the stack, say $s_1$, and 
exploring other branches. This is because, after backtracking to state $s_1$ 
from $s_i$, the subspace rooted at $s_i$ will not be visited again. Thus, any interleaving of 
dependent transitions that could be explored only from $s_i$ will be missed, if 
not explored before backtracking to $s_1$. This is pictorially depicted in 
Figure~\ref{fig:dfs-branches}. In Figure~\ref{fig:dfs-branches} triangles 
represent state space reachable from the states to which the triangles are 
connected. Even though not shown in the figure, some of the states may overlap. 
The thick directed arrows depict the way in which state exploration proceeds. 
In case of Example~\ref{ex:readwrite-1}, \emdpor\ backtracked from a state even before 
exploring all the \emph{non-redundant} interleaving of dependent transitions 
reachable from that state. This is the cause of missing some interesting 
sequences.

We can solve this issue with \emdpor\ without having to consider every pair of 
\texttt{read} operations to the same memory location as dependent, as follows. 
In the context of line~\linecond\ of Algorithm~\texttt{Explore}, when the next 
transition on a thread $t$ in sate $s$ is a \texttt{write}, its
dependent transition can either be a \texttt{read} or a \texttt{write} to the 
same shared variable. Instead of invoking \texttt{FindTarget} to reorder a 
\texttt{write} operation $r'$ with its nearest executed dependent transition, 
we identify all the \emph{may be co-enabled} or \emph{may be reordered} 
dependent transitions upto nearest executed \texttt{write} operation, and 
compute backtracking choices to reorder all these identified dependent 
transitions with the \texttt{write} operation $r'$.

\begin{example}
\label{ex:readwrite-2}
\upshape
Consider an execution trace $z_1$ given in Figure~\ref{fig:readwrite-ex2}(a), 
of an event-driven multi-threaded program. Among the threads $t_1$, $t_2$, $t_3$
and $t_4$, only $t_1$ is associated with an event queue. Sequence 
$z_1$ has two pairs of \emph{may be co-enabled} dependent transitions: 
$(r_3,r_4)$ and $(r_3,r_5)$, assuming any pair of \texttt{read} operations to 
be independent.

Assume EM-DPOR initially explores sequence $z_1$ in 
Figure~\ref{fig:readwrite-ex1}(a). Algorithm~\texttt{Explore} identifies 
transition pairs $(r_3,r_4)$ and $(r_3,r_5)$ as dependent, identifies 
backtracking choices using \texttt{FindTarget}, and eventually explores 
sequences $z_2$ and $z_3$ (Figure~\ref{fig:readwrite-ex2-missing}(b) and (c) 
respectively). However, EM-DPOR does not explore any more interleaving of 
transitions $r_3$, $r_4$ and $r_5$, even if we use the modification discussed 
in the analysis presented for Example~\ref{ex:readwrite-1} (this modification 
computes backtracking choices to reorder $r_3$ with both $r_4$ and $r_5$ in 
sequence $z_3$, but is ineffective in this case). As a result, EM-DPOR misses 
exploring a transition sequence similar to $z$ (Figure~\ref{fig:readwrite-ex2-missing}), 
where $r_4$ reads the write performed by $r_3$ while $r_5$ does not.
\end{example}

\begin{figure*}[t]
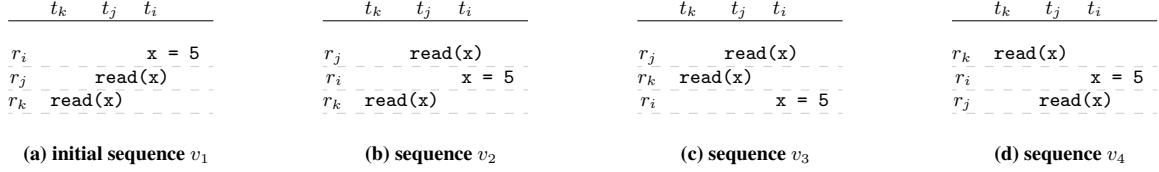

\begin{tabular*}{\textwidth}{l@{\;\;}c@{\;\;}c@{\;\;}r}
\begin{minipage}{\dimexpr0.25\textwidth-2\tabcolsep}
\centering
\scalebox{0.85}{
\begin{tabular}{@{}r@{\;\;\;\;}l@{\;}l@{\;}l@{}}
 & \tn{ti}{$t_k$}  & \tn{tj}{\;\;\;\;\;$t_j$} & \tn{tk}{\;$t_i$} \\
\hline
\\
$r_i$ && \multicolumn{2}{@{}l}{\tn{n1}{\;\;\;\;\;\;\;\;\;\;\;\;\;\texttt{x = 5}}} \\\myhline
$r_j$ & \multicolumn{3}{@{}l}{\tn{n3}{\;\;\;\;\;\;\;\;\readX{(x)}}} \\\myhline
$r_k$ &  \multicolumn{2}{@{}l}{\tn{n4}{\readX{(x)}}} &\\\myhline
\\
\end{tabular}
\tikz[remember picture,overlay] \node (dummycaptionA) at ($(ti)!0.5!(tk)$) {};
\tikz[remember picture,overlay] \node[below of=dummycaptionA,xshift=.2cm,yshift=-1.3cm] 
(captionA) {\textbf{(a) initial sequence $v_1$}};
}
\end{minipage}% 
&
\begin{minipage}{\dimexpr0.25\textwidth-2\tabcolsep}
\centering
\scalebox{0.85}{
\begin{tabular}{@{}r@{\;\;\;\;}l@{\;}l@{\;}l@{}}
 & \tn{ti}{$t_k$}  & \tn{tj}{\;\;\;\;\;$t_j$} & \tn{tk}{\;$t_i$} \\
\hline
\\
$r_j$ & \multicolumn{3}{@{}l}{\tn{n3}{\;\;\;\;\;\;\;\;\readX{(x)}}} \\\myhline
$r_i$ && \multicolumn{2}{@{}l}{\tn{n1}{\;\;\;\;\;\;\;\;\;\;\;\;\;\texttt{x = 5}}} \\\myhline
$r_k$ &  \multicolumn{2}{@{}l}{\tn{n4}{\readX{(x)}}} &\\\myhline
\\
\end{tabular}
\tikz[remember picture,overlay] \node (dummycaptionA) at ($(ti)!0.5!(tk)$) {};
\tikz[remember picture,overlay] \node[below of=dummycaptionA,xshift=.2cm,yshift=-1.3cm] 
(captionA) {\textbf{(b) sequence $v_2$}};
}
\end{minipage}
&
\begin{minipage}{\dimexpr0.25\textwidth-2\tabcolsep}
\centering
\scalebox{0.85}{
\begin{tabular}{@{}r@{\;\;\;\;}l@{\;}l@{\;}l@{}}
 & \tn{ti}{$t_k$}  & \tn{tj}{\;\;\;\;\;$t_j$} & \tn{tk}{\;$t_i$} \\
\hline
\\
$r_j$ & \multicolumn{3}{@{}l}{\tn{n3}{\;\;\;\;\;\;\;\;\readX{(x)}}} \\\myhline
$r_k$ &  \multicolumn{2}{@{}l}{\tn{n4}{\readX{(x)}}} &\\\myhline
$r_i$ && \multicolumn{2}{@{}l}{\tn{n1}{\;\;\;\;\;\;\;\;\;\;\;\;\;\texttt{x = 5}}} \\\myhline
\\
\end{tabular}
\tikz[remember picture,overlay] \node (dummycaptionA) at ($(ti)!0.5!(tk)$) {};
\tikz[remember picture,overlay] \node[below of=dummycaptionA,xshift=.2cm,yshift=-1.3cm] 
(captionA) {\textbf{(c) sequence $v_3$}};
}
\end{minipage}
&
\begin{minipage}{\dimexpr0.25\textwidth-2\tabcolsep}
\centering
\scalebox{0.85}{
\begin{tabular}{@{}r@{\;\;\;\;}l@{\;}l@{\;}l@{}}
 & \tn{ti}{$t_k$}  & \tn{tj}{\;\;\;\;\;$t_j$} & \tn{tk}{\;$t_i$} \\
\hline
\\
$r_k$ &  \multicolumn{2}{@{}l}{\tn{n4}{\readX{(x)}}} &\\\myhline
$r_i$ && \multicolumn{2}{@{}l}{\tn{n1}{\;\;\;\;\;\;\;\;\;\;\;\;\;\texttt{x = 5}}} \\\myhline
$r_j$ & \multicolumn{3}{@{}l}{\tn{n3}{\;\;\;\;\;\;\;\;\readX{(x)}}} \\\myhline
\\
\end{tabular}
\tikz[remember picture,overlay] \node (dummycaptionA) at ($(ti)!0.5!(tk)$) {};
\tikz[remember picture,overlay] \node[below of=dummycaptionA,xshift=.2cm,yshift=-1.3cm] 
(captionA) {\textbf{(d) sequence $v_4$}};
}
\end{minipage}
\end{tabular*}%
% \captionsetup{font=footnotesize}
\vspace{4mm}
\caption{A scenario analogous to that presented in Figure~\ref{fig:readwrite-ex2}
but in the context of a multi-threaded program.}
\label{fig:readwrite-ex2-analog}
\end{figure*}

\paragraph{Analysis of Example~\ref{ex:readwrite-2}} A scenario in case of a pure 
multi-threaded program analogous to that in Figure~\ref{fig:readwrite-ex2}, is 
shown in Figure~\ref{fig:readwrite-ex2-analog}. Transitions $r_i$, $r_j$ and 
$r_k$ executed on threads $t_i$, $t_j$ and $t_k$ respectively in 
Figure~\ref{fig:readwrite-ex2-analog} correspond to transitions $r_3$, 
$r_4$ and $r_5$ respectively in Figure~\ref{fig:readwrite-ex2}. 
Relative order of transitions $r_i$, $r_j$ and $r_k$ in 
Figure~\ref{fig:readwrite-ex2-analog}(a), (b) and (c) correspond to relative 
order of $r_3$, $r_4$ and $r_5$ in Figure~\ref{fig:readwrite-ex2}(a), (b) 
and (c) respectively. On exploring sequence $v_1$ 
(Figure~\ref{fig:readwrite-ex2-analog}), DPOR (even EM-DPOR) adds thread $t_k$ 
to backtracking set at state prior to executing $r_i$, when computing 
backtracking choices to reorder $r_i$ and $r_k$. This leads to exploring 
sequence $v_4$ (Figure~\ref{fig:readwrite-ex2-analog}(d)) whose analogue is not 
explored by EM-DPOR in case of Example~\ref{ex:readwrite-2} when considering 
\texttt{read} operations to be independent. Thus, for EM-DPOR to explore 
sequence $z$ (Figure~\ref{fig:readwrite-ex2-missing}), \texttt{FindTarget} 
called to reorder $r_3$ and $r_5$ in sequence $z_1$, should be able to identify 
the presence of \texttt{read}s to same variable between the transitions $r_3$ and $r_5$ and coming 
from other handlers on the same thread. \texttt{FindTarget} should then reorder \texttt{post}s of such 
handlers with \post\ of $e_2$. However, this involves modifications to 
Algorithm \texttt{FindTarget}. Instead of modifying \texttt{FindTarget}, we 
provide minor modifications to Algorithm \texttt{Explore} to identify relevant 
event handlers to be reordered in such scenarios. In case of 
Example~\ref{ex:readwrite-2}, our modification identifies events $e_1$ 
and $e_2$ for reordering on exploring sequence $z_3$ 
(Figure~\ref{fig:readwrite-ex2}(c)) upto the transition $r_5$. %, and not when exploring sequence $z_1$. 

In addition to the modification we discussed under analysis for 
Example~\ref{ex:readwrite-1}, we do the following in Algorithm \texttt{Explore}.
After computing backtracking choices to reorder a \texttt{write} transition 
$r'$ with its nearest executed reorderable dependent transition $r$, 
we assume a temporary happens-before mapping from $r$ to $r'$. We then invoke  
\texttt{FindTarget} to reorder $r'$ with other conflicting \readX{} transitions upto 
the nearest executed conflicting \texttt{write}. Invoking 
\texttt{FindTarget} assuming such a happens-before relation from $r$ to $r'$, enables 
\texttt{FindTarget} to add tasks corresponding to $r$ into the set $candidates$ 
computed by the steps of \texttt{FindTarget}
(refer Algorithm~\algofindtarget). In scenarios similar 
to sequence $z_3$ in Example~\ref{ex:readwrite-2}, this enables 
\texttt{FindTarget} to reach Step~3, invoke \texttt{ReschedulePending} (line~\linerecursivefindtarget\
in Algorithm~\algofindtarget) and identify \post\ operations of relevant 
event handlers for reordering. The modified version of 
Algorithm~\texttt{Explore} is presented as Algorithm~\ref{algo:explore-optimized}.

%%%%%%%%  modified algorithm explore   %%%%%%%%%
{
\SetEndCharOfAlgoLine{}
\SetInd{6pt}{0.01pt}
\begin{algorithm*}[t!]
\small
\SetKwFunction{findTarget}{FindTarget}
\SetKwFunction{Explore}{Explore}

\SetKwInput{KwInput}{Input}
\KwInput{a transition sequence \boldmath$w$\unboldmath$: r_1 \ldots r_n$ and a set 
\boldmath$rp$ \unboldmath of posts to be reordered}

Let $s$ $=$ $last(w)$; \phantom{xx} $RP(s) = rp$\;
\nllabel{mod-explore-line-init}

\ForEach{thread $t$}{
\nllabel{mod-explore-loop-start}
\Indp
\If{$\exists i = \max(\{ i \in dom(w) \mid r_i \text{ is dependent and }
  (\text{may be co-enabled or reordered}$\\ 
  \nonl\hspace{4mm}$\text{ with } next(s,t) ) \text{ and }  
  i\; {\not\to}_{w}\; task(next(s,t))\})$
\nllabel{mod-explore-line-cond}
}{
   \Indp
   \tcp{Identify backtracking point and choice to reorder $r_i$ and $next(s,t)$}
   \texttt{FindTarget}$(w, r_i, next(s,t))$\;
   \nllabel{mod-explore-line-findtarget}
   
    \If{$opType(next(s,t)) =$ \upshape{WRITE} \nllabel{mod-explore-line-ifwrite}}{
       \Indp
       Add a happens-before edge between $r_i$ and $next(s,t)$\; 
       \nllabel{mod-explore-read-write-temp-hb}
       Let $i' = max(\{i' \in dom(w) \mid opType(r_{i'}) = \text{ WRITE}$\\ 
       \nonl\hspace{4mm}$\text{and } var(r_{i'}) = var(next(s,t))\} \cup \{-1\})$\;
       \nllabel{mod-explore-line-recent-write}
       \ForEach{$j \in dom(w) \mid r_j \text{ is dependent and } (\text{may be co-enabled}$\\ 
       \nonl\hspace{4mm}$\text{or reordered with } next(s,t)) \text{ and } i\; {\not\to}_{w}\; task(next(s,t)) \text{ and } j \geq i'$ \nllabel{mod-explore-foreach-dep}}{
           \Indp
           \texttt{FindTarget}$(w, r_j, next(s,t))$ 
           \nllabel{mod-explore-line-findtarget-2}
       }
       \nllabel{mod-explore-foreach-dep-end}
       Remove the happens-before edge between $r_i$ and $next(s,t)$\;
       \nllabel{mod-explore-remove-hb}
   } \nllabel{mod-explore-line-ifwrite-end}
}
}
\nllabel{mod-explore-loop-end}

\If{$\exists t \in enabled(s)$}{
\nllabel{mod-explore-line-dfs-start}
\Indp
  Let $backtrack(s)$ $=$ $\{t\}$ and $done(s)$ $=$ $\emptyset$\;
  \tcp{Perform selective depth-first traversal}
  \While{$\exists t \in (backtrack(s) \setminus done(s))$}{
\Indp
    Let $r = next(s,t)$; Execute transition $r$\;
%     \If(\tcp*[h]{Fix the order of reordered \post\ operations}){$r$ is a \emph{\texttt{post}} operation \nllabel{line-ifpost} }{
    \If{$r$ is a \emph{\texttt{post}} operation \nllabel{mod-explore-line-ifpost} }{
\Indp
    \If{$\exists k = max(\{k \in dom(w) \mid r_k \in reorderedPosts(r,w.r)\})$  \nllabel{mod-explore-line-maxpost}}
    {  % Add a happens-before edge from $r_k$ to $r$;
    \Indp
    Add thread $t$ to $backtrack(pre(w,k))$\;
    \nllabel{mod-explore-line-addpost-to-backtrack}
    }
    $rp$ = $RP(s)\setminus\{(r,\_) \in RP(s)\}$\;
    \nllabel{mod-explore-line-remover-from-rp}
    }
    \nllabel{mod-explore-line-endifpost}
    Add $t$ to $done(s)$; $\texttt{Explore}(w \cdot r)$\;
    \nllabel{mod-explore-line-done}
  }
}
\nllabel{mod-explore-line-dfs-end}
\caption{\texttt{Explore}}
\label{algo:explore-optimized}
\end{algorithm*}
}

\subsubsection{Modifications to Algorithm \texttt{Explore}}
\label{sec:explore-modified}
Algorithm \texttt{Explore} (given in Algorithm~\ref{algo:explore-optimized}) modified to consider any pair of \texttt{read}s to the same 
variable to be independent, is similar to Algorithm~\algoexplore\ presented in Section~\ref{sec:algo}
except for lines~\ref{mod-explore-line-ifwrite}-\ref{mod-explore-line-ifwrite-end} in 
Algorithm~\ref{algo:explore-optimized}. The line numbers referred henceforth correspond
to Algorithm~\ref{algo:explore-optimized}. Function
$opType(r)$ finds the type of visible operation in transition $r$. 
After invoking \texttt{FindTarget} on line~\ref{mod-explore-line-findtarget} to compute
backtracking choices and backtracking state to reorder $next(s,t)$ with the nearest 
\emph{may be co-enabled or reordered} dependent transition, 
lines~\ref{mod-explore-read-write-temp-hb}-\ref{mod-explore-remove-hb} are executed only if 
$next(s,t)$ has a \texttt{write} as visible operation. 

Line~\ref{mod-explore-read-write-temp-hb}
adds a temporary happens-before mapping from nearest dependent transition $r_i$ to $next(s,t)$.
Note that the happens-before relation defined in Definition~\defhappensbefore\ does not 
allow such a mapping. However, we can achieve the mapping $i \to_w task(next(s,t))$ 
by assuming each transition to be prefixed by a $NOP$ operation which does not alter the
state. We execute the $NOP$ operation and add a happens-before mapping from $r_i$ to this
$NOP$, which results in the required $i\to_{w.NOP}task(next(s,t))$. 
Line~\ref{mod-explore-line-recent-write} computes the index of the most recent 
conflicting \texttt{write} and stores it in $i'$. Absence or prior \texttt{write}s
to the variable accessed by transition $next(s,t)$, assigns $-1$ to $i'$. 
Lines~\ref{mod-explore-foreach-dep}--\ref{mod-explore-foreach-dep-end} compute backtracking
choices and backtracking states to reorder $next(s,t)$ with all the prior 
\emph{may be co-enabled or reordered} dependent transitions upto $r_i'$ with no 
happens-before mapping between them. Line~\ref{mod-explore-remove-hb} removes the HB 
mapping between $r_i'$ and $NOP$ corresponding to $next(s,t)$. 

Addition of temporary HB mapping and computing backtracking information
for all the relevant dependent transitions when $next(s,t)$ is a \texttt{write}
operation, solves the issues explained through Examples~\ref{ex:readwrite-1}-\ref{ex:readwrite-2}.
In case of Example~\ref{ex:readwrite-2}, line~\ref{mod-explore-line-findtarget} of Algorithm~\ref{algo:explore-optimized}
invokes \texttt{FindTarget} to reorder transition $r_5$ and $r_3$ when exploring sequence $z_3$ (see Figure~\ref{fig:readwrite-ex2}).
Then, line~\ref{mod-explore-read-write-temp-hb} adds a temporary happens-before mapping
from $r_5$ to $r_3$. Lines~\ref{mod-explore-foreach-dep} and \ref{mod-explore-line-findtarget-2} 
identify $r_4$ as a relevant dependent transition to be reordered with $r_3$ and
invoke \texttt{FindTarget}. Due to happens-before mapping from $r_5$ to $r_3$, Step~2
of \texttt{FindTarget} (see Algorithm~\ref{algo:findtarget} in Section~\ref{sec:algo}) 
compute set $candidates = \{(t_2,\bot), (t_1,e_2)\}$.
Since both threads $t_1$ and $t_2$ corresponding to $candidates$ are in $done$ set
at the state from where $r_4$ is executed, Step~3 is reached. Step~3 of \texttt{FindTarget}
computes $pending = \{(t_1,e_2)\}$ and invokes \texttt{ReschedulePending} which
reorders events $e_1$ and $e_2$, eventually exploring the sequence given in Figure~\ref{fig:readwrite-ex2-missing}.
Thus, modified Algorithm \texttt{Explore} enables EM-DPOR to consider a \texttt{read}
operation to be dependent only with conflicting \texttt{write} operations, and thus
avoids exploring some redundant transition sequences reaching same final states.

\begin{figure*}[t]
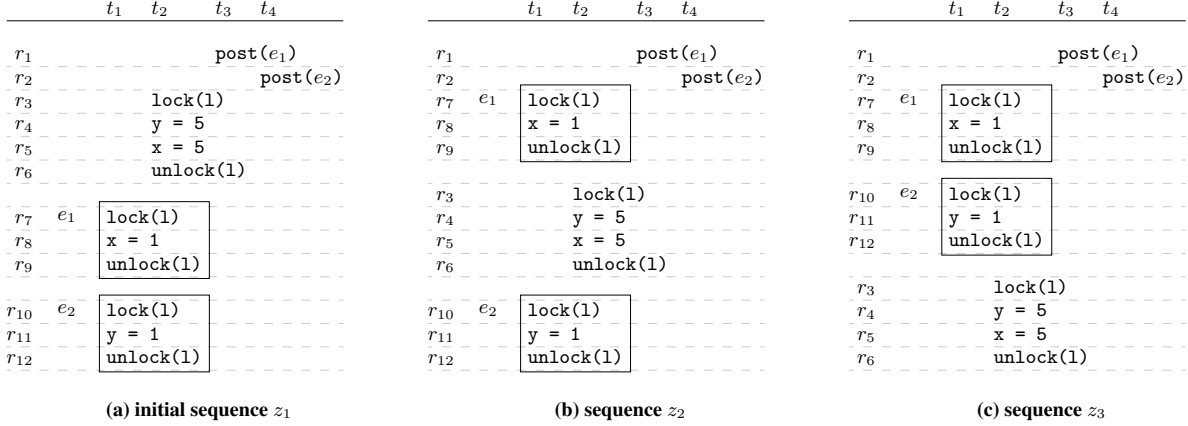

\begin{tabular*}{\textwidth}{l@{\;\;}c@{\;\;}r}
\begin{minipage}{\dimexpr0.33\textwidth-2\tabcolsep}
\centering
\scalebox{0.85}{
\begin{tabular}{@{}r@{\;\;\;\;\;\;\;\;\;\;\;\;\;}l@{\;}l@{\;}l@{\;}l@{}}
 & \tn{t1}{$t_1$}  & \tn{t2}{\;\;\;\;$t_2$} & \tn{t3}{$t_3$} & \tn{t4}{\;\;\;\;$t_4$} \\
\hline
\\
$r_1$ &&& \multicolumn{2}{@{}l}{\tn{n1}{\post{($e_1$)}}} \\\myhline
$r_2$ &&&& \tn{n2}{\;\;\;\;\post{($e_2$)}} \\\myhline
$r_3$ && \multicolumn{3}{@{}l}{\tn{n3}{\;\;\;\;\acquire{(l)}}} \\\myhline
$r_4$ && \multicolumn{3}{@{}l}{\tn{n4}{\;\;\;\;\texttt{y = 5}}} \\\myhline
$r_5$ && \multicolumn{3}{@{}l}{\tn{n5}{\;\;\;\;\texttt{x = 5}}} \\\myhline
$r_6$ && \multicolumn{3}{@{}l}{\tn{n6}{\;\;\;\;\release{(l)}}} \\\myhline
\\\myhline
$r_7$ &  \multicolumn{2}{@{}l}{\tn{n7}{\acquire{(l)}}} & \\\myhline
$r_8$ &  \multicolumn{2}{@{}l}{\tn{n8}{\texttt{x = 1}}} & \\\myhline
$r_9$ &  \multicolumn{2}{@{}l}{\tn{n9}{\release{(l)}}} & \\\myhline
\\\myhline
$r_{10}$ & \multicolumn{2}{@{}l}{\tn{n10}{\acquire{(l)}}} & \\\myhline
$r_{11}$ & \multicolumn{2}{@{}l}{\tn{n11}{\texttt{y = 1}}} & \\\myhline
$r_{12}$ & \multicolumn{2}{@{}l}{\tn{n12}{\release{(l)}}} & \\\myhline
\\
\end{tabular}
\tikz[remember picture,overlay] \node[fit=(n7)(n8)(n9), draw] {};
\tikz[remember picture,overlay] \node[fit=(n10)(n11)(n12), draw] {};
\tikz[remember picture,overlay] \node[left of=n7,xshift=-.2cm] (e1) {$e_1$};
\tikz[remember picture,overlay] \node[left of=n10,xshift=-.2cm] (e2) {$e_2$};
%%%%%%%%%%
\tikz[remember picture,overlay] \node (dummycaptionA) at ($(t1)!0.5!(t4)$) {};
\tikz[remember picture,overlay] \node[below of=dummycaptionA,xshift=.2cm,yshift=-5.3cm] 
(captionA) {\textbf{(a) initial sequence $z_1$}};
}
\end{minipage}% 
&
\begin{minipage}{\dimexpr0.33\textwidth-2\tabcolsep}
\centering
\scalebox{0.85}{
\begin{tabular}{@{}r@{\;\;\;\;\;\;\;\;\;\;\;\;\;}l@{\;}l@{\;}l@{\;}l@{}}
 & \tn{t1}{$t_1$}  & \tn{t2}{\;\;\;\;$t_2$} & \tn{t3}{$t_3$} & \tn{t4}{\;\;\;\;$t_4$} \\
\hline
\\
$r_1$ &&& \multicolumn{2}{@{}l}{\tn{n1}{\post{($e_1$)}}} \\\myhline
$r_2$ &&&& \tn{n2}{\;\;\;\;\post{($e_2$)}} \\\myhline
$r_7$ &  \multicolumn{2}{@{}l}{\tn{n7}{\acquire{(l)}}} & \\\myhline
$r_8$ &  \multicolumn{2}{@{}l}{\tn{n8}{\texttt{x = 1}}} & \\\myhline
$r_9$ &  \multicolumn{2}{@{}l}{\tn{n9}{\release{(l)}}} & \\\myhline
\\\myhline
$r_3$ && \multicolumn{3}{@{}l}{\tn{n3}{\;\;\;\;\acquire{(l)}}} \\\myhline
$r_4$ && \multicolumn{3}{@{}l}{\tn{n4}{\;\;\;\;\texttt{y = 5}}} \\\myhline
$r_5$ && \multicolumn{3}{@{}l}{\tn{n5}{\;\;\;\;\texttt{x = 5}}} \\\myhline
$r_6$ && \multicolumn{3}{@{}l}{\tn{n6}{\;\;\;\;\release{(l)}}} \\\myhline
\\\myhline
$r_{10}$ & \multicolumn{2}{@{}l}{\tn{n10}{\acquire{(l)}}} & \\\myhline
$r_{11}$ & \multicolumn{2}{@{}l}{\tn{n11}{\texttt{y = 1}}} & \\\myhline
$r_{12}$ & \multicolumn{2}{@{}l}{\tn{n12}{\release{(l)}}} & \\\myhline
\\
\end{tabular}
\tikz[remember picture,overlay] \node[fit=(n7)(n8)(n9), draw] {};
\tikz[remember picture,overlay] \node[fit=(n10)(n11)(n12), draw] {};
\tikz[remember picture,overlay] \node[left of=n7,xshift=-.2cm] (e1) {$e_1$};
\tikz[remember picture,overlay] \node[left of=n10,xshift=-.2cm] (e2) {$e_2$};
%%%%%%%%%%
\tikz[remember picture,overlay] \node (dummycaptionA) at ($(t1)!0.5!(t4)$) {};
\tikz[remember picture,overlay] \node[below of=dummycaptionA,xshift=.2cm,yshift=-5.3cm] 
(captionA) {\textbf{(b) sequence $z_2$}};
}
\end{minipage}
&
\begin{minipage}{\dimexpr0.33\textwidth-2\tabcolsep}
\centering
\scalebox{0.85}{
\begin{tabular}{@{}r@{\;\;\;\;\;\;\;\;\;\;\;\;\;}l@{\;}l@{\;}l@{\;}l@{}}
 & \tn{t1}{$t_1$}  & \tn{t2}{\;\;\;\;$t_2$} & \tn{t3}{$t_3$} & \tn{t4}{\;\;\;\;$t_4$} \\
\hline
\\
$r_1$ &&& \multicolumn{2}{@{}l}{\tn{n1}{\post{($e_1$)}}} \\\myhline
$r_2$ &&&& \tn{n2}{\;\;\;\;\post{($e_2$)}} \\\myhline
$r_7$ &  \multicolumn{2}{@{}l}{\tn{n7}{\acquire{(l)}}} & \\\myhline
$r_8$ &  \multicolumn{2}{@{}l}{\tn{n8}{\texttt{x = 1}}} & \\\myhline
$r_9$ &  \multicolumn{2}{@{}l}{\tn{n9}{\release{(l)}}} & \\\myhline
\\\myhline
$r_{10}$ & \multicolumn{2}{@{}l}{\tn{n10}{\acquire{(l)}}} & \\\myhline
$r_{11}$ & \multicolumn{2}{@{}l}{\tn{n11}{\texttt{y = 1}}} & \\\myhline
$r_{12}$ & \multicolumn{2}{@{}l}{\tn{n12}{\release{(l)}}} & \\\myhline
\\\myhline
$r_3$ && \multicolumn{3}{@{}l}{\tn{n3}{\;\;\;\;\acquire{(l)}}} \\\myhline
$r_4$ && \multicolumn{3}{@{}l}{\tn{n4}{\;\;\;\;\texttt{y = 5}}} \\\myhline
$r_5$ && \multicolumn{3}{@{}l}{\tn{n5}{\;\;\;\;\texttt{x = 5}}} \\\myhline
$r_6$ && \multicolumn{3}{@{}l}{\tn{n6}{\;\;\;\;\release{(l)}}} \\\myhline
\\
\end{tabular}
\tikz[remember picture,overlay] \node[fit=(n7)(n8)(n9), draw] {};
\tikz[remember picture,overlay] \node[fit=(n10)(n11)(n12), draw] {};
\tikz[remember picture,overlay] \node[left of=n7,xshift=-.2cm] (e1) {$e_1$};
\tikz[remember picture,overlay] \node[left of=n10,xshift=-.2cm] (e2) {$e_2$};
%%%%%%%%%%
\tikz[remember picture,overlay] \node (dummycaptionA) at ($(t1)!0.5!(t4)$) {};
\tikz[remember picture,overlay] \node[below of=dummycaptionA,xshift=.2cm,yshift=-5.3cm] 
(captionA) {\textbf{(c) sequence $z_3$}};
}
\end{minipage}
\end{tabular*}%
% \captionsetup{font=footnotesize}
\vspace{4mm}
\caption{An example illustrating challenges in reordering dependent 
\texttt{lock} operations when using modified dependence relation..}
\label{fig:mod-lock-dep}
\end{figure*}

\begin{figure}[t]
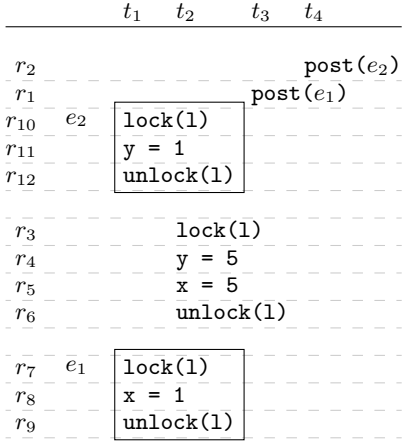

\centering
\begin{tabular}{@{}r@{\;\;\;\;\;\;\;\;\;\;\;\;\;}l@{\;}l@{\;}l@{\;}l@{}}
 & \tn{t1}{$t_1$}  & \tn{t2}{\;\;\;\;$t_2$} & \tn{t3}{$t_3$} & \tn{t4}{\;\;\;\;$t_4$} \\
\hline
\\
$r_2$ &&&& \tn{n2}{\;\;\;\;\post{($e_2$)}} \\\myhline
$r_1$ &&& \multicolumn{2}{@{}l}{\tn{n1}{\post{($e_1$)}}} \\\myhline
$r_{10}$ & \multicolumn{2}{@{}l}{\tn{n10}{\acquire{(l)}}} & \\\myhline
$r_{11}$ & \multicolumn{2}{@{}l}{\tn{n11}{\texttt{y = 1}}} & \\\myhline
$r_{12}$ & \multicolumn{2}{@{}l}{\tn{n12}{\release{(l)}}} & \\\myhline
\\\myhline
$r_3$ && \multicolumn{3}{@{}l}{\tn{n3}{\;\;\;\;\acquire{(l)}}} \\\myhline
$r_4$ && \multicolumn{3}{@{}l}{\tn{n4}{\;\;\;\;\texttt{y = 5}}} \\\myhline
$r_5$ && \multicolumn{3}{@{}l}{\tn{n5}{\;\;\;\;\texttt{x = 5}}} \\\myhline
$r_6$ && \multicolumn{3}{@{}l}{\tn{n6}{\;\;\;\;\release{(l)}}} \\\myhline
\\\myhline
$r_7$ &  \multicolumn{2}{@{}l}{\tn{n7}{\acquire{(l)}}} & \\\myhline
$r_8$ &  \multicolumn{2}{@{}l}{\tn{n8}{\texttt{x = 1}}} & \\\myhline
$r_9$ &  \multicolumn{2}{@{}l}{\tn{n9}{\release{(l)}}} & \\\myhline
\\
\end{tabular}
\tikz[remember picture,overlay] \node[fit=(n7)(n8)(n9), draw] {};
\tikz[remember picture,overlay] \node[fit=(n10)(n11)(n12), draw] {};
\tikz[remember picture,overlay] \node[left of=n7,xshift=-.2cm] (e1) {$e_1$};
\tikz[remember picture,overlay] \node[left of=n10,xshift=-.2cm] (e2) {$e_2$};

\caption{An interesting sequence $z$ corresponding to Example~\ref{ex:mod-lock-dep}, 
not explored by EM-DPOR.}
\label{fig:lock-mod-missing}
\end{figure}

\subsection{Eliminate Dependence Between Non-conflicting \texttt{lock} Operations}
\label{sec:lock-dep}
Dependence relation for an event-driven program with a state space $\mathcal{S}_G$ 
and given in Definition~\defdependencerelation,
considers every pair of \texttt{lock} operations on the same lock object to be 
dependent. This holds even for lock acquires on different event handlers on the 
same thread. This is because such operations disable each other in the event-parallel transition 
system $\mathcal{P}_G$ introduced in Section~\ref{sec:relations}
and are thus considered dependent in $\mathcal{P}_G$ (see Definition~\defepmtdependence).
When identifying dependent transitions in different event handlers on the same 
thread, the dependence relation (see Definition~\defdependencerelation) defined 
for $\mathcal{S}_G$ uses the dependences identified over $\mathcal{P}_G$.
Hence, lock acquires on a common lock by different event handlers on the same thread 
will be considered dependent in $\mathcal{S}_G$ as well.
We assume each lock acquired within
an event handler to be released within the same event handler. This is a reasonable
assumption as some widely used programming language features like Java's \texttt{synchronized} 
construct for nested acquire and release of lock objects support this assumption.
Also, acquiring and releasing locks in different event handlers can be hard to reason
and problematic if the event handler acquiring the lock is not guaranteed to always
precede the handler releasing the lock. With this assumption, any pair of \texttt{lock}
operations on the same lock object executed on different event handlers on the same
thread can never contend or deadlock with each other, as (1)~operations executing on the same
thread are never co-enabled, (2)~each handler is assumed to execute to completion 
before the execution of another handler, and (3)~a lock acquired in an event
handler is released within the same event handler as per our assumption. 
Hence, lock acquires from different event handlers on the same thread cannot 
simultaneously involve in interesting states like deadlocks. We thus consider \texttt{lock}
\emph{operations executed on different event handlers on the same thread, even if acquiring
the same lock object, to be independent}. Consequently, we consider any \texttt{unlock}
operation $r$ to be independent with subsequent \texttt{lock} operations in other event handlers
on the same thread as $r$. 

However in theory, considering operations acquiring the same lock $l$ in two different handlers
$h$ and $h'$ on the same thread to be independent is problematic --- especially if the same 
shared variable is accessed (\texttt{read-write} / \texttt{write-write}) by some transitions,
say $r$ and $r'$, in the critical sections protected by the lock $l$ in $h$ and $h'$
respectively, resulting in exploring different states on different ordering of $h$ and $h'$. 
This is because $r$ and $r'$ accessed within critical sections protected by the same lock
in $h$ and $h'$ respectively, are trivially considered independent in $\mathcal{P}_G$
(see Definition~\defepmtdependence) 
as they are never co-enabled. Hence, $r$ and $r'$ may be  considered independent in
$\mathcal{S}_G$ too. 
However, exploring different ordering of $h$ and $h'$
is essential to explore possibly different states due to conflicting accesses
$r$ and $r'$. In such scenarios, considering lock acquires corresponding to
critical sections of $r$ and $r'$ to be dependent enables a POR technique to reorder
$h$ and $h'$ even though the actual conflicting transitions $r$ and $r'$ are not
marked dependent. This will not be possible with our selective \texttt{lock}
dependence proposed above. However in practice, considering all pairs of lock acquires on
different handlers on the same thread to be independent does not result in 
aforementioned problem. This is because, \emdpor\ over-approximates the set of pairs of dependent transitions by considering pairs of
transitions (a)~making conflicting accesses to shared variables with or without
holding a protective lock, or (b)~enabling/disabling each other, to be dependent.
Hence, \texttt{lock} operations on the same object executed on different handlers
on the same thread need not be considered dependent in practice, to enable
EM-DPOR to reorder their respective handlers in case they access the same shared
variable in their critical sections. In the rest of the section we refer to this  
over-approximated dependence relation but additionally considering all the
pairs of \texttt{lock} operations and \texttt{unlock-lock} operations executed 
on different handlers of the same thread to be independent, as \emph{modified dependence relation}.

The modified dependence relation preserves dependence between pairs of \texttt{lock} 
operations and \texttt{unlock-lock} operations on same lock objects and executed 
on different threads. This is because, a lock
acquire disables all other co-enabled \texttt{lock} operations contending for the 
same lock, and \texttt{unlock} enables \texttt{lock} operations waiting for the
same lock; making such transitions dependent due to condition~2 in Definition~\defdependencerelation. 
Similar to the proof for Theorem~\ref{th:preserve-deadlock}, we 
can prove that a dependence-covering state space $\mathcal{S}_R$ of an Android program $A$ obtained by the modified dependence 
relation, preserves all deadlock cycles seen in the original state space $\mathcal{S}_G$
of $A$. This is because, a dependence-covering sequence $u$ of a transition sequence $w$
must preserve the relative order between all the pairs of \texttt{lock} operations
acquiring or contending for the same lock object and executing on different threads 
in $w$, because \texttt{lock} operations are considered dependent. 
Thus, if $w \in \mathcal{S}_G$ reaches a deadlock cycle $\langle DC,\rho \rangle$
then $u$ being its dependence-covering sequence reaches the same deadlock cycle, as $u$
must preserve the relative order of acquiring locks among threads involved in the
deadlock cycle.

\subsubsection*{Modifications to EM-DPOR to Incorporate Modified Dependence Relation}
\label{sec:emdpor-lock-mod}
EM-DPOR should be able to explore all valid interleaving of operations acquiring the same
lock object and executed on different threads, even when using modified dependence 
relation. Example~\ref{ex:mod-lock-dep} demonstrates that achieving this requires 
some modifications to EM-DPOR similar to those introduced in 
Algorithm~\ref{algo:explore-optimized} described in Section~\ref{sec:explore-modified}.

\begin{example}
\label{ex:mod-lock-dep}
\upshape
Consider an execution trace $z_1$ explored by EM-DPOR and given in Figure~\ref{fig:mod-lock-dep}(a), 
of an event-driven multi-threaded program. Among the threads $t_1$, $t_2$, $t_3$
and $t_4$, only $t_1$ is attached with an event queue. EM-DPOR is assumed to
use modified dependence relation, thus making transition pairs $(r_7,r_{10})$ and
$(r_9,r_{10})$ independent. Sequence $z_1$ has two pairs of \emph{may be co-enabled} 
dependent transitions with no happens-before mapping between them: 
$(r_3,r_7)$ and $(r_3,r_{10})$. Note that EM-DPOR does not invoke \texttt{FindTarget}
on transition pairs $(r_5,r_8)$ and $(r_4,r_{11})$ as they are ordered by happens-before
due to happens-before mapping between $r_6$ - $r_7$ and $r_6$ - $r_{10}$ 
respectively.

On exploring sequence $z_1$ Algorithm~\texttt{Explore} identifies 
transition pairs $(r_3,r_7)$ and $(r_3,r_{10})$ as dependent, identifies 
backtracking choices using \texttt{FindTarget}, and eventually explores 
sequences $z_2$ and $z_3$ (Figure~\ref{fig:mod-lock-dep}(b) and (c) 
respectively). However, EM-DPOR does not explore any more interleaving of 
transitions $r_3$ - $r_7$ - $r_{10}$ and thus misses exploring a sequence similar 
to $z$ (Figure~\ref{fig:lock-mod-missing}), where the locking order of lock \texttt{l}
is different compared to that explored by sequences $z_1$, $z_2$ and $z_3$.  Also, 
$z$ reaches a new state (compared to states reached by $z_1$, $z_2$ and $z_3$) 
where variables \texttt{x} and \texttt{y} are assigned values
$1$ and $5$ respectively. 
Even if Algorithm \texttt{Explore} is modified to compute backtracking
choices to reorder a \texttt{lock} operation with all the prior executed 
\emph{may be co-enabled} \texttt{lock} operations with no happens-before relation
(instead of only the nearest \texttt{lock} operation), EM-DPOR will not be 
able to explore sequence $z$. 
\end{example}

\paragraph{Analysis of Example~\ref{ex:mod-lock-dep}} The scenario represented 
in Example~\ref{ex:mod-lock-dep} is similar to that in Example~\ref{ex:readwrite-2}.
Specifically, in Example~\ref{ex:mod-lock-dep} \texttt{lock} in transition $r_3$
is dependent with transitions $r_7$ and $r_{10}$ executed in different handlers
on the same thread while $r_7$ and $r_{10}$ are mutually independent, similar to 
the way \texttt{write} in transition $r_3$ is dependent with \texttt{read} operations
in $r_4$ and $r_5$ executed in different handlers on the same thread in 
Example~\ref{ex:readwrite-2}. Hence, we propose modifications to Algorithm 
\texttt{Explore} similar to those explained in Section~\ref{sec:read-write-dep}.

In the initial phase of Algorithm \texttt{Explore} which invokes \texttt{FindTarget},
we do the following if $next(s,t)$ contains a \texttt{lock} operation. We compute
backtracking choices (by invoking \texttt{FindTarget}) to reorder $next(s,t)$ with 
the nearest \emph{may be co-enabled} (\ie\ not executed on thread $t$) \texttt{lock} operation, say $r$, acquiring 
the same lock object. If $r$ is executed in an event handler (\ie\ $r$ is executed 
on a thread with an event queue), we add a temporary happens-before mapping from 
$r$ to $next(s,t)$. We then compute backtracking choices and backtracking states to reorder
$next(s,t)$ with all the prior \texttt{lock} operations executed in various handlers
on $r$'s thread which do not have a happens-before mapping with $next(s,t)$,
till we find a \texttt{lock} operation, say $r'$, executed on a thread other than $\threadof(r)$ such
that index of $r'$ is lesser than the index of the \texttt{lock} operations on $r$'s
thread reordered with $next(s,t)$. After computing backtracking choices to reorder
$next(s,t)$ with all the relevant \texttt{lock} operations, we remove the temporary
happens-before mapping between $r$ and $next(s,t)$ and continue with the remaining
steps in \texttt{Explore}.

\end{document}